\newcommand{\nc}{\newcommand}
\nc{\fig}[4]
{
  \begin{figure}[ht!]  
    \centering{\scalebox{#1}{\includegraphics*{./figures/#2.eps}}}
    \caption{#4}
    \label{fig:#3}
  \end{figure}
}
\nc{\bXYZ}{{\bf XYZ\ }}
\nc{\nn}{\nonumber} 
\nc{\nit}{\noindent}
\nc{\marginnote}[1] {\marginpar{\tiny #1}}
\nc{\SH}{Schr\"odinger}
\nc{\NLS}{nonlinear Schr\"odinger}
\nc{\ie}{\emph{i.e.\ \mbox{}}}
\nc{\eg}{\emph{e.g.\ \mbox{}}}
\nc{\per}{{l_j}}
\nc{\perx}{q}
\nc{\vol}{{\rm Vol}}
\nc{\lapx}{\dfrac{\partial^2}{\partial x^2}}
\nc{\lapy}{\dfrac{\partial^2}{\partial y^2}}
\nc{\lapxy}{\dfrac{\partial^2}{\partial xy}}
\nc{\diff}[2]{\frac{d #1}{d #2}}
\nc{\diffn}[3]{\frac{d^{#3} #1}{d {#2}^{#3}}} 
\nc{\pdiff}[2]{\frac{\partial #1}{\partial #2}} 
\nc{\pdiffn}[3]{\frac{\partial^{#3} #1}{\partial{#2}^{#3}}} 
\def\Xint#1{\mathchoice
  {\XXint\displaystyle\textstyle{#1}}%
  {\XXint\textstyle\scriptstyle{#1}}%
  {\XXint\scriptstyle\scriptscriptstyle{#1}}%
  {\XXint\scriptscriptstyle\scriptscriptstyle{#1}}%
  \!\int}
\def\XXint#1#2#3{{\setbox0=\hbox{$#1{#2#3}{\int}$} \vcenter{\hbox{$#2#3$}}\kern-.5\wd0}}
\def\dashint{\Xint-}
\nc{\Av}{\dashint_\cell}
\nc{\avg}[1]{\mbox{$\left \langle\ \!#1\!\!\ \right \rangle$}}
\nc{\intRd}{\int_{{\mathbb R}^d}}
\nc{\abs}[1] {\lvert #1 \rvert} 
\nc{\norm}[2] {{\lVert #1 \rVert}_{#2}} 
\nc{\normH}[1]{\|#1\|_{H^1}^2}
\nc{\normL}[1]{\|#1\|_2^2}
\nc{\nl}[1]{|#1|^{2\sigma}}
\nc{\nlb}[1]{#1^{2\sigma+1}}
\nc{\Linvd}{L_\delta^{-1}}
\nc{\Linvz}{L_0^{-1}}
\nc{\Linvzs}{L_0^{-2}}
\nc{\st}[1]{\stackrel{(\ref{eq:#1})}{=}}
\nc{\stt}[2]{\stackrel{(\ref{eq:#1}),(\ref{eq:#2})}{=}}
\nc{\veps}{\epsilon}
\nc{\Ue}{U_\eps}
\nc{\koe}{\frac{k}{\veps}} 
\nc{\wrat}{w_{\rm ratio}}
\nc{\aij}{A^{ij}}                  
\nc{\minv}{m_*^{-1}} 
\nc{\sqminv}{m_*^{-\frac{1}{2}}}
\nc{\G}{\gamma_{\rm ef\,\!f}}
\nc{\zetap}{{\zeta_*}}
\nc{\zetas}{{\zeta_{1*}}}
\nc{\Pe}{P_{\rm edge}}
\nc{\slope}{\dfrac{d\cP[u(\cdot;\mu)] }{d\mu}}
\nc{\moot}{\mu_2}
\nc{\mum}{\mu_{\rm min}}
\def\R{\mathbb{R}}
\def\C{\mathbb{C}}
\def\Z{\mathbb{Z}}
\nc{\brill}{{\mathcal{B}}}
\nc{\cell}{{\Omega}}
\nc{\bj}{{\bf j}}
\nc{\bk}{{\bf k}}
\nc{\bl}{{\bf l}}
\nc{\bm}{{\bf m}}
\nc{\bn}{{\bf n}}
\nc{\bkappa}{{\mathbf \kappa}}
\nc{\bK}{{\bf K}}
\nc{\bKp}{{\bf K'}}
\nc{\bp}{{\bf p}}
\nc{\bpi}{{\mathbf{\pi}}}
\nc{\bq}{{\bf q}}
\nc{\br}{{\bf r}}
\nc{\bv}{{\bf v}}
\nc{\bx}{{\bf x}}
\nc{\bX}{{\bf X}}
\nc{\by}{{\bf y}}
\nc{\bz}{{\bf z}}
\nc{\bnu}{{\bf \nu}}
\nc{\bxc}{{\bf x}_c}
\nc{\bxi}{{\bold\xi}}
\nc{\cP}{{\cal P}}
\nc{\cPedge}{{\cal P}_{edge}}
\nc{\cPc}{{\cal P}_{cr}} 
\nc{\cDD}{\Lambda} 
\nc{\cF}{{\cal F}}
\nc{\cG}{{\cal G}}
\nc{\cO}{{\cal O}}  
\nc{\cQ}{{\cal Q}}  
\nc{\cR}{{\cal R}}
\nc{\cI}{{\cal I}}
\nc{\cK}{{\cal K}}
\nc{\cL}{{\cal L}} 
\nc{\cM}{{\cal M}}
\nc{\cN}{{\cal N}} 
\nc{\cE}{{\cal E}}
\nc{\cH}{{\cal H}} 
\nc{\cX}{{\cal X}}
\nc{\cZ}{{\cal Z}} 
\nc{\cT}{{\cal T}} 
\nc{\order}{{\cal O}}
\newtheorem{remark}{Remark}[section]
\newcommand{\kv}{\mathbf{k}}
\newcommand{\x}{\mathbf{x}}
\newcommand{\eps}{\varepsilon}
\newtheorem{thm}{Theorem}[section]
\newtheorem{rem}[thm]{Remark}
\nc{\pulse}{%
  \begin{picture}(60,50)(0,-20)
  \qbezier(0,0)(5,0)(10,15)
  \qbezier(10,15)(15,30)(20,30)
  \qbezier(20,30)(22,30)(25,18)
  \qbezier(25,18)(27,10)(30,10)
  \qbezier(30,10)(34,10)(38,15)
  \qbezier(38,15)(42,20)(45,20)
  \qbezier(45,20)(49,20)(53,10)
  \qbezier(53,10)(57,0)(60,0)
  \multiput(0,0)(0,-2){10}{\line(0,-1){1}}
  \multiput(60,0)(0,-2){10}{\line(0,-1){1}}
  \put(30,-20){\vector(-1,0){30}}
  \put(30,-20){\vector(1,0){30}}
  \put(0,-15){\makebox(60,10){{\tiny $T$}}}
  \end{picture}}
\begin{document}

\title{ Honeycomb Lattice Potentials and Dirac Points }

\author{Charles L. Fefferman\footnotemark[1] and  Michael I. Weinstein\footnotemark[2]}

\maketitle

\renewcommand{\thefootnote}{\fnsymbol{footnote}}
\footnotetext[1]{Department of Mathematics, Princeton University; cf@math.princeton.edu}
\footnotetext[2]{Department of Applied Physics and Applied
  Mathematics, Columbia University; miw2103@columbia.edu}

\maketitle

\begin{abstract}
We prove that the two-dimensional Schr\"odinger operator with a potential having the symmetry of a honeycomb structure has dispersion surfaces with conical singularities ({\it Dirac points}) at the vertices of its Brillouin zone. No assumptions are made on the size of the potential. We then prove the robustness of such conical singularities to a restrictive class of perturbations, which break the honeycomb lattice symmetry. General small perturbations of potentials with Dirac points do not have Dirac points; their dispersion surfaces are smooth. The presence of Dirac points in honeycomb structures is associated with many novel electronic and optical properties of materials such as graphene.
\end{abstract}

\begin{keywords}
Honeycomb Lattice Potential, Graphene,  Floquet-Bloch theory, Dispersion Relation
\end{keywords}
%
%
\pagestyle{myheadings}
\thispagestyle{plain}
\markboth{Honeycomb Lattice Potentials and Dirac Points}{Charles L. Fefferman and Michael I. Weinstein}
\section{Introduction and Outline}
\label{sec:introduction-outline}

In this article we study the spectral properties of the~Schr\"odinger operator 
$
H_V=-\Delta+V(\bx),\ \ \ \bx\in\R^2,
$
where the potential, $V$, is periodic and has honeycomb structure symmetry. For general periodic potentials  the spectrum of $H_V$, considered as an operator on $L^2(\R^2)$, is the union of closed intervals of continuous spectrum called the spectral bands. Associated with each spectral band are a band dispersion function, $\mu(\bk)$, and  Floquet-Bloch states, $u(\bx;\bk)=p(\bx;\bk)e^{i\bk\cdot\bx}$, where $Hu(\bx;\bk)=\mu(\bk)u(\bx;\bk)$ and 
 $p(\bx;\bk)$ is periodic with the periodicity of $V(\bx)$. The quasi-momentum, $\bk$, varies over $\mathcal{B}$, the first Brillouin zone \cite{Kittel}. Therefore, the time-dependent Schr\"odinger equation has solutions of the form $e^{i\left(\bk\cdot\bx-\mu(\bk)t\right)}\ p(\bx;\bk)$. Furthermore, any finite energy solution of the initial value problem for the time-dependent Schr\"odinger equation is a continuum weighted superposition, an integral $d\bk$,  over such states. Thus, the  time-dynamics are strongly influenced by the character of $\mu(\bk)$ on the spectral support of the initial data. 

We investigate the properties of $\mu(\bk)$ in the case where $V=V_h$ is a {\it honeycomb lattice potential}, {\it i.e.} $V_h$ is periodic with respect to a particular lattice, $\Lambda_h$, and has honeycomb structure symmetry; see Definition \ref{honeyV}. There has been intense interest within the fundamental and applied  physics communities in such structures; see, for example, the survey articles \cite{RMP-Graphene:09,Novoselov:11}. Graphene, a single atomic layer of carbon atoms, is a two-dimensional structure with carbon atoms located at the sites of a honeycomb structure.  Most remarkable is that the associated dispersion surfaces are observed to have  conical singularities at the vertices of $\brill_h$, which in this case is a regular hexagon.
That is, locally about any such quasi-momentum vertex, $\bk\approx\bK_\star$, one has 
\begin{equation}
\mu(\bk)-\mu(\bK_\star)\ \approx\  \pm\ \left|\lambda_\sharp\right|\cdot |\bk-\bK_\star|\ , 
\label{intro-cone}
\end{equation}
for some complex constant $\lambda_\sharp\ne0$.
 A consequence is that for wave-packet initial conditions with spectral components which are concentrated near these vertices, the effective evolution equation governing the wave-packet envelope is the two-dimensional Dirac wave equation, the equation of evolution for massless relativistic fermions
  \cite{RMP-Graphene:09,Ablowitz-Zhu:11}. Hence, these special vertex quasi-momenta  associated with the hexagonal lattice  are often called {\it Dirac points}. In contrast, wave-packets concentrated at spectral band edges,  bordering a spectral gap where the dispersion relation is typically quadratic, 
behave as massive non-relativistic particles; the effective wave-packet envelope equation is the Schr\"odinger equation with inverse effective mass related to the local curvature of the band dispersion relation at the band edge.   The presence of Dirac points has many physical implications with great potential for technological applications \cite{Wong-Akinwanke:10}. Refractive index profiles with honeycomb lattice symmetry and their applications are also considered in the context of electro-magnetics 
  \cite{Haldane:08,Soljacic:08}. Also, linear and nonlinear propagation of light in a two-dimensional refractive index profile with honeycomb lattice symmetry, generated via the interference pattern of plane waves incident on a photorefractive crystal, has been investigated in \cite{Segev-etal:07,Segev-etal:08} .  In such structures, wave-packets of light with spectral components concentrated near Dirac points,  evolve diffractively (rather than dispersively) with increasing propagation distance into the crystal.
\medskip

Previous mathematical analyses of such honeycomb lattice structures are based upon extreme limit models:
\begin{enumerate}
\item the tight-binding / infinite contrast limit (see, for example, \cite{Wallace:47,RMP-Graphene:09,Kuchment-Post:07}) in which the potential is  taken to be concentrated at lattice points or edges of a graph; in this limit, the dispersion relation has an explicit analytical expression, or 
\item the weak-potential limit, treated by formal perturbation theory in
\cite{Haldane:08,Ablowitz-Zhu:11} and rigorously in \cite{Grushin:09}. 
\end{enumerate}
\medskip

  The goal of the present paper is to provide a rigorous construction of conical singularities (Dirac points) for essentially {\it any} potential with a honeycomb structure. No assumptions on smallness or largeness of the potential are made. More precisely, consider the Schr\"odinger operator
\begin{equation}
 H^{(\eps)}\ \equiv\ -\Delta\ +\ \eps V_h\ \ (\eps\ {\rm real}) \label{honey-ham}
\end{equation}
where $V_h(\bx)$ denotes a {\it honeycomb lattice potential}. These potentials are real-valued, smooth,   $\Lambda_h$- periodic and, with respect to some origin of coordinates,  inversion symmetric $(\bx\to-\bx)$ and invariant under a $2\pi/3$- rotation 
 ($\mathcal{R}$- invariance);
 see Def. \ref{honeyV}.  We also make a simple, explicit genericity assumption on $V_h(x)$; see equation \eqref{V11eq0}.\smallskip

\noindent  Our main results are:
\begin{enumerate}
\item {\it Theorem \ref{main-thm}, which states that for fixed honeycomb lattice potential $V_h$, the dispersion surface of  $H^{(\eps)}$ has conical singularities at each vertex of the hexagonal Brillouin zone, except possibly for $\eps$ in a countable and closed set, $\tilde{\mathcal{C}}$.} We do not know whether exceptional non-zero $\eps$ can occur, {\it i.e.} whether the above countable closed set can be taken to be $\{0\}$. However our proof excludes exceptional $\eps$  
from $(-\eps_0,\eps_0)\setminus\{0\}$, for some $\eps_0>0$. Moreover, for small $\eps$ these conical singularities occur either as intersections between the first and second band dispersion surfaces or between the second and third dispersion surfaces. As $\eps$ increases, there continue to be such conical intersections of dispersion surfaces, but we do not control  which dispersion surfaces intersect.
\medskip

\item {\it Theorem \ref{deformed}, which states that the conical singularities of the dispersion surface of  $H^{(\eps)}$ for $\eps\notin\tilde{\mathcal{C}}$, are robust in the  following sense: Let $W(\bx)$ be real-valued, $\Lambda_h$- periodic and inversion-symmetric (even), but not necessarily $\mathcal{R}$- invariant. Then, for all sufficiently small real $\eta$, the operator $H(\eta)=H^{(\eps)}+\eta W$ has a dispersion surface with conical-type singularities. Furthermore, these conical singularities will typically not occur at the vertices of the Brillouin zone, $\brill_h$; see also the numerical results in \cite{Segev-etal:08}}.
In Remark 
\ref{instability} we show instability of Dirac points to certain perturbations, {\it e.g.} perturbations $W$  which are $\Lambda_h$- periodic but not inversion-symmetric. The dispersion surface is locally smooth in this case.

In a forthcoming paper we prove that Dirac points persist if the honeycomb lattice is subjected to a small uniform strain.

\end{enumerate}
\smallskip

\noindent 

The paper is structured as follows. In section \ref{sec:spectr-theory-peri} we briefly outline the spectral theory of general periodic potentials. We then introduce $\Lambda_h$, the particular lattice (Bravais lattice) used to generate a honeycomb structure or ``honeycomb lattice'', the union of two interpenetrating triangular lattices.  Section \ref{sec:spectr-theory-peri} concludes with implications for Fourier analysis in this setting. Section \ref{Veq0} contains a discussion of the spectrum of the Laplacian on $L^2_\bk$, the subspace of $L^2$ satisfying pseudo-periodic boundary conditions with quasi-momentum $\bk\in\brill_h$, the Brillouin zone. We observe that degenerate eigenvalues of multiplicity three occur at the vertices of  $\brill_h$.
In section   \ref{sec:doubleimpliescone} we state and prove Theorem \ref{prop:2impliescone} which reduces the construction of conical singularities of the dispersion surface at the vertices of $\brill_h$ to  establishing the existence of  two-dimensional $\mathcal{R}-$ invariant  eigenspaces of $H^{(\eps)}$ for quasi-momenta at the vertices of  $\brill_h$. 
 In section \ref{sec:MainTheorem} we give a precise statement of our main result, Theorem 
\ref{main-thm}, on conical singularities of dispersion surfaces at the vertices of $\brill_h$.   In section \ref{sec:pfeps-small} we prove for all $\eps$ sufficiently small and non-zero, by a Lyapunov-Schmidt reduction, that the degenerate, multiplicity three eigenvalue of the Laplacian splits into a multiplicity two eigenvalue and a multiplicity one eigenvalue, with  associated $\mathcal{R}$- invariant eigenspaces. 
  In order to  continue this result to $\eps$ large we introduce, in section 
  \ref{sec:det2}, a globally-defined analytic function, $\mathcal{E}(\mu,\eps)$, whose zeros, counting multiplicity, are the eigenvalues of $H^{(\eps)}$.  Eigenvalues occur where an operator 
  $I+ \mathcal{C}(\mu,\eps),\ \ \mathcal{C}(\mu,\eps)$  compact, is singular. 
 Since $\mathcal{C}(\mu,\eps)$ is not trace-class but is Hilbert-Schmidt, we work with $\mathcal{E}(\mu,\eps)=
 \det_2(I+\mathcal{C}(\mu,\eps))$, a renormalized determinant. 
 In section \ref{sec:continuation}, $\mathcal{E}(\mu,\eps)$ and $\lambda_\sharp^\eps$ (see \eqref{intro-cone}) are studied using techniques of complex function theory to establish the existence of Dirac points for arbitrary real values of $\eps$, except possibly for  a countable closed subset of $\mathbb{R}$. In section \ref{sec:deformed} we prove Theorem \ref{deformed}, which gives conditions for the local persistence of the conical singularities.  Remark \ref{instability} discusses perturbations which break the conical singularity and for which the dispersion surface is smooth. Appendix \ref{sec:topology} contains a counterexample, illustrating the topological obstruction discussed in section \ref{sec:linearalgebra}. 
 \smallskip
 
 Finally we remark that conical singularities have long been known to occur in Maxwell equations with constant anisotropic dielectric tensor;
  see \cite{Berry-Jeffrey:07} and references cited therein. \medskip

\textbf{Acknowledgments:} CLF was supported by US-NSF Grant DMS-09-01040.
 MIW was supported in part by US-NSF Grant DMS-10-08855. 
The authors would like to thank Z.H. Musslimani for stimulating discussions early in this work.
 We are also grateful to  M.J. Ablowitz, B. Altshuler, J. Conway, W. E, P. Kuchment, J. Lu, C. Marianetti,  A. Millis and G. Uhlmann 
 for discussions, and  K. Pankrashkin for bringing reference \cite{Grushin:09} to our attention.

\subsection{Notation} 
\label{sec:notation}
\begin{enumerate}
\item $z\in\mathbb{C}\ \implies\ \overline{z}$ denotes the complex conjugate of $z$.
\item $A$, a $d\times d$ matrix $\implies$ $A^t$ is its transpose and $A^*$ is its conjugate-transpose.
\item  $\langle \bj \rangle =\sqrt{1+|\bj|^2}.$
\item $\bK^\bm=\bK^{m_1,m_2}=\bK+\bm\bk=\bK+m_1\bk_1+m_2\bk_2$.\\  $\bK, \bk_1$ and $\bk_2$ are defined in section \ref{sec:honeycomb}.
\item $\nabla_\bk=e^{-i\bk\cdot\bx}\nabla_\bx e^{i\bk\cdot\bx}=\nabla_\bx+i\bk$,\ $\Delta_\bk=\nabla_\bk\cdot\nabla_\bk$.
\item ${\bf x}, {\bf y}\in\C^n,\ \ \left\langle {\bf x}, {\bf y}\right\rangle=\overline{\bf x}\cdot{\bf y}$,\ 
$\bx\cdot \by=x_1y_1+\dots +x_ny_n$.
\item For $\bq=(q_1,q_2)\in\mathbb{Z}^2$, $\bq\bk=q_1\bk_1+q_2\bk_2$\ .
\item $\langle f,g\rangle=\int \overline{f}g$
\item $l^2_s(\Z^2)=\left\{{\bf \xi}=\{\xi_\bj\}_{\bj\in\Z^2}\ :\ \sum_{\bj\in\Z^2}\ \langle \bj \rangle^{2s}|\xi_\bj|^2<\infty\right\}$
\end{enumerate}

\section{Periodic Potentials and Honeycomb Lattice Potentials}
\label{sec:spectr-theory-peri}

We begin this section with a  review of Floquet-Bloch theory of periodic potentials \cite{Eastham:73}, \cite{Kuchment-01}, \cite{RS4}. We then turn to the definition of honeycomb structures and their Fourier analysis.

\subsection{Floquet-Bloch Theory} 

Let $\{\bv_1,\bv_2\}$ be a linearly independent set in $\mathbb{R}^2$. Consider the lattice 
\begin{equation}
\Lambda=\{m_1\bv_1+m_2\bv_2: m_1,m_2\in\Z\ \}=\Z\bv_1\oplus\Z\bv_2,
\label{Lambda-def}
\end{equation}
The fundamental period cell is denoted 
\begin{equation}
\Omega=\{\ \theta_1\bv_1+\theta_2\bv_2: 0\le\theta_j\le1,\ j=1,2\ \}\ .
\label{Omega-def}
\end{equation}
Denote by $L^2_{per,\Lambda}\ =\ L^2(\R^2/\Lambda)$, the space of  $L^2_{\rm loc}$ functions which are periodic with the respect to the lattice $\Lambda$, or equivalently functions in $L^2$ on the torus
 $\mathbb{R}^2/\Lambda=\mathbb{T}^2$:
\begin{equation}
f\in L^2_{per,\Lambda} \ \textrm{if and only if}\  \ f(\bx+\bv)=f(\bx), \ {\rm for}\ \bx\in\R^2 ,\ \ \bv\in\Lambda\ .
\nn\end{equation}
More generally, we consider functions satisfying a pseudo-periodic boundary condition:
\begin{equation}
f\in L^2_{\bk,\Lambda}  \ \textrm{if and only if}\  \ f(\bx+\bv)=f(\bx)e^{i\bk\cdot\bv}, \ {\rm for}\ \bx\in\R^2 ,\ \ \bv\in\Lambda .
\end{equation}
We shall suppress the dependence on the period-lattice, $\Lambda$, and write $L^2_\bk$,  if the choice of lattice is clear from context. For $f$ and $g$ in $L^2_{\bk,\Lambda}$, $\overline{f}g$ is locally integrable and $\Lambda$- periodic and we define their inner product by:
\begin{equation}
\left\langle f,g\right\rangle\ =\ \int_\Omega\ \overline{f(\bx)}\ g(\bx)\ d\bx\ .
\label{inner-product}
\end{equation}
In a standard way, one can introduce the Sobolev spaces $H^s_{\bk,\Lambda}$.

The dual lattice, $\Lambda^*$, is defined to be
\begin{equation}
\Lambda^*\ =\ \{m_1\bk_1+m_2\bk_2 : m_1, m_2\in\mathbb{Z}\}=\Z\bk_1\oplus\Z\bk_2 \ ,
\label{eq:dual-lattice}
\end{equation}
where $\bk_1$ and $\bk_2$ are dual lattice vectors, satisfying the relations:
\begin{equation}
\bk_i\cdot \bv_j = 2\pi \delta_{ij}\ .
\nn\end{equation}

If $f\in L^2_{per,\Lambda}$ then  $f$ can be expanded in a Fourier series with Fourier coefficients
 $\hat{f} = \{ f_\bm \}_{\bm\in\mathbb{Z}^2}$:
\begin{align}
f(\bx)\ &=\ \sum_{\bm\in\Z^2} f_\bm\ e^{i\bm\bk\cdot\bx}\ =\ \sum_{(m_1,m_2)\in\Z^2} f_{m_1,m_2}\ e^{i(m_1\bk_1+m_2\bk_2)\cdot \bx}\ \ \ \ ,\\
f_\bm\ & \equiv\ \ \frac{1}{|\cell|}\ \int_{\cell}\ e^{-i\bm\bk
\cdot \by}\ f(\by)\ d\by\ =\ \frac{1}{|\cell|}\ \int_{\cell}\ e^{-i(m_1\bk_1+m_2\bk_2)
\cdot \by}\ f(\by)\ d\by.
\label{Fourier-coeff}\end{align}

 \bigskip
 
Let $V(\bx)$ denote a real-valued potential which is periodic
 with respect to $\Lambda$, {\it i.e.}
\begin{equation} V(\bx+\bv)=V(\bx), \ {\rm for}\ \bx\in\R^2 ,\ \ \bv\in\Lambda
\nn\end{equation}
Throughout this paper we shall also assume the potential, $V(\bx)$, under consideration is $C^\infty$.
Thus,
\begin{equation}
V\in C^\infty(\mathbb{R}^2/\Lambda)\ .
\label{Vassumptions}
\end{equation}
We expect that this smoothness assumption can be relaxed considerably without much extra work.

For each $\bk\in\R^2$ we consider the  {\it Floquet-Bloch eigenvalue problem}
\begin{align}
H_V\  \phi(\bx;\bk) &= \mu(\bk)\ \phi(\bx;\bk),\ \ \bx\in\R^2\label{phi-eqn}\\
 \phi(\x+\bv;\bk) &= e^{i\bk\cdot \bv}\ \phi(\bx;\bk),\ \ \bv\in \Lambda,\   \label{pseudo-per}
 \end{align}
 where
 \begin{equation}
 H_V\ \equiv\  -\Delta + V(\bx)\ .
 \label{HV-def}
 \end{equation}
 An $L^2_\bk$- solution of \eqref{phi-eqn}-\eqref{pseudo-per} is called a {\it Floquet-Bloch} state.

 Since the eigenvalue problem \eqref{phi-eqn}-\eqref{pseudo-per} is invariant under the change $\bk\mapsto \bk+\tilde{\bk}$, where $\tilde{\bk}\in\Lambda^*$, the dual period lattice, the eigenvalues and eigenfunctions of  \eqref{phi-eqn}-\eqref{pseudo-per} can be regarded as  $\Lambda^*-$ periodic functions of $\bk$, or functions on $\mathbb{T}_\bk^2=\mathbb{R}^2_\bk/\Lambda^*$.
  Therefore, it suffices to restrict our attention to $\bk$ varying over any primitive cell.
  It is standard to work with the first Brillouin zone, $\mathcal{B}$, the closure of the set of  points $\bk\in\mathbb{R}^2$, which are closer to the origin than to any other lattice point.

 An alternative formulation is obtained as follows. For every $\bk\in\mathcal{B}$ we set 
 \begin{equation}
 \phi(\bx;\bk)=e^{i\bk\cdot\bx}p(\bx;\bk)\label{phi-p}
 \end{equation}
Then $p(\bx;\bk)$  satisfies the periodic elliptic boundary
value problem: 
\begin{align}
H_V(\bk) p(\bx;\bk)\ &=\   \mu(\bk)\ p(\bx;\bk),\ \  \bx\in\R^2\label{p-eqn}\\
  p(\bx+\bv;\bk) &= p(\bx;\bk),\ \ \bv\in \Lambda \label{p-periodic}, 
\end{align}
where
\begin{equation}
H_V(\bk)\equiv -\left(\nabla+ i\bk\right)^2 + V(\bx)\ \equiv\ -\Delta_\bk+V(\bx)\ \ \ \ .
\label{HVdef}
\end{equation}
The eigenvalue problem \eqref{phi-eqn}-\eqref{pseudo-per}, or equivalently \eqref{p-eqn}-\eqref{p-periodic},  has a discrete spectrum:
\begin{equation}
\mu_1(\bk)\ \le\ \mu_2(\bk)\ \le\ \mu_3(\bk)\ \le\ \dots
\label{eig-ordering}\end{equation}
with eigenpairs
$
p_b(\bx;\bk),\ \mu_b(\bk):\ b=1,2,3,\dots .
$
The set $\{p_b(\bx;\bk)\}_{b\ge1}$ can be taken to be a complete orthonormal set in 
$L^2_{\rm per}(\R^2/\Lambda)$.  

The functions $\mu_b(\bk)$ are called band dispersion functions. Some general results on their regularity appear in \cite{avron-simon:78}.
As $\bk$ varies over $\mathcal{B}$, $\mu_b(\bk)$ sweeps out a closed real interval. The spectrum of $-\Delta + V(\bx)$ in $L^2(\R^2)$ is the union of these closed
intervals:
\begin{equation}
  \label{L2-spectrum}
  \textrm{spec}(H_V) = \bigcup_{ \bk \in \mathcal{B}} \textrm{spec}\left(H_V(\bk)\right) \ \ \ .
\end{equation}
Moreover, the set $\bigcup_{b\ge1}\bigcup_{\bk\in\mathcal{B}}\{\phi_b(\bx;\bk)\}, \phi_b(\bx;\bk)\equiv e^{i\bk\cdot\bx}p_b(\bx;\bk)$,\  suitably normalized, is complete in $L^2(\R^2)$:
\begin{equation}
f\in L^2(\R^2)\ \implies\ f(\bx)=\sum_{b\ge1}\ \int_\brill\langle\phi_b(\cdot,\bk),f\rangle\phi_b(\bx;\bk)\ d\bk\ ,
\nn\end{equation}
where the sum converges in the $L^2$ norm.
\bigskip

\subsection{The period lattice, $\Lambda_h$ , and its dual, $\Lambda_h^*$}
\label{sec:honeycomb}
{\ \ \ \ }\bigskip

Consider $\Lambda_h=\Z{\bf v}_1 \oplus \Z{\bf v}_2$, the lattice generated by the basis vectors:
\begin{align}
 {\bf v}_1 &=\ a\left( \begin{array}{c} \frac{\sqrt{3}}{2} \\ {}\\  \frac{1}{2}\end{array} \right),\ \ 
{\bf v}_2 =\ a\left(\begin{array}{c} \frac{\sqrt{3}}{2} \\ {}\\ -\frac{1}{2} \end{array}\right),\ \ a>0.\label{v12-def}
\end{align}
Note: $\Lambda_h$ (``$h$'' for honeycomb) is a triangular lattice, that arises naturally in connection with honeycomb structures; see Figure \ref{fig:honeyAB-1}.

The dual lattice $\Lambda_h^* =\ \Z {\bf k}_1\oplus \Z{\bf k}_2$ is spanned by the dual basis vectors:
\begin{align}
&  {\bf k}_1=\ q\left(\begin{array}{c} \frac{1}{2}\\ {}\\ \frac{\sqrt{3}}{2}\end{array}\right),\ \ \ {\bf k}_2 = q\ \left(\begin{array}{c}\frac{1}{2}\\ {}\\ -\frac{\sqrt{3}}{2}\end{array}\right),\ \  \ q\equiv \frac{4\pi}{a\sqrt{3}}\ ,\label{q-def}
\end{align}
where
\begin{align}
&{\bf k}_{\ell}\cdot {\bf v}_{{\ell'}}=2\pi\delta_{\ell\ell'}\ , \label{orthog-kv}\\
&|\bv_1|=|\bv_2|=a,\ \ \bv_1\cdot\bv_2=\frac{a^2}{2}\ , \label{bv12}\\
&|\bk_1|=|\bk_2|=q,\ \ \bk_1\cdot\bk_2=-\frac{1}{2}q^2\ .\label{bk12}
\end{align}

The Brillouin zone, $\brill_h$, is a hexagon in $\R^2$; see figure \ref{fig:honeyAB-2}. Denote by $\bK$ and $\bKp$ the  vertices of  $\brill_h$ given by:
\begin{equation}
\bK\equiv\frac{1}{3}\left(\bk_1-\bk_2\right),\ \ \bKp\equiv-\bK=\frac{1}{3}\left(\bk_2-\bk_1\right)\ .
\label{KKprime}
\end{equation}
All six  vertices of $\brill_h$ can be generated by application of the rotation matrix, $R$,
 which rotates a vector in $\mathbb{R}^2$ clockwise by $2\pi/3$. $R$ is given by
\begin{equation}
R\ =\ \left(\begin{array}{cc} -\frac{1}{2} & \frac{\sqrt{3}}{2}\\
                                                 {} & {}\\
                                           -\frac{\sqrt{3}}{2} & -\frac{1}{2}
                                           \end{array}\right)
  \label{Rdef}\end{equation}
  and the vertices of $\brill_h$ fall into to groups, generated by action of $R$ on $\bK$ and $\bK'$:
\begin{align}
&\bK\ {\rm type-points:}\ \bK,\ R\bK=\bK+\bk_2,\ R^2\bK=\bK-\bk_1\nn\\
&\bKp\ {\rm type-points:}\ \bKp,\ R\bKp=\bK'-\bk_2,\ R^2\bKp=\bKp+\bk_1\ .
\label{Bvertices}
\end{align}
\begin{remark}[Symmetry Reduction]\label{symmetry-reduction}
Let $(\ \phi(\bx;\bk), \mu(\bk)\ )$ denote a Floquet-Bloch eigenpair for the eigenvalue problem \eqref{phi-eqn}-\eqref{pseudo-per} with quasi-momentum $\bk$. Since $V$ is real, 
$(\ \tilde{\phi}(\bx;\bk)\equiv\overline{\phi(\bx;\bk)}, \mu(\bk)\ )$ is a Floquet-Bloch eigenpair for the eigenvalue problem with quasi-momentum $-\bk$. Recall the relations \eqref{Bvertices}  and the $\Lambda_h^*$- periodicity of: $\bk\mapsto\mu(\bk)$ and $\bk\mapsto \phi(\bx;\bk)$.    It follows that  the local character of the dispersion surfaces in a neighborhood of any vertex of $\brill_h$ is determined by its character about any other vertex of $\brill_h$.
\end{remark}
\bigskip

In our computations using Fourier series, we shall frequently make use of the following relations:
\begin{align}
&R\ \kv_1=\bk_2,\ \ R\ \bk_2=-\left(\bk_1+\bk_2\right),\ \ R\ \left(\bk_1+\bk_2\right)=-\bk_1
\label{Rk-identities}\\
& R\ \zeta =\ \tau\ \zeta, \ \ R\ \bar\zeta\ =\ \bar\tau\ \bar\zeta,\ {\rm where}\\  
 &\zeta\equiv\frac{1}{\sqrt{2}}\left(\begin{array}{c}1\\ i \end{array}\right),\ \ \ \tau=e^{\frac{2\pi i}{3}}=-\frac{1}{2}+i\frac{\sqrt{3}}{2},\ \ \tau^3=1.
\label{Reigs} \end{align}

Moreover, $R^*$ maps the period lattice $\Lambda_h$ to itself and, in particular,
\begin{equation}
R^*\bv_1=-\bv_2,\ R^*\bv_2=\bv_1-\bv_2
\label{Rstar-vj}
\end{equation}

\subsection{Honeycomb lattice potentials}\label{sec:HLP}
{\ \ \ }

For any function $f$, defined on $\mathbb{R}^2$,  introduce
\begin{equation}
\mathcal{R}[f](\bx)=f(R^*\bx),\label{calRdef}
\end{equation}
where $R$ is the $2\times2$ rotation matrix displayed in \eqref{Rdef}.
\medskip

\begin{definition}\label{honeyV}[Honeycomb lattice potentials]
{ }

Let $V$ be  real-valued and  $V\in C^\infty(\R^2)$.
$V$ is  a  \underline{ honeycomb lattice potential} 
 if there exists $\bx_0\in\mathbb{R}^2$ such that $\tilde{V}(\bx)=V(\bx-\bx_0)$ has the following properties:
\begin{enumerate}
\item $\tilde{V}$ is $\Lambda_h-$ periodic, {\it i.e.}  $\tilde{V}(\bx+\bv)=\tilde{V}(\bx)$ for all $\bx\in\mathbb{R}^2$ and $\bv\in\Lambda_h$.  
\item $\tilde{V}$ is even or inversion-symmetric, {\it i.e.} $\tilde{V}(-\bx)=\tilde{V}(\bx)$.
\item  $\tilde{V}$  is $\mathcal{R}$- invariant, {\it i.e.}
 \begin{equation}
 \mathcal{R}[\tilde{V}](\bx)\ \equiv\ \tilde{V}(R^*\bx)\ =\ \tilde{V}(\bx),
 \nn\end{equation}
  where, $R^*$ is the counter-clockwise rotation matrix by $2\pi/3$, {\it i.e.} $R^*=R^{-1}$, where $R$
 is given by \eqref{Rdef}. 
 
 \end{enumerate}\smallskip
 
 Thus, a honeycomb lattice potential is smooth, $\Lambda_h$- periodic and, with respect to some origin of coordinates, both inversion symmetric and $\mathcal{R}$- invariant.
 \end{definition}\medskip
 
 \begin{remark}\label{takex0eq0}
 As the spectral properties are independent of translation of the potential we shall assume in the proofs, without any loss of generality,  that $\bx_0=0$.
 \end{remark}\medskip
 
 \begin{remark}\label{Vreal-even}
 A consequence of a honeycomb lattice potential being real-valued and even is that if 
 $\left(\phi(\bx;\bk),\mu\right)$ is an eigenpair with quasimomentum $\bk$ of the Floquet-Bloch eigenvalue problem, then $\left(\overline{\phi(-\bx;\bk)},\mu\right)$ is also an eigenpair with quasimomentum $\bk$.
 \end{remark} 
 \medskip

\begin{remark}\label{honey-remark} 
We present two constructions of honeycomb lattice potentials.\bigskip

\begin{itemize}
\item[Example 1:] {\bf ``Atomic'' honeycomb lattice potentials:}\ Start with the two points
\begin{equation} {\bf A}=(0,0),\ \ {\rm and}\ \  {\bf B}=a\left(\frac{1}{\sqrt{3}},0\right)\ ,
\label{ABdef}
\end{equation}
which lie within the unit period cell of $\Lambda_h$; see \eqref{v12-def}.
 Define the triangular lattices of {\bf A-} type and {\bf B-} type points:
\begin{equation}
\Lambda_{\bf A}\ =\ {\bf A}\ +\ \Lambda_h\ \ {\rm and} \ \ \Lambda_{\bf B}\ =\ {\bf B}\ +\ \Lambda_h
\label{LambdaAB}
\end{equation}
We define the honeycomb structure, ${\bf H}$, to be the union of these two triangular lattices:
\begin{equation}
{\bf H}\ =\ \Lambda_{\bf A}\ \cup\  \Lambda_{\bf B}\ ;
\label{Hdef}
\end{equation}
see Figure \ref{fig:honeyAB-1}.
Let $V_0$ be a smooth, radial and rapidly decreasing function, which we think of as an ``atomic potential''.  Then,  
\[V(\bx) = \sum_{{\bf a}\in{\bf H}} V_0(\bx-{\bf a}),\]
is a potential associated with ``atoms'' at each site of the honeycomb structure ${\bf H}$. 
Moreover, $V(\bx)$ 
 is a honeycomb lattice potential in the sense of Definition \ref{honeyV} with $\bx_0= -{\bf B}$.\smallskip
 
 Note that a potential of the form  
 \begin{equation}
 V(\bx)\ =\ \sum_{{\bf a}\in\Lambda_h} V_0(\bx-{\bf a}),
 \nn\end{equation}
 a {\it ``triangular lattice potential''}, also satisfies the properties listed in Definition \ref{honeyV}. 

\item[Example 2:] {\bf Optical honeycomb lattice potentials:}\  The electric field envelope of a nearly monochromatic beam of light propagating through a dielectric medium with two-dimensional refractive index profile satisfies a linear Schr\"odinger equation $i\partial_z\psi=-\Delta_{x,y}\psi + V(x,y)\psi=0$.  Here, $z$ denotes the direction of propagation of the beam
 and $(x,y)$ the transverse directions. 
  Honeycomb lattice potentials have been generated taking advantage of nonlinear  optical phenomena. It was demonstrated in \cite{Segev-etal:07} that a honeycomb lattice potential (a honeycomb ``photonic lattice''), $V(x,y)$,  can be generated through  an optical induction technique based on the  interference of three plane wave beams of light within a photorefractive crystal, exhibiting the defocusing (nonlinear) optical Kerr effect. The refractive-index variations are governed by a  potential of the approximate form:
 \begin{equation}
 V(\bx) \approx  V_0\left(\ \cos(\bk_1\cdot\bx) + \cos(\bk_2\cdot\bx) + \cos\left( (\bk_1+\bk_2)\cdot\bx\right)\right)
 \end{equation}
 It is straightforward to check, in view of \eqref{Rk-identities}, that a potential of this type is a honeycomb lattice potential in the sense of Definition \ref{honeyV} with $\bx_0=0$. In fact, in Proposition 
 \ref{proposition:V-honey}  below we assert that with respect to some origin of coordinates, any honeycomb lattice potential can be expressed as a Fourier series of terms of this type.
\end{itemize}
\end{remark}

 \begin{figure}[ht!]
\centering \includegraphics[width=4.75in]{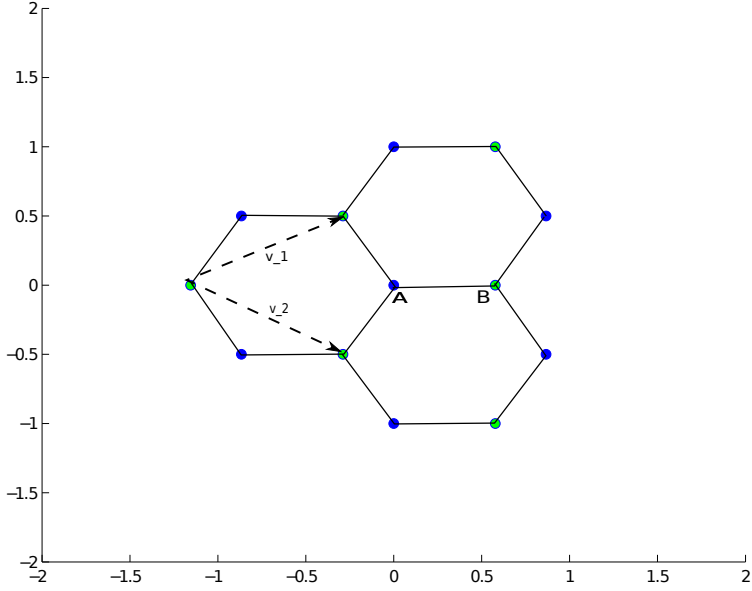}
  \caption{ Part of the honeycomb structure, ${\bf H}$.
  ${\bf H}$ is the union of two sub-lattices  $\Lambda_{\bf A}={\bf A}+\Lambda_h$ (blue) 
   and $\Lambda_{\bf B}={\bf B}+\Lambda_h$ (green). The lattice vectors 
    $\{\bv_1,\bv_2\}$ generate $\Lambda_h$. See  Remark \ref{honey-remark}. 
   } .  \label{fig:honeyAB-1}
\end{figure}

 \begin{figure}[ht!]
\centering 
\includegraphics[width=4.75in]{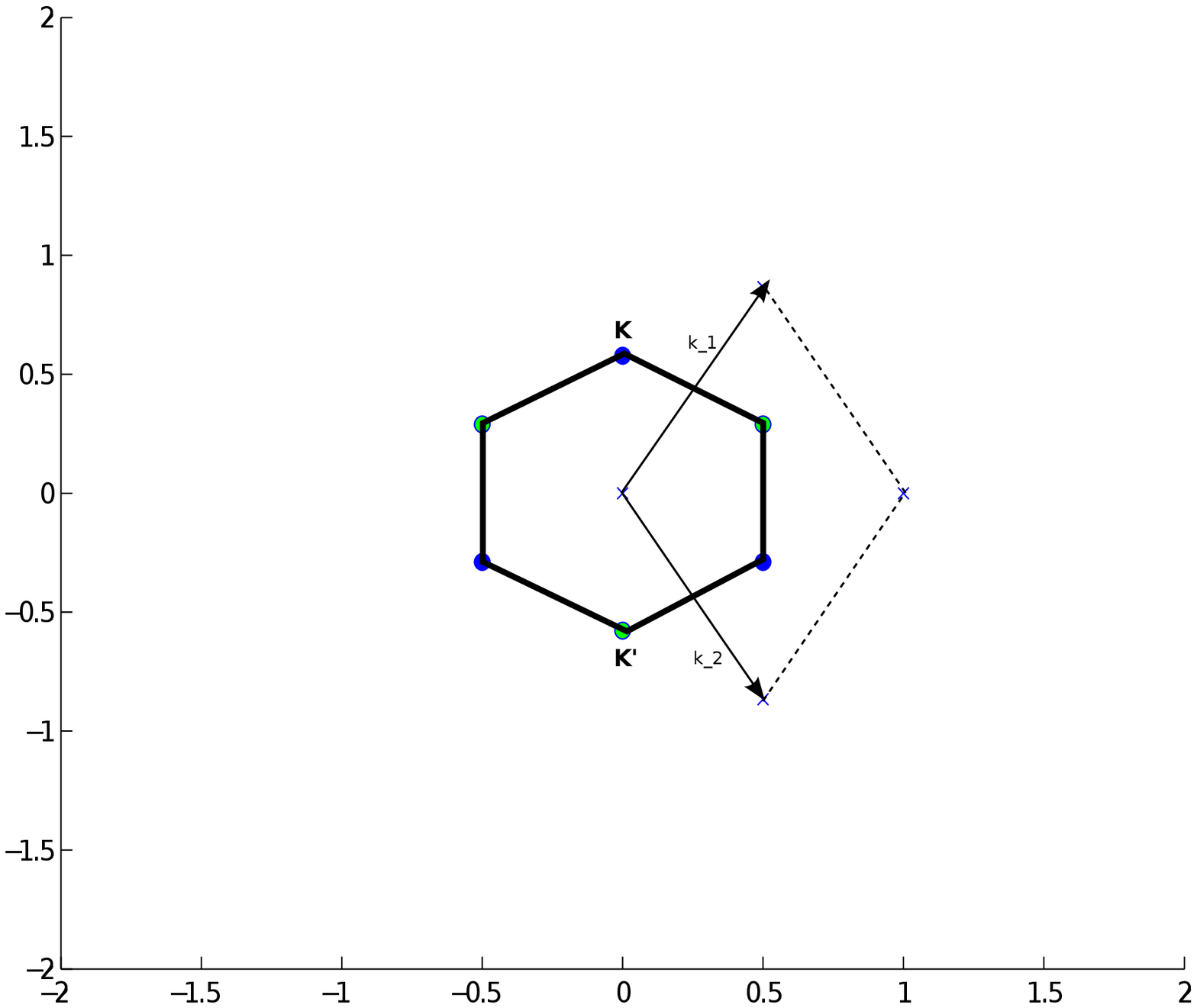}
  \caption{  Brillouin zone, $\brill_h$, and dual basis $\{\bk_1,\bk_2\}$. $\bK$ and $\bK'$ are labeled. Other vertices of $\brill_h$ obtained via application of $R$, rotation by $2\pi/3$; see equation \eqref{Bvertices}.} .  \label{fig:honeyAB-2}
\end{figure}
 
\bigskip

The following proposition plays a key role. It states that at distinguished points 
in $\bk-$ space, namely the $\bK$ and $\bKp$ type points, $H_V$ with quasi-momentum dependent boundary conditions \eqref{pseudo-per} or equivalently, $H_V(\bk)$, with $\Lambda_h$ periodic boundary conditions,  has an extra rotational invariance property.\medskip

\begin{proposition}\label{SR-invariance}
Assume $V$ is a honeycomb lattice potential, as in Definition \ref{honeyV}. 
Assume  $\bK_\star$ is a point of $\bK$ or $\bK'$ type; see \eqref{Bvertices}. Then, $H_V$ and $\mathcal{R}$ map a dense subspace of $L^2_{\bK_\star}$ to itself. Furthermore, restricted to this dense subspace of
$L^2_{\bK_\star}$, 
 the commutator $\left[H,\mathcal{R}\right]\equiv H_V\ \mathcal{R}- \mathcal{R}\ H_V$ vanishes. In particular, 
  if $\phi(\bx;\bk)$  is a solution of the Floquet-Bloch eigenvalue problem \eqref{phi-eqn}-\eqref{pseudo-per} 
with $\bk=\bK_\star$, then   $\mathcal{R}[\phi(\cdot,\bk)](\bx)$ is also a solution 
of \eqref{phi-eqn}-\eqref{pseudo-per} with $\bk=\bK_\star$.
\end{proposition}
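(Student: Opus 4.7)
The plan is to establish three things: (i) that $\mathcal{R}$ maps a dense subspace of $L^2_{\bK_\star}$ to itself, (ii) that $H_V$ maps a dense subspace of $L^2_{\bK_\star}$ to $L^2_{\bK_\star}$, and (iii) that on this common dense subspace (e.g.\ $H^2_{\bK_\star,\Lambda_h}$) the commutator $[H_V,\mathcal{R}]$ vanishes identically. Once these are in hand, the eigenfunction statement is immediate: if $H_V\phi=\mu\phi$ with $\phi\in L^2_{\bK_\star}$, then $H_V(\mathcal{R}\phi)=\mathcal{R}(H_V\phi)=\mu\,\mathcal{R}\phi$, and $\mathcal{R}\phi$ inherits the pseudo-periodic condition from (i).

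For (i), I would compute directly: for $f\in L^2_{\bK_\star}$ and $\bv\in\Lambda_h$,
\[
(\mathcal{R}f)(\bx+\bv)\;=\;f\bigl(R^*\bx+R^*\bv\bigr)\;=\;e^{i\bK_\star\cdot R^*\bv}\,f(R^*\bx)\;=\;e^{i\bK_\star\cdot R^*\bv}\,(\mathcal{R}f)(\bx),
\]
where the middle equality uses $R^*\bv\in\Lambda_h$, which follows from \eqref{Rstar-vj}. For $\mathcal{R}f$ to lie in $L^2_{\bK_\star}$, I need $e^{i\bK_\star\cdot R^*\bv}=e^{i\bK_\star\cdot\bv}$ for every $\bv\in\Lambda_h$, i.e.\ $(R\bK_\star-\bK_\star)\cdot\bv\in 2\pi\Z$. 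This is exactly where the hypothesis that $\bK_\star$ is of $\bK$- or $\bKp$-type enters: using the identities \eqref{Rk-identities} and \eqref{KKprime} one verifies that $R\bK_\star-\bK_\star\in\Lambda_h^*$, so the required congruence holds. This is the only arithmetic check in the proof, and it is precisely the content that distinguishes the vertices of $\brill_h$ from generic quasi-momenta.

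For (ii), I would take the dense subspace $H^2_{\bK_\star,\Lambda_h}\subset L^2_{\bK_\star}$. Since $V$ is $\Lambda_h$-periodic, multiplication by $V$ preserves the pseudo-periodic condition; since $-\Delta$ commutes with translations, it does as well; and $\mathcal{R}$, being pullback by an orthogonal map, preserves $H^2$ regularity and (by (i)) the pseudo-periodic condition. So both $H_V$ and $\mathcal{R}$ act on this dense subspace.

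For (iii), the computation is short. Because $R^*$ is orthogonal, the Laplacian commutes with $\mathcal{R}$: $\Delta\bigl[f(R^*\bx)\bigr]=(\Delta f)(R^*\bx)$. Hence
\[
H_V(\mathcal{R}f)(\bx)\;=\;-(\Delta f)(R^*\bx)\;+\;V(\bx)\,f(R^*\bx),\qquad \mathcal{R}(H_Vf)(\bx)\;=\;-(\Delta f)(R^*\bx)\;+\;V(R^*\bx)\,f(R^*\bx),
\]
so the commutator reduces to multiplication by $V(\bx)-V(R^*\bx)$, which vanishes by the $\mathcal{R}$-invariance of $V$ (Definition \ref{honeyV}, item 3, after using Remark \ref{takex0eq0} to place the center of symmetry at the origin). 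The only genuine obstacle is the verification in (i) that the pseudo-periodic phase is compatible with $R^*$; everything else is bookkeeping using the $\Lambda_h$-periodicity of $V$, orthogonality of $R$, and the chain rule.
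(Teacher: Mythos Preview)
Your proof is correct and follows essentially the same approach as the paper: verify that $\mathcal{R}$ preserves the $\bK_\star$-pseudo-periodic condition via $R\bK_\star-\bK_\star\in\Lambda_h^*$, and then reduce $[H_V,\mathcal{R}]$ to multiplication by $V(\bx)-V(R^*\bx)=0$ using orthogonality of $R$ and $\mathcal{R}$-invariance of $V$. The only cosmetic difference is that the paper works on $C^\infty_{\bK_\star}$ rather than $H^2_{\bK_\star,\Lambda_h}$, and it cites \eqref{Bvertices} directly (e.g.\ $R\bK=\bK+\bk_2$) rather than \eqref{Rk-identities}--\eqref{KKprime}.
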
\bigskip

\begin{proof} Take as a dense subspace $C^\infty_{\bK_\star}$, the space of $C^\infty$  functions satisfying
 $f(\bx+\bv)=e^{i\bK_\star\cdot\bv}f(\bx)$ for all $\bx\in\R^2$ and $\bv\in\Lambda_h$. Clearly, $H$ maps $C^\infty_{\bK_\star}$  to itself.
 Define $\phi_R(\bx)=\mathcal{R}[\phi(\cdot,\bK_\star)](\bx)= \phi(R^*\bx,\bK_\star)$. Without loss of generality, assume $\bK_\star=\bK$. By \eqref{Rstar-vj}, if $\bv\in\Lambda_h$ then $R^*\bv\in\Lambda_h$. We have 
\begin{align*}
\phi_R(\bx+\bv)&=\phi(R^*\bx+R^*\bv,\bK)\ =\ e^{i\bK\cdot R^*\bv}\phi(R^*\bx,\bK)\nn\\
&=e^{iR\bK\cdot\bv}\ \phi(R^*\bx,\bK)\ =\ e^{i(\bK+\bk_2)\cdot\bv}\ \phi(R^*\bx,\bK)\nn\\
& =\  e^{i\bK\cdot\bv}\ \phi(R^*\bx,\bK)\ =\  e^{i\bK\cdot\bv}\ \phi_R(\bx). 
\end{align*}
Thus, we have $\mathcal{R}$ maps $C^\infty_{\bK_\star}$ to itself.

Next note that by invariance of the Laplacian under rotations, $-\Delta_\bx\phi_R(\bx)=-\left.\Delta_\by\phi(\by,\bK_\star)\right|_{\by=R^*\bx}$\ . Furthermore, by $\mathcal{R}-$ invariance of $V(\bx)$, that $V(\bx)\phi_R(\bx)=V(R^*\bx)\phi(R^*\bx,\bK_\star)=\left.V(\by)\phi(\by,\bK_\star)\right|_{\by=R^*\bx}$\ . Therefore,  
$\left[H,\mathcal{R}\right]$ vanishes on  on $C^\infty_{\bK_\star}$. In particular, we have that
\begin{equation}
H_V\phi(\bx,\bK_\star)=\mu\phi(\bx,\bK_\star)\ \implies\ H_V\phi_R(\bx)=\mu\phi_R(\bx)\ .
\nn\end{equation}
 This completes the proof of the proposition. \end{proof}
 \bigskip
 
  We conclude this section with a discussion of the  Fourier representation of honeycomb lattice potentials in the sense of Definition \ref{honeyV}. Let $V(\bx)$ be such a potential with Fourier series:
 \[ V(\bx) = \sum_{\bm\in\mathbb{Z}^2}V_\bm e^{i\bm\bk\cdot\bx}=\sum_{(m_1,m_2)\in\mathbb{Z}^2}
 V_{m_1,m_2} e^{i (m_1\bk_1+m_2\bk_2)\cdot\bx}.\]
Since $V(\bx)=\mathcal{R}[V](\bx)$, we have 
\[ V(R^*\bx) = \sum_{(m_1,m_2)}
 V_{m_1,m_2} e^{i (m_1R\bk_1+m_2R\bk_2)\cdot\bx} =
  \sum_{(m_1,m_2)}
 V_{m_1,m_2} e^{i ( (-m_2)\bk_1+(m_1-m_2)\bk_2)\cdot\bx} \]
Therefore, $V_{m_1,m_2}=V_{-m_2,m_1-m_2}$.
Similarly, $V(\bx)=\mathcal{R}^2[V](\bx)$ implies that $V_{m_1,m_2}=V_{m_2-m_1,-m_1}$.
Introduce the mapping $\tilde{R}:\mathbb{Z}^2\to\mathbb{Z}^2$ acting on the indices of the Fourier coefficients of 
 $V$:
\begin{align} \tilde{R}(m_1,m_2)&=(-m_2,m_1-m_2)\ \ {\rm and\ therefore}\nn\\
 \ \tilde{R}^2(m_1,m_2)&=(m_2-m_1,-m_1),\ \ {\rm and}\ \ \tilde{R}^3(m_1,m_2)= (m_1,m_2)\ .
 \label{tRdef}\end{align}
Then we have
\begin{equation}
V_\bm\ =\ V_{\tilde{R}\bm}\ =\ V_{\tilde{R}^2\bm}\label{tildeRbm}
\end{equation}
Note that $\tilde{R}{\bf 0}={\bf 0}$ and that ${\bf 0}$ is the unique element of the kernel of $\tilde{R}$. Furthermore, any $\bm\ne0$ lies on an $\tilde{R}-$ orbit of length exactly three. Indeed, 
\begin{align}
\bm&=\tilde{R}\bm\ \leftrightarrow\ (m_1,m_2)=(-m_2,m_1-m_2)\ \implies\ m_1=m_2=0\ \ {\rm and}\nn\\
\bm&=\tilde{R}^2\bm\ \leftrightarrow\ (m_1,m_2)=(-m_1+m_2,-m_1)\ \implies m_1=m_2=0\ .\nn
\end{align}

Suppose $\bm$ and $\bn$ are non-zero. We say that $\bm\sim\bn$ if $\bm$ and $\bn$ lie on the same $3-$ cycle.
The relation $\sim$ is an equivalence relation, which partitions $\mathbb{Z}^2\setminus\{{\bf 0}\}$ into equivalence classes, $ \left(\mathbb{Z}^2\setminus\{{\bf 0}\}\right) / \sim$. Let $\tilde{S}$ denote a set consisting of exactly one representative from each equivalence class. We now have the following characterization of Fourier series of honeycomb lattice potentials:\smallskip

\begin{proposition}\label{proposition:V-honey} 
Let $V(\bx)$ denote a honeycomb lattice potential. Then, 
\begin{equation}
V(\bx) = \hat{V}({\bf 0})+\sum_{\bm\in\tilde{\mathcal{S}}}\ V_\bm\ \left[ \cos(\bm\bk\cdot\bx)\ +\ \cos((\tilde{\mathcal{R}}\bm)\bk\cdot\bx)\ +\ \cos((\tilde{\mathcal{R}^2}\bm)\bk\cdot\bx)\ \right]\ ,
\label{honey-fourier}
\end{equation}
where $V_\bm$ are real and $\tilde{R}$ is defined in \eqref{tRdef}.
\end{proposition}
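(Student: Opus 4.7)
The plan is to read off the cosine expansion directly from the Fourier series of $V$ by extracting, one at a time, the constraints that each of the three defining properties of a honeycomb lattice potential places on the Fourier coefficients $V_\bm$, and then regrouping terms over $\tilde{R}$-orbits and the involution $\bm\mapsto-\bm$.

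By Remark \ref{takex0eq0} I may take $\bx_0={\bf 0}$, so $V\in C^\infty(\R^2/\Lambda_h)$ admits the convergent Fourier expansion $V(\bx)=\sum_{\bm\in\Z^2}V_\bm e^{i\bm\bk\cdot\bx}$. Reality of $V$ equates $V(\bx)$ with its conjugate and yields $\overline{V_\bm}=V_{-\bm}$; inversion symmetry $V(-\bx)=V(\bx)$ yields $V_{-\bm}=V_\bm$; together these force $V_\bm\in\R$ and $V_{-\bm}=V_\bm$. The $\mathcal{R}$-invariance computation already carried out in the excerpt preceding the proposition gives $V_\bm=V_{\tilde{R}\bm}=V_{\tilde{R}^2\bm}$, and the text has further shown that $\tilde{R}$ acts on $\Z^2\setminus\{{\bf 0}\}$ with every orbit of length exactly three.

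Regrouping the Fourier series according to these orbits produces
\begin{equation*}
V(\bx)\ =\ V_{\bf 0}\ +\ \sum_{\bm\in\tilde{S}}V_\bm\left[e^{i\bm\bk\cdot\bx}+e^{i(\tilde{R}\bm)\bk\cdot\bx}+e^{i(\tilde{R}^2\bm)\bk\cdot\bx}\right].
\end{equation*}
To replace exponentials by cosines I pair each orbit $\mathcal{O}$ with its image $-\mathcal{O}$ under negation. Using \eqref{tRdef}, a brief check shows that for $\bm\ne{\bf 0}$ neither $-\bm=\tilde{R}\bm$ nor $-\bm=\tilde{R}^2\bm$ can hold (each reduces to a $2\times 2$ linear system whose only solution is $\bm={\bf 0}$), so $\mathcal{O}$ and $-\mathcal{O}$ are always distinct 3-cycles. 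Consequently the orbits come in disjoint pairs $\{\mathcal{O},-\mathcal{O}\}$, and choosing $\tilde{S}$ to pick one element per such pair and invoking $V_{-\bm}=V_\bm\in\R$ collapses each $e^{i\bn\bk\cdot\bx}+e^{-i\bn\bk\cdot\bx}$ to $2\cos(\bn\bk\cdot\bx)$, producing \eqref{honey-fourier} after absorbing the factor $2$ into the normalization of $V_\bm$.

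There is no genuine analytic or spectral obstacle here; the argument is entirely a bookkeeping exercise that translates the three symmetries into coefficient identities. The only step requiring an actual computation (as opposed to abstract manipulation) is the verification that $-\bm$ never lies in the $\tilde{R}$-orbit of a nonzero $\bm$, which is what ensures the pairing $\mathcal{O}\leftrightarrow -\mathcal{O}$ is well-defined and makes the cosine grouping possible.
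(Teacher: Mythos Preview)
Your argument is correct up to the final step, but that step is flawed: you redefine $\tilde{S}$ and then absorb a factor of $2$ into the $V_\bm$. Neither move is legitimate. The set $\tilde{S}$ is fixed in the text just before the proposition as containing exactly one representative per $\tilde{R}$-orbit (not one per pair $\{\mathcal{O},-\mathcal{O}\}$), and the $V_\bm$ are the actual Fourier coefficients of $V$, not free constants you may renormalize. With your modified index set the coefficient in front of each cosine triple would be $2V_\bm$, not $V_\bm$, so what you have written does not establish \eqref{honey-fourier} as stated.

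The fix, which is the paper's approach, is simpler than the orbit-pairing you attempt. After regrouping by $\tilde{R}$-orbits to obtain
\[
V(\bx)=\hat{V}({\bf 0})+\sum_{\bm\in\tilde{S}}V_\bm\left[e^{i\bm\bk\cdot\bx}+e^{i(\tilde{R}\bm)\bk\cdot\bx}+e^{i(\tilde{R}^2\bm)\bk\cdot\bx}\right]
\]
over the paper's $\tilde{S}$, one uses evenness in the form $V(\bx)=\tfrac{1}{2}(V(\bx)+V(-\bx))$ and averages this expression with its $\bx\mapsto -\bx$ reflection term by term. Each $\tfrac{1}{2}(e^{i\theta}+e^{-i\theta})$ becomes $\cos\theta$, and \eqref{honey-fourier} drops out with no leftover factor of $2$. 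Your verification that $-\bm$ never lies in the $\tilde{R}$-orbit of $\bm$ then becomes unnecessary: the sum over the paper's $\tilde{S}$ simply lists each cosine triple twice (once from the orbit of $\bm$, once from that of $-\bm$), with the common coefficient $V_\bm=V_{-\bm}$, and this redundancy is exactly what the stated formula asserts.
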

\bigskip

\noindent{\it Proof of Proposition \ref{proposition:V-honey}:} From \eqref{tildeRbm} we have
\begin{equation}
V(\bx) = \hat{V}({\bf 0})+\sum_{\bm\in\tilde{S}}V_\bm\left( e^{i\bm\bk\cdot\bx}+
e^{i(\tilde{R}\bm)\bk\cdot\bx}+e^{i(\tilde{R}^2\bm)\bk\cdot\bx}\right)
\label{fs}\end{equation}
The relation $V(\bx)=(V(\bx)+V(-\bx))/2$ and  \eqref{fs} imply
\begin{equation}
V(\bx) = \hat{V}({\bf 0})+\sum_{\bm\in\tilde{S}}V_\bm\left( \cos(\bm\bk\cdot\bx)\ +\ 
\cos\left((\tilde{R}\bm)\bk\cdot\bx\right)\ +\ \cos\left((\tilde{R}^2\bm)\bk\cdot\bx\right)\ \right)
\label{cosfs}\end{equation}
Moreover, since $V$ is real and even, $V_{\bm}$ is real for $\bm\in\mathbb{Z}^2$. This completes the proof.

\subsection{Fourier analysis  in $L^2_{\bK_\star}$}\label{sec:Fourier-on-H}
{\ \ \ }

We characterize the Fourier series of functions $\phi\in L^2_{\bK_\star}$, {\it i.e.} functions $\phi(\bx;\bK_\star)$, satisfying 
the quasiperiodic boundary condition:
\begin{equation}
\phi(\bx+\bv)\ =\ e^{i\bK_\star\cdot\bv}\ \phi(\bx)
\label{phi-quasi}
\end{equation}
The discussion is analogous to that preceding Proposition \ref{proposition:V-honey}.\medskip

If \eqref{phi-quasi} holds then  $\phi(\bx) = e^{i\bK_\star\cdot\bx}\ p(\bx)$, where $p(\bx)$ is $\Lambda-$ periodic.  It follows that $\phi$ has a Fourier representation:
 \begin{equation}
 \phi(\bx)\ =\ e^{i\bK_\star\cdot\bx}\sum_{(m_1,m_2)\in\Z^2}\ c(m_1,m_2)\ e^{i(m_1\bk_1+m_2\bk_2)\cdot\bx}, 
 \label{phi-expand}
 \end{equation} 
 which we rewrite  as 
 \begin{align}
 \phi(\bx)\ &=\ \sum_{(m_1,m_2)\in\Z^2}\ c(m_1,m_2)\ 
 e^{i(\bK_\star + m_1\bk_1+m_2\bk_2)\cdot\bx}\nn\\
  &=\ \sum_{(m_1,m_2)\in\Z^2}\ c(m_1,m_2)\
  e^{i \bK^{m_1,m_2}_\star \cdot\bx}\ =\ \sum_{\bm\in\Z^2}\ c(\bm)\
  e^{i \bK^{\bm}_\star \cdot\bx},\
 \label{phi-expand1}\\
 &\ \ {\rm where}\ \ \bK^{\bm}_\star\ =\ \bK_\star+m_1\bk_1+m_2\bk_2.\nn
 \end{align} 
 {\it Usually, we denote by $c_\phi(\bm)$ or $c(\bm;\phi)$ the Fourier coefficients,
  as in \eqref{phi-expand1}, of $\phi\in L^2_{\bK_\star}$.}
 
 Note that the transformation $\mathcal{R}$, defined in \eqref{calRdef}, is unitary on $L^2$ and so its eigenvalues lie on the unit circle in $\mathbb{C}$. Furthermore, if $\mathcal{R}\phi=\lambda\phi$ and  $\phi\ne0$, then since $\mathcal{R}^3=Id$, $\phi= \mathcal{R}^3\phi =\lambda^3\phi$, we have $\lambda^3=1$. Therefore $\lambda\in\{1,\tau,\bar\tau\}$, where $\tau=\exp(2\pi i/3)$.
 
We are interested in the general Fourier expansion of functions in 
  each of the eigenspaces of $\mathcal{R}$: 
  \begin{align}
  L^2_{\bK_\star,1} &\equiv \{f\in L^2_{\bK_\star}: \mathcal{R}f=f\}\label{L2K1}\\
   L^2_{\bK_\star,\tau} &\equiv \{f\in L^2_{\bK_\star}: \mathcal{R}f=\tau f\}\label{L2Ktau1}, \\
    L^2_{\bK_\star,\bar{\tau}} &\equiv \{f\in L^2_{\bK_\star}: \mathcal{R}f=\bar\tau f\}\label{L2Ktaubar1}
    \end{align}
   Since $\mathcal{R}$ is unitary these subspaces are pairwise orthogonal.
    \medskip
 
 Fix, without loss of generality, $\bK_\star=\bK$. We first consider the action of $\mathcal{R}$ on 
 general $\phi\in L^2_\bK$. 
   Applying $\mathcal{R}$ to  $\phi$, given by \eqref{phi-expand1}, we obtain:
\begin{align*}
\mathcal{R}[\phi](\bx) &= \sum_{(m_1,m_2)\in\Z^2}\ c_\phi(m_1,m_2)\ 
 e^{i\bK^\bm\cdot R^*\bx}\\
 &= \sum_{(m_1,m_2)\in\Z^2}\ c_\phi(m_1,m_2)\ 
 e^{iR\bK^\bm\cdot\bx}\\
 &=\sum_{(m_1,m_2)\in\Z^2}\ c_\phi(m_1,m_2)\ 
 e^{i(\bK +  (-m_2)\bk_1+(m_1-m_2+1)\bk_2)\cdot\bx},
 \end{align*}
 since 
 \begin{equation}
R\bK^\bm = R\bK^{m_1,m_2}=\bK +  (-m_2)\bk_1+(m_1-m_2+1)\bk_2=\bK^{-m_2,m_1-m_2+1} .
 \label{RbKm}
 \end{equation}
 Thus, 
 \begin{align}
 c_{\mathcal{R}\phi}(-m_2,m_1-m_2+1)&=c_\phi(m_1,m_2),\ \ \textrm{or equivalently}\nn\\
 c_{\mathcal{R}\phi}(m_1,m_2) &= c_\phi(m_2-m_1-1,-m_1)\ .\label{cSRphi}
 \end{align}
 Similarly, by a second application of $\mathcal{R}$, and using the relation
 \begin{equation}
 R^2\bK^{m_1,m_2}=\bK^{m_2-m_1-1,-m_1},
 \label{R2bKm}\end{equation}
 we have
 \begin{align}
 c_{\cR^2\phi}(m_2-m_1-1,-m_1) &=  c_\phi(m_1,m_2),\ \ \textrm{or equivalently}\nn\\
c_{\cR^2\phi}(m_1,m_2) &= c_\phi(-m_2,m_1-m_2+1)\label{cSR2phi}\end{align}
Finally, since $R^3=I$, $c_{\cR^3\phi}(m_1,m_2)=c_\phi(m_1,m_2)$.
  
  $\mathcal{R}$ acting in $L^2_{\bK}$ induces a decomposition of $\Z^2$ into orbits of length three:
  \begin{equation}
  (m_1,m_2)\ ^{\mathcal{R}}\mapsto\ (-m_2,m_1-m_2+1)\ ^{\mathcal{R}}\mapsto\ (m_2-m_1-1,-m_1)\ ^{\mathcal{R}}\mapsto\ (m_1,m_2)
  \label{3cycle}
  \end{equation}
   
 \nit {\bf For convenience we shall abuse notation and write}
 \begin{align}
\cR\bm &= \cR(m_1,m_2) = (-m_2,m_1-m_2+1)\nn\\
\cR^2\bm &= \cR^2(m_1,m_2) = (m_2-m_1-1,-m_1),\nn\\
\cR^3\bm &={\rm Id}\ (m_1,m_2) = (m_1,m_2)
  \label{Rbm}\end{align}
  
Using the notation \eqref{Rbm},  relations \eqref{cSRphi}, \eqref{cSR2phi} and \eqref{Rbm} can be expressed as:
\begin{align}
c_{\mathcal{R}\phi}(\bm)\ &=\ c_\phi(\mathcal{R}^2\bm) = c_\phi(m_2-m_1-1,-m_1)\nn\\
c_{\cR^2\phi}(\bm) &= c_\phi(\mathcal{R}\bm) = c_\phi(-m_2,m_1-m_2+1)
\label{cSRcSR2phi}
\end{align}

Furthermore, by \eqref{RbKm} and \eqref{R2bKm} 
\begin{equation}
R\bK^\bm = \bK^{\mathcal{R}\bm}\ \ {\rm and}\ \ R^2\bK^\bm = \bK^{\mathcal{R}^2\bm} .
\label{RbKmR2bKm}
\end{equation}

Each point in $\mathbb{Z}^2$ lies on an orbit of $\mathcal{R}$ of precisely length $3$, a 3-cycle. To see this, note that by \eqref{Rbm}
  $\mathcal{R}^3\bm=\bm$ for all $\bm\in\mathbb{Z}^2$. So we need only check that there are no solutions to either $\mathcal{R}\bm=\bm$ or to $\mathcal{R}^2\bm=\bm$. First, suppose $\mathcal{R}\bm=\bm$. Then, $\mathcal{R}^2\bm=\bm$ as well. So, on the one hand the centroid of  $\bm, \mathcal{R}\bm$ and $\mathcal{R}^2\bm$ is equal to $\bm\in\mathbb{Z}^2$. On the other hand, by \eqref{Rbm}
  their centroid is $(-1/3,1/3)\notin\mathbb{Z}^2$, a contradiction.  Therefore, there are no $\mathbb{Z}^2$ solutions of $\mathcal{R}\bm=\bm$. Now if $\mathcal{R}^2\bm=\bm$, then applying $\mathcal{R}$
   to this relation yields $\bm=\mathcal{R}^3\bm=\mathcal{R}\bm$, and we're back to the previous case.
  
We shall say that two points in $\mathbb{Z}^2$, $\bm$ and $\bn$ are equivalent, $\bm\approx \bn$, if 
they lie on the same 3-cycle of $\mathcal{R}$. We identify all equivalent points by introducing the set of equivalence classes, $\mathbb{Z}^2/\approx$\ .
\begin{definition}\label{Sdef} We denote by $\mathcal{S}$ a set consisting of exactly one representative of each equivalence class in $\mathbb{Z}^2/\approx$\ . For example, $\{(0,0),(0,1),(-1,0)\}\in \mathbb{Z}^2/\approx$,  from which we choose $(0,1)$ as its representative in $\mathcal{S}$.
\end{definition}

Using the relations \eqref{Rbm}, we can express the Fourier series of an arbitrary $\phi\in L^2_\bK$ as
   a sum over 3-cycles of $\mathcal{R}$:
  \begin{align}
\phi(\bx)\ =&\   \sum_{(m_1,m_2)\in\Z^2}\ c_\phi(m_1,m_2)\ 
 e^{i(\bK + m_1\bk_1+m_2\bk_2)\cdot\bx} \nn\\
=& \sum_{\bm\in\mathcal{S}}\ \left( c_\phi(\bm)\ 
 e^{i\bK^\bm\cdot\bx}\ 
 +\ c_\phi(\cR\bm) e^{iR\bK^\bm\cdot\bx}  +\ c_\phi(\cR^2\bm) e^{i R^2\bK^\bm\cdot\bx}\ \right),
 \label{Fourier-S}\end{align}
 where $\mathcal{R}^j\bm,\ j=1,2$ is given in \eqref{Rbm}.

  We now turn to the Fourier representation of elements of the subspaces $L^2_{\bK,1},\ L^2_{\bK,\tau}$
  and $L^2_{\bK,\bar\tau}$. 
 
\begin{proposition}\label{coeff-eigs}
Let $\phi\in L^2_{\bK}$. 
\begin{align}
\mathcal{R}\phi\ &=\  \phi\ \Leftrightarrow\ \ c_\phi(\bm) = c_\phi(\mathcal{R}\bm) = c_\phi(\mathcal{R}^2\bm) 
\label{c-eig-1}\\
\mathcal{R}\phi\ &=\ \tau\ \phi\ \Leftrightarrow\ \   c_\phi(\mathcal{R}^2\bm) = \tau c_\phi(\bm)\ {\rm and}\ 
c_\phi(\mathcal{R}\bm) = \bar\tau c_\phi(\bm) \label{c-eig-tau}\\
\mathcal{R}\phi\ &=\ \bar\tau\ \phi\ \Leftrightarrow\ \   c_\phi(\mathcal{R}^2\bm) = \bar\tau c_\phi(\bm)\ {\rm and}\ 
c_\phi(\mathcal{R}\bm) = \tau c_\phi(\bm) .\label{c-eig-taubar}
\end{align}
Moreover,
\begin{align}
\mathcal{R}^2\phi\ &=\ \bar{\tau}\ \phi\ \Leftrightarrow\ \ c_{\mathcal{R}^2\phi}(\bm)=\ c_\phi(\mathcal{R}\bm) = \bar\tau c_\phi(\bm),\label{c-eig-taubar2}
\end{align}
where $\mathcal{R}\bm$ and $\mathcal{R}^2\bm$ are defined in \eqref{Rbm}.
\end{proposition}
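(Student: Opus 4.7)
My proof plan is to reduce everything to the Fourier coefficient identities \eqref{cSRcSR2phi}, namely $c_{\mathcal{R}\phi}(\bm)=c_\phi(\mathcal{R}^2\bm)$ and $c_{\mathcal{R}^2\phi}(\bm)=c_\phi(\mathcal{R}\bm)$, together with the cyclicity $\mathcal{R}^3=\mathrm{Id}$ on $\mathbb{Z}^2$ (which is part of \eqref{Rbm}). Since the Fourier expansion \eqref{phi-expand1} of an element of $L^2_{\bK}$ is unique, the operator identity $\mathcal{R}\phi=\lambda\phi$ in $L^2_{\bK}$ is equivalent to the coefficient identity $c_{\mathcal{R}\phi}(\bm)=\lambda\, c_\phi(\bm)$ for every $\bm\in\mathbb{Z}^2$. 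Substituting \eqref{cSRcSR2phi} on the left-hand side converts this into a closed relation on the coefficients of $\phi$ alone.

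For \eqref{c-eig-1}, writing out $\mathcal{R}\phi=\phi$ gives $c_\phi(\mathcal{R}^2\bm)=c_\phi(\bm)$ for all $\bm$. Applying this identity to $\mathcal{R}\bm$ in place of $\bm$ and using $\mathcal{R}^3\bm=\bm$ yields $c_\phi(\mathcal{R}\bm)=c_\phi(\bm)$ as well; conversely, if both hold, then $c_{\mathcal{R}\phi}(\bm)=c_\phi(\mathcal{R}^2\bm)=c_\phi(\bm)$, so $\mathcal{R}\phi=\phi$.

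For \eqref{c-eig-tau}, the equation $\mathcal{R}\phi=\tau\phi$ becomes $c_\phi(\mathcal{R}^2\bm)=\tau\, c_\phi(\bm)$. Iterating this relation at $\mathcal{R}^2\bm$ gives $c_\phi(\mathcal{R}^4\bm)=\tau\, c_\phi(\mathcal{R}^2\bm)=\tau^2 c_\phi(\bm)$, and since $\mathcal{R}^4=\mathcal{R}$ and $\tau^2=\bar\tau$, this yields $c_\phi(\mathcal{R}\bm)=\bar\tau\, c_\phi(\bm)$. The converse follows by plugging these two identities back into \eqref{cSRcSR2phi}. The case \eqref{c-eig-taubar} with $\lambda=\bar\tau$ is identical, with the roles of $\tau$ and $\bar\tau$ interchanged (and noting $\bar\tau^2=\tau$). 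Finally, \eqref{c-eig-taubar2} is immediate: the first equality is just the second half of \eqref{cSRcSR2phi}, which holds for every $\phi\in L^2_{\bK}$, and the second equality is then the statement $c_{\mathcal{R}^2\phi}(\bm)=\bar\tau\, c_\phi(\bm)$, which by uniqueness of Fourier coefficients is equivalent to $\mathcal{R}^2\phi=\bar\tau\phi$.

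There is no substantive analytic obstacle; the proposition is a purely algebraic consequence of \eqref{cSRcSR2phi} and the relation $\mathcal{R}^3=\mathrm{Id}$. The only point requiring care is index bookkeeping, specifically remembering that $\mathcal{R}$ acts on indices as the affine map $(m_1,m_2)\mapsto(-m_2,m_1-m_2+1)$ (not as a linear rotation), so that the inverse $\mathcal{R}^{-1}=\mathcal{R}^2$ sends $\bm$ to $(m_2-m_1-1,-m_1)$; this is already encoded in \eqref{Rbm}, so I would simply cite it and avoid re-deriving coordinate formulas.
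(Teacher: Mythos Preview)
Your proof is correct and follows essentially the same approach as the paper: both reduce the eigenvector conditions to the coefficient identities \eqref{cSRcSR2phi} and then use $\mathcal{R}^3=\mathrm{Id}$ together with $\tau^2=\bar\tau$. The only cosmetic difference is that the paper obtains $c_\phi(\mathcal{R}\bm)=\sigma^2 c_\phi(\bm)$ by squaring the operator relation ($\mathcal{R}^2\phi=\sigma^2\phi$) and then invoking the second identity in \eqref{cSRcSR2phi}, whereas you obtain it by iterating the first identity at the shifted index $\mathcal{R}^2\bm$; these are equivalent one-line manipulations.
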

\begin{proof} Assume $\mathcal{R}\phi=\sigma\phi$. Then,   $c_{\mathcal{R}\phi}(\bm)=\sigma c_{\phi}(\bm)$.
By  \eqref{cSRcSR2phi} $c_{\mathcal{R}\phi}(\bm)=c_\phi(\mathcal{R}^2\bm)$ and therefore
\begin{equation}
c_\phi(\mathcal{R}^2\bm)\ =\ \sigma\ c_\phi(\bm)
\label{c_phi1}\end{equation}
Furthermore, $\mathcal{R}^2\phi=\sigma^2\phi$ and therefore  $c_{\mathcal{R}^2\phi}(\bm)=\sigma^2 c_{\phi}(\bm)$. By  \eqref{cSRcSR2phi} $c_{\mathcal{R}^2\phi}(\bm)=c_\phi(\mathcal{R}\bm)$ and therefore
\begin{equation}
c_\phi(\mathcal{R}\bm)\ =\ \sigma^2\ c_\phi(\bm)
\label{c_phi2}\end{equation}
We next apply relations \eqref{c_phi1} and \eqref{c_phi2} to the cases: $\sigma=1,\tau,\bar\tau$.
 Let $\sigma=\tau$. Then, $\mathcal{R}\phi=\tau\phi$ implies 
 $c_\phi(\mathcal{R}^2\bm) = \tau\ c_\phi(\bm)$, by \eqref{c_phi1}. Also, by \eqref{c_phi2} we have
  $c_\phi(\mathcal{R}\bm) = \tau^2\ c_\phi(\bm) = \bar\tau\ c_\phi(\bm)$. This proves \eqref{c-eig-tau}. The 
  cases $\sigma=1,\bar\tau$ are similar. \end{proof}
  \medskip
  
 Proposition \ref{coeff-eigs} can now be used to find a representation of the eigenspaces of $\mathcal{R}$. We state the result for an arbitrary point, $\bK_\star$, of $\bK$ or $\bK'$ type.\bigskip\bigskip
 
 \begin{proposition}\label{Fourier-espaces}
 \begin{enumerate}
 \item $\phi\in L^2_{\bK_\star,\tau} \Leftrightarrow $  there exists $\{c(\bm)\}_{\bm\in\mathcal{S}}\in l^2(\mathcal{S})$ such that
 \begin{equation}
 \phi(\bx)\ =\  \sum_{\bm\in\mathcal{S}}\  c(\bm)\ \left(
 e^{i\bK_\star^\bm\cdot\bx}\ 
 +\ \bar\tau e^{iR\bK_\star^\bm\cdot\bx}  +\ \tau e^{i R^2\bK_\star^\bm\cdot\bx}\ \right).
 \label{tau-series}\end{equation}
 \item $\phi\in L^2_{\bK_\star,\bar\tau} \Leftrightarrow $  there exists $\{c(\bm)\}_{\bm\in\mathcal{S}}\in l^2(\mathcal{S})$ such that
 \begin{equation}
 \phi(\bx)\ =\  \sum_{\bm\in\mathcal{S}}\  c(\bm)\ \left(
 e^{i\bK_\star^\bm\cdot\bx}\ 
 +\ \tau e^{iR\bK_\star^\bm\cdot\bx}  +\ \bar\tau e^{i R^2\bK_\star^\bm\cdot\bx}\ \right).
\label{taubar-series} \end{equation}
\item If $\phi_1\in L^2_{\bK_\star,\tau}$ is given by
\begin{align}
 \phi_1(\bx,\bK_\star)\ &=\ \sum_{\bm\in{\cal S}}\ c(\bm)\ 
\left(\ e^{i\bK_\star^m\cdot\bx }\ +\ \bar{\tau}\ e^{iR\bK_\star^m\cdot\bx }\ +\  
\tau\ e^{iR^2\bK_\star^m\cdot\bx } \right), \
\label{Phi1-Fourier}\end{align} 
then  $ \phi_2(\bx,\bK_\star)\ \equiv \overline{\phi_1(-\bx,\bK_\star)}\in L^2_{\bK_\star,\overline{\tau}}$ and 
 \begin{align}
 \phi_2(\bx,\bK_\star)\ & =\  
 \sum_{\bm\in{\cal S}}\ \overline{c(\bm)}\ 
\left(\ e^{i\bK_\star^m\cdot\bx }\ +\ \tau\ e^{iR\bK_\star^m\cdot\bx }\ +\  
\bar{\tau}\ e^{i\cR^2\bK_\star^m\cdot\bx } \right)\ . \label{Phi2-Fourier}
\end{align}
 \item  $\phi\in L^2_{\bK_\star,1} \Leftrightarrow $  there exists $\{c(\bm)\}_{\bm\in\mathcal{S}}\in l^2(\mathcal{S})$ such that
 \begin{equation}
 \phi(\bx)\ =\  \sum_{\bm\in\mathcal{S}}\  c(\bm)\ \left(
 e^{i\bK_\star^\bm\cdot\bx}\ 
 +\  e^{iR\bK_\star^\bm\cdot\bx}  +\ e^{i R^2\bK_\star^\bm\cdot\bx}\ \right).
 \end{equation}
 \end{enumerate}
 \end{proposition}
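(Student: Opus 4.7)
The plan is to derive all four assertions as direct consequences of two results already in hand: the general decomposition \eqref{Fourier-S} of any $\phi\in L^2_{\bK_\star}$ as a sum over $3$-cycles of $\mathcal{R}$ acting on $\mathbb{Z}^2$, and the coefficient characterizations \eqref{c-eig-1}--\eqref{c-eig-taubar} given in Proposition \ref{coeff-eigs}. Throughout, I will also invoke the identity \eqref{RbKmR2bKm}, namely $R\bK_\star^\bm=\bK_\star^{\mathcal{R}\bm}$ and $R^2\bK_\star^\bm=\bK_\star^{\mathcal{R}^2\bm}$, which lets me pass freely between applying $R$ in physical space and $\mathcal{R}$ to the index $\bm$.

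For parts (1), (2), (4), the forward implications are a substitution: take $\phi\in L^2_{\bK_\star,\sigma}$ for $\sigma\in\{\tau,\bar\tau,1\}$, write its Fourier series in the form \eqref{Fourier-S}, and then replace $c_\phi(\mathcal{R}\bm)$ and $c_\phi(\mathcal{R}^2\bm)$ by the expressions in Proposition \ref{coeff-eigs}, factoring $c_\phi(\bm)$ out of each $3$-cycle term. For instance, in the $\tau$-case \eqref{c-eig-tau} gives $c_\phi(\mathcal{R}\bm)=\bar\tau\,c_\phi(\bm)$ and $c_\phi(\mathcal{R}^2\bm)=\tau\,c_\phi(\bm)$, yielding exactly \eqref{tau-series} with $c(\bm):=c_\phi(\bm)$. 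For the converse implications, I take any $\phi$ of the prescribed form, apply $\mathcal{R}$ term-by-term using \eqref{RbKmR2bKm}, and observe that each $3$-cycle triple is cyclically permuted while picking up the appropriate root of unity; using $\tau\bar\tau=1$ and $\tau^3=1$, this yields $\mathcal{R}\phi=\sigma\phi$. The $l^2$ statement follows from the orthogonality of the basis $\{e^{i\bK_\star^\bm\cdot\bx}\}_{\bm\in\mathbb{Z}^2}$ on $\Omega$, which makes Parseval's identity within each $3$-cycle reduce the $L^2$ norm of $\phi$ to a multiple of $\|\{c(\bm)\}\|_{l^2(\mathcal{S})}$.

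For part (3), I would first verify that $\phi_2(\bx):=\overline{\phi_1(-\bx;\bK_\star)}$ lies in $L^2_{\bK_\star}$ by a direct computation of the quasiperiodicity: from $\phi_1(\by+\bv)=e^{i\bK_\star\cdot\bv}\phi_1(\by)$ with $\by=-\bx-\bv$ one obtains $\phi_1(-\bx-\bv)=e^{-i\bK_\star\cdot\bv}\phi_1(-\bx)$, whose conjugate gives $\phi_2(\bx+\bv)=e^{i\bK_\star\cdot\bv}\phi_2(\bx)$. Then
\[
\mathcal{R}\phi_2(\bx)=\phi_2(R^*\bx)=\overline{\phi_1(-R^*\bx)}=\overline{\mathcal{R}\phi_1(-\bx)}=\overline{\tau\,\phi_1(-\bx)}=\bar\tau\,\phi_2(\bx),
\]
so $\phi_2\in L^2_{\bK_\star,\bar\tau}$. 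The explicit Fourier series \eqref{Phi2-Fourier} is then obtained by sending $\bx\mapsto-\bx$ in \eqref{Phi1-Fourier} and taking the complex conjugate: the exponentials return to $e^{i\bK_\star^\bm\cdot\bx}$, $e^{iR\bK_\star^\bm\cdot\bx}$, $e^{iR^2\bK_\star^\bm\cdot\bx}$, while $c(\bm)$ becomes $\overline{c(\bm)}$ and the factors $\bar\tau,\tau$ swap. No step here presents a genuine obstacle; the only care needed is the bookkeeping to make sure the $3$-cycle indexing in \eqref{Fourier-S} matches the coefficient relations \eqref{c_phi1}--\eqref{c_phi2}, so that the roots of unity multiplying $e^{iR\bK_\star^\bm\cdot\bx}$ and $e^{iR^2\bK_\star^\bm\cdot\bx}$ come out as $\bar\tau$ and $\tau$ (rather than reversed) in the $\tau$-eigenspace formula.
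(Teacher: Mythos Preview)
Your proposal is correct and follows exactly the approach the paper indicates: the paper states only that ``Proposition \ref{coeff-eigs} can now be used to find a representation of the eigenspaces of $\mathcal{R}$,'' and you supply precisely those details by combining the orbit decomposition \eqref{Fourier-S} with the coefficient relations \eqref{c-eig-1}--\eqref{c-eig-taubar}. Your treatment of part (3), including the quasiperiodicity check and the verification that $\mathcal{R}\phi_2=\bar\tau\phi_2$, is also sound and matches the symmetry observed in Remark \ref{Vreal-even}.
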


  We summarize the preceding in a result which facilitates the study of $H_V$ on $L^2_{\bK_\star}$ in terms of the action of $\mathcal{R}$ on  invariant subspaces of $H_V$.\medskip

\begin{proposition}\label{L2decomp} Let $\bK_\star$ denote a point of $\bK$ or $\bK'$ type,
 $R$ denote the $2\pi/3$ clockwise rotation matrix (see \eqref{Rdef}) and $\mathcal{R}[f](\bx)=f(R^*\bx)$. Then 
 $\mathcal{R}$, acting on $L^2_{\bK_\star}$ has eigenvalues $1,\ \tau$ and $\bar{\tau}=\tau^2$ inducing  a corresponding orthogonal sum decomposition of $L^2_{\bK_\star}$ into  eigenspaces:
\begin{equation}
L^2_{\bK_\star}=L^2_{\bK_\star,1}\oplus L^2_{\bK_\star,\tau}\oplus L^2_{\bK_\star,\bar{\tau}}\ .
\label{L2-decomp}
\end{equation} 
The elements of each summand are represented as in Proposition \ref{Fourier-espaces}. 
\end{proposition}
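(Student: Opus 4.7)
The plan is to exploit the fact that $\mathcal{R}$ is a unitary operator on $L^2_{\bK_\star}$ satisfying $\mathcal{R}^3 = \mathrm{Id}$, so its spectrum is a subset of the cube roots of unity $\{1,\tau,\bar\tau\}$, and to exhibit an explicit resolution of the identity into the three spectral projections. Concretely, I would first verify that $\mathcal{R}$ is a well-defined unitary on $L^2_{\bK_\star}$: the identity $\mathcal{R}\phi(\bx+\bv) = \phi(R^*\bx + R^*\bv)$ combined with $R^*\bv \in \Lambda_h$ (equation \eqref{Rstar-vj}) and $R\bK_\star = \bK_\star + \tilde{\bv}$ for some $\tilde{\bv}\in \Lambda_h^*$ (equation \eqref{Bvertices}) shows that $\mathcal{R}$ preserves the pseudo-periodicity class. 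This extends the invariance of $C^\infty_{\bK_\star}$ established inside Proposition \ref{SR-invariance} to all of $L^2_{\bK_\star}$ by density, and unitarity follows from $|\det R^*| = 1$.

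Next, I would introduce the three projection operators
\begin{equation}
P_1 = \tfrac{1}{3}\bigl(I + \mathcal{R} + \mathcal{R}^2\bigr),\qquad
P_\tau = \tfrac{1}{3}\bigl(I + \bar\tau\,\mathcal{R} + \tau\,\mathcal{R}^2\bigr),\qquad
P_{\bar\tau} = \tfrac{1}{3}\bigl(I + \tau\,\mathcal{R} + \bar\tau\,\mathcal{R}^2\bigr),
\nonumber
\end{equation}
and check, using only $\mathcal{R}^3=\mathrm{Id}$ and $1+\tau+\bar\tau=0$, that $P_1 + P_\tau + P_{\bar\tau} = I$, that $P_j^2 = P_j$, $P_j P_k = 0$ for $j\neq k$, and that $\mathrm{Range}(P_\sigma) \subset L^2_{\bK_\star,\sigma}$ for each $\sigma\in\{1,\tau,\bar\tau\}$. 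Since $\mathcal{R}$ is unitary and the $P_\sigma$ are self-adjoint (as is easily checked from $\mathcal{R}^* = \mathcal{R}^{-1} = \mathcal{R}^2$), the three subspaces are pairwise orthogonal, and the identity $I = P_1 + P_\tau + P_{\bar\tau}$ yields the orthogonal direct sum decomposition \eqref{L2-decomp}.

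Finally, to conclude the Fourier representation statement for the summands, I would invoke Proposition \ref{Fourier-espaces}: given $\phi\in L^2_{\bK_\star,\sigma}$, its Fourier coefficients $c_\phi(\bm)$ obey the symmetry relations \eqref{c-eig-1}, \eqref{c-eig-tau}, \eqref{c-eig-taubar} derived there, and grouping the general expansion \eqref{Fourier-S} over the three-cycles of $\mathcal{R}$ (indexed by the representative set $\mathcal{S}$ of Definition \ref{Sdef}) gives the explicit series in \eqref{tau-series}, \eqref{taubar-series}, and the $\sigma=1$ formula.

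There is no substantial obstacle here — the result is essentially a consequence of the spectral theorem for a unitary operator of finite order on a Hilbert space, together with the bookkeeping already carried out in Propositions \ref{coeff-eigs} and \ref{Fourier-espaces}. The only step requiring a little care is confirming that $\mathcal{R}$ indeed preserves $L^2_{\bK_\star}$ for the specific $\bK$- and $\bK'$-type points, which reduces to the observation that for such $\bK_\star$ the vector $R\bK_\star - \bK_\star$ lies in the dual lattice $\Lambda_h^*$.
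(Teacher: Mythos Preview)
Your proposal is correct and essentially aligned with the paper's treatment. The paper does not give a separate proof of this proposition; it is explicitly presented as a summary of the preceding discussion (unitarity of $\mathcal{R}$, the observation $\mathcal{R}^3=\mathrm{Id}$ forcing eigenvalues in $\{1,\tau,\bar\tau\}$, pairwise orthogonality of eigenspaces, and the Fourier characterizations of Propositions \ref{coeff-eigs} and \ref{Fourier-espaces}). Your explicit spectral-projection formulas $P_\sigma = \tfrac{1}{3}(I+\bar\sigma\mathcal{R}+\sigma\mathcal{R}^2)$ make the completeness $P_1+P_\tau+P_{\bar\tau}=I$ transparent, whereas the paper leaves this implicit in the appeal to unitarity; otherwise the two routes coincide.
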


\begin{remark}  Since, by Proposition \ref{SR-invariance}, $H_V$ and $\mathcal{R}$ commute on $L^2_{\bK_\star}$,  the spectral theory of 
  $H_V$ in $L^2_{\bK_\star}$ can  be reduced to its independent study in each of the 
eigenspaces in the orthogonal sum \eqref{L2-decomp}.  
\end{remark}

\bigskip\bigskip

\section{Spectral properties of $H^{(0)}$ in $L^2_{\bK_\star}$\ -\ Degeneracy at $\bK$ and $\bK'$ points }\label{Veq0}

Our starting point for the study of $H^{(\eps)}$ on $L^2_{\bK_\star}$ is 
the study of $H^{(0)}=-\Delta$. Consider the eigenvalue problem
\begin{equation}
H^{(0)} \phi^{(0)}\ =\ \mu^{(0)}(\bk)\phi^{(0)},\ \ \ \phi^{(0)}\in L^2_\bk .
\nn\end{equation}
Equivalently  $\phi^{(0)}(\bx;\bk)=e^{i\bk\cdot\bx} p^{(0)}(\bx;\bk)$, where $p^{(0)}(\cdot;\bk)\in L^2(\mathbb{R}^2/\Lambda_h)$:
 \begin{align}
 & H^{(0)}(\bk) p^{(0)}\ \equiv\ -(\nabla+i\bk)^2 p^{(0)}\ =\ \mu^{(0)}(\bk) p^{(0)} ,\label{honey0-eig}\\
 &p^{(0)}(\bx+\bv;\bk) \ = p^{(0)}(\bx;\bk),\ \ \bv\in\Lambda_h\ .
\label{p-Lam-periodic}\end{align}

The eigenvalue problem \eqref{honey0-eig}, \eqref{p-Lam-periodic} has solutions of the form:
\begin{equation}
p^{(0)}_{m_1,m_2}(\bx;\bk)\ =\ e^{i\left(m_1\bk_1+m_2\bk_2\right)\cdot\bx}
\nn\end{equation}
with associated eigenvalues
\begin{equation}
\mu^{(0)}_{m_1,m_2}(\bk)\ =\ \left| \bk + m_1\bk_1+m_2\bk_2 \right|^2,\ \ \bk\in\mathcal{B}\ .
\label{H0k-eigs}
\end{equation}
\medskip

\begin{proposition}\label{H0-spec}
Let $\bk=\bK_\star$ denote any vertex of the hexagon $\mathcal{B}_h$ (points of $\bK$ or $\bKp$ type); see \eqref{Bvertices}.
 Then, 
 \begin{enumerate}
 \item $\mu^{(0)}=|\bK_\star|^2$ is an eigenvalue of $H_0$ of multiplicity three with corresponding three-dimensional eigenspace
\begin{equation}
 {\rm span}\ \{\ e^{i\bK\star\cdot\bx}\ ,\  e^{iR\bK\star\cdot\x}\ ,\  e^{iR^2\bK\star\cdot\bx}\ \}\ .
 \label{3dL2per1}
 \end{equation}
 \item Restricted to each of the $\mathcal{R}-$ invariant subspaces of  
 \begin{equation}
L^2_{\bK_\star}\ \equiv\ L^2_{\bK_\star,1}\ \oplus\ L^2_{\bK_\star,\tau} \oplus\ \ L^2_{\bK_\star,\bar\tau}\ \ ,
\nn\end{equation}
$H^{(0)}$ has an eigenvalue $\mu^{(0)}=|\bK_\star|^2$ of multiplicity one  with eigenspaces:
\begin{align}
&{\rm span}\ \{ e^{i\bK_\star\cdot\bx}+e^{iR\bK_\star\cdot\x}+e^{iR^2\bK_\star\cdot\bx}\}\subset L^2_{\bK_\star,1}\nn\\
& {\rm span}\ \{ e^{i\bK_\star\cdot\x}+\bar\tau e^{iR\bK_\star\cdot\x} + 
\tau e^{iR^2\bK_\star\cdot\bx}\}\subset L^2_{\bK_\star,\tau}\ \  {\rm and}\nn\\
&  {\rm span}\ \{ e^{i\bK_\star\cdot\x}+ \tau e^{iR\bK_\star\cdot\x} + 
\bar\tau e^{iR^2\bK_\star\cdot\bx}\ \}\ \subset\ L^2_{\bK_\star,\bar\tau}.
\nn\end{align}
\item $\mu^{(0)}$ is the lowest eigenvalue of $H^{(0)}$ in $L^2_{\bK_\star}$.
\end{enumerate}
 \end{proposition}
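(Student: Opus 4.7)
The plan is to diagonalize $H^{(0)} = -\Delta$ on $L^2_{\bK_\star}$ by exploiting the Fourier representation \eqref{phi-expand1}. Any $\phi \in L^2_{\bK_\star}$ decomposes as $\phi(\bx) = \sum_{\bm \in \mathbb{Z}^2} c(\bm)\, e^{i\bK_\star^{\bm}\cdot\bx}$, and on each plane wave $-\Delta$ acts as multiplication by $|\bK_\star^{\bm}|^2$. So the point spectrum of $H^{(0)}$ on $L^2_{\bK_\star}$ is precisely $\{|\bK_\star^{\bm}|^2 : \bm \in \mathbb{Z}^2\}$, with the eigenspace at level $\mu$ being the closed span of those $e^{i\bK_\star^{\bm}\cdot\bx}$ for which $|\bK_\star^{\bm}|^2=\mu$. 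This reduces the whole proposition to a lattice-point counting problem together with a symmetry diagonalization.

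For Part 1, without loss of generality take $\bK_\star = \bK = \tfrac{1}{3}(\bk_1 - \bk_2)$. Using \eqref{bk12} I compute $|\bK|^2 = q^2/3$ and, more generally,
\begin{equation}
|\bK^{\bm}|^2\ =\ q^2\bigl[(m_1+\tfrac13)^2 + (m_2-\tfrac13)^2 - (m_1+\tfrac13)(m_2-\tfrac13)\bigr]. \nonumber
\end{equation}
The equation $|\bK^{\bm}|^2 = |\bK|^2$ becomes, after completing the square and clearing denominators, a Diophantine equation of the form $3A^2 + B^2 = 4$ with $A,B\in\mathbb{Z}$ subject to a congruence mod $3$; inspection yields exactly three solutions $\bm \in \{(0,0),\,(0,1),\,(-1,0)\}$. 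In view of \eqref{Bvertices} these are precisely the indices for which $\bK^{\bm} \in \{\bK, R\bK, R^2\bK\}$, so the eigenspace is three-dimensional and spanned as claimed in \eqref{3dL2per1}.

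For Part 2, I observe that by Proposition \ref{SR-invariance}, $\mathcal{R}$ preserves this three-dimensional eigenspace, and by \eqref{RbKmR2bKm} the matrix of $\mathcal{R}$ in the basis $\{e^{i\bK_\star\cdot\bx}, e^{iR\bK_\star\cdot\bx}, e^{iR^2\bK_\star\cdot\bx}\}$ is the cyclic permutation. Its eigenvalues are $1, \tau, \bar\tau$, with eigenvectors given by the standard discrete Fourier transform on $\mathbb{Z}/3\mathbb{Z}$. Reading off those eigenvectors (and cross-checking against Proposition \ref{Fourier-espaces} with the trivial coefficient sequence $c(\bm) = \delta_{\bm,\mathbf{0}}$) yields exactly the three one-dimensional eigenspaces listed, and by Proposition \ref{L2decomp} these are the intersections of the three-dimensional eigenspace with $L^2_{\bK_\star,1}$, $L^2_{\bK_\star,\tau}$, and $L^2_{\bK_\star,\bar\tau}$.

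For Part 3, I invoke the defining property of the Brillouin zone: $\bK_\star$ is a vertex of $\brill_h$, so by construction $|\bK_\star| \le |\bK_\star - \bv^*|$ for every $\bv^* \in \Lambda_h^*$. Taking $\bv^* = -m_1\bk_1 - m_2\bk_2$ gives $|\bK_\star|^2 \le |\bK_\star^{\bm}|^2$ for all $\bm\in\mathbb{Z}^2$, so $|\bK_\star|^2$ is indeed the smallest eigenvalue of $H^{(0)}$ on $L^2_{\bK_\star}$. The only mildly technical step is the Diophantine verification in Part 1; everything else is algebraic bookkeeping with the Fourier representation and the cyclic action of $\mathcal{R}$.
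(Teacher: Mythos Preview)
Your proof is correct and follows essentially the same approach as the paper: reduce to the lattice-point equation $|\bK^{\bm}|^2=|\bK|^2$, solve it to find the three indices $(0,0),(0,1),(-1,0)$, and then decompose the resulting three-dimensional eigenspace according to the cyclic action of $\mathcal{R}$ (the paper does this last step by citing Proposition~\ref{Fourier-espaces}, which amounts to the same diagonalization of the permutation you carry out). The one genuine difference is Part~3: the paper verifies directly that $m_1^2+m_2^2-m_1m_2+m_1-m_2\ge1$ for all $(m_1,m_2)\in\mathbb{Z}^2$ other than the three special indices, whereas you invoke the defining property of the Brillouin zone, namely $|\bK_\star|\le|\bK_\star+\bm\bk|$ for all $\bm\in\mathbb{Z}^2$. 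Your geometric argument is cleaner and explains \emph{why} the inequality holds rather than merely checking it.
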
\medskip
 
\begin{proof} Without loss of generality, let $\bK_\star=\bK$. Since $R$ is orthogonal, $|\bK|=|R\bK|=|R^2\bK|$. Therefore, $-\Delta\Psi=|\bK|^2\Psi$ for $\Psi=e^{i\bK\cdot\bx},\  e^{iR\bK\cdot\bx}$ and $e^{iR^2\bK\cdot\bx}$. It follows that $\mu^{(0)}=|\bK|^2$ is an eigenvalue of multiplicity at least three. To show that the multiplicity is exactly three, we seek all $\bm=(m_1,m_2)$ for which $|\bK^\bm|^2=\ |\bK|^2$. Using $\bK^\bm=\bK + m_1\bk_1+m_2\bk_2$, we obtain
\begin{equation}
m_1^2 + m_2^2 + m_1-m_2 - m_1 m_2 = 0
\nn\end{equation}
which can be zero only if $\bm=(0,0), (0,1)$ or $(-1,0)$. 
In the first instance, $\bK^{(0,0)}=\bK$. If $\bm=(0,1)$, then $\bK^{(0,1)}=\bK + k_2 = R\bK$.
Finally, if $\bm=(-1,0)$ then 
 $\bK^{(-1,0)}=\bK-\bk_1= (\bK + \bk_2 ) - (\bk_1+\bk_2) = R\bK + R\bk_2 = R(\bK+\bk_2)=R^2\bK$. This proves conclusion 1. 
Proposition  \ref{Fourier-espaces} above, which characterizes the Fourier series of functions in $L^2_{\bK,\sigma},\ \sigma=1,\tau,\bar\tau$ implies conclusion 2.
 Conclusion 3. holds because $m_1^2+m_2^2-m_1m_2+m_1-m_2\ge1$ for $(m_1,m_2)\in\mathbb{Z}^2$ other than $(0,0), (0,1)$ and $(-1,0)$.
\end{proof}
\bigskip

Recall that for each $\bk\in\brill_h$, the  $L^2_\bk$ eigenvalues of $H^{(0)}$ are ordered \eqref{eig-ordering}:
\begin{equation}
\mu^{(0)}_1(\bk)\le \mu^{(0)}_2(\bk)\le \mu^{(0)}_3(\bk)\le\mu^{(0)}_4(\bk)\le\dots
\label{mu0_123}
\end{equation}

For $\bk=\bK$ we have
\begin{equation}
|\bK|^2\ =\ \mu^{(0)}_1(\bK)= \mu^{(0)}_2(\bK)= \mu^{(0)}_3(\bK)\  <\ \mu^{(0)}_4(\bK)\le\dots
\label{mu0_123A}
\end{equation}

We shall see in section \ref{sec:pfeps-small} that for small $\eps$, the spectrum $L^2_\bK$ perturbs to 
\begin{align}
&\textrm{either}\nn\\
&\ \ \ \  \ \mu^{(\eps)}_1(\bK)= \mu^{(\eps)}_2(\bK)< \mu^{(\eps)}_3(\bK)\ < \mu^{(\eps)}_4(\bK)\le \dots
\label{case12}\\
&\textrm{or}\nn\\
& \ \mu^{(\eps)}_1(\bK)< \mu^{(\eps)}_2(\bK)= \mu^{(\eps)}_3(\bK)\  <\ \mu^{(\eps)}_4(\bK)\le\dots
\label{case23}\end{align}
In either case, the multiplicity three eigenvalue splits into a multiplicity two eigenvalue and a simple eigenvalue. The connection between the double eigenvalue and conical singularities 
of the dispersion surface is explained in the next section; see Theorem  \ref{prop:2impliescone}.
We shall see from  Theorem \ref{main-thm}, or rather its proof (in section \ref{sec:pfeps-small}) that for all small $\eps$, 
  conical singularities occur at all vertices $\bK_\star$ of $\brill_h$,  and that these occur  at the intersection point of
the first and second band dispersion surfaces in the case of \eqref{case12}, and at the intersection of the second and third bands in the case of \eqref{case23}. As $\eps$ increases, we continue to have such conical intersections of dispersion surfaces, but we do not control {\it which} dispersion surfaces intersect.

 \section{Multiplicity two $L^2_\bK$ eigenvalues of $H^{(\eps)}$ and conical singularities}\label{sec:doubleimpliescone}
 
 Let $\bK_\star$ a point of $\bK$ or $\bK'$ type.  In this section we show that 
  if $H_V$ acting in $L^2_{\bK_\star}$ has a dimension two eigenspace 
  $\mathbb{E}_\tau\oplus\mathbb{E}_{\bar\tau}$, where $\mathbb{E}_\tau$ and $\mathbb{E}_{\bar\tau}$ are dimension one subspaces of  $L^2_{\bK_\star,\tau}$  and  $L^2_{\bK_\star,\bar\tau}$, respectively, then the dispersion surface is conical in a neighborhood of $\bK_\star$. A related analysis 
   is carried out in \cite{Grushin:09}, where a more general class of spectral problems is considered and weaker conclusions obtained, {\it e.g.} see the notion of {\it conical point} in \cite{Grushin:09}. \bigskip
  
\noindent Recall that we assume $V\in C^\infty(\mathbb{R}^2/\Lambda_h)$. Below we shall, for notational convenience, suppress the subscript $V$ and write simply $H$ for $H_V=-\Delta+V$.
 \bigskip
 
 \begin{theorem} \label{prop:2impliescone}
 Let $H=-\Delta +V$, where $V(\bx)$ is a honeycomb lattice potential in the sense of Definition \ref{honeyV}. Let $\bK_\star$ denote any vertex of the Brillouin zone, $\brill_h$.
  Assume further that
  \begin{itemize}
  \item[(h1.$\tau$)]\ $H$ has an $L^2_{\bK_\star,\tau}$\ - eigenvalue, $\mu_0=\mu(\bK_\star)$, of multiplicity one, with corresponding eigenvector $\Phi_1(\bx)=\Phi_1\left(\bx;\bK_\star\right)$, normalized to have $L^2(\Omega)$ norm equal to one.
  \item[(h1.$\bar\tau$)]\ $H$ has an $L^2_{\bK_\star,\bar\tau}$\ - eigenvalue, $\mu_0=\mu(\bK_\star)$, of multiplicity one, with corresponding eigenvector $\Phi_2(\bx)=\overline{\Phi_1(-\bx)}$, 
  \item[(h2)]\ $\mu_0=\mu(\bK_\star)$ is not an eigenvalue of $H$ on $L^2_{\bK_\star,1}$.
  \item[(h3)]\ the following nondegeneracy condition holds:
  \begin{equation}
\lambda_\sharp \equiv   3\times {\rm area}(\Omega)\times \sum_{\bm\in\mathcal{S}} 
c(\bm;\Phi_1)^2\ \left(\begin{array}{c}1\\ i\end{array}\right)\ \mathbf{\cdot}\ \bK_\star^\bm\ \ne\ 0,
\label{lambda-sharp1} 
\end{equation}
where $\{c(\bm;\Phi_1)\}_{\bm\in\mathcal{S}}$ are Fourier coefficients of $\Phi_1$, as defined in Proposition \ref{Fourier-espaces}.
  \end{itemize}
   Then $H$ acting on $L^2$ has a dispersion surface which, in a neighborhood of $\bk=\bK_\star$,  is conical. That is, for $\bk-\bK_\star$ near ${\bf 0}$,
    there are two distinct branches of eigenvalues of the Floquet-Bloch eigenvalue problem with quasi-momentum,
     $\bk$:
 \begin{align}
 \mu_+(\bk) - \mu(\bK_\star)\ &=\ +\left|\lambda_\sharp\right|\ \left|\bk-\bK_\star\right|\ \left(\ 1+E_+(\bk-\bK_\star)\ \right)\label{localcone+}\\
 \mu_-(\bk) - \mu(\bK_\star)\ &=\ -\left|\lambda_\sharp\right|\ |\bk-\bK_\star|\ \left(\ 1+E_-(\bk-\bK_\star)\ \right),
\label{localcone-}  \end{align}
where $E_\pm(\kappa)=\mathcal{O}(|\kappa|)$ as $|\kappa|\to0$ and $E_\pm$ are Lipschitz continuous functions in a neighborhood of $0$.
 \end{theorem}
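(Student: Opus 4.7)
The plan is a Lyapunov-Schmidt reduction centered at the degenerate point $(\bk,\mu)=(\bK_\star,\mu_0)$, combined with an explicit symmetry analysis that dictates the leading structure of the reduced $2\times 2$ system. Set $\bk=\bK_\star+\kappa$ and $\phi(\bx;\bk)=e^{i\kappa\cdot\bx}\psi(\bx)$, so that $\psi\in L^2_{\bK_\star}$ satisfies
\begin{equation}
(H-\mu_0)\psi\ +\ 2\kappa\cdot(-i\nabla)\psi\ +\ (|\kappa|^2-M)\psi\ =\ 0,\qquad M:=\mu-\mu_0 .
\nn\end{equation}
By (h1.$\tau$), (h1.$\bar\tau$) and (h2), the kernel of $H-\mu_0$ on $L^2_{\bK_\star}$ is exactly $X_0:=\text{span}\{\Phi_1,\Phi_2\}$. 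Let $P$ be orthogonal projection onto $X_0$ and $Q=I-P$, and decompose $\psi=\alpha\Phi_1+\beta\Phi_2+\psi_\perp$. Since $H-\mu_0$ is boundedly invertible on $QL^2_{\bK_\star}$, the $Q$-projected equation is solved by the implicit function theorem to give $\psi_\perp=\psi_\perp(\kappa,M;\alpha,\beta)$, real-analytic in $(\kappa,M)$ near $0$, linear in $(\alpha,\beta)$, and of size $O(|\kappa|+|M|)$.

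Substituting back into the $P$-projection yields a homogeneous $2\times 2$ system $\mathcal{M}(\kappa,M)(\alpha,\beta)^T=0$ with
\begin{equation}
\mathcal{M}(\kappa,M)\ =\ -M\,I\ +\ 2\begin{pmatrix}\kappa\cdot\Lambda_{11}&\kappa\cdot\Lambda_{12}\\ \kappa\cdot\Lambda_{21}&\kappa\cdot\Lambda_{22}\end{pmatrix}\ +\ O(|\kappa|^2+|\kappa|\,|M|),
\nn\end{equation}
where $\Lambda_{ab}:=\langle\Phi_a,-i\nabla\Phi_b\rangle\in\mathbb{C}^2$. The decisive step is to extract the structure of the entries $\Lambda_{ab}$ from $\mathcal{R}$-symmetry. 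Using unitarity of $\mathcal{R}$ and the transformation rule $\mathcal{R}(-i\nabla f)=R^*(-i\nabla\,\mathcal{R}f)$, together with $\mathcal{R}\Phi_1=\tau\Phi_1$ and $\mathcal{R}\Phi_2=\bar\tau\Phi_2$, one obtains $\Lambda_{aa}=R^*\Lambda_{aa}$ for $a=1,2$ and $R^*\Lambda_{12}=\bar\tau\Lambda_{12}$, $R^*\Lambda_{21}=\tau\Lambda_{21}$. Since $R^*$ (rotation by $2\pi/3$) has no nonzero fixed vector, $\Lambda_{11}=\Lambda_{22}=0$; by \eqref{Reigs}, the off-diagonal entries lie on the one-dimensional $R^*$-eigenlines, so $\Lambda_{12}=\lambda_\sharp\,\zeta$ and, by self-adjointness of $-i\nabla$ in $L^2_{\bK_\star}$, $\Lambda_{21}=\overline{\lambda_\sharp}\,\bar\zeta$.

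To identify $\lambda_\sharp$ with the constant in (h3), I compute $\langle\Phi_1,-i\nabla\Phi_2\rangle$ directly from the $\tau$-series of $\Phi_1$ given in Proposition \ref{Fourier-espaces} and the identity $\Phi_2(\bx)=\overline{\Phi_1(-\bx)}$, which relates the Fourier coefficients of $\Phi_2$ to complex conjugates of those of $\Phi_1$. The double sum over $\mathbb{Z}^2$ collapses onto $3$-cycles of $\mathcal{R}$, where the operator $I+\bar\tau R+\tau R^2$ annihilates $\bar\zeta$ and acts as multiplication by $3$ on $\zeta$, reproducing precisely the scalar \eqref{lambda-sharp1} (modulo the normalization absorbed in $(1,i)=\sqrt{2}\,\zeta$).

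It remains to solve $\det\mathcal{M}(\kappa,M)=0$. This function is real-analytic in $(\kappa,M)$ near the origin and vanishes to order two there, so by Weierstrass preparation it factors as $(M^2+p_1(\kappa)M+p_0(\kappa))\,U(\kappa,M)$ with $U(0,0)\neq 0$ and $p_0,p_1$ analytic and vanishing at $\kappa=0$. The vanishing of $\Lambda_{11},\Lambda_{22}$ forces $p_1(\kappa)=O(|\kappa|^2)$, while $p_0(\kappa)=-4(\kappa\cdot\Lambda_{12})(\kappa\cdot\Lambda_{21})+O(|\kappa|^3)=-|\lambda_\sharp|^2|\kappa|^2+O(|\kappa|^3)$ after using $|\kappa\cdot\zeta|^2=\tfrac12|\kappa|^2$ for real $\kappa$ and the normalization above. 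The discriminant is therefore $|\lambda_\sharp|^2|\kappa|^2(1+O(|\kappa|))$, strictly positive for small $\kappa\neq 0$ by (h3), and the quadratic formula immediately gives the two Lipschitz branches \eqref{localcone+}--\eqref{localcone-}. The main obstacle is the symmetry step: without it, the diagonal $\Lambda_{aa}$ would generically be nonzero and the off-diagonal entries would not align with $\zeta$, in which case the band crossing would either open a gap or give a smooth (non-conical) local structure. It is the $\mathcal{R}$-enforced vanishing of the diagonal together with the $\zeta$-alignment of the off-diagonal that turns the $2\times 2$ reduction into a genuine cone.
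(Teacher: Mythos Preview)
Your proof is correct and follows the same Lyapunov--Schmidt scheme as the paper: reduce to a $2\times 2$ system, identify its leading part, and solve $\det\mathcal{M}=0$. The overall architecture and the final asymptotics match exactly.

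The one noteworthy difference is how you obtain the structure of the entries $\Lambda_{ab}$. The paper (Proposition~\ref{prop:matrix-elements}) computes $\langle\Phi_a,\kappa\cdot\nabla\Phi_b\rangle$ by inserting the explicit Fourier expansions \eqref{Phi1-Fourier}--\eqref{Phi2-Fourier}, then uses $I+R+R^2=0$ for the diagonal and the spectral decomposition of $\tau R$ (eigenvectors $\zeta,\bar\zeta$) for the off-diagonal. You instead argue directly from the commutation $\mathcal{R}(-i\nabla f)=R^*(-i\nabla\,\mathcal{R}f)$ together with $\mathcal{R}\Phi_1=\tau\Phi_1$, $\mathcal{R}\Phi_2=\bar\tau\Phi_2$, which forces $R^*\Lambda_{aa}=\Lambda_{aa}$ (hence $\Lambda_{aa}=0$, since $1\notin\mathrm{spec}(R^*)$) and $R^*\Lambda_{12}=\bar\tau\Lambda_{12}$ (hence $\Lambda_{12}\in\mathbb{C}\zeta$). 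This is a cleaner, coordinate-free way to see why the diagonal vanishes and the off-diagonal aligns with $\zeta$; the paper's Fourier computation is really a concrete manifestation of the same symmetry. You still need the Fourier series once, to pin the proportionality constant to the specific expression \eqref{lambda-sharp1}; the ``modulo normalization'' you mention is precisely the $\sqrt{2}$ from $(1,i)^t=\sqrt{2}\,\zeta$, and your final $p_0(\kappa)=-|\lambda_\sharp|^2|\kappa|^2+O(|\kappa|^3)$ does match the paper's $\det\mathcal{M}_0$ in \eqref{mu1eqnA}. For the last step you invoke Weierstrass preparation and the quadratic formula, while the paper (Proposition~\ref{M0pM1}) substitutes $\mu^{(1)}=|\lambda_\sharp|\,|\kappa|(1+\eta)$ and solves for $\eta$ by a Lipschitz implicit function argument; both routes give the same Lipschitz conical branches.
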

 
 \begin{remark}\label{lam-sharp-rmk}
 \begin{enumerate}
 \item Elliptic regularity implies that the eigenfunctions $\Phi_j,\ j=1,2$ are in $H^2(\mathbb{R}^2)$. Therefore, $\sum_{\bm\in\mathcal{S}}(1+|\bm|^2)^2|c(\bm)|^2<\infty$. We conclude that the sum defining $\lambda_\sharp$ converges.
 \item In section \ref{sec:pfeps-small} we study the case of ``weak'' or small potentials, {\it i.e.} $V=\eps V_h$ with $\eps$ small. For all $\eps$ such that $0<|\eps|<\eps^0$,
   where  $\eps^0$ is a sufficiently small positive number, we will:\\
   (i)\ verify the double eigenvalue hypothesis (h1) of Theorem \ref{prop:2impliescone} by showing
    the persistence of a double eigenvalue due to intersection of the bands one and two in case \eqref{case12} or bands two and three in case \eqref{case23}, 
    (ii)\ verify hypothesis (h2) of Theorem \ref{prop:2impliescone} by showing, via explicit calculation, 
    that the $L^2_{\bK_\star,1}$ eigenvalue of $H$,  differs from the double eigenvalue, and 
    (iii)\ verify (h3) by showing:
   $ |\lambda_\sharp^\eps|^2\ =\ 16\ {\rm area}(\Omega)^2\ \pi^2/a^2\ +\ \mathcal{O}(\eps)$;
    see \eqref{lambda-sharp-eps-small2}. 
  Theorem \ref{prop:2impliescone} then implies the existence of a non-degenerate cone at  each vertex of $\brill_h$
   for \underline{all} sufficiently small non-zero $\eps$. 
   \item The condition: $\lambda_\sharp\ne0$ in \eqref{lambda-sharp1}   is independent of the normalization of the eigenfunction, $\Phi_1$. 
 \end{enumerate}
 \end{remark}
\bigskip

\nit {\it Proof of Theorem \ref{prop:2impliescone}:}\ 
 By Symmetry Remark \ref{symmetry-reduction}, we may without loss of generality consider the specific  $\brill_h$ vertex: $\bK_\star=\bK$. The local character of all others is identical.

 We consider a perturbation of $\bK$, $\bK+\bkappa$, with $|\bkappa|$ small. We express
$\Phi\in L^2_\bk$ as $\Phi(x;\bk) = e^{i\bk\cdot\bx} \psi(\bx;\bk)$, where $\psi(\bx;\bk)$ is $\Lambda$- periodic. The eigenvalue problem for $\bk=\bK+\bkappa$ takes the form:
\begin{align}
&\left(\ -\left(\nabla_\bx + i\left(\bK+\bkappa\right)\right)^2\ +\ V(\bx)\ \right) \psi(\bx;\bK+\bkappa)\ =\ \mu(\bK+\bkappa)\psi(\bx;\bK+\bkappa)\ ,\label{Hk-evp}\\
&\psi(\bx+\bv;\bK+\bkappa)=\psi(\bx;\bK+\bkappa),\ \ \textrm{for all}\ \bv\in \Lambda\ .
\label{psi-per}\end{align}

Let $\mu_0=\mu^{(0)}=\mu(\bK)$ be the double eigenvalue and let $\psi^{(0)}$ be in  the corresponding two-dimensional eigenspace. Express  $\mu(\bK+\bkappa)$ and $\psi(\bx;\bK+\bkappa)$ as:
\begin{align}
\mu(\bK+\bkappa) &= \mu^{(0)} +\ \mu^{(1)},\qquad  \psi(\bx;\bK+\bkappa) = \psi^{(0)} +\ \psi^{(1)}, \label{mu-psi-exp}
\end{align}
where  $\psi^{(1)}$  is to be chosen orthogonal to the nullspace of 
 $H(\bK)-\mu^{(0)}I$,  and 
$\mu^{(1)}$ are corrections to be determined. 
 Substituting \eqref{mu-psi-exp} into the eigenvalue problem \eqref{Hk-evp}-\eqref{psi-per} we obtain:
\begin{align}
& \left(\ H(\bK)\ - \mu_0 I \ \right)\psi^{(1)} \nn\\
&= \ \left(\ 2i\bkappa\cdot\left(\nabla+i\bK\right) - \kappa\cdot\kappa\ +\ \mu^{(1)}\ \right)\psi^{(1)}
\nn\\ &  +\ 
\left(2i\bkappa\cdot\left(\nabla+i\bK\right)-\kappa\cdot\kappa+\mu^{(1)}\right)\psi^{(0)}\nn\\
& \equiv\ F^{(1)},\qquad \psi^{(1)}\in L^2_{{\rm per},\Lambda}\ .
\label{order1}
\end{align}

Since $\psi^{(0)}$ is in  the $L^2_{{\rm per},\Lambda}$- nullspace of $H(\bK)-\mu_0 I$, we write it as  
\begin{align}
\psi^{(0)}(\bx)\ &=\ \alpha\phi_1(\bx)\ +\ \beta\phi_2(\bx), \ \ {\rm where}\label{psi0}\\
\phi_j(\bx)\ &=\ e^{-i\bK\cdot\bx}\ \Phi_j(\bx).\ \ j=1,2\label{phi-j}
\end{align}
Here $\phi_1$ and $\phi_2$ are normalized eigenstates with Fourier expansions as in part 3 of  Proposition
 \ref{Fourier-espaces} and  $\alpha, \beta$ are constants to be determined. \medskip

We now turn to the construction of $\psi^{(1)}$. Introduce the orthogonal projections: $Q_\parallel$,  onto  the two-dimensional kernel of $H(\bK)\ - \mu_0 I$,
and $Q_\perp=I-Q_\parallel$. Note that 
\begin{equation}
Q_\parallel \psi^{(1)}\ =\ Q_\perp \psi^{(0)}\ =\ 0,\ {\rm and}\ \  Q_\perp \psi^{(1)}= \psi^{(1)}\ . \label{Qpsi}
\end{equation}

We next seek a solution to \eqref{order1} by solving the following system for $\psi^{(1)}$ and $ \mu^{(1)}$:
\begin{align}
\left(\ H(\bK) - \mu_0 I \ \right)\psi^{(1)}\ 
  &=\ Q_\perp\ F^{(1)}(\alpha,\beta,\kappa,\mu^{(1)},\psi^{(1)})\label{LS1}\\
 0\ &=\ Q_\parallel F^{(1)}(\alpha,\beta,\kappa,\mu^{(1)},\psi^{(1)})
\label{LS2}\end{align}

Equation \eqref{LS2} is a system of two equations obtained by setting the projections of $F^{(1)}$
onto $\phi_1$ and $\phi_2$ equal to zero. Our strategy is to solve \eqref{LS1} for $\psi^{(1)}$ as a continuous functional of $\alpha,\beta,\kappa,\mu^{(1)}$ with appropriate estimates, then substitute the result into \eqref{LS2} to obtain a closed {\it bifurcation equation}. This is a linear homogeneous system of the form $\mathcal{M}(\mu^{(1)},\kappa)(\alpha,\beta)^t=0$. The function $\kappa\mapsto\mu^{(1)}(\kappa)$ is then determined by the condition
 that $\det\mathcal{M}(\mu^{(1)},\kappa)=0$.

 Written out in detail, the system \eqref{LS1}-\eqref{LS2} becomes:
 \begin{align}
 \left(\ H(\bK)\ - \mu_0 I \ \right)\psi^{(1)} 
&= \ Q_\perp\ \left(\ 2i\bkappa\cdot\left(\nabla+i\bK\right) \ -\kappa\cdot\kappa+\mu^{(1)}\right)\psi^{(1)}
\nn\\ &  +\ 
Q_\perp\ \left(2i\bkappa\cdot\left(\nabla+i\bK\right)\right)\psi^{(0)}
\label{LS1a}\\
&\nn\\
&Q_\parallel\ \left(2i\bkappa\cdot\left(\nabla+i\bK\right)-\kappa\cdot\kappa+\mu^{(1)}\right)\psi^{(0)}
\nn\\
&+ Q_\parallel\ \left(\ 2i\bkappa\cdot\left(\nabla+i\bK\right) \ \right)\psi^{(1)}=0
\label{LS2a}
\end{align}
Introduce the resolvent operator:
\begin{equation} R_\bK(\mu_0)\ =\ \left(\ H(\bK)\ - \mu_0\ I \right)^{-1}
\nn\end{equation}
defined as a bounded map from   $Q_\perp L^2$ to $Q_\perp H^2(\mathbb{R}^2/\Lambda_h)$. Equation \eqref{LS1a} for $\psi^{(1)}$ can be rewritten as:
\begin{align}
&\left( I\ +\ R_\bK(\mu_0)Q_\perp\ \left(\ -2i\bkappa\cdot\left(\nabla+i\bK\right) + \kappa\cdot\kappa\ -\ \mu^{(1)}\ \right)\ \right)\ \psi^{(1)}
\nn\\
&\ \ =\ R_\bK(\mu_0)\ Q_\perp\ \left(2i\bkappa\cdot\left(\nabla+i\bK\right) \right)\psi^{(0)}\label{psi1eqn}
\end{align}
In several equations above we have used \eqref{Qpsi}.

By elliptic regularity, the mapping \[f\mapsto Af\ \equiv\ R_\bK(\mu_0)Q_\perp\ \left(\ -2i\bkappa\cdot\left(\nabla+i\bK\right) + \kappa\cdot\kappa\ -\ \mu^{(1)}\ \right)f\] is a bounded operator on 
$H^s(\mathbb{R}^2/\Lambda_h)$, for any $s$. Furthermore, for $|\kappa|+|\mu^{(1)}|$ sufficiently small, the operator norm of $A$ is less than one, $(I+A)^{-1}$ exists, and hence \eqref{psi1eqn} is uniquely solvable in $Q_\perp H^2(\mathbb{R}^2/\Lambda_h)$:
\begin{align}
\psi^{(1)}\ 
&=\ \  \left( I\ +\ R_\bK(\mu_0)Q_\perp\ \left(\ -2i\bkappa\cdot\left(\nabla+i\bK\right) + \kappa\cdot\kappa\ -\ \mu^{(1)}\ \right)\ \right)^{-1}\nn\\
&\qquad\qquad\qquad  \ \circ\ \ \ \  R_\bK(\mu_0)\ Q_\perp\ \left(2i\bkappa\cdot\left(\nabla+i\bK\right)\ \right)\psi^{(0)}.
\nn\end{align}
Since $\psi^{(0)}$ is given by \eqref{psi0},  $\psi^{(1)} $ is clearly linear in $\alpha$ and $\beta$ and we write:
\begin{equation}
\psi^{(1)}\ =\ c^{(1)}[\kappa,\mu^{(1)}](\bx)\ \alpha\ +\ c^{(2)}[\kappa,\mu^{(1)}](\bx)\ \beta,
\label{psi1-solved}\end{equation}
where $(\kappa,\mu^{(1)})\mapsto c^{(j)}[\kappa,\mu^{(1)}]$ is a smooth mapping from 
a neighborhood of $(0,0)\in \mathbb{R}^2\times\mathbb{C}$ into $H^2(\mathbb{R}^2/\Lambda_h)$
satisfying the bound: 
\[\|c^{(j)}\|_{H^2}\le C(|\kappa|+|\mu^{(1)}|),\ \ j=1,2\ .\]
Note that $Q_\parallel c^{(j)}=0, j=1,2$.

We next substitute \eqref{psi1-solved} into \eqref{LS2a} to obtain a system of two homogeneous linear equations for $\alpha$ and $\beta$.  Using the relations:
\begin{align}
&\nabla_\bK \phi_j = e^{-i\bK\cdot\bx}\nabla e^{i\bK\cdot\bx}\phi_j= e^{-i\bK\cdot\bx}\nabla\Phi_j,\ \ 
\langle\phi_i,\phi_j\rangle  =\langle\Phi_i,\Phi_j\rangle=\delta_{ij},\ i,j=1,2\nn\\
&\nn\\
&C^{(j)}[\kappa,\mu^{(1)}](\bx) \equiv e^{i\bK\cdot\bx}c^{(j)}[\kappa,\mu^{(1)}](\bx),\ \ \left\langle\Phi_i,C^{(j)}\right\rangle\ =\ 0,\ \  i,j=1,2 \label{Psi1def}
\end{align}
 we have:
 \begin{equation}
 \mathcal{M}(\mu^{(1)},\kappa)\  \left(\begin{array}{c} \alpha \\ { }\\ \beta\end{array}\right)\ =\ 0\ ,
 \label{Mveca}
 \end{equation}
 where $\mathcal{M}(\mu^{(1)},\kappa)$ is the $2\times2$ matrix given by:
{\footnotesize{
\begin{align}
\mathcal{M}(\mu^{(1)},\kappa)&\equiv\  
\left(\begin{array}{cc} \mu^{(1)}-\kappa\cdot\kappa+\left\langle\Phi_1,2i\kappa\cdot\nabla\Phi_1\right\rangle & \langle\Phi_1,2i\kappa\cdot\nabla\Phi_2\rangle \\
&\nn\\
\left\langle\ \Phi_2,2i\kappa\cdot\nabla\Phi_1\ \right\rangle &
 \mu^{(1)}-\kappa\cdot\kappa+\langle\ \Phi_2,2i\kappa\cdot\nabla\Phi_2\ \rangle 
 \end{array}\right) \nn\\
 &\nn\\
 &+  \left(\begin{array}{cc} \left\langle\ \Phi_1,2i\kappa\cdot\nabla
 C^{(1)}(\kappa,\mu^{(1)})\ \right\rangle & \langle\ \Phi_1,2i\kappa\cdot\nabla C^{(2)}(\kappa,\mu^{(1)})\ \rangle\nn\\
 &\nn\\
\langle\ \Phi_2,2i\kappa\cdot\nabla C^{(1)}(\kappa,\mu^{(1)})\ \rangle &
  \langle\ \Phi_2,2i\kappa\cdot\nabla C^{(2)}(\kappa,\mu^{(1)})\ \rangle 
 \end{array}\right)\ . 
\end{align}
}}
Thus, $\mu(\bK+\kappa)=\mu^{(0)}+\mu^{(1)}$ is an eigenvalue for the spectral problem \eqref{Hk-evp}-\eqref{psi-per} if and only if $\mu^{(1)}$ solves:
\begin{equation}
\det\mathcal{M}(\mu^{(1)},\kappa)=0.\label{det0}
\end{equation}
Equation \eqref{det0} is an  equation for $\mu^{(1)}$, which characterizes the  splitting of the double eigenvalue at $\bkappa\ne0$. We now proceed to show that if the nondegeneracy condition \eqref{lambda-sharp1} holds, then the solution set of \eqref{det0}
is locally conic.\medskip

We anticipate that a  solution $\mu^{(1)}=\mathcal{O}(|\kappa|)$ and hence $C^{(j)}=\mathcal{O}(|\kappa|)$. This motivates expanding $\mathcal{M}$ as:
\begin{align}
\mathcal{M}(\mu^{(1)},\kappa) &= \mathcal{M}_0(\mu^{(1)},\kappa)+ \mathcal{M}_1(\mu^{(1)},\kappa),
\ \   {\rm where} \label{Mdecomp}\\
&\nn\\
\ \ \mathcal{M}_0(\mu^{(1)},\kappa) &= \left(\begin{array}{cc} \mu^{(1)} + 2i\left\langle \Phi_1,\bkappa\cdot\nabla \Phi_1\right\rangle & 2i\left\langle \Phi_1,\bkappa\cdot\nabla\Phi_2\right\rangle\\
2i\left\langle \Phi_2,\bkappa\cdot\nabla\Phi_1\right\rangle &
 \mu^{(1)} + 2i\left\langle \Phi_2,\bkappa\cdot\nabla\Phi_2\right\rangle
 \end{array}\right),\ \ \ \ {\rm and}\ \  \label{M0def}\\
 &\nn\\
  \mathcal{M}_{1,ij}(\mu^{(1)},\kappa) &=\mathcal{O}\left(|\kappa|\cdot|\mu^{(1)}|\ +\ |\kappa|^2\right)
 \label{M1est}
 \end{align}
 
 Note that 

 \begin{equation}
 \Phi_2(\bx) = \overline{\Phi_1(-\bx)}\ \implies\ 
 \left\langle \Phi_1,\bkappa\cdot\nabla \Phi_1\right\rangle\ =\ \left\langle \Phi_2,\bkappa\cdot\nabla \Phi_2\right\rangle
 \end{equation}
 Further, it is easily seen that this expression is purely imaginary:
 \begin{equation}
  \left\langle \Phi_1,\bkappa\cdot\nabla \Phi_1\right\rangle\ =\ i\ \Im\  \left\langle \Phi_1,\bkappa\cdot\nabla \Phi_1\right\rangle
  \label{Phi11-pi}
  \end{equation}

Now we claim that, in a neighborhood of $\kappa=0$,  the solutions of \eqref{det0} are well approximated by those of the truncated equation:
\begin{align}
\det\mathcal{M}_0(\nu,\kappa)\ &=\ 
\det\left(\begin{array}{cc} 
\nu - 2\Im \left\langle \Phi_1,\bkappa\cdot\nabla \Phi_1\right\rangle & 2i\left\langle \Phi_1,\bkappa\cdot\nabla\Phi_2\right\rangle\\
\overline{2i\left\langle \Phi_1,\bkappa\cdot\nabla\Phi_2\right\rangle} &
 \nu - 2\Im \left\langle \Phi_1,\bkappa\cdot\nabla\Phi_1\right\rangle
 \end{array}\right)\nn\\
 &=\  \left(\ \nu - 2\ \Im \left\langle \Phi_1,\bkappa\cdot\nabla \Phi_1\right\rangle\ \right)^2\ -\ 
  4\ \left|\ \left\langle \Phi_1,\bkappa\cdot\nabla\Phi_2\right\rangle\ \right|^2\ =\ 0\ .
\label{detM0eq0}
\end{align}
\medskip

\begin{remark}\label{vertexnotnecessary}
 We have not used $\mathcal{R}$- symmetry and special structure of the Fourier modes at vertices, $\bK_\star$, in obtaining \eqref{det0} and its approximation \eqref{detM0eq0}. We have only used that there is a two-dimensional eigenspace spanned by: $\Phi_1(\bx),\  \Phi_2(\bx) = \overline{\Phi_1(-\bx)}$.
\end{remark}
\medskip

We next use that $\bK$ is a vertex of $\brill_h$ to simplify and solve \eqref{detM0eq0} \ (Proposition \ref{prop:matrix-elements}) 
and then show that the solutions of \eqref{det0} are small corrections to these
 (Proposition \ref{M0pM1}).
\bigskip

 \begin{proposition}\label{prop:matrix-elements}
 \begin{align}
\left\langle \Phi_a,\bkappa\cdot \nabla\Phi_a\right\rangle &=\ 0,\ \ a=1,2\ .\label{diag0}\\
 2i\ \left\langle \Phi_1,\bkappa\cdot\nabla\Phi_2\right\rangle &=\  \overline{2i\ \left\langle \Phi_2,\bkappa\cdot\nabla\Phi_1\right\rangle}\nn\\
&=\  -3\ {\rm area}(\Omega)\ \left(\   \overline{\sum_{\bm\in\mathcal{S}}\ c(\bm)^2\ 
 \left(\begin{array}{c}1\\ i\end{array}\right)\cdot \bK^\bm\ }\ \right) \times\ (\kappa_1+ i\kappa_2)\nn\\
 &=\ -\overline{\lambda_\sharp}\ \times\ (\kappa_1+ i\kappa_2)\ ;\ \textrm{see \eqref{lambda-sharp1}}\ .
\label{offdiag}\end{align} 
 \end{proposition}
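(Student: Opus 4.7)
For the diagonal identities \eqref{diag0}, my strategy is to show that the entire vector $\bA_a \equiv \int_\Omega \overline{\Phi_a(\bx)}\,\nabla\Phi_a(\bx)\,d\bx$ vanishes; then $\langle\Phi_a,\bkappa\cdot\nabla\Phi_a\rangle = \bkappa\cdot\bA_a = 0$ for every $\bkappa$. Observe that $\overline{\Phi_a}\,\nabla\Phi_a$ is $\Lambda_h$-periodic, since the quasi-periodic phases $e^{\pm i\bK_\star\cdot\bv}$ cancel, so $\bA_a$ is well-defined. Iterating $\mathcal{R}\Phi_a=\sigma_a\Phi_a$ with $\sigma_a\in\{\tau,\bar\tau\}$ and using $(R^*)^2=R$ yields $\Phi_a(R\by)=\sigma_a^2\Phi_a(\by)$, and the chain rule combined with $R^t=R^{-1}$ gives $(\nabla\Phi_a)(R\by)=\sigma_a^2 R\,\nabla\Phi_a(\by)$. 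Substituting $\bx=R\by$ in the defining integral for $\bA_a$ and simplifying $\overline{\sigma_a^2}\sigma_a^2=1$ produces the identity $\bA_a=R\bA_a$. Since $R$ is a nontrivial rotation by $2\pi/3$, its only real fixed vector is $\mathbf 0$, hence $\bA_a=0$.

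For \eqref{offdiag} I would compute directly in Fourier coefficients, starting from the representations in part (3) of Proposition \ref{Fourier-espaces}:
\begin{align*}
\Phi_1(\bx) &= \sum_{\bm\in\mathcal{S}} c(\bm)\left(e^{i\bK^\bm\cdot\bx}+\bar\tau\,e^{iR\bK^\bm\cdot\bx}+\tau\,e^{iR^2\bK^\bm\cdot\bx}\right),\\
\Phi_2(\bx) &= \sum_{\bm\in\mathcal{S}}\overline{c(\bm)}\left(e^{i\bK^\bm\cdot\bx}+\tau\,e^{iR\bK^\bm\cdot\bx}+\bar\tau\,e^{iR^2\bK^\bm\cdot\bx}\right).
\end{align*}
Differentiating $\Phi_2$ brings down factors of $iR^k\bK^\bm$. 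The Fourier modes $\{e^{iR^k\bK^\bm\cdot\bx}\}_{\bm\in\mathcal S,\,k\in\{0,1,2\}}$ are pairwise orthogonal in $L^2(\Omega)$ (distinct $\bK^\bn,\bK^{\bn'}$ differ by a nonzero element of $\Lambda_h^*$; each $\mathcal R$-orbit has length exactly three and $\mathcal S$ contains one representative per orbit), so in $\langle\Phi_1,\nabla\Phi_2\rangle$ only the diagonal pairs $(\bm,j)=(\bn,k)$ contribute. Collecting the three $j=k$ terms per $\bm$ gives
\[
\langle\Phi_1,\nabla\Phi_2\rangle \;=\; i\,{\rm area}(\Omega)\sum_{\bm\in\mathcal S}\overline{c(\bm)}^{\,2}\left(\bK^\bm+\bar\tau\,R\bK^\bm+\tau\,R^2\bK^\bm\right).
\]

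To finish, I would invoke the $R$-eigenbasis $R\zeta=\tau\zeta,\ R\bar\zeta=\bar\tau\bar\zeta$ of \eqref{Reigs} and decompose $\bK^\bm=\alpha\,\zeta+\bar\alpha\,\bar\zeta$ with $\alpha=(K_1^\bm-iK_2^\bm)/\sqrt2$. Then $R\bK^\bm=\alpha\tau\zeta+\bar\alpha\bar\tau\bar\zeta$ and $R^2\bK^\bm=\alpha\bar\tau\zeta+\bar\alpha\tau\bar\zeta$, so the $\bar\zeta$-coefficient of the bracketed sum collapses via $1+\tau+\bar\tau=0$, while the $\zeta$-coefficient is $\alpha(1+1+1)=3\alpha$; hence $\bK^\bm+\bar\tau R\bK^\bm+\tau R^2\bK^\bm=3\alpha\,\zeta$. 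Dotting with $\bkappa$ and using $\bkappa\cdot\zeta=(\kappa_1+i\kappa_2)/\sqrt2$ together with $\sqrt2\,\alpha=\overline{\left(\begin{smallmatrix}1\\ i\end{smallmatrix}\right)\cdot\bK^\bm}$, then multiplying by $2i$ and pulling the conjugation out of the $\bm$-sum, produces exactly \eqref{offdiag}. The first equality in \eqref{offdiag} (the conjugation symmetry) follows from one integration by parts against the $\Lambda_h$-periodic measure followed by complex conjugation. The only real obstacle is bookkeeping of the numerous $\tau,\bar\tau$ phase factors across the two Fourier series; the conceptual content boils down to the two facts $1+\tau+\bar\tau=0$ and $R\zeta=\tau\zeta$.
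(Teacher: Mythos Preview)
Your proof is correct. For the off-diagonal identity \eqref{offdiag} you follow essentially the paper's route: compute in the Fourier basis, use orthogonality of the modes $e^{iR^k\bK^\bm\cdot\bx}$, then diagonalize $R$ in the $\zeta,\bar\zeta$ basis so that one component collapses via $1+\tau+\bar\tau=0$ while the other triples. The paper first pulls the conjugate inside the sum and works with the operator $I+\tau R+(\tau R)^2$ acting on $\bK^\bm$, whereas you keep $I+\bar\tau R+\tau R^2$ and conjugate at the end; this is cosmetic.

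For the diagonal identity \eqref{diag0} you do take a genuinely different route. The paper again computes in Fourier and arrives at $\sum_{\bm}|c(\bm)|^2\,\bkappa\cdot(I+R+R^2)\bK^\bm$, then kills it with $I+R+R^2=0$. Your argument bypasses Fourier entirely: the covariance $\Phi_a(R\by)=\sigma_a^2\Phi_a(\by)$ forces $\bA_a=R\bA_a$, and since $1\notin\mathrm{spec}(R)$ this gives $\bA_a=0$. This is cleaner and uses only the $\mathcal R$-eigenvector property, not the explicit Fourier expansion. One wording point: $\bA_a$ is a priori a vector in $\mathbb C^2$, not $\mathbb R^2$, so ``its only \emph{real} fixed vector is $\mathbf 0$'' undersells what you need; the operative fact is that $I-R$ is invertible on $\mathbb C^2$ because $R$ has eigenvalues $\tau,\bar\tau\ne 1$.
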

We prove Proposition \ref{prop:matrix-elements} just below. A consequence is that $\mathcal{M}_0$ simplifies to
\begin{equation}
\mathcal{M}_0(\nu;\kappa)\ =\ 
\left(\begin{array}{cc} \nu & -\overline{\lambda_\sharp}\ \times\ (\kappa_1+ i\kappa_2)\\
-\lambda_\sharp\ \times\ (\kappa_1- i\kappa_2) & \nu   \end{array}\right)
\label{M0nukappa}\end{equation} 
and therefore 
\begin{align}
\det\mathcal{M}_0(\nu;\kappa)\ &=\ \nu^2\ -\ |\lambda_\sharp|^2\ |\kappa|^2\nn\\
&\equiv\ \nu^2\ -\  
\left|3\times {\rm area}(\Omega)\times\sum_{ \bm\in\mathcal{S} } c(\bm)^2\  
 \left(\begin{array}{c}1\\ i\end{array}\right)\cdot \bK^\bm \right|^2\ \times\ |\bkappa|^2 .
 \label{mu1eqnA}
\end{align}
Therefore, the truncation of $\mathcal{M}$ to $\mathcal{M}_0$,  yields $\det\mathcal{M}_0(\nu,\kappa)\ =0$ 
 and a locally conical dispersion relation, provided $\lambda_\sharp\ne0$.

 \medskip

 \begin{proof}[Proof of Proposition \ref{prop:matrix-elements}]
  Recall that $\Phi_1\in L^2_{\bK,\tau}$ and $\Phi_2\in L^2_{\bK,\bar\tau}$ are given by \eqref{tau-series}
   and \eqref{taubar-series}, respectively. 
We first consider the diagonal elements and claim $ \left\langle \Phi_1,\bkappa\cdot\nabla\Phi_1\right\rangle=0$.  To see this, first apply $\bkappa\cdot\nabla$  to $\Phi_1$, given by \eqref{Phi1-Fourier}
  and obtain:
  \begin{align}
&  \bkappa\cdot\nabla\Phi_1 \nn\\
  &=\ i\ \bkappa\cdot\sum_{\bm\in{\cal S}}\ c(\bm)\ 
\left(\ \bK_\star^\bm e^{i\bK_\star^\bm\cdot\bx }\ +\ 
\bar{\tau}\  R\bK_\star^\bm e^{iR\bK_\star^\bm\cdot\bx }\ +\  
\tau\ R^2\bK_\star^\bm e^{iR^2\bK_\star^\bm\cdot\bx } \right) \ .
\nn\end{align}
Therefore, since $\int_{\cell} e^{-i\bK^\bn\cdot\bx}\ e^{i\bK^\bm\cdot\bx}d\bx=0$ if $\bm\ne\bn$ we have
\begin{align}
  \bkappa\cdot \left\langle \Phi_1,\nabla\Phi_1\right\rangle\ 
 &= {\rm area}(\Omega)\ \sum_{\bm\in\mathcal{S}} |c(\bm)|^2\ \bkappa\cdot \left( \bK_\star^\bm +
  |{\tau}|^2\  R\bK_\star^\bm  + |\tau|^2\ R^2\bK_\star^\bm\ \right)\ \ =\ 0.\nn\\
  &\label{a1b1}
 \end{align}
The latter equality holds since $|\tau|^2=1$, $I-R$ is invertible (${\rm spec}(R)=\{\tau,\bar\tau\}$) and $I+R+R^2=(I-R^3)(I-R)^{-1}= (I-I)(I-R)^{-1}=0$. Similarly,  $ \left\langle \Phi_2,\bkappa\cdot\nabla\Phi_2\right\rangle=0$. Thus, we have shown that the diagonal elements vanish,\ \eqref{diag0}. \medskip
 
 One can check directly that the  off-diagonal elements satisfy $\mathcal{M}_{0,12}=\overline{\mathcal{M}_{0,21}}$:
\begin{equation} 2i\left\langle \Phi_1,\bkappa\cdot\nabla\Phi_2\right\rangle
 =\ \overline{  2i\left\langle \Phi_2,\bkappa\cdot\nabla\Phi_1\right\rangle}\ .
\label{off-diag}\end{equation}
Furthermore, using \eqref{tau-series}, \eqref{taubar-series} we have  
\begin{align}
 & 2i\left\langle \Phi_1,\bkappa\cdot\nabla\Phi_2\right\rangle\nn\\
 &= -2\times {\rm area}(\Omega)\ \sum_{\bm\in\mathcal{S}} \overline{c(\bm)}^2\ \bkappa\cdot \left( \bK_\star^\bm +
  \tau^2\  R\bK_\star^\bm  + \bar{\tau}^2\ R^2\bK_\star^\bm\ \right),\nn\\
 &= -2\times {\rm area}(\Omega)\ \overline{\sum_{\bm\in\mathcal{S}}\ c(\bm)^2\ \bkappa\cdot 
  \left( I\ +
  \tau\  R  + \left(\tau\ R\right)^2\ \right)\ \bK_\star^\bm\ } .
  &\label{a1b2}
 \end{align}
 
 Note, by \eqref{Reigs} that $\tau R$ has  an eigenvalue $\bar\tau$ with corresponding eigenvector ${\bf \zeta}=2^{-1/2}(1, i)^t$ and an eigenvalue $1$ with corresponding eigenvector $\bar{\bf \zeta}=2^{-1/2}(1 , -i)^t$. We express $\bK^\bm$ as
 \begin{equation}
 \bK^\bm\ =\ \left\langle\zeta,\bK^\bm\right\rangle\zeta +  \left\langle\bar\zeta,\bK^\bm\right\rangle\bar\zeta,
 \nn\end{equation}
 where for ${\bf a}, {\bf b}\in\mathbb{C}^2$ we defined $\langle{\bf a}, {\bf b}\rangle = \overline{\bf a}\cdot {\bf b}$.
 Clearly, 
 $$\left(I\ +\ \tau\ R  + \left(\tau\ R\right)^2\right)\bar\zeta\ =\ 3\ \bar\zeta\ {\rm and}\ \left(I\ +
  \tau\  R  + \left(\tau\ R\right)^2\right)\zeta= \left(1+\bar\tau+(\bar\tau)^2\right)\ \zeta=0.$$
Therefore,  
  \begin{align}
&\bkappa\cdot \left( I\ +
  \tau\  R  + \left(\tau\ R\right)^2\right) \bK^\bm\ =\ 3\ \left\langle\bar\zeta,\bK^\bm\right\rangle\ 
  \bkappa\cdot\ \bar\zeta \ =\ 3\ \left(\zeta\cdot\bK^\bm\right)\ \times\ \left(\bkappa\cdot\ \bar\zeta\right)
  \label{sumRs} 
  \end{align}
  Substitution of \eqref{sumRs} into \eqref{a1b2} we obtain:
\begin{align}
 &2i\ \left\langle \Phi_1,\bkappa\cdot \nabla\Phi_2\right\rangle\nn\\
 &=\ -6\times {\rm area}(\Omega)\ \overline{\sum_{\bm\in\mathcal{S}}\ c(\bm)^2\ \left(\zeta\cdot\bK^\bm\right)\ \times\ \left(\bkappa\cdot\ \bar\zeta\right)\ } \nn\\
 &=\ -3\times {\rm area}(\Omega)\  \overline{\sum_{\bm\in\mathcal{S}}\ c(\bm)^2\
 \left[\ \left(\begin{array}{c} 1 \\ i\end{array}\right)\cdot \bK^\bm\ \right] \times 
 \left[\ \left(\begin{array}{c} \kappa_1 \\ \kappa_2\end{array}\right)\cdot \left(\begin{array}{c} 1 \\ -i\end{array}\right) \right]\ }\nn\\
 & =\ -3\times {\rm area}(\Omega)\   \overline{\sum_{\bm\in\mathcal{S}}\ c(\bm)^2\ 
 \left(\begin{array}{c}1\\ i\end{array}\right)\cdot \bK^\bm\ }\ \times\ (\kappa_1+i\kappa_2),\ 
 \label{Phi1gradPhi2} \end{align}
 which by the definition of $\lambda_\sharp$ in \eqref{lambda-sharp1} implies \eqref{offdiag}. Thus, $\det\mathcal{M}_0(\nu,\kappa)=\nu^2-|\lambda_\sharp|^2|\kappa|^2$ which proves \eqref{mu1eqnA}. Furthermore, the solutions of $\det\mathcal{M}_0(\nu,\kappa)=0$ define a non-trivial 
   conical surface provided:
   \begin{equation}
\lambda_\sharp \equiv   3\times{\rm area}(\Omega)\times\sum_{\bm\in\mathcal{S}} c(\bm)^2\ \left(\begin{array}{c}1\\ i\end{array}\right)\cdot \bK^\bm\ \ne\ 0.
\label{lambda-sharp}
\end{equation}
    The  proof of Proposition \ref{prop:matrix-elements} is complete.
  \end{proof}
  \bigskip
  
To complete the proof of Theorem \ref{prop:2impliescone} we next show that the local character of solutions to \eqref{det0} is,  for $|\kappa|$ small, essentially that derived   in  Proposition \ref{prop:matrix-elements} for the solutions of \eqref{detM0eq0}.
  
  \begin{proposition}\label{M0pM1}
 Suppose $\lambda_\sharp$, defined in \eqref{lambda-sharp}, is non-zero. Then, in a neighborhood of any $\bK$ or $\bK'$ point, the dispersion surface is conic. Specifically,
 the eigenvalue equation $\det\mathcal{M}(\mu^{(1)},\kappa)=0$ (see \eqref{det0}) defines, in a neighborhood $U\subset\mathbb{R}^2$ of $\kappa=0$, two functions:
 \begin{align}
 \mu_+^{(1)}(\kappa)\ =\ \left|\lambda_\sharp \right|\ |\kappa|\ \left(\ 1+ E_+(\kappa)\ \right),\ \ \ \mu_-^{(1)}(\kappa)\ =\ -\left|\lambda_\sharp \right|\ |\kappa|\ \left(\ 1+ E_-(\kappa)\ \right),
 \label{2branches}\end{align}
 where $E_\pm(\kappa)\to0$ as $\kappa\to0$ and $E_\pm(\bkappa)$ is Lipschitz continuous in $\bkappa$. 
  \end{proposition}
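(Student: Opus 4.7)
The plan is to read the eigenvalue equation $\det\mathcal{M}(\mu^{(1)},\kappa)=0$ as a smooth perturbation of the conical model equation $\det\mathcal{M}_0(\nu,\kappa)=\nu^2-|\lambda_\sharp|^2|\kappa|^2=0$ already solved in Proposition \ref{prop:matrix-elements}, and then resolve the conical singularity at $(\mu^{(1)},\kappa)=(0,0)$ by a polar blow-up followed by the implicit function theorem.

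First, I combine the explicit form of $\mathcal{M}_0$ in \eqref{M0nukappa} with the entry-wise estimate \eqref{M1est} on $\mathcal{M}_1$. Since $\mathcal{M}_0$ has $\mu^{(1)}$ on the diagonal and $\mathcal{O}(|\kappa|)$ off-diagonal entries while the entries of $\mathcal{M}_1$ are $\mathcal{O}(|\kappa||\mu^{(1)}|+|\kappa|^2)$, multilinearity of the $2\times 2$ determinant and the fact (established in the paragraph containing \eqref{psi1-solved}) that the map $(\mu^{(1)},\kappa)\mapsto C^{(j)}[\kappa,\mu^{(1)}]$ is smooth yield
$$
\det\mathcal{M}(\mu^{(1)},\kappa)\;=\;(\mu^{(1)})^2-|\lambda_\sharp|^2|\kappa|^2\;+\;R(\mu^{(1)},\kappa),
$$
with $R$ smooth on $\R\times\R^2$ and $R(\mu^{(1)},\kappa)=\mathcal{O}\!\bigl(|\mu^{(1)}|^2|\kappa|+|\mu^{(1)}||\kappa|^2+|\kappa|^3\bigr)$; in particular $R$ vanishes to order three at the origin.

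Second, I blow up by setting $\mu^{(1)}=r s$ and $\kappa=r\omega$ with $r\ge 0$ and $\omega\in S^1\subset\R^2$. Because $R$ is smooth and vanishes to order three at the origin, $R(rs,r\omega)=r^3\tilde{R}(r,s,\omega)$ for some smooth function $\tilde{R}$ of $(r,s,\omega)\in\R\times\R\times S^1$. Dividing the full equation by $r^2$ produces, for $r>0$, the equivalent equation
$$
G(r,s,\omega)\;\equiv\;s^2-|\lambda_\sharp|^2\;+\;r\,\tilde{R}(r,s,\omega)\;=\;0,
$$
which extends smoothly across $r=0$. At $r=0$ one has $G(0,\pm|\lambda_\sharp|,\omega)=0$ and $\partial_s G(0,\pm|\lambda_\sharp|,\omega)=\pm 2|\lambda_\sharp|\neq 0$, uniformly in $\omega\in S^1$; the non-degeneracy $\lambda_\sharp\neq 0$ is used here in an essential way. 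The implicit function theorem, applied with $\omega$ as a compact smooth parameter, produces $r_0>0$ and two smooth functions $s_\pm(r,\omega)$ on $[0,r_0)\times S^1$ with $s_\pm(0,\omega)=\pm|\lambda_\sharp|$ that exhaust all solutions of $G=0$ in a neighborhood of $\{r=0\}\times S^1$. Taylor-expanding $s_\pm(r,\omega)=\pm|\lambda_\sharp|+r\,a_\pm(r,\omega)$ with $a_\pm$ smooth and unwinding the blow-up gives
$$
\mu_\pm^{(1)}(\kappa)\;=\;|\kappa|\,s_\pm\!\bigl(|\kappa|,\kappa/|\kappa|\bigr)\;=\;\pm|\lambda_\sharp|\,|\kappa|\,\bigl(1+E_\pm(\kappa)\bigr),\qquad E_\pm(\kappa)=\pm\frac{|\kappa|\,a_\pm(|\kappa|,\kappa/|\kappa|)}{|\lambda_\sharp|}=\mathcal{O}(|\kappa|).
$$

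Finally, for the Lipschitz claim: on $0<|\kappa|<r_0$ the function $E_\pm$ is smooth, and its gradient splits into a term of the form $a_\pm\cdot\nabla|\kappa|$ (bounded) plus $|\kappa|\,\nabla a_\pm(|\kappa|,\kappa/|\kappa|)$. The only potentially singular piece is $|\kappa|\,\partial_\omega a_\pm\cdot\nabla(\kappa/|\kappa|)$, but $|\nabla(\kappa/|\kappa|)|=\mathcal{O}(1/|\kappa|)$ is exactly compensated by the prefactor $|\kappa|$, so $|\nabla E_\pm|$ stays bounded on the punctured neighborhood and extends continuously, yielding a Lipschitz estimate up to $\kappa=0$.

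The principal obstacle is the blow-up/IFT step: the gradient of the conical model $(\mu^{(1)})^2-|\lambda_\sharp|^2|\kappa|^2$ vanishes at the origin, so the implicit function theorem cannot be applied directly to $\det\mathcal{M}=0$ in the original variables. What rescues the argument is that every correction term in $\mathcal{M}_1$ is one order higher in $(|\mu^{(1)}|,|\kappa|)$ than the leading entries of $\mathcal{M}_0$, making $R$ a cubic-order correction which, after factoring out the scaling $r^2$, becomes a small perturbation proportional to $r$ in an otherwise non-degenerate implicit equation.
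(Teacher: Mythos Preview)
Your proof is correct and follows essentially the same strategy as the paper: identify $\det\mathcal{M}$ as the conical model $(\mu^{(1)})^2-|\lambda_\sharp|^2|\kappa|^2$ plus a cubic-order remainder, rescale to desingularize, and apply the implicit function theorem using $\lambda_\sharp\neq 0$. The only notable difference is in the choice of rescaling. The paper substitutes the ansatz $\mu^{(1)}=|\lambda_\sharp|\,|\kappa|(1+\eta)$ directly, keeping $\kappa\in\R^2$ as the parameter, and obtains a scalar equation $G(\eta,\kappa)=0$ with $G(0,0)=0$, $\partial_\eta G(0,0)=2$; since $G$ and $\partial_\eta G$ are Lipschitz in $\kappa$, the Lipschitz implicit function theorem delivers $E_\pm(\kappa)$ Lipschitz in one step. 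Your full polar blow-up $(\mu^{(1)},\kappa)=(rs,r\omega)$ gains smoothness on the blow-up space $[0,r_0)\times\R\times S^1$ but then requires the separate gradient computation to recover Lipschitz regularity in the original variable $\kappa$. Both work; the paper's route is a little shorter, yours is a little more geometric.
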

 \medskip
 
  \begin{proof} 
  By \eqref{det0}, and Proposition \ref{prop:matrix-elements}, $\mu^{(1)}$ satisfies:
  \begin{equation}
  \left(\mu^{(1)}\right)^2 \ =\ \left|\lambda_\sharp \right|^2\ |\kappa|^2\ +\ g_{21}\left(\mu^{(1)},\kappa\right)+
  g_{12}\left(\mu^{(1)},\kappa\right)+g_{03}(\kappa),
 \label{mu1eqn}
  \end{equation}
  where $g_{rs}$ are smooth functions satisfying the bounds:
  \begin{equation}
  \left|\ g_{rs}(\mu,\kappa)\ \right|\ \le\ C\ |\mu|^r\ |\kappa|^s
  \nn\end{equation}
  for $|\mu|\le1,\ |\kappa|\le1$. We now construct $\mu^{(1)}_+(\kappa)$. The construction of 
   $\mu^{(1)}_-(\kappa)$ is similar. Set $\mu_+^{(1)}=\left|\lambda_\sharp \right|\ |\bkappa|\ \left(1+\eta\right)$.
   Substitution into \eqref{mu1eqn} and using that $\lambda_\sharp\ne0$, we find that $\eta$ satisfies:
   \begin{align}
  G(\eta,\kappa)\ \equiv\ 2\eta+\eta^2 + J_1(\eta,\kappa) + J_2(\kappa)=0.
   \nn\end{align}
  Here $J_1$ and $J_2$ are smooth functions of $\eta$ and Lipschitz continuous functions of $\kappa$, such that: $J_1(\eta,\kappa)=\mathcal{O}(|\kappa|),\ \partial_\eta J_1(\eta,\kappa)=\mathcal{O}(|\kappa|),\ J_2(\kappa)=\mathcal{O}(|\kappa|)$ as $|\kappa|\to0$. Thus, $G(\eta,\kappa)$ and $\partial_\eta G(\eta,\kappa)$ are Lipschitz continuous in $(\eta,\kappa)$ with $G(0,0)=0$ and $\partial_\eta G(0,0)=2\ne0$. It follows easily that there exists $\eta=E(\kappa)$
defined and Lipschitz continuous in a neighborhood $U\subset\R^2$  of $\kappa=0$,  such that $E(0)=0$ and  $G(E(\kappa),\kappa)=0$ for all $\kappa\in U$.
  \end{proof}

\section{Main Theorem: Conical singularity in dispersion surfaces}\label{sec:MainTheorem}
\bigskip

 Assume that $V$ is a honeycomb lattice potential in the sense of Definition \ref{honeyV}. 
Since $V\in C^\infty(\mathbb{R}^2/\Lambda_h)$, its Fourier coefficients  satisfy
\begin{equation}
\hat{V}\in l^1(\Z^2),\ \ i.e.\ \ \|\ \hat{V}\ \|_{l^1(\Z^2)}=\sum_{\bm\in\Z^2} |V_\bm |<\infty\ .
\label{Vreg}\end{equation}

\begin{thm}\label{main-thm} {\bf Conical singularities and the dispersion surfaces of $H^{(\eps)}$}

Let $V(\bx)$ honeycomb lattice potential. Assume further that the Fourier coefficient of $V$, $V_{1,1}$, is non-vanishing, {\it i.e.}
\begin{equation}
V_{1,1}\ =\ \int_\Omega e^{-i(k_1+k_2)\cdot\by}\ V(\by)\ d\by\ \ne0\ .
\label{V11eq0}
\end{equation}
There exists a countable and 
closed set $\tilde{\mathcal{C}}\subset\mathbb{R}$ such that for any vertex $\bK_\star$  of $\brill_h$ and   all $\eps\notin\tilde{\mathcal{C}}$ the following holds:
\begin{enumerate}
\item   There exists a Floquet-Bloch eigenpair $\Phi^\eps(\bx;\bK_\star), \mu^\eps(\bK_\star)$ such that
\subitem $\mu^\eps(\bK_\star)$ is an $L^2_{\bK,\tau}$ - eigenvalue of $H^{(\eps)}$ of multiplicity one, with corresponding eigenfunction, $\Phi^\eps(\bx;\bK_\star)$.
\subitem $\mu^\eps(\bK_\star)$ is an $L^2_{\bK,\bar\tau}$ - eigenvalue of $H^{(\eps)}$ of multiplicity one, with corresponding eigenfunction, $\overline{\Phi^\eps(-\bx;\bK_\star)}$.
\subitem $\mu^\eps(\bK_\star)$ is \underline{not} an $L^2_{\bK,1}$- eigenvalue of $H^{(\eps)}$.
\item There exist $\delta_\eps>0,\ C_\eps>0$
  and Floquet-Bloch eigenpairs: $(\Phi_+^\eps(\bx;\bk), \mu_+^\eps(\bk))$  
 and $(\Phi_-^\eps(\bx;\bk), \mu_-^\eps(\bk))$, and  Lipschitz continuous functions, $E_\pm(\bk)$,   defined for  $|\bk-\bK_\star|<\delta_\eps$,  such that 
\begin{align}
\mu^\eps_+(\bk)-\mu^\eps(\bK_\star)\ &=\ +\ |\lambda^\eps_\sharp|\ 
\left| \bk-\bK_\star \right|\ 
\left(\ 1\ +\ E^\eps_+(\bk)\ \right)\ \ {\rm and}\nn\\
\mu^\eps_+(\bk)-\mu^\eps(\bK_\star)\ &=\ -\ |\lambda^\eps_\sharp|\ 
\left| \bk-\bK_\star \right|\ 
\left(\ 1\ +\ E^\eps_-(\bk)\ \right),\nn
\end{align}
where $\lambda_\sharp^\eps\ne0$ is given in terms of $\Phi^\eps(\bx;\bK_\star)$ by the expression in \eqref{lambda-sharp1} and 
$|E_\pm^\eps(\bk)| \le C_\eps |\bk-\bK_\star|$. 
Thus, in a neighborhood of the point $(\bk,\mu)=(\bK_\star,\mu_\star^\eps)\in \R^3 $, the dispersion surface is {\it conic}.
\item 
There exists $\eps^0>0$, such that for all $\eps\in(-\eps^0,\eps^0)\setminus\{0\}$\\
  (i)\  $\eps V_{1,1}>0\ \implies$ conical intersection of $1^{st}$ and $2^{nd}$ dispersion surfaces \\
 (ii)\ $\eps V_{1,1}<0\ \implies$ conical intersection of $2^{nd}$ and $3^{rd}$ dispersion surfaces\ .
 \end{enumerate}
\end{thm}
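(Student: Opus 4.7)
The plan is to prove Theorem \ref{main-thm} by reducing the construction of conical singularities to verifying the three hypotheses of Theorem \ref{prop:2impliescone}. Specifically, one must show that (i) $H^{(\eps)}$ has a simple $L^2_{\bK_\star,\tau}$-eigenvalue $\mu^\eps(\bK_\star)$ with a companion simple $L^2_{\bK_\star,\bar\tau}$-eigenvalue (related by $\Phi_2(\bx) = \overline{\Phi_1(-\bx)}$), (ii) this common value is not an eigenvalue of $H^{(\eps)}$ acting in $L^2_{\bK_\star,1}$, and (iii) the associated quantity $\lambda_\sharp^\eps$ is nonzero. By Remark \ref{symmetry-reduction} it suffices to treat $\bK_\star=\bK$, and by Proposition \ref{L2decomp} together with Proposition \ref{SR-invariance}, the operator $H^{(\eps)}$ preserves each of the $\mathcal{R}$-eigenspaces, so all analysis can be carried out block-by-block.

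For $|\eps|$ small, I would carry out a Lyapunov--Schmidt reduction near the triple eigenvalue $\mu^{(0)} = |\bK|^2$ of $H^{(0)}$ on $L^2_{\bK}$. By Proposition \ref{H0-spec}, the three-dimensional kernel of $H^{(0)}(\bK)-|\bK|^2 I$ contains exactly one normalized vector $\phi^{(0)}_\sigma \in L^2_{\bK,\sigma}$ for each $\sigma \in \{1,\tau,\bar\tau\}$, with explicit trigonometric representations. Because $H^{(\eps)}$ respects the $\mathcal{R}$-decomposition, the reduced matrix in the basis $\{\phi^{(0)}_1,\phi^{(0)}_\tau,\phi^{(0)}_{\bar\tau}\}$ is diagonal with three scalar entries. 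The first-order corrections $\eps\langle\phi^{(0)}_\sigma, V_h\phi^{(0)}_\sigma\rangle$ reduce, via the Fourier representations in Proposition \ref{Fourier-espaces} and Proposition \ref{proposition:V-honey}, to explicit multiples of $V_{1,1}$; the hypothesis \eqref{V11eq0} then forces the $L^2_{\bK,1}$-entry to split at $\mathcal{O}(\eps)$ away from the $L^2_{\bK,\tau}$-entry, while the $\tau$- and $\bar\tau$-entries remain equal by the complex-conjugate/inversion symmetry of Remark \ref{Vreal-even}. The sign of $\eps V_{1,1}$ decides whether the $L^2_{\bK,1}$-eigenvalue lies above or below the double $L^2_{\bK,\tau}\oplus L^2_{\bK,\bar\tau}$-eigenvalue, which yields conclusion (3). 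Finally, direct evaluation of the Fourier expression in \eqref{lambda-sharp1} using the leading-order eigenfunction gives $|\lambda_\sharp^\eps|^2 = 16\,\mathrm{area}(\Omega)^2\pi^2/a^2 + \mathcal{O}(\eps)$, verifying (h3) for small nonzero $\eps$.

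To continue these conclusions to arbitrary real $\eps$, I would employ the regularized Fredholm determinant $\mathcal{E}(\mu,\eps)=\det_2(I+\mathcal{C}(\mu,\eps))$, a jointly analytic function on $\C^2$ whose zeros coincide with the eigenvalues of $H^{(\eps)}$, counted with multiplicity. Because $\mathcal{R}$ commutes with $H^{(\eps)}$, the operator $\mathcal{C}(\mu,\eps)$ respects the decomposition $L^2_\bK = L^2_{\bK,1}\oplus L^2_{\bK,\tau}\oplus L^2_{\bK,\bar\tau}$, giving rise to subspace-restricted analytic functions $\mathcal{E}_1,\mathcal{E}_\tau,\mathcal{E}_{\bar\tau}$, with $\mathcal{E}_\tau$ and $\mathcal{E}_{\bar\tau}$ identical by the conjugation symmetry. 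Near the small-$\eps$ regime, $\mathcal{E}_\tau$ has a simple zero tracing out an analytic branch $\eps \mapsto \mu_\tau(\eps)$ via the implicit function theorem. The set $\tilde{\mathcal{C}}$ is then defined as the union of (a) those $\eps$ for which $\mu_\tau(\eps)$ is also a zero of $\mathcal{E}_1(\cdot,\eps)$, and (b) those $\eps$ for which $\lambda_\sharp^\eps$, expressed via the Fourier coefficients of the eigenfunction $\Phi^\eps$ using \eqref{lambda-sharp1}, vanishes. Each condition is the vanishing of a real-analytic function of $\eps$ on $\R$ that is not identically zero (by the small-$\eps$ analysis), hence each determines a countable closed set, and so does their union.

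The principal obstacle is the global analytic continuation: one must propagate the branch $\mu_\tau(\eps)$ along all of $\R$ while tracking collisions with other branches and ensuring that neither the $\mathcal{E}_1 = \mathcal{E}_\tau$ equation nor $\lambda_\sharp^\eps = 0$ holds identically. The Hilbert--Schmidt (rather than trace-class) character of $\mathcal{C}(\mu,\eps)$ necessitates the $\det_2$-renormalization, which complicates the translation between simple zeros of $\mathcal{E}_\sigma$ and simple eigenvalues in the subspace $L^2_{\bK,\sigma}$; controlling this correspondence uniformly in $\eps$, and verifying that branch crossings form a discrete set, is the delicate part of the argument. The small-$\eps$ calculation serves not only to initialize the continuation but as the essential non-triviality witness preventing $\tilde{\mathcal{C}} = \R$.
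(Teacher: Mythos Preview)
Your small-$\eps$ outline is essentially the paper's Section~\ref{sec:pfeps-small}, and your identification of the structural hypotheses (h1)--(h3) of Theorem~\ref{prop:2impliescone} as the target is correct. The gap lies in your continuation argument.

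You propose to trace an analytic branch $\eps\mapsto\mu_\tau(\eps)$ by the implicit function theorem and then declare $\tilde{\mathcal{C}}$ to be the zero set of the real-analytic function $\eps\mapsto\lambda_\sharp^\eps$ (together with the coincidence set $\mathcal{E}_1(\mu_\tau(\eps),\eps)=0$). Two things fail here. First, the implicit function theorem gives a branch only where $\partial_\mu\mathcal{E}_\tau\ne0$; at a value $\eps_c$ where $\mu_\tau(\eps)$ collides with another $L^2_{\bK,\tau}$-eigenvalue, you lose the branch and must re-select one among several roots on the far side. The paper handles this with Weierstrass preparation and a symmetric-polynomial discriminant argument (Lemma~\ref{lemma1}) that counts distinct roots and shows their multiplicities are locally constant, producing a real-analytic $\beta(\eps)$ on an interval past $\eps_c$.

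Second, and more fundamentally, $\lambda_\sharp^\eps$ is \emph{not} in general a real-analytic function of $\eps$, because the normalized eigenfunction $\Phi^\eps$ need not depend analytically on $\eps$ through a degeneracy. There is a topological obstruction to choosing a nonzero nullvector of a rank-$(N-1)$ matrix continuously in its entries; the paper exhibits an explicit $2\times2$ counterexample in Appendix~\ref{sec:topology}. The remedy (Lemma~\ref{lemma2}) is to construct $N^2$ polynomial maps $\Gamma_{jk}$ from matrices to vectors, each landing in the nullspace, with at least one nonzero whenever the rank is $N-1$. Applying these to a finite-dimensional Schur-complement reduction $\mathcal{D}(\mu,\eps)$ of $H^{(\eps)}-\mu I$ yields $N^2$ analytic candidate eigenfunctions $\psi^{jk}(\mu,\eps)$, and hence $N^2$ analytic candidates $F^{jk}(\mu,\eps)$ for $\lambda_\sharp\cdot\mathcal{E}_1$. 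The continuation lemma is then applied to the \emph{vector} $F=(F^{jk})$ rather than to a single scalar. Your sketch, by treating $\lambda_\sharp^\eps$ as if it were automatically analytic, skips exactly the mechanism that makes the global argument work.
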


\medskip

\begin{rem}
Part 3 of Theorem \ref{main-thm} gives conditions for intersections of the first and second band dispersion surfaces or interesections of the second and third. 
 As the magnitude of $\eps$ is increased it is possible that there are crossings among the $L^2_{\bK,\sigma}$ - eigenvalues of $H^{(\eps)}$, so in general the theorem does not specify which band dispersion surfaces intersect.
\end{rem}\medskip

\subsection{Outline of the proof of Theorem \ref{main-thm}}\label{sec:proof-outline}

By Symmetry Remark \ref{symmetry-reduction}, it suffices to prove Theorem \ref{main-thm}
 for $\bK_\star=\bK$. 
We have seen that the central point is to verify for all $\eps$, except possibly those in a closed countable exceptional set, that  hypotheses (h1-h3) of 
 Theorem \ref{prop:2impliescone} hold. These hypotheses state that 
 $H^{(\eps)}$ has simple $L^2_{\bK_\star,\tau}$ and $L^2_{\bK_\star,\bar\tau}$ eigenvalues which are related by symmetry, which are not $L^2_{\bK_\star, 1}$- eigenvalues, and moreover that $\lambda_\sharp^\eps\ne0$.  We proceed as follows.

In section \ref{sec:pfeps-small} we show that there is a positive number, $\eps^0$, such that for all $\eps\in(-\eps^0,\eps^0)\setminus\{0\}$ (h1-h3) of Theorem \ref{prop:2impliescone}  hold. That is, the conclusions of Theorem \ref{main-thm} hold for all sufficiently small, non-zero $\eps$. In section \ref{det2} we introduce the key tool, a renormalized determinant, to detect and track the $L^2_{\bK,\sigma}$ eigenvalues of $H^{(\eps)}$ for $\sigma=1,\tau,\bar\tau$. A continuation argument is then implemented using tools from complex function theory in section \ref{sec:continuation}, to pass to large $\eps$. 
  We now embark on the detailed proofs.

\section{Proof of Main Theorem \ref{main-thm} for small $\eps$}\label{sec:pfeps-small}

We begin the proof of  Theorem \ref{main-thm} by first establishing
it for some interval $-\eps^0<\eps<\eps^0$, where $\eps^0$ is positive but possibly small.
We shall consider the eigenvalue problem for $H^{(\eps)}$ on the three  eigen-spaces of $\mathcal{R}$: $L^2_{\bK_\star,\tau}, L^2_{\bK_\star,\bar\tau}$ and $L^2_{\bK_\star,1}$:
\begin{align}
&H^{(\eps)}\ \Phi(\bx;\bK_\star)\ \equiv\ \left[\ -\Delta +\varepsilon V(\bx)\ \right]\ \Phi(\bx;\bK_\star) =\ \mu(\bK_\star)\ \Phi(\bx;\bK_\star)\label{Phi-evp}\\
&\Phi(\bx+\bv,\bK_\star)\ =\ e^{i\bK_\star\cdot\bv}\Phi(\bx,\bK_\star),\ \ \bx\in\R^2\nn\\
&\mathcal{R}\left[\Phi(\cdot;\bK_\star)\right](\bx)\ =\ \sigma\ \Phi(\bx;\bK_\star),\ \ \textrm{where} \ \ \sigma\in\{1,\tau,\bar\tau\}.\nn
\end{align}

 An eigenstate $\Phi(\bx;\bK_\star)$ in $ L^2_{\bK_\star,\sigma}$ is, by Proposition  \ref{Fourier-espaces}, of the form: 
 \begin{align}
 \Phi(\bx;\bK_\star)\ &=\ \sum_{\bm\in{\cal S}}\ c(\bm;\Phi)\ 
\left(\ e^{i\bK_\star^m\cdot\bx }\ +\ \bar{\sigma}\ e^{iR\bK_\star^m\cdot\bx }\ +\  
\sigma\ e^{iR^2\bK_\star^m\cdot\bx } \right)\ .
\label{Phi-expand}\end{align} 
The summation is over the set, $\mathcal{S}$,  introduced in Definition \ref{Sdef}. 
 Note that by Proposition \ref{Fourier-espaces} and Remark \ref{Vreal-even}, solutions to the eigenvalue problem on $ L^2_{\bK_\star,\bar\tau}$ can be obtained
  from those in $ L^2_{\bK_\star,\tau}$ via the symmetry:  $\Phi(\bx)\mapsto\overline{\Phi(-\bx)}$.

Recall that  $c_\Phi(\bm)$ or $c(\bm;\Phi)$ denote the $L^2_{\bK_\star,\sigma}$- Fourier coefficients of $\Phi $.
  Our next task is to reformulate the eigenvalue problem \eqref{Phi-evp} as an equivalent algebraic problem for the Fourier coefficients $\{c(\bm;\Phi(\cdot;\bK_\star))\}_{\bm\in\mathcal{S}}$.
 First, applying $-\Delta-\mu$ to $\Phi$, given by \eqref{Phi-expand}, and using that $R$ is orthogonal, we have that
\begin{align}
&\left(-\Delta-\mu\right)\Phi(\bx;\bK_\star)\nn\\
&\qquad\qquad =\sum_{\bm\in{\cal S}}\ \left(\ \left|\bK_\star^m\right|^2-\mu\right)\ c(\bm,\Phi)\ 
\left(\ e^{i\bK_\star^m\cdot\bx }\ +\ \bar{\sigma}\ e^{iR\bK_\star^m\cdot\bx }\ +\  
\sigma\ e^{iR^2\bK_\star^m\cdot\bx } \right)\ .
\label{DmuPhi-expand}\end{align} 
 Next, we claim that  $V(\bx)\Phi(\bx;\bK_\star)\in L^2_{\bK,\sigma}$. Indeed, since $V$ is $\mathcal{R}$-  invariant,  $\mathcal{R}[V](\bx)=V(R^*\bx)=V(\bx)$. Moreover, since $\Phi(\cdot;\bK_\star)\in L^2_{\bK_\star,\sigma}$, we have $\mathcal{R}[\Phi]=\sigma\Phi$. Therefore 
 \begin{equation*}
\mathcal{R}[V\Phi] = V(R^*\bx)\ \Phi(R^*\bx;\bK_\star) = V(\bx)\ \sigma\ \Phi(\bx;\bK_\star) =
\sigma\ V\Phi,\end{equation*}
Therefore, by Proposition \ref{Fourier-espaces},  $V\Phi(\cdot;\bK_\star)$ has the expansion
\begin{align}
 V(\bx)\Phi(\bx;\bK_\star)\ &=\ \sum_{\bm\in{\cal S}}\ c(\bm;V\Phi)\ 
\left(\ e^{i\bK_\star^m\cdot\bx }\ +\ \bar{\sigma}\ e^{iR\bK_\star^m\cdot\bx }\ +\  
\sigma\ e^{iR^2\bK_\star^m\cdot\bx } \right), \label{VPhi-expand}\\
c(\bm;V\Phi)\ &=\ \frac{1}{|\Omega|}\ \int_\Omega\ e^{-i\bK^\bm_\star\cdot\by}\ V(\by)\Phi(\by;\bK_\star)\ d\by\ .
\label{VPhi-Fourier-co}
\end{align} 

Furthermore, with the notation $\bq\bk\cdot\bx = (q_1\bk_1+q_2\bk_2)\cdot\bx$,

\begin{align}
&c(\bm;V\Phi) \  =\ \frac{1}{|\Omega|}\int_\Omega e^{-i\bK_\star^\bm\cdot\by} (V\Phi)(\by)\ d\by\nn\\
\ &=\ \frac{1}{|\Omega|}\int e^{-i\bK_\star^\bm\cdot\by}
\left(\sum_{\bq\in\Z^2} V_\bq\ e^{i\bk\bq\cdot\by}\right)\nn\\ 
&\ \ \ \ \times\ \left(\ \sum_{\br\in\mathcal{S}}\ c(\br;\Phi)
\left[e^{i\bK_\star^\br\cdot\by}+\bar\sigma\ e^{iR\bK_\star^\br\cdot\by} 
+\sigma\ e^{iR^2\bK_\star^\br\cdot\by}\right]\ \right)\nn\\ 
&=\   \frac{1}{|\Omega|}\sum_{\bq\in\Z^2,\br\in\mathcal{S}}\ V_\bq\ c(\br;\Phi)\nn\\
&\qquad\times\int_\Omega d\by
\left[e^{i(\bK_\star^\br-\bK_\star^\bm+\bq\bk)\cdot\by}+
\bar\sigma\ e^{i(R\bK_\star^\br-\bK_\star^\bm+\bq\bk)\cdot\by} 
+\sigma\ e^{i(R^2\bK_\star^\br-\bK_\star^\bm+\bq\bk)\cdot\by}\right]\nn\\
&=\ \frac{1}{|\Omega|} \sum_{\bq\in\mathbb{Z}^2,\br\in\mathcal{S}} V_\bq\ c(\br;\Phi) \int d\by\nn\\
&\qquad\times \left[e^{i\left(\bq-(\bm-\br)\right)\bk\cdot\by}+
\bar\sigma\ e^{i\left(\bq-(\bm-\mathcal{R}\br)\right)\bk\cdot\by}
+\sigma\ e^{i\left(\bq-(\bm-\mathcal{R}^2\br)\right)\bk\cdot\by}\right]\nn\\
&=\ \sum_{\bq\in\mathbb{Z}^2,\br\in\mathcal{S}} V_\bq\ c(\br;\Phi)\nn\\
&\qquad \times\ \left[\ \delta\left(\bq-(\bm-\br)\right)\ +\ \bar\sigma\ \delta\left(\bq-(\bm-\mathcal{R}\br)\right)\ +\ \sigma\ \delta\left(\bq-(\bm-\mathcal{R}^2\br)\right))\ \right]
\nn\end{align}

Thus,
\begin{align}
&c(\bm;V\Phi)\ =\ \sum_{ \br\in\mathcal{S} }\  \mathcal{K}_\sigma(\bm,\br)\ c(\br;\Phi),
\nn
\end{align}
 where (recall \eqref{Rbm})
\begin{align}
 \mathcal{K}_\sigma(\bm,\br)\ &\equiv V_{\bm-\br}+\bar\sigma\ V_{\bm-\cR\br}
+\ \sigma\ V_{\bm-\cR^2\br}\label{Kdef}\\
&=\ V_{m_1-r_1,m_2-r_2}\ +\ \bar\sigma\ V_{m_1+r_2,m_2+r_2-r_1-1}\ 
+\ \sigma\ V_{m_1+r_1-r_2+1,m_2+r_1}\ . \nn
\end{align}

Summarizing, we have
\begin{proposition}\label{equiv-alg} 
Let $\sigma\in\{1,\tau,\bar\tau\}$. Then, 
the spectral problem \eqref{Phi-evp} on $L^2_{\bK_\star,\sigma}$ is equivalent to  algebraic eigenvalue problem for $c(\bm)=c(\bm;\Phi)$ and $\mu$:
\begin{align}
&\left(\ \left|\bK_\star^m\right|^2-\mu\right)\ c(\bm)\ +\ \eps\ \sum_{ \br\in\mathcal{S} }\  
\mathcal{K}_\sigma(\bm,\br)\ c(\br)\ =\ 0,\ \ \ \bm\in\mathcal{S},
\label{a-spectral}\end{align}
where $\{c(\bm)\}_{\bm\in \mathcal{S}}\in l^2(\mathcal{S})$. 
\end{proposition}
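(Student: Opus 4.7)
The plan is to convert the differential spectral problem into an algebraic one by projecting onto the natural Fourier basis of $L^2_{\bK_\star,\sigma}$ supplied by Proposition \ref{Fourier-espaces}. Concretely, I expand $\Phi \in L^2_{\bK_\star,\sigma}$ as in \eqref{Phi-expand} with coefficients $\{c(\bm;\Phi)\}_{\bm\in\mathcal{S}}\in\ell^2(\mathcal{S})$. Since $R$ is orthogonal, $|R\bK_\star^\bm| = |R^2\bK_\star^\bm| = |\bK_\star^\bm|$, so $-\Delta$ acts diagonally on each triple of plane waves, giving formula \eqref{DmuPhi-expand} for $(-\Delta-\mu)\Phi$.

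The substantive calculation is the Fourier decomposition of $V\Phi$. The first observation is that $V\Phi$ still lies in $L^2_{\bK_\star,\sigma}$: by $\mathcal{R}$-invariance of $V$ and the eigen-relation $\mathcal{R}\Phi = \sigma\Phi$, we have $\mathcal{R}[V\Phi](\bx) = V(R^*\bx)\Phi(R^*\bx) = \sigma V(\bx)\Phi(\bx)$, so $V\Phi$ also transforms by $\sigma$, which by Proposition \ref{Fourier-espaces} means it admits an expansion of the same form with coefficients $c(\bm;V\Phi)$. To compute these coefficients I would plug the absolutely convergent series $V = \sum_{\bq} V_\bq e^{i\bq\bk\cdot\bx}$ (absolute convergence coming from \eqref{Vreg}) into the Fourier integral \eqref{VPhi-Fourier-co} and multiply out against the triple-expansion of $\Phi$. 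The three cross-terms pair $\bK_\star^\br, R\bK_\star^\br, R^2\bK_\star^\br$ against $\bK_\star^\bm$ modulo the dual lattice, producing three Kronecker delta conditions which read off $V_{\bm-\br}$, $V_{\bm-\mathcal{R}\br}$ and $V_{\bm-\mathcal{R}^2\br}$, weighted by $1,\bar\sigma,\sigma$ respectively. This yields the kernel $\mathcal{K}_\sigma$ of \eqref{Kdef}. Matching coefficients in $(-\Delta + \eps V - \mu)\Phi = 0$ against the orthogonal basis then produces the algebraic system \eqref{a-spectral}.

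For the converse direction, given $\{c(\bm)\}\in\ell^2(\mathcal{S})$ satisfying \eqref{a-spectral}, I would define $\Phi$ via the series in \eqref{Phi-expand}; since $\hat V\in\ell^1(\Z^2)$, the operator $c(\cdot)\mapsto \sum_\br \mathcal{K}_\sigma(\cdot,\br)c(\br)$ is bounded on $\ell^2(\mathcal{S})$, and \eqref{a-spectral} rearranged as $(|\bK_\star^\bm|^2-\mu)c(\bm) = -\eps\sum_\br \mathcal{K}_\sigma(\bm,\br)c(\br)$ forces $|\bm|^2|c(\bm)|$ to lie in $\ell^2$, i.e. $\Phi\in H^2$. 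Reading off the computation of Step 2 backwards then gives $H^{(\eps)}\Phi = \mu\Phi$ in the strong sense, and standard elliptic regularity (using $V\in C^\infty$) upgrades $\Phi$ to any Sobolev class one wishes.

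The only real obstacle, which is truly just bookkeeping, is being careful with the $\mathcal{R}$-action on the Fourier index set: the nontrivial shift $\mathcal{R}\bm = (-m_2,m_1-m_2+1)$ built into \eqref{Rbm} reflects the fact that $\bK_\star$ itself is not on $\Lambda_h^*$, and one must repeatedly use the identities $R\bK_\star^\bm = \bK_\star^{\mathcal{R}\bm}$ and $R^2\bK_\star^\bm = \bK_\star^{\mathcal{R}^2\bm}$ (equation \eqref{RbKmR2bKm}) to identify the exponentials arising in the expansion of $V\Phi$ with the prescribed basis triples. Once this translation is absorbed, the proof reduces to a transparent matching of Fourier coefficients.
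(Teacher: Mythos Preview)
Your proposal is correct and follows essentially the same route as the paper: verify $V\Phi\in L^2_{\bK_\star,\sigma}$ via $\mathcal{R}$-invariance of $V$, then multiply the Fourier series of $V$ against the triple-expansion of $\Phi$ and match coefficients to extract the kernel $\mathcal{K}_\sigma$. Your added converse direction (using $\hat V\in\ell^1$ to bootstrap $\ell^2$ solutions to $H^2$ eigenfunctions) is a detail the paper leaves implicit, but it is a welcome clarification rather than a departure.
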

\medskip

 To fix ideas, let $\bK_\star=\bK$; starting with $\bK'$, we would proceed similarly.
  For $\eps=0$, we have the algebraic eigenvalue problem:
\begin{align}
&\left(\ \left|\bK+m_1\bk_1+m_2\bk_2\right|^2-\mu\right)\ c(\bm)\ =\ 0,\ \ \ \bm\in\mathcal{S}.
 \label{alg-sys-eps0}
\end{align}
Equation \eqref{alg-sys-eps0}, viewed as an eigenvalue problem for  $\left(\{c(\bm)\}_{\bm\in\mathbb{Z}^2},\ \mu\right)$ is equivalent to the eigenvalue problem for $-\Delta$ on $L^2_\bK$ treated in 
Proposition \ref{H0-spec}. Restated in terms of Fourier coefficients, Proposition  \ref{H0-spec} states
that $\mu^{(0)}=|\bK|^2$ is an eigenvalue of multiplicity three with corresponding eigenvectors:
\begin{align*}
&c^{(0)}_1(m_1,m_2)=\delta_{m_1,m_2}\ \ \ &\leftrightarrow&\ \ \mu^{(0)}=|\bK|^2\\
&c^{(0)}_2(m_1,m_2)=\delta_{m_1,m_2-1}\ \ \ &\leftrightarrow&\ \ \mu^{(0)}=|\bK|^2=|R\bK|^2=|\bK+\bk_2|^2\\
&c^{(0)}_2(m_1,m_2)=\delta_{m_1+1,m_2}\ \ \ &\leftrightarrow&\ \ \mu^{(0)}=|\bK|^2=|R^2\bK|^2=|\bK-\bk_1|^2
\end{align*}
{\it Recall from Definition \ref{Sdef}  that the equivalence class of indices $\{(0,0),(0,1),(-1,0)\}$ has as its representative in $\mathcal{S}$ the point $(0,1)$. } \medskip

The  eigenvalue problem \eqref{alg-sys-eps0}
 has a one dimensional $L^2_{\bK,\sigma}$- eigenspace with eigenpair:
 \begin{equation}
 \mu^{(0)}=|\bK+\bk_2|^2=|\bK|^2,\ \ c(m_1,m_2)=\delta_{m_1,m_2-1},\ \ (m_1,m_2)\in\mathcal{S}
 \nn\end{equation}
corresponding to the $L^2_{\bK,\sigma}$ eigenstate of $H_0$:
 \begin{align*}
 \Phi^{\eps=0}(\bx;\bK)\ &=\  e^{i\bK^{0,1}\cdot\bx} + \bar\sigma e^{iR\bK^{0,1}\cdot\bx} + 
 \sigma e^{iR^2\bK^{0,1}\cdot\bx}\\
 &=\ e^{i(\bK+\bk_2)\cdot\bx} + \bar\sigma e^{i(\bK-\bk_1)\cdot\bx} + 
 \sigma e^{i\bK\cdot\bx}\nn\\
 &=\  \sigma\ e^{i\bK\cdot\bx}
 \left(\ 1\ +\ \bar\sigma\ e^{i\bk_2\cdot\bx}\ +\ \sigma e^{-i\bk_1\cdot\bx} \right)
\end{align*}
 
 We seek a solution of \eqref{a-spectral} for $\eps$ varying in a  small open interval about $\eps=0$. We proceed via a Lyapunov-Schmidt reduction argument. First, decompose the system \eqref{a-spectral} into coupled equations for:
 \begin{equation}
 c_\parallel \equiv c(0,1)\ \in\ \C,\ \ \textrm{and}\ \ \{c_\perp(\bm)\}_{\bm\in\mathcal{S}^\perp}\ \in\ l^2(\mathcal{S}^\perp)\ ,
 \label{c-par-perp}
 \end{equation}
 where
 \begin{equation}
 \mathcal{S}^\perp\equiv\mathcal{S}\setminus\{(0,1)\}\ .
 \label{Sperp}
 \end{equation}
 and rewrite \eqref{a-spectral} as a coupled system for $c_\parallel$ and $c_\perp$:
 \begin{align}
&\left[ \left|\bK^{0,1}\right|^2-\mu\ +\ \eps\ \mathcal{K}_\sigma(0,1,0,1)\right]\ c_\parallel\ +\ \eps\ \sum_{ \br\in\mathcal{S}^\perp }\  \mathcal{K}_\sigma(0,1,\br)\ c_\perp(\br)\ =\ 0,
\label{c-parallel-eqn}\\
&\eps\ \mathcal{K}_\sigma(\bm,0,1)\ c_\parallel\ \ +\ \left(\ \left|\bK^m\right|^2-\mu\right)\ c_\perp(\bm)\nn\\
&\nn\\
 & \ \  \ \ \ \  \ \  \ \ \ \  \ \  \ \ \ \ +\ \eps\ \sum_{ \br\in\mathcal{S}^\perp}\  \mathcal{K}_\sigma(\bm,\br)\ c_\perp(\br)\ =\ 0,\ \ \  \ \ \ \ \  \bm\in\mathcal{S}^\perp.
\label{a-spectral-split}\end{align}
We next seek a solution of \eqref{c-parallel-eqn}-\eqref{a-spectral-split}, for $\eps$ small, in a neighborhood of the solution to the $\eps=0$ problem: $c^0_\parallel=1, \mu^{(0)}=|\bK|^2, c_\perp(\br)=0, \br\in\mathcal{S}^\perp$. \medskip

We begin by solving the second equation in \eqref{a-spectral-split} for $c_\perp$ as a function of the scalar parameter $c_\parallel$. For $\eps$ small, the operator to be inverted is diagonally dominant 
 with diagonal elements: $\left|\bK^\bm\right|^2-\mu$, which we bound from below for $\bm\in\mathcal{S}^\perp$. By
  the relations \eqref{bk12} we have 
 \begin{align}
\left|\bK^m\right|^2-\mu\ &\equiv\ \ \left|\bK+m_1\bk_1+m_2\bk_2\right|^2\ -\ \mu\nn\\
  &= \left|\bK\right|^2-\mu\ +\ q^2\left(m_1^2+m_2^2-m_1m_2+m_1-m_2\right),\ \ 	
  q=\frac{4\pi}{ a\sqrt{3} }.
\nn \end{align}
If $\mu$ varies near $\mu^{(0)}=\left|\bK\right|^2$, then 
\begin{equation}
\left|\ \left|\bK^\bm\right|^2-\mu\ \right|^2\ \ge\ \frac{c_1}{a^2},\ \ \bm\in\mathcal{S}^\perp
\nn\end{equation}
for some $c_1>0$.
We now rewrite the equation for $c_\perp$ as:
\begin{align}
\left[\ \delta_{\bm,\br}\ +\ \frac{\eps}{|\bK^\bm|^2-\mu}\ \sum_{\br\in\mathcal{S}^\perp} \mathcal{K}_\sigma(\bm,\br)\ \right]\ c_\perp(\br)\ &=\ -\ \eps\ c_\parallel\ \frac{ \mathcal{K}_\sigma(\bm,0,1)}{|\bK^\bm|^2-\mu}\nn\\
& \equiv\ \eps\ c_\parallel\ F^\sigma_\bm(\mu),\ \bm\in\mathcal{S}^\perp
 \label{cperp-eqn}\end{align}
 or, more compactly,
 \begin{equation}
 \left(\ I\ +\ \eps\mathcal{T}_{\mathcal{K}_\sigma}(\mu)\ \right)c_\perp\ =\ \eps\ c_\parallel\ F^\sigma(\mu)
\label{compact-cperp} \end{equation}
 \nit Recall Young's inequality, which states that the operator  defined by
 $$T_Lf(\bm)\ =\ \sum_\br L(\bm,\br)f(\br)$$
 satisfies the bound
 \begin{align}
& \|T_Lf\|_{l^2(\mathcal{S}^\perp)}\ \le C_L\ \|f\|_{l^2(\mathcal{S}^\perp)},\ \ {\rm where}\label{youngs}\\
 & C_L\ =\ \sup_{\br}\sum_{\bm}\ |L(\bm,\br)|\ +\  \sup_{\bm}\sum_{\br} |L(\bm,\br)|\ \ .
\nn \end{align}
 We apply \eqref{youngs} with $L(\bm,\br)=\mathcal{K}_\sigma(\bm,\br)$, defined by \eqref{Kdef}, and conclude using the bound 
 $$\frac{1}{\left| |\bK^\bm|^2-\mu \right|}\ \le\ C\ \frac{1}{1+|\bm|^2},\ \ \bm\in\mathcal{S}^\perp,$$
and that  $\hat{V}=\{V_\bm\}_{\bm\in\mathcal{S}^\perp}\in l^1(\mathcal{S}^\perp)$ (recall \eqref{Vreg}),
 that 
 the operator 
 $$
 \mathcal{T}_{\mathcal{K}_\sigma}(\mu) f(\bm)\ =\ \frac{1}{|\bK^\bm|^2-\mu}\ \sum_{\br\in\mathcal{S}^\perp} \mathcal{K}_\sigma(\bm,\br) \ f(\br)$$
 maps $l^2(\mathcal{S}^\perp)\ \to\ l^2_2(\mathcal{S}^\perp)$ with the bound 
 \begin{equation}
 \|\ \mathcal{T}_{\mathcal{K}_\sigma}(\mu)f\ \|_{l^2_2(\mathcal{S}^\perp)}\ \le\ C\ \| f\|_{l^2(\Z^2)}.
 \label{calTbound}
 \end{equation}
 Here, $\| f\|_{l^2_2(\mathcal{S}^\perp)}^2\equiv\sum_{\bm\in\mathcal{S}^\perp}(1+|\bm|^2)^2|f(\bm)|^2$.

 \begin{proposition}\label{Tinvert1}
 There exists $\eps^0 >0$ such that for all $ |\varepsilon| < \varepsilon^0$ 
  and  any ${\bf f}\in l^2(\mathcal{S}^\perp)$ 
\begin{eqnarray}
\label{exis_sol1}
\left(I+\varepsilon \mathcal{T}_{\mathcal{K}_\sigma}(\mu)\right)c_{\perp}={\bf f}\;,
\end{eqnarray}
has a unique solution $c_\perp=c^\eps_\perp\in l^2_2(\mathcal{S}^\perp)$, analytic in $\eps$, satisfying 
$\left|\left|c^\eps_{\perp}\right|\right|_{l^2_2(\mathcal{S}^\perp)}
\le 2\left|\left| {\bf f}\right|\right|_{l^2_2(\mathcal{S}^\perp)}.$\\
\end{proposition}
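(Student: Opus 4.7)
The plan is to invert $I+\eps\,\mathcal{T}_{\mathcal{K}_\sigma}(\mu)$ by a Neumann series argument, using the already-established smoothing bound \eqref{calTbound} as the sole analytic input. First I would fix a compact neighborhood $U$ of $\mu^{(0)}=|\bK|^2$ on which the denominator lower bound
\[
\bigl|\,|\bK^\bm|^2-\mu\,\bigr|\ \ge\ \frac{c_1}{a^2}\,(1+|\bm|^2),\qquad \bm\in\mathcal{S}^\perp,\ \mu\in U,
\]
holds uniformly, and note that the constant in \eqref{calTbound} can be taken independent of $\mu\in U$ and of $\sigma\in\{1,\tau,\bar\tau\}$. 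Combining this with the continuous embedding $l^2_2(\mathcal{S}^\perp)\hookrightarrow l^2(\mathcal{S}^\perp)$, the operator $\mathcal{T}_{\mathcal{K}_\sigma}(\mu)$ is bounded on $l^2(\mathcal{S}^\perp)$ with a uniform operator-norm bound $M$.

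Next, set $\eps^0=1/(2M)$. For $|\eps|<\eps^0$ the operator $\eps\,\mathcal{T}_{\mathcal{K}_\sigma}(\mu)$ has operator norm at most $1/2$, so the Neumann series
\[
\bigl(I+\eps\,\mathcal{T}_{\mathcal{K}_\sigma}(\mu)\bigr)^{-1}\ =\ \sum_{n=0}^{\infty}(-\eps)^n\,\mathcal{T}_{\mathcal{K}_\sigma}(\mu)^n
\]
converges absolutely in the operator norm on $l^2(\mathcal{S}^\perp)$. This furnishes the unique solution $c_\perp^\eps=(I+\eps\,\mathcal{T}_{\mathcal{K}_\sigma}(\mu))^{-1}\mathbf{f}$; because the series is a convergent power series in $\eps$ with operator-valued coefficients, the map $\eps\mapsto c_\perp^\eps$ is analytic from the disk $|\eps|<\eps^0$ into $l^2(\mathcal{S}^\perp)$. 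The geometric-series estimate gives the crude bound $\|c_\perp^\eps\|_{l^2}\le 2\|\mathbf{f}\|_{l^2}$.

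To promote this to the $l^2_2$ bound claimed in the proposition (assuming $\mathbf{f}\in l^2_2(\mathcal{S}^\perp)$), I would use the fixed-point identity $c_\perp^\eps=\mathbf{f}-\eps\,\mathcal{T}_{\mathcal{K}_\sigma}(\mu)\,c_\perp^\eps$ together with the smoothing estimate \eqref{calTbound} applied to $c_\perp^\eps\in l^2$:
\[
\|c_\perp^\eps\|_{l^2_2}\ \le\ \|\mathbf{f}\|_{l^2_2}\ +\ |\eps|\,C\,\|c_\perp^\eps\|_{l^2}\ \le\ \|\mathbf{f}\|_{l^2_2}\ +\ 2C|\eps|\,\|\mathbf{f}\|_{l^2_2}.
\]
Shrinking $\eps^0$ if necessary so that $2C\eps^0\le 1$ yields the asserted bound $\|c_\perp^\eps\|_{l^2_2}\le 2\|\mathbf{f}\|_{l^2_2}$. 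Analyticity in $\eps$ is preserved, now as a map into $l^2_2(\mathcal{S}^\perp)$, by the same power-series argument applied in the stronger norm.

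I do not anticipate any substantive obstacle: everything reduces to the a priori bound \eqref{calTbound}, whose proof (via Young's inequality, the diagonal decay $1/(1+|\bm|^2)$, and $\hat V\in l^1$) has been carried out in the discussion just preceding the proposition. The only item requiring mild care is ensuring that $\eps^0$ can be chosen independently of $\mu\in U$ and of $\sigma\in\{1,\tau,\bar\tau\}$, which follows because the constant in \eqref{calTbound} involves only $\|\hat V\|_{l^1}$ and the uniform lower bound on $||\bK^\bm|^2-\mu|$ on $U$.
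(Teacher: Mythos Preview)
Your proposal is correct and is precisely the argument the paper intends: the paper states the proposition without proof, having set up exactly the ingredients you use (the smoothing bound \eqref{calTbound} via Young's inequality and the uniform lower bound on $|\,|\bK^\bm|^2-\mu\,|$), so your Neumann-series argument with the $l^2\to l^2_2$ bootstrap is the natural completion. A minor streamlining: since \eqref{calTbound} gives $\mathcal{T}_{\mathcal{K}_\sigma}:l^2\to l^2_2$ and $l^2_2\hookrightarrow l^2$, the operator is already bounded on $l^2_2$ itself, so one may run the Neumann series directly in $l^2_2$ and skip the two-step bootstrap.
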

We now apply Proposition \ref{Tinvert1} to solve \eqref{compact-cperp} to obtain
\begin{equation}
c_\perp(\br)=\eps\ c_\parallel\ \left[ \left(I+\eps\mathcal{T}_{\mathcal{K}_\sigma}(\mu)\right)^{-1} F^\sigma(\mu)\ \right](\br).
\label{cperp}
\end{equation}
Substitution into \eqref{c-parallel-eqn} yields a closed scalar equation for $c_\parallel$ of the form
 $\mathcal{M}_\sigma(\mu,\eps)c_\parallel=0$, which has a non-trivial solution if and only if:
{\small{
\begin{align}
&\mathcal{M}_\sigma(\mu,\eps)\nn\\
&\equiv\ \left|\bK\right|^2-\mu\ +\ \eps\ \mathcal{K}_\sigma(0,1,0,1)\ +\ \eps^2\ \sum_{ \br\in\mathcal{S}^\perp }\  \mathcal{K}_\sigma(0,1,\br)\  \left[\left(I+\eps\mathcal{T}_{\mathcal{K}_\sigma}(\mu)\right)^{-1} F^\sigma(\mu)\right](\br)\ =\ 0\nn\\
&\label{cparallecl}
\end{align}
}}
$M_\sigma(\mu,\eps)$ is analytic in a neighborhood of $(\mu,\eps)=(\mu^{(0)},0)=(|\bK|^2,0)$. 
Clearly, $M_\sigma(\mu^{(0)},0)=0$ and $\partial_\mu M_\sigma(\mu^0,0)=-1$. By the implicit function theorem, there exists 
$\eps^0>0$ such that defined in a complex neighborhood of the interval 
$|\eps|<\eps^0$, there is an analytic function $\eps\mapsto \mu^\eps$, such that 
\begin{equation}
\mathcal{M}_\sigma(\mu^\eps,\eps)\ =\ 0,\ \textrm{for}\  -\eps^0<\eps<\eps^0
\nn\end{equation}

Thus, we take $c_\parallel=1$ and via \eqref{cperp}-\eqref{cparallecl} our solution for $|\eps|<\eps^0$  is 
\begin{align}
\mu &=\ \mu^\eps\ =\ |\bK|^2 + \eps\mathcal{K}_\sigma(0,1,0,1) + \mathcal{O}(\eps^2)\nn\\
c^\eps_\parallel &=\ c(0,1)\ \equiv\ 1\nn\\
c^\eps_\perp &= \{c^\eps(\bm)\}_{\bm\in\mathcal{S}^\perp}\ =\ \eps\  \left(I+\eps\mathcal{T}_{\mathcal{K}_\sigma}(\mu^\eps)\right)^{-1} F_\sigma(\mu^\eps),\nn\\
& \ \ {\rm where}\ \ 
F_{\sigma,\bm}(\mu)\ =\ -\frac{\mathcal{K}_\sigma(\bm,0,1)}{|\bK^\bm|^2-\mu},\ \ \bm\in\mathcal{S}^\perp 
\ .\label{c-mu-expand}\end{align}

From the definition of $\mathcal{K}_\sigma(\bm,\br)$, displayed in \eqref{Kdef}, we find:
\begin{equation}
\mathcal{K}_\sigma(0,1,0,1) = 
V_{0,0}\ +\ \bar\sigma\ V_{1,1}\ +\ \sigma\ V_{0,1} = V_{0,0}+V_{1,1}\left(\sigma+\bar\sigma\right),
 \ \ \sigma = 1,\tau,\bar\tau.
\end{equation}
The latter equality uses:\\
(a) constraints on $V_{m_1,m_2}$ by  $\mathcal{R}- $ symmetry of $V$
 ($V_{0,-1}=V_{1,1}$) and  that\\
 (b) $V$ is even ($V_{0,-1}=V_{0,1}$). \\
 Furthermore, $V_{1,1}$ is real, since $V(\bx)$ is even and real ($V_{1,1}=V_{-1,-1}=\overline{V_{1,1}}$).  Therefore, $\mathcal{K}_\sigma(0,1,0,1)$ is real, as expected. 
 \medskip
 
The small $\eps$ perturbation theory of the three-dimensional eigenspace is now summarized:
\begin{proposition}\label{perturbed-espace}
 Assume $V_{1,1}\ne0$. Then, there exists $\eps^0>0$ such that
for $0<|\eps|<\eps^0$, the multiplicity three eigenvalue $\mu=|\bK|^2$ perturbs to $2$-dimensional and $1$-dimensional eigenspaces with corresponding eigenvalues $\mu^\eps(\bK)$ and $\tilde\mu^\eps(\bK)$ as follows:
\begin{enumerate}
\item $\mu^\eps(\bK)$ is of geometric multiplicity $2$ with  a $2$-dimensional eigenspace $\mathbb{X}_\tau\oplus \mathbb{X}_{\bar\tau}\subset L^2_{\bK,\tau}\oplus L^2_{\bK,\bar\tau}$ given by:
\begin{align}
\mu^\eps(\bK) &=\ \ |\bK|^2 + \eps\left( V_{0,0} + 2 V_{1,1}\ \cos(2\pi /3)\right) + \mathcal{O}(\eps^2)\nn\\
& =\ 
  |\bK|^2 + \eps\left(V_{0,0}- V_{1,1}\right)+ \mathcal{O}(\eps^2)
\label{mu-eps-small}
\end{align}
with eigenstates  $\Phi_1^\eps\in L^2_{\bK,\tau}$ and $ \Phi_2^\eps\in L^2_{\bK,\bar\tau}$, obtained by the symmetry (see Remark \ref{Vreal-even}):  \[\Phi_2(\bx;\bK)=\overline{\Phi_1(-\bx;\bK)},\] with Fourier expansions:
\begin{align}
 \Phi_1^\eps(\bx,\bK)\ &=\ \sum_{\bm\in{\cal S}}\ c^\eps(\bm)\ 
\left(\ e^{i\bK^m\cdot\bx }\ +\ \bar{\tau}\ e^{iR\bK^m\cdot\bx }\ +\  
\tau\ e^{iR^2\bK^m\cdot\bx } \right)\ . 
\label{eig-Phi1-Fourier}\\
 \Phi_2^\eps(\bx,\bK)\ &=\ \sum_{\bm\in{\cal S}}\ \overline{c^\eps(\bm)}\ 
\left(\ e^{i\bK^m\cdot\bx }\ +\ {\tau}\ e^{iR\bK^m\cdot\bx }\ +\  
\bar\tau\ e^{iR^2\bK^m\cdot\bx } \right),\ \ \textrm{and}
\label{eig-Phi12-Fourier}\end{align} 

\item $\tilde{\mu}^\eps(\bK)$ is a simple eigenvalue with eigenspace $\mathbb{X}_1\subset L^2_{\bK,1}$:
 \begin{equation}
\tilde\mu^\eps(\bK) =\ \ |\bK|^2 + \eps\left(V_{0,0}+ 2V_{1,1}\right)\  + \mathcal{O}(\eps^2)
\label{tilde-mu-eps-small}
\end{equation}
with eigenstate $\tilde\Phi$:
\begin{equation}
 \tilde\Phi^\eps(\bx,\bK)\ =\ \sum_{\bm\in{\cal S}}\ \tilde{c}^\eps(\bm)\ 
\left(\ e^{i\bK^m\cdot\bx }\ +\  e^{iR\bK^m\cdot\bx }\ +\  e^{iR^2\bK^m\cdot\bx } \right)\ . 
\label{eig-tildePhi-Fourier}
\end{equation}
\end{enumerate}
\end{proposition}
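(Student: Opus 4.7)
The plan is to apply the Lyapunov--Schmidt machinery built up in equations \eqref{c-parallel-eqn}--\eqref{c-mu-expand} three times, once for each choice of $\sigma\in\{1,\tau,\bar\tau\}$, and then assemble the three resulting simple eigenvalues of $H^{(\eps)}$ on the $\mathcal{R}$--invariant subspaces $L^2_{\bK,\sigma}$ into the claimed spectral picture on $L^2_{\bK}=L^2_{\bK,1}\oplus L^2_{\bK,\tau}\oplus L^2_{\bK,\bar\tau}$. Recall that the symmetry reduction in Proposition \ref{SR-invariance} and the orthogonal decomposition in Proposition \ref{L2decomp} allow us to treat each $\sigma$--sector independently.

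First I would observe that by Proposition \ref{H0-spec}, for each $\sigma\in\{1,\tau,\bar\tau\}$ the Laplacian $H^{(0)}$ has a \emph{simple} eigenvalue $\mu^{(0)}=|\bK|^2$ on $L^2_{\bK,\sigma}$, with the Fourier coefficient $c^{(0)}(0,1)=1$ and all other $c^{(0)}(\bm)=0,\ \bm\in\mathcal{S}^\perp$. Proposition \ref{Tinvert1} supplies the invertibility needed to solve for $c_\perp^\eps$ in terms of $c_\parallel^\eps$ and $\mu$ via \eqref{cperp}, and the resulting scalar bifurcation equation $\mathcal{M}_\sigma(\mu,\eps)=0$ in \eqref{cparallecl} is analytic in $(\mu,\eps)$ near $(|\bK|^2,0)$ with $\partial_\mu\mathcal{M}_\sigma(|\bK|^2,0)=-1$. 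The implicit function theorem therefore yields, for each $\sigma$, a unique analytic branch $\eps\mapsto\mu_\sigma^\eps$ of $L^2_{\bK,\sigma}$--eigenvalues, defined for $|\eps|<\eps^0$ (where $\eps^0$ is taken as the minimum of the three radii), together with a corresponding eigenvector of the form \eqref{Phi-expand} whose Fourier coefficients are given by \eqref{c-mu-expand}.

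Next I would extract the leading--order splitting. Substituting $\sigma+\bar\sigma=2\cos(2\pi/3)=-1$ for $\sigma\in\{\tau,\bar\tau\}$ and $\sigma+\bar\sigma=2$ for $\sigma=1$ into the already--computed diagonal matrix element $\mathcal{K}_\sigma(0,1,0,1)=V_{0,0}+V_{1,1}(\sigma+\bar\sigma)$ gives
\begin{align*}
\mu_\tau^\eps(\bK)\ &=\ |\bK|^2+\eps(V_{0,0}-V_{1,1})+\mathcal{O}(\eps^2),\\
\mu_{\bar\tau}^\eps(\bK)\ &=\ |\bK|^2+\eps(V_{0,0}-V_{1,1})+\mathcal{O}(\eps^2),\\
\tilde\mu^\eps(\bK)\ &\equiv\ \mu_1^\eps(\bK)\ =\ |\bK|^2+\eps(V_{0,0}+2V_{1,1})+\mathcal{O}(\eps^2).
\end{align*}
To identify $\mu_\tau^\eps(\bK)$ with $\mu_{\bar\tau}^\eps(\bK)$ to all orders in $\eps$ (not merely at leading order), I would use Remark \ref{Vreal-even}: the map $\Phi\mapsto\overline{\Phi(-\cdot)}$ sends eigenfunctions of $H^{(\eps)}$ at quasi--momentum $\bK$ to eigenfunctions at $\bK$ with the same eigenvalue, and by Proposition \ref{Fourier-espaces}(3) it maps $L^2_{\bK,\tau}$ bijectively onto $L^2_{\bK,\bar\tau}$. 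Hence any $L^2_{\bK,\tau}$--eigenvalue is also an $L^2_{\bK,\bar\tau}$--eigenvalue, and by uniqueness of the analytic branches produced above we get $\mu_\tau^\eps(\bK)=\mu_{\bar\tau}^\eps(\bK)$ identically. Setting $\Phi_1^\eps$ to be the $L^2_{\bK,\tau}$--eigenvector produced by Lyapunov--Schmidt and $\Phi_2^\eps(\bx)=\overline{\Phi_1^\eps(-\bx)}$ gives the stated Fourier expansions \eqref{eig-Phi1-Fourier}--\eqref{eig-Phi12-Fourier}, while the $\sigma=1$ case yields $\tilde\Phi^\eps$ as in \eqref{eig-tildePhi-Fourier}.

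Finally, the geometric multiplicity count: since $L^2_{\bK,\tau}$, $L^2_{\bK,\bar\tau}$, $L^2_{\bK,1}$ are mutually orthogonal, the eigenstates $\Phi_1^\eps,\Phi_2^\eps,\tilde\Phi^\eps$ are linearly independent and together span the full perturbed three--dimensional invariant subspace of $H^{(\eps)}$ bifurcating from $\ker(H^{(0)}-|\bK|^2 I)$. The hypothesis $V_{1,1}\neq 0$ enters precisely at this point: it guarantees that $\tilde\mu^\eps(\bK)-\mu^\eps(\bK)=3\eps V_{1,1}+\mathcal{O}(\eps^2)\neq 0$ for $0<|\eps|<\eps^0$ (after possibly shrinking $\eps^0$), so that $\mu^\eps(\bK)$ really is of geometric multiplicity exactly two and $\tilde\mu^\eps(\bK)$ is simple. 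I expect the main subtlety to be not in any calculation but in establishing that the three analytic branches exhaust the spectrum near $|\bK|^2$ uniformly in $\eps$ and that no ``hidden'' degeneracies between sectors occur --- both issues are handled by the orthogonal decomposition of Proposition \ref{L2decomp} combined with the uniqueness clause of the implicit function theorem applied inside each fixed sector.
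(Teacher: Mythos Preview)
Your proposal is correct and follows essentially the same approach as the paper: the material in equations \eqref{c-parallel-eqn}--\eqref{c-mu-expand} is precisely the Lyapunov--Schmidt reduction carried out for generic $\sigma\in\{1,\tau,\bar\tau\}$, and the proposition is then assembled by specializing to each value of $\sigma$, computing $\mathcal{K}_\sigma(0,1,0,1)=V_{0,0}+V_{1,1}(\sigma+\bar\sigma)$, and invoking the symmetry $\Phi\mapsto\overline{\Phi(-\cdot)}$ to identify the $\tau$ and $\bar\tau$ branches. Your observation that the implicit-function-theorem uniqueness in each sector forces $\mu_\tau^\eps\equiv\mu_{\bar\tau}^\eps$ to all orders is a clean way to state what the paper leaves implicit.
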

\medskip

Proposition \ref{perturbed-espace} implies that the double-eigenvalue hypotheses of Theorem \ref{prop:2impliescone} holds for $\eps$ positive and small. In particular, by \eqref{mu-eps-small} and \eqref{tilde-mu-eps-small} 
\begin{align}
&\textrm{If}\  \eps V_{1,1}>0, \textrm{then}\nn\\
&\ \ \ \  \ \mu^{(\eps)}_1(\bK)= \mu^{(\eps)}_2(\bK)< \mu^{(\eps)}_3(\bK)\ <\ \mu^{(\eps)}_4(\bK)\le \dots
\label{case12a}\\
&\nn\\
&\textrm{and if}\ \eps V_{1,1}<0, \ \textrm{then }\nn\\
& \ \mu^{(\eps)}_1(\bK)< \mu^{(\eps)}_2(\bK)= \mu^{(\eps)}_3(\bK)\  <\ \mu^{(\eps)}_4(\bK)\le\dots
\ \ \ \ \ \ \ .\label{case23a}\end{align}
By Theorem \ref{prop:2impliescone}, assuming $\lambda_\sharp^\eps\ne0$: 
\begin{itemize}
\item[(i)]\ if $\eps V_{1,1}<0$, the dispersion surfaces $\bk\mapsto \mu_1(\bk)$ and and $\bk\mapsto\mu_2(\bk)$
 intersect conically at the vertices of $\brill_h$
 \item[(ii)] if $\eps V_{1,1}>0$, the dispersion surfaces $\bk\mapsto \mu_2(\bk)$ and and $\bk\mapsto\mu_3(\bk)$  intersect conically at the vertices of $\brill_h$.
 \end{itemize}
 \medskip
 
\noindent So, in  order to apply Theorem \ref{prop:2impliescone}  it remains  to  check that $\lambda_\sharp^\eps\ne0$ . Here,
 $\lambda_\sharp^\eps$ is the expression given in \eqref{lambda-sharp1} . For $\eps$ small we have
\begin{equation}
\lambda_\sharp^\eps = 
3\ {\rm area}(\Omega)\ \left[\ \left(\begin{array}{c} 1 \\ i\end{array}\right)\ \cdot\ \bK^{(0,1)}\ \right]\ +\ \mathcal{O}(\eps),\label{lambda-sharp-eps-small1}
\end{equation}
where we have used that $c^\eps(0,1)=1$, \eqref{c-mu-expand} and that $\|c^\perp\|_{l^2_2(\mathcal{S}^\perp)}=\mathcal{O}(\eps)$. Note also that
\begin{equation}
\bK^{(0,1)} = \bK + \bk_2 = \frac{1}{3}\bk_1+\frac{2}{3}\bk_2 = \frac{q}{3}\left(\begin{array}{c} 3/2 \\ -\sqrt{3}/2\end{array}\right),\ \ q=4\pi/a\sqrt{3}.
\nn\end{equation}
Therefore, for $|\eps|<\eps^0$, with $\eps^0$ chosen sufficiently small,
\begin{equation}
\left| \lambda_\sharp^\eps \right|^2\ = 16\ {\rm area}(\Omega)^2\ \frac{\pi^2}{a^2}\ +\ \mathcal{O}(\eps).
\label{lambda-sharp-eps-small2}
\end{equation}
This completes the proof of our main theorem, Theorem \ref{main-thm},  for the case where $\eps$ is taken to be sufficiently small. We now turn to extending Theorem \ref{main-thm} to large $\eps$.

\section{Characterization of eigenvalues of $H^{(\eps)}$ for large $\eps$}\label{sec:det2}

To extend the assertions of Theorem \ref{main-thm} to large values of $\eps$, we introduce a  characterization of the $L^2_{\bK,\sigma}$- eigenvalues of the eigenvalue problem \eqref{Phi-evp} 
as zeros of an analytic function of $\eps$. \medskip

Since we can add an arbitrary constant to the potential, by redefinition of the eigenvalue parameter, $\mu$, 
we may assume without loss of generality  that 
 \[0\le V(\bx)\le V_{\rm max}.\]
Assume first that $\eps\in\mathbb{C}$ and $\Re\eps>0$. Then, 
$H^{(\eps)}-\mu I=-\Delta+\eps V -\mu I = \left(-\Delta+\eps V + I\right) - (\mu+1)I. $ The eigenvalue problem 
\eqref{Phi-evp} may be rewritten as
\begin{equation}
\left(-\Delta+\eps V + I\right)\Phi - (\mu+1)\Phi=0,\ \ u\in L^2_{\bK,\sigma}\ .
\label{evp-K1a}
\end{equation}
Now  for any real $\eps>0$ we have  $-\Delta+\eps V + I\ge I$. Hence we introduce
\footnote[3]{ For $\Re\eps>0$ and $f$ smooth, we have $\Re\left\langle\left(-\Delta+\eps V+I\right)f,f\right\rangle\ge \|f\|^2$. Hence the nullspace of $-\Delta+\eps V+I$ and its adjoint are $\{0\}$. By elliptic regularity theory $-\Delta+\eps V+I$ is invertible on $L^2_{\bK,\sigma}$.
}
\begin{equation}
 T(\eps)\ \equiv\ (I-\Delta+\eps V)^{-1},
 \label{Tdef}
 \end{equation}
 which exists as a bounded operator from $L^2_{\bK,\sigma}$ to $H^2_{\bK,\sigma}$  
 and obtain the following Lippmann\ -\ Schwinger equation, equivalent to the eigenvalue problem \eqref{Phi-evp}:
 \begin{align}
 &\left[\ I\ -\ \left(\mu+1\right)\ T(\eps)\ 
 \ \right]\ \Phi=0,\ \  \Phi\in L^2_{\bK,\sigma}\ .\label{LSeqn}
 \end{align}

We now show that if $\Re\eps<0$, we also obtain an equation of the same type as in \eqref{LSeqn}. In this case,
we observe that  $\eps \left(V-V_{\rm max}\right)\ge0$. Therefore, $-\Delta+\eps \left(V-V_{\rm max}\right)+I\ge I$  and we rewrite \eqref{Phi-evp} as
\begin{equation}
\left(-\Delta+\eps \left(V-V_{\rm max}\right) + I \right)\Phi -\left(\mu+1-\eps V_{\rm max}\right)\Phi=0,\ \ 
\Phi\in L^2_{\bK,\sigma}\ .
\label{evp-K1b}
\end{equation}
If for $\eps<0$ we define $\tilde{T}(\eps)=I-\Delta+\eps \left(V-V_{\rm max}\right)$, then \eqref{Phi-evp}
 is equivalent to 
  \begin{align}
 &\left[\ I\ -\ \left(\mu+1-\eps V_{\rm max}\right)\ \tilde{T}(\eps)\ 
 \ \right]\ \Phi=0,\ \  \Phi\in L^2_{\bK,\sigma}\ .\label{LSeqn-b}
 \end{align}
\smallskip
 
\noindent{\it For the remainder of this section we shall assume $\Re\eps>0$ and work with the form of the eigenvalue problem given  in \eqref{LSeqn}. The analysis below applies with only trivial  modifications to the case $\eps<0$ and the 
 form of the eigenvalue problem given in \eqref{LSeqn-b}.}
\smallskip

For each $\eps>0$, we would like to characterize $L^2_{\bK,\sigma}$- eigenvalues, $\mu(\eps)$, as points where the determinant
of the operator $I - \left(\mu+1\right)T(\eps)$ vanishes. To define the determinant of $I-zT$, one requires that $T$ be trace class.  Although 
$T(\eps)$ is compact on $L^2_{\bK,\sigma}$,  it  is \underline{not} trace class. Indeed, in spatial dimension two,  $\lambda_j$, the $j^{th}$ eigenvalue of  $-\Delta_K+W$ acting in $L^2_{per,\Lambda}$ satisfies the asymptotics $\lambda_j\sim |j|$ (Weyl). Therefore 
\begin{equation}
{\rm trace}(T(\eps))\ =\ \sum_{j}|\lambda_j|^{-1}\ \sim\ \sum_{j}|j|^{-1}\ =\ \infty\ .
\nn\end{equation}

The divergence of the determinant can be removed if we work with the regularized or renormalized determinant; see \cite{Jost-Pais:51,Newton:72}. Note that  $T(\eps)$ is Hilbert Schmidt, {\it i.e.} 
\begin{equation}
\|T\|_{H.S.}^2 = \sum_{j}|\lambda_j|^{-2}\sim
\sum_{j}|j|^{-2}<\infty.\nn
\end{equation}

 For a Hilbert-Schmidt operator, $A$, {\it i.e.} ${\rm tr}(A^2)<\infty$,  define
\begin{equation}
R_2(A)\ \equiv\ \left[I\ +\ A\ \right]e^{-A}-I .\label{R2def}
\end{equation}
Note that $I+A$ is singular if and only if $I+R_2(A)=(I+A)e^{-A}$ is singular. \bigskip
 
Since $e^{-z}=1-z-z^2\int_0^1(s-1)e^{-sz}ds$, we have\\ 
$R_2(z)=(1+z)e^{-z}-1=-z^2\left(1+(1+z)\int_0^1(s-1)e^{-sz}ds\right)$. Therefore 
\begin{equation}
R_2(A) = -A^2\left(I+(I+A)\int_0^1(s-1)e^{-sA}ds\right).
\label{R2fac}\end{equation}
Since $A^2$,  is trace class and the second factor is bounded, $R_2(A)$ is trace class. Therefore the regularized determinant of $I+A$:
\begin{equation}
{\rm det}_{2}(I+A)\ \equiv\ {\rm det}\left( \ I+R_2(A)\ \ \right),
\label{det2}
\end{equation}
is well-defined. With $A=-(\mu+1)T(\eps)$, we have the following 
\cite{Jost-Pais:51,Newton:72,Simon}:

\begin{theorem}\label{eig-det0} 
Let $\sigma$ take on the values $1, \tau$ or $\bar\tau$. 
\begin{enumerate}
\item $\eps\mapsto T(\eps)$ is an analytic mapping from $\{\eps\in\mathbb{C}^1:\Re{\eps}>0\}$ to the space of Hilbert-Schmidt operators on $L^2_{\bK,\sigma}$.
\item For $T(\eps)$, considered as a mapping on $L^2_{\bK,\sigma}$,\  define:
\begin{equation}
\mathcal{E}_\sigma(\mu,\eps)\equiv {\rm det}_{2}\left(I-(\mu+1)T(\eps)\right).
\label{E-sigma-def}
\end{equation}
The mapping $(\mu,\eps)\mapsto \mathcal{E}_\sigma(\mu,\eps)$,  which takes  $(\mu,\eps)\in\mathbb{C}^2$ ($\Re\eps>0$) to $\mathbb{C}$ is  analytic.
\item For $\eps$ real, $\mu$ is an $L^2_{\bK,\sigma}$- eigenvalue of the eigenvalue problem \eqref{Phi-evp}
if and only if 
\begin{equation}
\mathcal{E}_\sigma(\mu,\eps)=0\ .\label{Esigma0}
\end{equation}
\item For $\eps$ real, $\mu$ is an $L^2_{\bK,\sigma}$ eigenvalue of \eqref{Phi-evp} of geometric multiplicity $m$ if and only if $\mu$ is a root of \eqref{Esigma0} of multiplicity $m$.
\end{enumerate}
\end{theorem}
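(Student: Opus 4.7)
The plan is to dispatch the four assertions using standard ingredients from the theory of regularized determinants on Hilbert-Schmidt operators, once the Hilbert-Schmidt property of $T(\eps)$ is established. Throughout, let $\mathcal{J}_1$ and $\mathcal{J}_2$ denote the trace-class and Hilbert-Schmidt ideals on $L^2_{\bK,\sigma}$.

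\textbf{Part 1.} For $\eps_0$ with $\Re\eps_0 > 0$ the coercivity estimate $\Re\langle (I - \Delta + \eps_0 V) f, f\rangle \geq \|f\|^2$ together with elliptic regularity give that $T(\eps_0)$ is bounded from $L^2_{\bK,\sigma}$ to $H^2_{\bK,\sigma}$, and the resolvent identity yields the Neumann series
\[
T(\eps) \;=\; T(\eps_0) \sum_{n \geq 0} \bigl(-(\eps - \eps_0)\, V\, T(\eps_0)\bigr)^n,
\]
convergent in operator norm for $|\eps - \eps_0|$ sufficiently small, since $V$ is bounded. To upgrade convergence to $\mathcal{J}_2$, it suffices to know $T(\eps_0) \in \mathcal{J}_2$ for a single $\eps_0$: for $\eps_0$ real, Weyl's asymptotic law in dimension two applied to the self-adjoint operator $-\Delta_\bK + \eps_0 V + I$ on $L^2_{\bK,\sigma}$ gives eigenvalues $\lambda_j \sim c\, j$, whence $\|T(\eps_0)\|_{\mathcal{J}_2}^2 = \sum_j \lambda_j^{-2} < \infty$; the complex case reduces to the real case by one application of the resolvent identity. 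Since $\mathcal{J}_2$ is a two-sided ideal, every term of the Neumann series is $\mathcal{J}_2$, establishing analyticity into $\mathcal{J}_2$.

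\textbf{Part 2.} The map $(\mu, \eps) \mapsto -(\mu+1) T(\eps)$ is jointly analytic into $\mathcal{J}_2$ by Part 1 and linearity in $\mu$. The factorization $R_2(A) = -A^2\bigl(I + (I+A)\int_0^1 (s-1)\, e^{-sA}\, ds\bigr)$ displayed in \eqref{R2fac} exhibits $A \mapsto R_2(A)$ as an entire analytic map $\mathcal{J}_2 \to \mathcal{J}_1$, and $B \mapsto \det(I+B)$ is entire on $\mathcal{J}_1$. Composition gives analyticity of $\mathcal{E}_\sigma(\mu,\eps)$.

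\textbf{Parts 3 and 4.} The defining property of $\det_2$ is that $\det_2(I+A) = 0$ iff $I+A$ is not invertible; since $T(\eps)$ is compact, the Fredholm alternative converts non-invertibility of $I - (\mu+1)T(\eps)$ into the existence of $\Phi \neq 0$ in $L^2_{\bK,\sigma}$ with $(\mu+1)T(\eps)\Phi = \Phi$, which after applying $I-\Delta+\eps V$ is exactly $H^{(\eps)}\Phi = \mu\Phi$; this is Part 3. For Part 4, note that for real $\eps$ the restriction of $H^{(\eps)}$ to the $\mathcal{R}$-invariant subspace $L^2_{\bK,\sigma}$ is self-adjoint (Proposition \ref{SR-invariance}), hence $T(\eps)$ restricted to $L^2_{\bK,\sigma}$ is compact self-adjoint with real eigenvalues $(\mu_j + 1)^{-1}$, whose multiplicities coincide with the geometric multiplicities of the $\mu_j$. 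The Weierstrass-type product formula for Hilbert-Schmidt determinants,
\[
\det_2\bigl(I - zT(\eps)\bigr) \;=\; \prod_j \bigl(1 - z(\mu_j+1)^{-1}\bigr)\, \exp\!\bigl(z(\mu_j+1)^{-1}\bigr),
\]
then shows that the order of vanishing of $\mu \mapsto \mathcal{E}_\sigma(\mu,\eps)$ at $\mu = \mu_*$ is $\#\{j : \mu_j = \mu_*\}$, i.e., the $L^2_{\bK,\sigma}$-multiplicity of $\mu_*$.

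The main obstacle I anticipate is the multiplicity bookkeeping in Part 4: the exponential factors in $R_2$ could in principle shift the order of the zero, and one must confirm that they do not. The product formula above resolves this, but its proof (via Lidskii's theorem applied to the trace-class operator $R_2(-(\mu+1)T(\eps))$) is the step requiring the most care, and it is crucially where self-adjointness of $T(\eps)$ on the subspace $L^2_{\bK,\sigma}$ — inherited from the commutation of $H^{(\eps)}$ with $\mathcal{R}$ — enters to identify algebraic with geometric multiplicity.
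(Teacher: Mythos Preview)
Your proof is correct and follows the standard route one finds in the references the paper cites (Jost--Pais, Newton, Simon); the paper itself gives no proof of this theorem, only the citation string, so there is nothing to compare against beyond observing that your argument is precisely the one those references supply. One minor remark: in Part~4 your appeal to Lidskii is more than is needed, since for real $\eps$ the operator $T(\eps)$ is compact \emph{self-adjoint} on $L^2_{\bK,\sigma}$, so the product formula follows directly from the spectral theorem and the canonical-product representation of $\det_2$ without invoking Lidskii; and the reference for self-adjointness on the invariant subspace is more naturally Proposition~\ref{L2decomp} (the orthogonal decomposition into $\mathcal{R}$-eigenspaces) than Proposition~\ref{SR-invariance}.
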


\section{Continuation past a critical $\eps$}\label{sec:continuation}

In section \ref{sec:pfeps-small} we proved Theorem \ref{main-thm} for all $\eps\in(-\eps^0,\eps^0)\setminus\{0\}$, with $\eps^0>0$ sufficiently small, by establishing the following properties: \bigskip

\begin{itemize}
 \item[I.]\ $\mu^\eps(\bK_\star)$ is a simple $L^2_{\bK_\star,\tau}$ eigenvalue of $H^{(\eps)}$ with corresponding 1 - dimensional eigenspace $\mathbb{X}_\tau={\rm span}\{\ \Phi_1^{\mu^\eps}(\bx;\bK_\star)\ \}\subset L^2_{\bK_\star,\tau}$.
 \item[II.]\ $\mu^\eps(\bK_\star)$ is a simple $L^2_{\bK_\star,\bar\tau}$ eigenvalue of $H^{(\eps)}$ with corresponding 1 - dimensional eigenspace $\mathbb{X}_{\bar\tau}=
 {\rm span}\{\ \overline{\Phi_1^{\mu^\eps}(-\bx;\bK_\star)}\ \} \subset L^2_{\bK_\star,\bar\tau}$.
 \item[III.]\ $\mu^\eps(\bK_\star)$ is \underline{not} a $L^2_{\bK_\star,1}$ eigenvalue of $H^{(\eps)}$.\smallskip
 
  \item[IV.]\  We have
   \begin{equation}
\lambda_\sharp^\eps\ \equiv\   \sum_{\bm\in\mathcal{S}} c(\bm,\mu^\eps,\eps)^2\ \left(\begin{array}{c}1\\ i\end{array}\right)\cdot \bK_\star^\bm\ \ne\ 0,
\label{lambda-sharp2}
\end{equation}
where  $c(\bm,\mu^\eps,\eps)$ are Fourier coefficients of $\Phi_1\left[\mu^\eps(\bK_\star),\eps\right](\bx)$, an 
$L^2_{\bK_\star,\tau}$ eigenfunction of $H^{(\eps)}$ with eigenvalue $\mu^\eps=\mu^\eps(\bK_\star)$ ; see Proposition \ref{Fourier-espaces}.
  \end{itemize}
\medskip

We next study the persistence of properties I.-IV. for $\eps$ of arbitrary size. 
\subsection{Continuation strategy}\label{sec:strategy}
Denote by $\mathcal{A}$, the set of all $\eps>0$ for which at least one of the properties I.-IV. fail. With  $\eps^0$ given as above, we clearly have $ \mathcal{A}\subset[\eps^0,\infty)$. The main result of this section is that  
\begin{equation}
\mathcal{A}\ \textrm{  is contained in a countable closed set.}
\label{ACC}
\end{equation}
 Once \eqref{ACC} is shown, we'll have completed the proof of Theorem \ref{main-thm}, our main result. 
\medskip

Our continuation strategy is based on the following general
\begin{lemma}\label{either-or} 
{}

\noindent Let $A\subset(\eps^0,\infty)$ with $\eps^0>0$. Then one of the following assertions holds:\\
(1) $A$ is contained in a  closed countable  set.\\
(2) There exists $\eps_c\in(0,\infty)$ for which the set $A\cap[0,\eps_c)$ is contained in a closed countable set, but 
for any $\eps'>\eps_c$, the set $A\cap[0,\eps')$ is \underline{not} contained in a closed countable set.
\end{lemma}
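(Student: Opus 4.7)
The plan is to reduce the statement to a purely set-theoretic dichotomy about the family
\begin{equation*}
\mathcal{F} \ \equiv\ \{\eps > 0 \ :\ A \cap [0, \eps) \text{ is contained in a closed countable subset of } \R\}.
\end{equation*}
First I would observe that $\mathcal{F}$ is nonempty, since $A \cap [0, \eps^0) = \emptyset$ puts $\eps^0 \in \mathcal{F}$, and that $\mathcal{F}$ is downward-closed in $(0,\infty)$, because any closed countable set containing $A \cap [0, \eps)$ also contains $A \cap [0, \eps')$ whenever $\eps' < \eps$. Thus $\mathcal{F}$ is an interval of the form $(0, \eps_c)$ or $(0, \eps_c]$ with $\eps_c \equiv \sup \mathcal{F} \in (0, \infty]$, and the two alternatives of the lemma will correspond respectively to $\eps_c = \infty$ and $\eps_c < \infty$.

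The key tool I would isolate is a \emph{closure principle}: if $U \subset [0,\infty)$ admits an increasing exhaustion $U = \bigcup_n (U \cap [0, b_n))$ with $b_n \nearrow B \in (0,\infty]$, and each $U \cap [0, b_n)$ sits inside a closed countable set $C_n$, then any limit point $x$ of $U$ with $x < B$ is forced to lie in some $C_n$: pick $n$ with $b_n > x$, note that a tail of any sequence $x_k \to x$ from $U$ eventually lies in $U \cap [0,b_n) \subset C_n$, and invoke closedness of $C_n$. Consequently $\overline{U} \subset \bigcup_n C_n \ \cup\ (\{B\}\cap\R)$ is at most countable, and $\overline{U}$ is itself a closed countable superset of $U$.

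In the first alternative $\eps_c = \infty$, I would apply the principle with $U = A$ and $b_n = n$ to conclude $\overline{A}$ is a closed countable set containing $A$; this is alternative (1). In the second alternative $\eps_c < \infty$, I would apply it with $U = A \cap [0, \eps_c)$ and $b_m = \eps_c - 1/m$, drawing the closed countable sets $C_m$ from the membership $\eps_c - 1/m \in \mathcal{F}$ guaranteed for all large $m$ by the definition of the supremum. This shows $A \cap [0, \eps_c)$ is contained in the closed countable set $\overline{A \cap [0, \eps_c)}$. The second half of alternative (2)---that $A \cap [0, \eps')$ fails to lie in any closed countable set for $\eps' > \eps_c$---is immediate from $\eps' \notin \mathcal{F}$.

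The argument is essentially elementary; the one step I would treat with care is handling the possible limit point $x = \eps_c$ in the second case, which is why the extra singleton $\{B\}$ must be included in the candidate superset produced by the closure principle. Beyond this bookkeeping, no input from the rest of the paper is required, and the lemma functions purely as an organizational device for the continuation argument in $\eps$.
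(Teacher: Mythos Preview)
Your proof is correct and follows essentially the same approach as the paper: define $\eps_c$ as the supremum of those $\eps$ for which $A\cap[0,\eps)$ lies in a closed countable set, then split into the cases $\eps_c=\infty$ and $\eps_c<\infty$, handling each via an exhaustion argument (the paper packages this as its Lemma~\ref{imp-either-or}). The only cosmetic difference is that the paper explicitly builds the closed countable superset as $\bigcup_\nu(\mathcal{C}_\nu\cap[\eps_{\nu-1},\eps_\nu])\cup\{\eps_\nu\}\cup\{\eps_\infty\}$, whereas you more economically take $\overline{U}$ itself and argue directly that its limit points fall into the $C_n$.
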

\medskip

The main work of this section is to prove \eqref{ACC} for $A=\mathcal{A}$ by precluding option (2) of Lemma \ref{either-or}. This suggests introducing the notion of a critical value of $\eps$:
\begin{definition}[Critical $\eps_c$]\label{critical-eps} 
Call a real and positive number $\eps_c$ {\it critical} if there is an increasing sequence $\{\eps_\nu\}$  tending to $\eps_c$ and a corresponding  sequence of geometric multiplicity-two $L^2_{\bK}$- eigenvalues, $\{\mu_\nu\}$, such that 
\begin{itemize} 
\item[(a)]\ properties I.-IV. above, with $\eps$ replaced by $\eps_\nu$ and $\mu^\eps$ 
replaced by $\mu_\nu$, hold for all $\nu=1,2,\dots$, and 
\item[(b)]\ for $\eps=\eps_c$ and $\mu_c=\mu^{\eps_c}\equiv\lim_{\nu\to\infty}\mu_\nu<\infty$ at least one of the properties I.-IV. does not hold.
\end{itemize}
\end{definition}
To prove Lemma \ref{either-or} we use the following:\medskip

\begin{lemma}\label{imp-either-or}
 Let $0=\eps_1<\eps_2<\dots$, and let $\eps_\infty=\lim_{\nu\to\infty}\eps_\nu$. (Perhaps $\eps_\infty=\infty$.) Suppose $A\cap[0,\eps_\nu)$ is contained in a closed countable set $\mathcal{C}_\nu$ for each $\nu\ge1$.
 Then, $A\cap[0,\eps_\infty)$ is contained in a closed countable set $\tilde{\mathcal{C}}$. 
\end{lemma}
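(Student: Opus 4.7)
My plan is to let $\tilde{\mathcal{C}}$ denote the closure (in $\mathbb{R}$) of $A\cap[0,\eps_\infty)$. By construction this set is closed and contains $A\cap[0,\eps_\infty)$, so the only remaining task is to verify that $\tilde{\mathcal{C}}$ is countable. I would do this by establishing the inclusion
\[
\tilde{\mathcal{C}}\ \subseteq\ \Bigl(\bigcup_{\nu\ge1}\mathcal{C}_\nu\Bigr)\ \cup\ \{\eps_\infty\},
\]
where the final singleton is omitted when $\eps_\infty=\infty$. Since each $\mathcal{C}_\nu$ is countable, a countable union of them is countable, and adjoining one more point preserves countability; the inclusion therefore yields the result.

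To establish the displayed inclusion, take $x\in\tilde{\mathcal{C}}$ and write $x=\lim_{n\to\infty}x_n$ with $x_n\in A\cap[0,\eps_\infty)$. Split into cases according to the position of $x$ relative to $\eps_\infty$. In the main case $x<\eps_\infty$, choose $\nu_0$ with $x<\eps_{\nu_0}$; since $x_n\to x$, for all sufficiently large $n$ one has $x_n<\eps_{\nu_0}$, so $x_n\in A\cap[0,\eps_{\nu_0})\subseteq\mathcal{C}_{\nu_0}$. Because $\mathcal{C}_{\nu_0}$ is closed, we conclude $x\in\mathcal{C}_{\nu_0}\subseteq\bigcup_\nu\mathcal{C}_\nu$. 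The only alternative is $x=\eps_\infty$ (the case $x>\eps_\infty$ being excluded by $x_n<\eps_\infty$), which only arises when $\eps_\infty<\infty$ and contributes the single extra point accounted for on the right-hand side.

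The argument is essentially a topology exercise and I do not anticipate a substantive obstacle. The one conceptual point worth flagging is that one cannot simply set $\tilde{\mathcal{C}}=\bigcup_\nu\mathcal{C}_\nu$, because a countable union of closed sets need not be closed; however, the case analysis above shows that the only limit points which can be added by taking closure are either already in some $\mathcal{C}_{\nu_0}$ (for limit points lying strictly below $\eps_\infty$) or equal to $\eps_\infty$ itself. Handling this boundary value as a single extra point is precisely what makes the construction work.
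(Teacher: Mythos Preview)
Your argument is correct. It differs slightly from the paper's approach, though both are elementary.

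The paper defines $\tilde{\mathcal{C}}$ explicitly as the telescoping union
\[
\bigcup_{\nu\ge1}\bigl(\mathcal{C}_\nu\cap[\eps_{\nu-1},\eps_\nu]\bigr)\ \cup\ \{\eps_\nu:\nu\ge1\}\ \cup\ \{\eps_\infty\},
\]
(the last singleton dropped if $\eps_\infty=\infty$), and then verifies directly that this set is closed, countable, and contains $A\cap[0,\eps_\infty)$. Closedness comes from the fact that on any compact subinterval of $[0,\eps_\infty)$ the union reduces to finitely many closed pieces, with the added endpoints $\{\eps_\nu\}$ and $\{\eps_\infty\}$ catching the remaining limit points.

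You instead take $\tilde{\mathcal{C}}$ to be the closure of $A\cap[0,\eps_\infty)$ itself and then argue that this closure is countable, using closedness of a single $\mathcal{C}_{\nu_0}$ to trap any limit point below $\eps_\infty$. This is arguably tidier: closedness is automatic, and the only work is the countability inclusion, which you handle cleanly. The paper's construction is more explicit but requires one to check closedness of a countable union. Either route is fine.
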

\medskip

\noindent First let's use Lemma \ref{imp-either-or} to prove Lemma \ref{either-or}. We then give the proof of  Lemma \ref{imp-either-or}. \medskip

\noindent{\it Proof of Lemma \ref{either-or}:}\\ Let $\eps_c=\sup\{\eps\in(0,\infty): A\cap[0,\eps)\ 
\textrm{ is contained in a  closed countable set.}\}$. Clearly $0<\eps^0\le\eps_c\le\infty$. If $\eps_c=\infty$, then option 
 (1) holds, thanks to Lemma
 \ref{imp-either-or}. And if $\eps_c<\infty$, then by definition, $A\cap[0,\eps')$ is not contained in a closed countable set for any $\eps'>\eps_c$. Again applying Lemma \ref{imp-either-or} shows that $A\cap[0,\eps_c)$ is contained in a countable closed set. In this case, (2) holds and the proof of Lemma \ref{either-or} is complete.\bigskip

\noindent{\it Proof of Lemma \ref{imp-either-or}}\label{lemma-imp-either-or-proof:}
 \noindent  Define
\[ \mathcal{C}=
  \bigcup_{\nu\ge1}\left(\ \mathcal{C}_\nu\cap[\eps_{\nu-1},\eps_\nu]\ \right)\cup\{\eps_\nu:\nu\ge0\}\ 
  \] 
  and set $\tilde{\mathcal{C}}=\mathcal{C}$ if $\eps_\infty=\infty$ and $\tilde{\mathcal{C}}=\mathcal{C}\cup\{\eps_\infty\}$ if $\eps_\infty<\infty$. One checks easily that $A\cap[0,\eps_\infty)\subset\tilde{\mathcal{C}}$, $\tilde{\mathcal{C}}$ is countable, and $\tilde{\mathcal{C}}$ is closed. This completes the proof of Lemma \ref{imp-either-or}.
  \medskip 
  
 We now outline our implementation of the continuation argument. By the discussion of section \ref{sec:det2} and Proposition \ref{prop:2impliescone},  $\eps$ is in $\R\setminus{\cal A}$ provided:\\
 (i) $\mathcal{E}_\tau(\mu^\eps,\eps)=0,\ \partial_\mu\mathcal{E}_\tau(\mu^\eps,\eps)\ne0$, 
  (ii) $\mathcal{E}_1(\mu^\eps,\eps)\ne0$ and  (iii) $\lambda_\sharp^\eps\ne0$.  To continue property (i) past a finite critical value, $\eps_c$, one must show the persistence of a simple zero of  $\mathcal{E}_\tau(\mu,\eps)$ for $\eps>\eps_c$. To continue (ii) and (iii) beyond $\eps_c$ it seems at first natural to introduce the function 
  $\mathcal{E}_1(\mu,\eps)\times \lambda_\sharp\left({\bf c}[\mu,\eps]\right)$, where ${\bf c}[\mu,\eps]$ is the collection of Fourier coefficients of
   the $L^2_{\bK,\tau}$ eigenvector for the eigenvalue $\mu$, and  
  $\lambda_\sharp$ is the expression in \eqref{lambda-sharp2}. Unfortunately the above function is not necessarily analytic; in a neighborhood of $\eps_c$, $\eps\mapsto{\bf c}[\mu^\eps,\eps]$ and therefore
    $\eps\mapsto\lambda_\sharp\left({\bf c}[\mu^\eps,\eps]\right)$ may not vary analytically; see Appendix \ref{sec:topology}. Indeed there is a topological obstruction related to the following observation: along a path of matrices in the space of complex $N\times N$ matrices of rank $N-1$, each matrix has a non-vanishing sub-determinant of dimension $N-1$, although the particular sub-determinant which is non-vanishing changes along the path. The heart of the matter and its remedy are clarified in linear algebra Lemma \ref{lemma2}. That Lemma is applied in section \ref{ham-param} to construct a 
    vector-valued analytic function $F(\mu,\eps)$, whose non-vanishing ensures that $\mathcal{E}_1(\mu,\eps)\ne0$
     as well as the non-degeneracy condition, $\lambda_\sharp^\eps\ne0$. A continuation lemma, Lemma \ref{lemma1}, of section \ref{sec:pickbranch}, is then applied to the pair of analytic functions: $P(\mu,\eps)=\mathcal{E}_\tau(\mu^\eps,\eps),\ 
      F(\mu,\eps)$ to establish the continuation beyond any finite $\eps_c$.
   
%
%
%
%
%
 
\subsection{Picking a branch}\label{sec:pickbranch}
Let 
\begin{equation}
U=\{(\lambda,z)\in\mathbb{C}^2 :\ |\lambda|<\eps_1,\ \ |z|<\eps_2\}
\end{equation}
where $\eps_1$ and $\eps_2$ are given positive numbers. Suppose we are given an analytic function $P:U\to \mathbb{C}$ and an analytic mapping $F:U\to \mathbb{C}^m$. We make the following
\medskip
{\bf Assumptions:}
\begin{itemize}
\item[(A1)]\ If $(\lambda,z)\in U,\ P(\lambda,z)=0$ and $z\in\mathbb{R}$, then $\lambda\in\mathbb{R}$.
\item[(A2)]\ There exists $\{(\lambda_\nu,z_\nu)\}\subset U,\ \ \nu\ge1$ tending to $(0,0)$ as $\nu\to\infty$, such that for each $\nu\ge1$, $z_\nu\in\mathbb{R}\setminus\{0\}$, $P(\lambda_\nu,z_\nu)=0,\ \partial_\lambda P(\lambda_\nu,z_\nu)\ne0,\ F(\lambda_\nu,z_\nu)\ne0$.
\end{itemize}
\medskip

 \begin{remark}\label{rmk-re-centering}
With the above setup, we have centered the analysis about $(z,\lambda)=(0,0)$. We shall apply the results of this
 section to an appropriate analytic function of $(\mu,\eps)$ centered about $(\mu_c,\eps_c)$.
 \end{remark}
\medskip

Under assumptions (A1) and (A2) we will prove the following\medskip

\begin{lemma}\label{lemma1} There exist $\delta>0$ and a real-analytic function $\beta(z)$, defined for $z\in(0,\delta)$, such that for all but at most countably many $z\in (0,\delta)$ we have: 
\begin{equation}
P(\beta(z),z)=0,\ \partial_\lambda P(\beta(z),z)\ne0,\ F(\beta(z),z)\ne0\ .
\label{lemma1-prop}
\end{equation}
Moreover, $\lim_{z\to0^+} \beta(z)=0$.
\end{lemma}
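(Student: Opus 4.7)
The plan is to apply the Weierstrass preparation theorem to $P$ at $(0,0)$, isolate a single irreducible component of the local zero set that contains the sequence furnished by (A2), and then use a Puiseux parameterization together with the reality constraint (A1) to produce the branch $\beta$.

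Since $P(\lambda_\nu,z_\nu)=0$ with $(\lambda_\nu,z_\nu)\to(0,0)$, we have $P(0,0)=0$; because (A2) supplies points where $\partial_\lambda P\neq 0$, the germ $P$ is non-trivial. Factoring out the maximal power of $z$ dividing $P$ (which does not change zeros in $\{z\neq 0\}$), I may assume $P(\lambda,0)\not\equiv 0$. Weierstrass preparation then yields $P=u\cdot W$ with $u(0,0)\neq 0$ and
\begin{equation*}
W(\lambda,z)=\lambda^n+a_{n-1}(z)\lambda^{n-1}+\cdots+a_0(z),\qquad a_j(0)=0.
\end{equation*}
Decomposing $W$ into irreducible factors in the ring of analytic germs gives finitely many Puiseux branches through the origin, each of the form $z=t^{p_\ell}$, $\lambda=\phi_\ell(t)$, with $\phi_\ell$ holomorphic near $0$ and $\phi_\ell(0)=0$. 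By pigeonhole, one branch $\Gamma$, with parameters $(p,\phi)$, contains infinitely many of the pairs $(\lambda_\nu,z_\nu)$.

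For each $\nu$ indexing points on $\Gamma$, write $z_\nu=t_\nu^p$ with $t_\nu=s_\nu\omega$, $s_\nu>0$, and $\omega^p=1$. A second pigeonhole extraction fixes $\omega=\omega_0$ along a subsequence with $s_\nu\to 0^+$. Set $g(s)\equiv\phi(s\omega_0)$; this is holomorphic near $s=0$, with $g(0)=0$ and $g(s_\nu)=\lambda_\nu\in\mathbb{R}$ by (A1). Writing $g(s)=\sum_k a_k s^k$, the imaginary part $s\mapsto\sum_k(\operatorname{Im} a_k)s^k$ is a real-analytic function of real $s$ that vanishes on $\{s_\nu\}$; the identity theorem forces each $a_k$ to be real. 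Defining
\begin{equation*}
\beta(z)\ \equiv\ g\bigl(z^{1/p}\bigr),\qquad z\in(0,\delta),
\end{equation*}
with $z^{1/p}$ the positive real root and $\delta>0$ chosen so $g$ converges on $[0,\delta^{1/p}]$, gives a real-analytic function on $(0,\delta)$ with $\beta(z)\to g(0)=0$ as $z\to 0^+$. Using $\omega_0^p=1$, the parameterization $(\lambda,z)=(g(s),s^p)$ lies on $\Gamma$, so $W(\beta(z),z)=0$ and hence $P(\beta(z),z)=0$ on $(0,\delta)$.

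To finish, I note that $z\mapsto\partial_\lambda P(\beta(z),z)$ and $z\mapsto F(\beta(z),z)$ are real-analytic on $(0,\delta)$, and (A2) guarantees neither is identically zero: along the extracted subsequence $\{z_\nu\}\subset(0,\delta)$, $\partial_\lambda P(\lambda_\nu,z_\nu)\neq 0$ and $F(\lambda_\nu,z_\nu)\neq 0$. Thus the scalar $\partial_\lambda P(\beta(z),z)$ has only a discrete (hence countable) zero set; and, fixing a component $F_j$ that is nonzero at some $z_\nu$, the set $\{z:F(\beta(z),z)=0\}$ lies inside the countable zero set of $F_j(\beta(\cdot),\cdot)$. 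This yields \eqref{lemma1-prop} off a countable subset of $(0,\delta)$. The main conceptual obstacle is that the Puiseux index $p$ may exceed $1$, so no complex-analytic-in-$z$ branch exists in a neighborhood of $0$; the double pigeonhole (selecting one irreducible branch $\Gamma$ and one $p$-th root of unity $\omega_0$), combined with the identity-theorem promotion of ``real on a sequence'' to ``real-analytic on a real interval,'' is precisely what produces a branch real-analytic on the real interval $(0,\delta)$ rather than across $z=0$, matching the statement of the lemma.
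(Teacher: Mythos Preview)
Your Puiseux-branch approach is a genuine alternative to the paper's argument (which instead tracks the number of distinct real roots via the symmetric functions $Q_k(z)$, proves the multiplicities are locally constant, and then packages the conditions $\partial_\lambda P\neq 0$, $F\neq 0$ into a single contour-integral function $G(z)$ analytic on a full disk about $0$). Your route is more geometric and avoids the $G$-construction entirely.

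There is, however, a real gap in the write-up. When you set $t_\nu=s_\nu\omega$ with $s_\nu>0$ and $\omega^p=1$, you force $z_\nu=s_\nu^p>0$, and you later invoke ``the extracted subsequence $\{z_\nu\}\subset(0,\delta)$''. Hypothesis~(A2) only says $z_\nu\in\mathbb{R}\setminus\{0\}$, and in the paper's application of this lemma one has $z_\nu<0$ throughout (the sequence sits at $\eps_\nu<\eps_c$, while the conclusion is needed for $\eps>\eps_c$). As written, your argument then produces $\beta$ only on $(-\delta,0)$ and gives no mechanism to cross to $(0,\delta)$; this is exactly the step the lemma is designed to accomplish.

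The gap is repairable in two ways. First, one can check that (A1) actually forces every irreducible factor of $W$ to have Puiseux index $p=1$: requiring $\phi(\eta s)\in\mathbb{R}$ for all $2p$-th roots of unity $\eta$ (coming from real $z$ of \emph{both} signs) and all small $s>0$ forces $\phi(t)=\psi(t^p)$, so a factor of degree $p\ge 2$ would be a $p$-th power, contradicting irreducibility. With $p=1$ your branch $\beta=\phi$ is analytic across $z=0$, the sign of $z_\nu$ is irrelevant, and your final paragraph goes through verbatim; this also shows your closing worry about $p>1$ is moot. Second, without that observation, you can argue in the $t$-variable: the function $t\mapsto \partial_\lambda P(\phi(t),t^p)$ is holomorphic near $0$ and nonzero at the (possibly non-positive, possibly non-real) parameters $t_\nu$, hence not identically zero, hence nonzero for all but finitely many real $t\in(0,\delta^{1/p})$; similarly for a suitable component of $F$. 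The paper's contour-integral $G(z)$ achieves the same ``cross $z=0$'' effect by being analytic on a disk and nonzero at the $z_\nu$.
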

\medskip

\noindent{\it Proof of Lemma \ref{lemma1}:}\ Assumption (A1) implies that $\lambda\mapsto P(\lambda,0)$ is not identically zero. By the Weierstrass Preparation Theorem \cite{Krantz:92}, we may write $P(\lambda,z)=H(\lambda,z)\cdot \tilde{P}(\lambda,z)$ for $(\lambda,z)$ in some polydisc $\tilde{U}\equiv\{|\lambda|<\eps_3,\ |z|<\eps_4\}$, where $\tilde{P}$ is a Weierstrass polynomial (see \eqref{Prep} below) and $H$ is a non-vanishing analytic function.  Assumptions (A1), (A2) hold also for $\tilde{P}, F, \tilde{U}$. Moreover, the conclusion of Lemma \ref{lemma1} for $\tilde{P}, F, \tilde{U}$ implies the conclusion for $P, F,U$. Therefore, it is enough to prove Lemma \ref{lemma1} under the additional assumption that $\tilde{P}$ is a Weierstrass polynomial.
Henceforth we make this assumption. Thus, we have for some $D\ge1$:
\begin{equation}
P(\lambda,z) = \lambda^D+\sum_{j=0}^{D-1}g_j(z)\lambda^j\ =\ 
\prod_{\nu=1}^D\left(\lambda-\alpha_\nu(z)\right),
\label{Prep}
\end{equation}
where $\alpha_1(z),\dots,\alpha_D(z)$ denote the roots of $\lambda\mapsto P(\lambda,z)$ (multiplicity counted), where 
\begin{equation}
\alpha_j(0)=\lim_{z\to0}\alpha_j(z)=0,\ \   j=1,\dots,D\ .\label{alpha_j}
\end{equation}
Moreover, $g_j(z)$ are analytic in $|z|<\eps_4$. 
Note that $D\ge1$, since Assumption (A2) implies $P(0,0)=0$. For $k\ge1$, define
\begin{align}
Q_k(z)\ &=\ \left\{
 \begin{array}{cc}  &D,\ \ {\rm for}\ \ k=1\\
     &\sum_{\nu_1,\dots,\nu_k=1}^D\ \prod_{i,j=1,\ i\ne j}^k \left(\alpha_{\nu_i}(z)-\alpha_{\nu_j}(z)\right)^2,\ \ k\ge2\ .
    \end{array} \right.
 \label{Qkdef}   \end{align}
The right hand side of \eqref{Qkdef} is a symmetric polynomial in $\alpha_1(z),\dots, \alpha_D(z)$ and is therefore a polynomial in the coefficients  $g_j(z)$ of $P(\lambda,z)$ \cite{Herstein:64}, which are analytic in $z$. Consequently, each $Q_k(z)$ is an analytic function of $z$. Moreover, when $z$ is real, the 
$\alpha_\nu(z)$ are also real, and therefore, for $z\in\mathbb{R}$, $Q_k(z)\ne0$ if and only if 
$\lambda\mapsto P(\lambda,z)$ has at least $k$ distinct zeros.  In particular,  for $k\ge D+1$,   $Q_k(z)=0$ for all real $z$, since $\lambda\mapsto P(\lambda,z)$ has only $D$ zeros; see \eqref{Prep}. 
 Hence, there exists $\bar{k}$ with $1\le \bar k\le D$ such that
 \begin{equation}
 Q_{\bar k}(z)\ \ \textrm{is not identically zero, but}\ Q_k(z)\equiv0\ \textrm{for all}\ k>\bar k.
 \nn\end{equation}
 Since $Q_{\bar k}(z)$ is analytic on a neighborhood of $0$ and not identically zero, there is an open interval $(0,\delta_1)$ such that $Q_{\bar k}(z)\ne0$ for all $z\in(0,\delta_1)$. Thus, $P(z,\lambda)$ has at least $\bar k$ distinct zeros, for each $z\in(0,\delta_1)$. On the other hand, $Q_{\bar k+1}(z)\equiv0$ and hence $\lambda\mapsto P(\lambda,z)$, for real $z$,  never has at least $\bar k+1$ distinct zeros. So, $\lambda\mapsto P(\lambda,z)$ has exactly $\bar k$ distinct zeros for $z\in(0,\delta_1)$. We denote these zeros by
 \begin{equation}
 \beta_1(z) <\beta_2(z)<\dots<\beta_{\bar k}(z);\nn
 \end{equation}
 they are real by Assumption (A1). Note that each $\beta_k(z)$ is among the $\alpha_j(z)\ (j=1,\dots,D)$.  Hence,  by \eqref{alpha_j} $\lim_{z\to0^+}\beta_k(z)=0$ for each $k$.
 
 Fix $x\in(0,\delta_1)$, and let $m_1,\dots,m_{\bar k}$ (respectively) be the multiplicities of the zeros 
  $\beta_1(x),\beta_2(x),\dots,\beta_{\bar k}(x)$ of $\lambda\mapsto P(\lambda,z)$; $m_1+\dots+m_{\bar k}=D$.  For $z\in(0,\delta_1)$ close enough to $x$ and for each $j$, there exist $m_j$ zeros of $\lambda\mapsto P(\lambda,z)$ (multiplicities counted) that lie close to $\beta_j(x)$. Unless these  $m_j$ zeros of 
 $\lambda\mapsto P(\lambda,z)$ are all equal, the function $\lambda\mapsto P(\lambda,z)$ would have more than $\bar k$ distinct zeros,  which is impossible. Therefore, for $z\in(0,\delta_1)$ close to $x$, and for each $j$, the polynomial $\lambda\mapsto P(\lambda,z)$ has a single zero $\beta_j(z)$ of multiplicity $m_j$, close to $\beta_j(x)$. In particular, the multiplicities of the zeros $\beta_1(z),\beta_2(z),\dots,\beta_{\bar k}(z)$ are constant as $z\in(0,\delta_1)$ varies over a small enough neighborhood of $x$. Since $x\in(0,\delta_1)$ is arbitrary and since $(0,\delta_1)$ is connected it follows that the multiplicities $m_1,\dots,m_{\bar k}$ (respectively) of the zeros $\beta_1(z) <\beta_2(z)<\dots<\beta_{\bar k}(z)$
  of $\lambda\mapsto P(\lambda,z)$ are constant as $z$ varies over the entire interval $(0,\delta_1)$. Therefore, we have
  \begin{equation}
  P(z,\lambda)\ =\ \prod_{j=1}^{\bar k} (\lambda-\beta_j(z))^{m_j},\ \ \ z\in(0,\delta_1),\ \ |\lambda|<\eps_1
  \label{Prep1}
  \end{equation}
 Here, $\beta_1(z) <\beta_2(z)<\dots<\beta_{\bar k}(z)$ for each $z\in(0,\delta_1)$ and each $m_j$ 
 is a positive integer. 
 Now note that each $\beta_j(z)$ is a real-analytic function on $(0,\delta_1)$, since $\beta_j(z)$ is a simple zero of $\lambda\mapsto\partial_\lambda^{m_j-1}P(\lambda,z)$. 
 \medskip
 
 We now turn to $F:U\to\mathbb{C}^m$. Let us write $F(\lambda,z)=(F_1(\lambda,z),\dots,F_m(\lambda,z))$. For a small positive number $\rho$, to be chosen just below, we define:
 \begin{equation}
 G(z)\ =\ \frac{1}{2\pi i}\ \oint_{|\lambda|=\rho}\ \sum_{l=1}^m\ F_l(\lambda,z)\cdot \overline{F_l(\bar\lambda,\bar z)}\cdot \frac{(\partial_\lambda P(\lambda,z))^2}{P(\lambda,z)}\ \overline{\partial_\lambda P(\bar\lambda,\bar z)}\ d\lambda
 \end{equation}
 We can pick $\rho$ so that $P(\lambda, 0)\ne0$ for $|\lambda|=\rho$. Therefore, for small enough $\eta$, if $|z|<\eta$, we still have $P(\lambda, z)\ne0$ for $|\lambda|=\rho$. Fix such $\rho$ and $\eta$. 
 Then, $G(z)$ is an analytic function of $z$ in the disc $\{|z|<\eta\}$. Moreover a residue calculation shows that 
 \begin{equation}
 G(z)\ =\ \sum_\lambda\ \sum_{l=1}^m\  
 F_l(\lambda,z)\cdot \overline{F_l(\bar\lambda,\bar z)}\cdot \partial_\lambda P(\lambda,z)\cdot
  \overline{\partial_\lambda P(\bar \lambda,\bar z)},
 \end{equation}
 where the sum is over all $\lambda$ in the set:
 \[ \{\ \lambda\ :\ |\lambda|<\rho,\ P(\lambda,z)=0\ \}\ ,\]
with multiplicities included in the sum. In particular, if $z$ is real, then the relevant $\lambda$'s are also real (see (A1)), and therefore 
\begin{equation}
G(z)\ =\ \sum_\lambda\ \sum_{l=1}^m\ |F_l(\lambda,z)|^2\cdot |\partial_\lambda P(\lambda,z)|^2\,
\label{Gsum}\end{equation}
where the sum is over real $\lambda\in(-\rho,\rho)$ such that $P(\lambda,z)=0$. Note that all non-zero contributions to the sum \eqref{Gsum} come from $\lambda$'s that are zeros of $P$ with multiplicity one. Consequently, for real $z$, we have $G(z)\ne0$ if and only if there exists $\lambda\in(-\rho,\rho)$ such that 
 $P(\lambda,z)=0,\ \partial_\lambda P(\lambda,z)\ne0$ and $F(\lambda,z)\ne0$.
 
 Therefore, assumption (A2) tells us that the analytic function $G(z)$ doesn't vanish identically in $\{|z|<\eta\}$. It follows that we can pick a positive $\delta$, less than ${\rm min}(\eta,\delta_1)$, such that 
 \[ G(z)\ne0\ \ {\rm for}\ \ 0<|z|<\delta\ .\] 
 
 Now suppose $z\in(0,\delta)$. Then, there exists $\lambda\in(-\rho,\rho)$ such that $P(\lambda,z)=0,\  \partial_\lambda P(\lambda,z)\ne0, F(\lambda,z)\ne0$. This $\lambda$ must be equal to one of the $\beta_j(z),\ \ j=1,\dots,\bar k$,  for which $m_j=1$. So, for each $z\in(0,\delta)$ there exists $j$ such that $m_j=1$ and $F(\beta_j(z),z)\ne0$.   
 
 Unfortunately, the above $j$ may depend on $z$. However, we may simply fix some $x_0\in(0,\delta)$, and pick $j_0$ such that $m_{j_0}=1$ and $F(\beta_{j_0}(x_0),x_0)\ne0$. The function $z\mapsto\beta_{j_0}(z)$ is a real-analytic function of $z\in(0,\delta)$. Moreover, we know that the real analytic function $z\mapsto F(\beta_{j_0}(z),z)$ is not identically zero on $(0,\delta)$, since it is nonzero for $z=x_0$. So, it can vanish only on a set of discrete points which accumulates at $0$ or at $\delta$. The proof of Lemma  \ref{lemma1} is complete.
 
 \begin{remark} We have proven more than asserted in Lemma \ref{lemma1}. In fact, $P(\beta(z),z)=0$
  and $\partial_\lambda P(\beta(z),z)\ne0$ for {\it all} $z\in(0,\delta)$; and $F(\beta(z),z)\ne0$ for all $z\in(0,\delta)$ except perhaps for countably many $z$ tending to $0$. Note that we can arrange for this countable sequence not to accumulate at $\delta$ by simply taking $\delta$ to be slightly smaller.
 \end{remark}
  
 \subsection{Linear Algebra}\label{sec:linearalgebra}
 
 Given an $N\times N$ (complex) matrix $A$ of rank $N-1$, we would like to produce a nonzero vector in the nullspace of $A$, depending analytically on the entries of $A$. In general there is a topological obstruction to this; 
 see  Appendix \ref{sec:topology}.  However, the following  result will be enough for our purposes.

 Fix $N\ge1$. Let ${\rm Mat}(N)$ be the space of all complex $N\times N$ matrices. We denote an $N\times N$ matrix by $A\in {\rm Mat}(N)$. We say that a map $\Gamma:{\rm Mat}(N)\to\mathbb{C}^N$ is a {\it polynomial map} if the components of $\Gamma(A)$ are polynomials in the entries of $A$. 
  Polynomial maps are therefore analytic in the entries of $A$.
 
 In this section we prove 
 \begin{lemma}\label{lemma2}
 There exist polynomial maps $\Gamma_{jk}:{\rm Mat}(N)\to\mathbb{C}^N$, where $j,k=1,\dots, N$, with the following property:\\
 Let $A\in {\rm Mat}(N)$ have rank $N-1$. Then all the vectors $\Gamma_{jk}(A)$ belong to the nullspace of $A$, and at least one of these vectors is non-zero.
 \end{lemma}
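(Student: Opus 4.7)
The plan is to take the $\Gamma_{jk}(A)$ to be columns of the classical adjugate $\mathrm{adj}(A)$. Recall that $\mathrm{adj}(A)$ is the $N \times N$ matrix whose $(i,j)$ entry equals the $(j,i)$-cofactor $(-1)^{i+j}\det(A^{(j,i)})$, where $A^{(j,i)}$ denotes the $(N-1)\times(N-1)$ submatrix obtained from $A$ by deleting row $j$ and column $i$. For each pair $(j,k)$, I would simply set $\Gamma_{jk}(A)$ equal to the $k$-th column of $\mathrm{adj}(A)$, so that
\[
[\Gamma_{jk}(A)]_i \;=\; (-1)^{i+k}\,\det\!\bigl(A^{(k,i)}\bigr).
\]
The index $j$ is redundant in this construction, collapsing the $N^2$ maps down to at most $N$ distinct ones; but since the lemma only requires the \emph{existence} of a family of $N^2$ polynomial maps with the stated properties, harmless repetition is allowed. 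Polynomiality in the entries of $A$ is immediate, because each component is a signed $(N-1)\times(N-1)$ subdeterminant.

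Next, I would verify that every $\Gamma_{jk}(A)$ lies in the nullspace of $A$ whenever $\mathrm{rank}(A)=N-1$. This is a direct consequence of the classical identity
\[
A \cdot \mathrm{adj}(A) \;=\; \det(A)\,I_N.
\]
Reading off the $k$-th column yields $A\,\Gamma_{jk}(A) = \det(A)\,e_k$, and since $\mathrm{rank}(A)=N-1$ forces $\det(A)=0$, the right-hand side vanishes.

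Finally, I would argue that at least one $\Gamma_{jk}(A)$ is nonzero. Because $\mathrm{rank}(A)=N-1$, at least one $(N-1)\times(N-1)$ minor of $A$ must be nonzero; say $\det(A^{(k_0,i_0)})\neq 0$. Then the $i_0$-th entry of $\Gamma_{j,k_0}(A)$ equals $\pm\det(A^{(k_0,i_0)}) \ne 0$, so $\Gamma_{j,k_0}(A)\ne 0$. Equivalently, the standard fact $\mathrm{rank}(\mathrm{adj}(A)) = 1$ for rank-$(N-1)$ matrices guarantees that $\mathrm{adj}(A)$ has some nonzero column.

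There is no real obstacle in this argument: the adjugate already supplies, by pure linear algebra, a polynomial family of candidate nullspace vectors, at least one of which is nonzero on the rank-$(N-1)$ stratum. The topological phenomenon discussed in Appendix~\ref{sec:topology} is precisely the reason one cannot consolidate the $N$ columns of $\mathrm{adj}(A)$ into a \emph{single} polynomial (or even continuous) nowhere-vanishing section of the nullspace bundle---which is why the statement of the lemma allows a whole family of maps rather than demanding a single one.
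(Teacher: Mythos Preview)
Your proof is correct. Both your argument and the paper's ultimately rest on $(N-1)\times(N-1)$ subdeterminants, but the routes differ. You take $\Gamma_{jk}(A)$ to be the $k$-th column of the adjugate, so the index $j$ is redundant and you really produce only $N$ distinct maps; the nullspace property follows in one stroke from the identity $A\cdot\mathrm{adj}(A)=\det(A)\,I_N$. The paper instead builds $\Gamma_{jk}(A)$ so that both indices are live: it fixes the $k$-th coordinate to be $[\det A^{(j,k)}]^2$ and recovers the remaining $N-1$ coordinates by solving the reduced system $A^{(j,k)}\hat v^{(k)}=-[\mathrm{col}(A,k)]\hat{\ }^{(j)}\,[\det A^{(j,k)}]^2$ via Cramer's rule, then checks directly that the resulting vector lies in the nullspace (and vanishes precisely when $\det A^{(j,k)}=0$). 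Your approach is shorter and invokes only a classical identity; the paper's construction yields a genuinely two-parameter family with the extra feature that $\Gamma_{jk}(A)=0$ exactly when the $(j,k)$ minor vanishes, though this refinement is not actually needed in the downstream application.
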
\medskip
 
 \noindent{\it Proof of Lemma \ref{lemma2}:}\ We begin by setting up some notation. Given $A\in {\rm Mat}(N)$, we write $A^{(j,k)}$ to denote the matrix obtained from $A$ by deleting row $j$ and column $k$. We write ${\rm col}(A,k)$ to denote the $k^{th}$ column of $A$. If $v=\left(v_1,\dots,v_N\right)^t\in\mathbb{C}^N$ is a column vector, then we write $v_j$ to denote the $j^{th}$ coordinate of $v$, and write $\hat{v}^{(k)}$ to denote the column vector obtained from $v$ by deleting the $k^{th}$ coordinate.
 Thus, $\hat{v}^{(k)}\in\mathbb{C}^{N-1}$.
 
 From linear algebra, we recall that 
 \begin{align}
& \textrm{For any}\ A\in{\rm Mat}(N)\ \textrm{of rank}\ N-1,\ \textrm{we have}\ \det\left(A^{(j,k)}\right)\ne0\
  \textrm{for some}\ (j,k)\ . \label{linalg1}\\
&  \textrm{For any}\ A\in{\rm Mat}(N),\ \textrm{and any}\ v\in{\rm Nullspace}(A),\ \textrm{we have}\nn\\
 &\qquad\qquad\  A^{(j,k)}\hat{v}^{(k)}\ =\ -\left[{\rm col}(A,k)\right]\hat{\ }^{\  (j)}\ v_k.
  \label{kerA-Ajk}
  \end{align}
(Equation \eqref{kerA-Ajk} expresses the fact that $(Av)_i=0$, for all $i\ne j$.)

If ${\rm rank}(A)=N-1$ and $\det A^{(j,k)}\ne0$, then the space of solutions of \eqref{kerA-Ajk}, and the nullspace of $A$ are one-dimensional; hence, in this case the nullspace of $A$ consists precisely of the solutions of \eqref{kerA-Ajk}. \medskip

We now define $\Gamma_{jk}(A)\in\mathbb{C}^N$ to be the element, $v$,  in the nullspace of $A$, whose $N$ components are constructed as follows:
\begin{align}
&\textrm{Set}\ v_k=\left[\det A^{(j,k)}\right]^2\label{vk-def}\\
& \textrm{and obtain the other}\  N-1\ \textrm{entries comprising the vector}\ \hat{v}^{(k)}\ \textrm{by solving}\nn\\
& A^{(j,k)}\ \hat{v}^{(k)}\ =\ -\left[{\rm col}(A,k)\right]\hat{\ }^{\  (j)}\ \left[\det A^{(j,k)}\right]^2\ .
 \label{Gamma-jk-def}
 \end{align}
If $\det A^{(j,k)}\ne0$, we can solve  \eqref{Gamma-jk-def}  uniquely for $\hat{v}^{(k)}$ by Cramer's rule and together with \eqref{vk-def} construct $v$. 

Note that each component of the vector $\Gamma_{jk}(A)$ has the form 
\[ \det A^{(j,k)}\ \times\ \textrm{Polynomial in the entries of}\ A\ .\]
In particular, $A\mapsto\Gamma_{jk}(A)$ is a polynomial map. Furthermore, $\Gamma_{jk}(A)=0$ if
 $\det A^{(j,k)}=0$. Also, $\Gamma_{jk}(A)\ne 0$ if $\det A^{(j,k)}\ne0$, since the $k^{th}$ coordinate of 
 $ \Gamma_{jk}(A)$ is $\left[\det A^{(j,k)}\right]^2$.\medskip
 
If ${\rm rank}(A)=N-1$, then some $\Gamma_{jk}(A)$ is non-zero, thanks to \eqref{linalg1}. Moreover, each $\Gamma_{jk}(A)$ always belongs to the nullspace of $A$. Indeed, fix $j,k$; if 
 $\det A^{(j,k)}=0$ then $\Gamma_{jk}(A)=0\in {\rm Nullspace(A)}$. If instead $\det A^{(j,k)}\ne0$,
  then ${\rm Nullspace(A)}$ consists of the solutions of \eqref{kerA-Ajk}; and we defined 
  $\Gamma_{jk}$ to solve \eqref{kerA-Ajk}. Thus, in all cases we have $\Gamma_{jk}(A)\in {\rm Nullspace(A)}$ if ${\rm rank}(A)=N-1$. The proof of  Lemma \ref{lemma2} is complete.\bigskip

  \subsection{Hamiltonians Depending on Parameters}\label{ham-param}
  
  Recall the operator $H^{(\eps)}=-\Delta+\eps V(\bx)$.
In this section we complete the continuation argument (and the proof of Theorem \ref{main-thm}) by showing how to continue Properties I.\ -\ IV. , listed at the beginning of section \ref{sec:strategy}, 
 beyond any critical value, $\eps_c$, where one of these properties may fail.
This argument is based on appropriate application of Lemma \ref{lemma1} and Lemma \ref{lemma2}.\bigskip

Let $\eps_c$ and $\mu_c$ be as in Definition \ref{critical-eps}.
\medskip

 Without loss of generality we can assume $\bK_\star=\bK$.    
 We work in the Hilbert spaces
   \begin{align}
   L^2_{\bK,\tau} &=\ \left\{ \sum_{\bm\in\mathcal{S}}  c(\bm) \left[e^{i\bK^\bm\cdot\bx}+\bar\tau e^{iR\bK^\bm\cdot\bx}+\tau e^{i R^2\bK^\bm\cdot\bx} \right]:\ \sum_{\bm\in\mathcal{S}}|c(\bm)|^2<\infty\right\}
\label{L2Ktau}   \\
   H^s_{\bK,\tau} &=\ \left\{\sum_{\bm\in\mathcal{S}} c(\bm)\left[e^{i\bK^\bm\cdot\bx}+ \bar\tau e^{iR\bK^\bm\cdot\bx}+\tau e^{i R^2\bK^\bm\cdot\bx} \right]\right.:\nn\\
   &\qquad\qquad\qquad\qquad \qquad\qquad\qquad\qquad
     \left.\sum_{\bm\in\mathcal{S}}(1+|\bm|^2)^s|c(\bm)|^2<\infty\right\}
  \label{HsKtau} \end{align}

{\it We will apply the results of section \ref{sec:pickbranch} and section \ref{sec:linearalgebra},
 with the analysis centered at $(\mu_c,\eps_c)$ rather than at $(0,0)$; see Remark \ref{rmk-re-centering}. }
 We shall use that  
 \begin{equation} 
 H^{(\eps)}:H^2_{\bK,\sigma}\to L^2_{\bK,\sigma}\ \ \textrm{ is self-adjoint for $\eps$ real\ .}\ 
 \nn\end{equation} 
 Let $M$ be a positive integer, chosen below to be sufficiently large.  We regard $L^2_{\bK,\tau}$ as the direct sum $L^2_{\rm lo}\oplus L^2_{\rm hi}$, where $L^2_{\rm lo}$ consists of Fourier series as in \eqref{L2Ktau} such that the $c(\bm)=0$ whenever $|\bm|>M$, and $L^2_{\rm hi}$ consists of Fourier series as in \eqref{L2Ktau} such that the $c(\bm)=0$ whenever $|\bm|\le M$. Similarly, we regard $H^s_{\bK,\tau}$
    as the direct sum $H^s_{\rm lo}\oplus H^s_{\rm hi}$ using \eqref{HsKtau}. We set 
   $N= {\rm dim}( L^2_{\rm lo})$.\medskip
   
   Let $\Pi_{\rm lo}$ and $\Pi_{\rm hi}$ be the projections that map a Fourier series as in \eqref{L2Ktau}
    or \eqref{HsKtau} to the truncated Fourier series obtained by setting all the $c(\bm)$ with $|\bm|>M$, or with $|\bm|\le M$, respectively, equal to zero. We may view $H^{(\eps)}$ as the mapping
    \begin{align*}
&   \left(\begin{array}{c} \psi_{\rm hi} \\ \psi_{\rm lo}\end{array}\right)\ \mapsto\ \left(\begin{array}{cc} A^{(\eps)} & B^{(\eps)}\\ C^{(\eps)} & D^{(\eps)}\end{array}\right)\ \left(\begin{array}{c} \psi_{\rm hi} \\ \psi_{\rm lo}\end{array}\right)
 \\ & {\rm with}\ A^{(\eps)}\ =\ \Pi_{\rm hi}\ H^{(\eps)}\ \Pi_{\rm hi},\ \ \ 
B^{(\eps)}=\Pi_{\rm hi}\ H^{(\eps)}\ \Pi_{\rm lo},\\
&  \ \ C^{(\eps)}=\Pi_{\rm lo}\ H^{(\eps)}\ \Pi_{\rm hi},\ \ \ {\rm and}\ \ \ 
 D^{(\eps)}=\Pi_{\rm lo}\ H^{(\eps)}\ \Pi_{\rm lo}\ \ \ .
 \end{align*}
 
By choosing the frequency cutoff, $M$, to be sufficiently large, we have 
 \begin{align}
&  A^{(\eps_c)}-\mu_c I:H^2_{\rm hi}\to L^2_{\rm hi}\ \textrm{has a bounded inverse; say}\nn\\
 &  \left\| \left(\ A^{(\eps_c)}\ -\ \mu_c\ I\right)^{-1}\ \right\|_{L^2_{\rm hi}\to H^2_{\rm hi}}\ \le\ C\ .
 \label{Ahighbound}
 \end{align}
 Therefore, for all $(\mu,\eps)$ in some fixed small neighborhood of $(\mu_c,\eps_c)$ we have
 \begin{equation}
 \left\| \left(\ A^{(\eps)}\ -\ \mu\ I\ \right)^{-1}\ \right\|_{L^2_{\rm hi}\to H^2_{\rm hi}}\ \le\ C'\ .
 \label{Aminusmu-highbound}\end{equation}
 
 The  eigenvalue problem 
 \begin{equation}
 H^{(\eps)}\psi= \mu\ \psi\  {\rm for}\ \psi=\left(\begin{array}{c}\psi_{\rm hi} \\ \psi_{\rm lo}\end{array}\right)\in H^2_{\rm lo}\oplus H^2_{\rm hi}
 \label{evp-vec}
 \end{equation}
 is equivalent to  the system
\[   A^{(\eps)}\psi_{\rm hi}\ +\ B^{(\eps)}\psi_{\rm lo}\ \ =\ \mu\ \psi_{\rm hi},\ \ 
 C^{(\eps)}\psi_{\rm hi}\ +\ D^{(\eps)}\psi_{\rm lo}\ \ =\ \mu\ \psi_{\rm lo}\ .\]
 That is,
 \begin{align}
& \psi_{\rm hi}\ = \ -\left( A^{(\eps)}-\mu I\right)^{-1}\ B^{(\eps)}\ \psi_{\rm lo}, \ {\rm and}\label{psi-hi}\\
&\left[ -C^{(\eps)}\left(A^{(\eps)}-\mu I\right)^{-1}B^{(\eps)}\ +\ \left(D^{(\eps)} -\ \mu I\right)\ \right]\psi_{\rm lo}\ =\ 0,\label{psi-low-eq}
\end{align}
where we regard $A^{(\eps)}-\mu I$ as an operator from $H^2_{\rm hi}$ to $L^2_{\rm hi}$. Note also that $B^{(\eps)}\psi_{\rm lo}\in L^2$ since $\psi_{\rm lo}\in H^2_{\bK,\tau}$; hence 
$\left( A^{(\eps)}-\mu I\right)^{-1}\ B^{(\eps)}\ \psi_{\rm lo}\ \in H^2_{\rm hi}$, thanks to
 \eqref{Aminusmu-highbound}, which holds under our  assumption that $(\mu,\eps)$ is near $(\mu_c,\eps_c)$. It follows that 
  \[C^{(\eps)}\left( A^{(\eps)}-\mu I\right)^{-1}\ B^{(\eps)}\ \psi_{\rm lo}\in L^2_{\rm lo}\]
   by the definition of $C^{(\eps)}$.
  
\begin{equation}
\textrm{  The operator in square brackets in \eqref{psi-low-eq} will be denoted as}\ \mathcal{D}(\mu,\eps)\ .
\label{calDdef}\end{equation}
Thus, $\mathcal{D}(\mu,\eps)$ is analytic in $(\mu,\eps)$, where $(\mu,\eps)$ varies over a small disc about 
 $(\mu_c,\eps_c)$ in $\mathbb{C}^2$. We may regard $\mathcal{D}(\mu,\eps)$ as an $N\times N$ matrix. Thus, 
 \begin{align}
& \psi=\left(\begin{array}{c}\psi_{\rm lo}\\ \psi_{\rm hi}\end{array}\right)\ \textrm{is an eigenfunction of}\ H^{(\eps)}\ \textrm{with eigenvalue}\ \mu \nn\\
&\textrm{ if and only if}\ \textrm{\eqref{psi-hi} holds and}\ \psi_{\rm lo}\ \textrm{ is a non-trivial solution of}\ \  \mathcal{D}(\mu,\eps)\psi_{\rm lo}=0 .\ \ \label{Dpsilo0}
 \end{align}
 It follows that $\mu$ is simple, {\it i.e.} a  multiplicity one eigenvalue of $H^{(\eps)}$ if and only if the $N\times N$ matrix $ \mathcal{D}(\mu,\eps)$ has rank $N-1$.\bigskip
 
 We shall now apply Lemma \ref{lemma2} to $ \mathcal{D}(\mu,\eps)\in{\rm Mat}(N)$, the space of $N\times N$ complex matrices. Let $\Gamma_{jk}$ denote the polynomial map given by Lemma \ref{lemma2}. For $j,k=1,\dots, N$ we define
 \[ \psi_{\rm lo}^{jk}(\mu,\eps)\ =\ \Gamma_{jk}\left(\mathcal{D}(\mu,\eps)\right)\]
 and set
  \[ \psi_{\rm hi}^{jk}(\mu,\eps)\ =\ -\left(A^{(\eps)}-\mu I\right)^{-1}B^{(\eps)}\psi_{\rm lo}^{jk}\]
  as in \eqref{psi-hi}.
 
 By  Lemma \ref{lemma2}, \eqref{psi-hi} and \eqref{psi-low-eq}
 we now know the following for 
 \begin{equation*}
  \psi^{jk}(\mu,\eps)\ =\ \left(\begin{array}{c} \psi_{\rm lo}^{jk}(\mu,\eps) \\{ } \\  \psi_{\rm hi}^{jk}(\mu,\eps)\end{array}\right)\ : \end{equation*}
 
 \begin{align}
(A)\ \ \ & \psi^{jk}(\mu,\eps)\in H^2_{\bK,\tau}\ \textrm{depends analytically on}\ (\mu,\eps),\nn\\
& \textrm{for}\ 
 (\mu,\eps)\ \textrm{in a small neighborhood of}\ (\mu_c,\eps_c) \label{prprty1}\\
 &\nn\\
(B)\ \ \ & \textrm{If}\ \mu\ \textrm{is a simple eigenvalue of}\ H^{(\eps)},\ (\mu-\mu_c,\eps-\eps_c\ \textrm{small}),\nn\\
 &\textrm{then all}\ \psi^{jk}(\mu,\eps)\ \textrm{are in the nullspace of}\ H^{(\eps)}-\mu I\ .\nn\\
 &\textrm{Furthermore, at least one of the}\  \psi^{jk}(\mu,\eps)\ \textrm{is non-zero }\nn\\
&\textrm{and is therefore an eigenfunction of}\ H^{(\eps)}.
  \label{prprty2}
 \end{align}
 
 Let us write out the Fourier expansions of the $\psi^{jk}(\mu,\eps)$. We have
 \begin{equation}
 \left[\psi^{jk}(\mu,\eps)\right](\bx)\ =\
  \sum_{\bm\in\mathcal{S}}c^{jk}(\bm,\mu,\eps)\left[\ e^{i\bK^\bm\cdot\bx}+\bar\tau e^{iR\bK^\bm\cdot\bx}
   +\tau e^{iR^2\bK^\bm\cdot\bx}\ \right]
   \label{FSpsijk}
   \end{equation}
   The coefficients $c^{jk}(\bm,\mu,\eps)$ depend analytically on $(\mu,\eps)\in U$, where $U$ is a small neighborhood of $(\mu_c,\eps_c)$, which is independent of $\bm$. Moreover, since $\psi^{jk}(\mu,\eps)$ is an analytic $H^2_{\bK,\tau}$- valued function, it follows that
  \begin{equation}
  \sum_{\bm\in\mathcal{S}} \left(1+|\bm|^2\right)^2 \left|c^{jk}(\bm,\mu,\eps)\right|^2\ \ \textrm{is bounded as}\ \   (\mu,\eps)\ \textrm{varies over}\ U\ .
  \label{H2var}\end{equation}
  (Perhaps we must shrink $U$ to achieve \eqref{H2var}.)\bigskip
  
  With a view toward continuation of the Properties I.-IV.  (enumerated at the start of section \ref{sec:continuation}) as 
   $\eps$ traverses any critical value (Definition \ref{critical-eps}), $\eps_c$ 
    we state the following \medskip
  
\begin{lemma}\label{lemma3}  
 Suppose  there exists a sequence of eigenvalues $(\mu_\nu,\eps_\nu)\to(\mu_c,\eps_c)$  with  $0<\eps_\nu<\eps_c$, such  that  for each $\nu$ the following properties ($\mathcal{A}1$)-($\mathcal{A}4$) hold:
 \begin{itemize}
 \item[($\mathcal{A}1$)]\ $\mu_\nu$ is a simple eigenvalue of $H^{(\eps_\nu)}$ on $L^2_{\bK,\tau}$, with eigenfunction 
\[\Psi_\nu(\bx)\ =\ \sum_{\bm\in\mathcal{S}} c_\nu(\bm)\ 
 \left[\ e^{i\bK^\bm\cdot\bx}\ +\ \bar\tau e^{iR\bK^\bm\cdot\bx}\ +\ \tau e^{iR^2\bK^\bm\cdot\bx}\ \right]\in H^2_{\bK,\tau}\ . \]
  \item[($\mathcal{A}2$)]\ $\mu_\nu$ is a simple eigenvalue of $H^{(\eps_\nu)}$ on $L^2_{\bK,\bar\tau}$, with eigenfunction 
\[\overline{\Psi_\nu(-\bx)}\ =\ \sum_{\bm\in\mathcal{S}} \overline{c_\nu(\bm)}\ 
 \left[\ e^{i\bK^\bm\cdot\bx}\ +\ \tau e^{iR\bK^\bm\cdot\bx}\ +\ \bar\tau e^{iR^2\bK^\bm\cdot\bx}\ \right]\in H^2_{\bK,\tau}\ .\]
\item[($\mathcal{A}3$)]\  $\mathcal{E}_1(\mu_\nu,\eps_\nu)\ne0$, {\it i.e.} $\mu_\nu$ is \underline{not} a $L^2_{\bK,1}$ eigenvalue of $H^{(\eps_\nu)}$. 
\item[($\mathcal{A}4$)]\  The following non-degeneracy condition ($\lambda_\sharp^\eps\ne0$) holds:
 \begin{equation} \sum_{\bm\in\mathcal{S}} w(\bm)\ \left[c_\nu(\bm)\right]^2\ \ne\ 0,
\label{lambda-eps-ne0}
 \end{equation}
 where $\{w(\bm)\}_{\bm\in\mathcal{S}}$ are fixed weights, such that
  \begin{equation}
  \left| w(\bm) \right|\ \le\ C\left(1+|\bm|\right),\ \ \bm\in\mathcal{S}\ .
  \label{wtbnd}
  \end{equation}
  Our choice of weights (see  \eqref{lambda-sharp1}) is:
  \begin{equation}
  w(\bm) = \left(\begin{array}{c}1\\ i\end{array}\right)\cdot \bK^\bm\ . \nn
  \end{equation}
 \end{itemize}
 \medskip
 
 Then, there exist a (non-empty) open interval $\mathcal{I}=(\eps_c,\eps_c+\delta)$, a real-valued real-analytic function $\beta(\eps)$ defined on $I$, a function $\varphi^\eps\in L^2_{\bK,\tau}$ depending on the parameter $\eps\in\mathcal{I}$, and a countable subset $\mathcal{C}\subset\mathcal{I}$, such that the following hold:
 \begin{itemize}
 \item[(i)]\ $\left(-\Delta+\eps V_h\right)\varphi^{(\eps)}=\beta(\eps)\varphi^{(\eps)}$ for each $\eps\in\mathcal{I}$.
 \item[(ii)]\ $\lim_{\eps\to\eps_c^+} \beta(\eps) = \mu_c$.
 \item[(iii)]\ $\mathcal{C}$ has no accumulation points in $\mathcal{I}$, although $\eps_c$ may be an accumulation point of $\mathcal{C}$.
 \item[(iv)]\ For each $\eps$ in $\mathcal{I}\setminus\mathcal{C}$
 \subitem\ (a)\ $\beta(\eps)$ is a \underline{simple} eigenvalue of $-\Delta+\eps V_h$ on $L^2_{\bK,\tau}$,
 \subitem\ (b)\ $\beta(\eps)$ is not an  eigenvalue of $-\Delta+\eps V_h$ on $L^2_{\bK,1}$, and 
 \subitem\ (c)\ the quantity $\lambda_\sharp^\eps$, arising from the eigenfunction $\varphi^{(\eps)}$ via formula \eqref{lambda-sharp1} (with $\Phi_1$ replaced by $\varphi^{(\eps)})$ is non-zero.
 \end{itemize}
\end{lemma}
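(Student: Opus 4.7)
The plan is to reduce Lemma \ref{lemma3} to a single application of Lemma \ref{lemma1}, using the Grushin/Schur-complement reduction of section \ref{ham-param} together with the linear-algebra trick of Lemma \ref{lemma2}. After translating so that $(\mu_c,\eps_c)$ plays the role of $(0,0)$ in the setup of section \ref{sec:pickbranch}, set $P(\mu,\eps)=\mathcal{E}_\tau(\mu,\eps)$; by Theorem \ref{eig-det0} this is analytic in $(\mu,\eps)$ on a neighborhood of $(\mu_c,\eps_c)$ in $\mathbb{C}^2$. Apply the block decomposition \eqref{psi-hi}--\eqref{calDdef} to the $L^2_{\bK,\tau}$ eigenvalue problem for $H^{(\eps)}$ to obtain the $N\times N$ matrix $\mathcal{D}(\mu,\eps)$, and feed it into Lemma \ref{lemma2} to produce the $N^2$ analytic candidate eigenfunctions $\psi^{jk}(\mu,\eps)\in H^2_{\bK,\tau}$ enjoying properties \eqref{prprty1}--\eqref{prprty2}: at every $(\mu,\eps)$ near $(\mu_c,\eps_c)$ for which $\mu$ is a simple $L^2_{\bK,\tau}$-eigenvalue of $H^{(\eps)}$, at least one $\psi^{jk}$ is a nonzero element of the one-dimensional eigenspace.

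Define the vector-valued analytic function $F(\mu,\eps)$ with components
\[
F_{jk}(\mu,\eps)\ =\ \mathcal{E}_1(\mu,\eps)\ \cdot\ \lambda_\sharp\!\left[\psi^{jk}(\mu,\eps)\right],
\]
where $\lambda_\sharp[\psi]\equiv 3\,\mathrm{area}(\Omega)\sum_{\bm\in\mathcal{S}} c(\bm;\psi)^2\,(1,i)^t\cdot\bK^\bm$ is the quadratic functional of \eqref{lambda-sharp1} applied to the Fourier coefficients of $\psi$. Joint analyticity of $F_{jk}$ in $(\mu,\eps)$ follows from \eqref{H2var} and the weight estimate $|w(\bm)|\le C(1+|\bm|)$, which give locally uniform absolute convergence of the defining series. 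The key observation is that when $\mu$ is a simple $L^2_{\bK,\tau}$-eigenvalue of $H^{(\eps)}$, every nonzero $\psi^{jk}(\mu,\eps)$ is a scalar multiple of the same eigenfunction $\Phi$, so $\lambda_\sharp[\psi^{jk}]=c_{jk}^2\,\lambda_\sharp[\Phi]$ for some scalar $c_{jk}$; consequently $F(\mu,\eps)\ne 0$ is equivalent to the conjunction $\mathcal{E}_1(\mu,\eps)\ne 0$ and $\lambda_\sharp^\eps\ne 0$.

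Verify the hypotheses of Lemma \ref{lemma1}. Assumption (A1) holds since $H^{(\eps)}$ is self-adjoint on $L^2_{\bK,\tau}$ for real $\eps$, so every zero of $\mu\mapsto\mathcal{E}_\tau(\mu,\eps)$ is real. Assumption (A2) is supplied by the given sequence: hypothesis $(\mathcal{A}1)$ together with part (4) of Theorem \ref{eig-det0} gives $P(\mu_\nu,\eps_\nu)=0$ and $\partial_\mu P(\mu_\nu,\eps_\nu)\ne 0$, while $(\mathcal{A}3)$ and $(\mathcal{A}4)$ give $F(\mu_\nu,\eps_\nu)\ne 0$. Lemma \ref{lemma1} then supplies $\delta>0$, an interval $\mathcal{I}=(\eps_c,\eps_c+\delta)$, and a real-analytic function $\beta:\mathcal{I}\to\mathbb{R}$ with $\beta(\eps)\to\mu_c$ as $\eps\to\eps_c^+$, such that $\mathcal{E}_\tau(\beta(\eps),\eps)=0$ and $\partial_\mu\mathcal{E}_\tau(\beta(\eps),\eps)\ne 0$ for every $\eps\in\mathcal{I}$, and $F(\beta(\eps),\eps)\ne 0$ for all $\eps\in\mathcal{I}$ outside a set $\mathcal{C}_0$ discrete in $\mathcal{I}$ (accumulating at most at $\eps_c$). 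These statements give (ii), (iii), (iv.a) and (iv.b).

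It remains to produce the eigenfunction $\varphi^{(\eps)}$. Since $F(\beta(\eps),\eps)$ is not identically zero on $\mathcal{I}$, some component $\eps\mapsto F_{j_0k_0}(\beta(\eps),\eps)$ is real-analytic and not identically zero; fix such $(j_0,k_0)$ and set $\varphi^{(\eps)}\equiv\psi^{j_0k_0}(\beta(\eps),\eps)$. By \eqref{prprty2} this satisfies $(-\Delta+\eps V_h)\varphi^{(\eps)}=\beta(\eps)\varphi^{(\eps)}$ for every $\eps\in\mathcal{I}$, trivially on the discrete zero set of $\eps\mapsto F_{j_0k_0}(\beta(\eps),\eps)$ where $\varphi^{(\eps)}$ may vanish. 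Enlarging $\mathcal{C}_0$ by this discrete set yields the exceptional set $\mathcal{C}$; for $\eps\in\mathcal{I}\setminus\mathcal{C}$, the function $\varphi^{(\eps)}$ is a nonzero eigenfunction, $F(\beta(\eps),\eps)\ne 0$ forces $\mathcal{E}_1(\beta(\eps),\eps)\ne 0$ and $\lambda_\sharp^\eps\ne 0$, completing (i) and (iv.c). The principal obstacle is the topological one emphasized after \eqref{calDdef} and in Appendix \ref{sec:topology}: an eigenvector of a corank-one matrix cannot in general be chosen to depend analytically on the entries of the matrix along a path. This obstacle is circumvented precisely by replacing a single choice of eigenvector with the finite family $\{\psi^{jk}\}$ from Lemma \ref{lemma2} and by bundling $\mathcal{E}_1$ and $\lambda_\sharp$ into a single vector-valued analytic function $F$ to which Lemma \ref{lemma1} directly applies.
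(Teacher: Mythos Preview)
Your proposal is correct and follows essentially the same approach as the paper: set $P=\mathcal{E}_\tau$, build the analytic family $\psi^{jk}$ via the Schur complement $\mathcal{D}(\mu,\eps)$ and Lemma~\ref{lemma2}, package $\mathcal{E}_1$ and $\lambda_\sharp[\psi^{jk}]$ into a vector-valued $F$, apply Lemma~\ref{lemma1}, and then fix a single component $(j_0,k_0)$ to define $\varphi^{(\eps)}$. The only omission is that after fixing $(j_0,k_0)$ you should shrink $\delta$ so that the zero set of $\eps\mapsto F_{j_0k_0}(\beta(\eps),\eps)$ does not accumulate at $\eps_c+\delta$, ensuring property (iii); the paper makes this step explicit.
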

\medskip

\noindent{\it Proof of Lemma \ref{lemma3}:}\   Recall that the zeros, $\mu$, of the renormalized determinant, $\mathcal{E}_1(\mu,\eps)$, defined in  section \ref{det2}, are precisely the set of $L^2_{\bK,1}$ eigenvalues of $H^{(\eps)}$.  
Thus, tracking the set of $(\mu,\eps)$ such that Assumptions ($\mathcal{A}1$)-($\mathcal{A}4$) and in particular ($\mathcal{A}3$) suggests that we introduce, 
for  $(\mu,\eps)\in U$, the matrix-valued  function:
  \begin{equation}
  F^{jk}(\mu,\eps)\ \equiv\ 
  \left(\ \sum_{\bm\in\mathcal{S}}\ w(\bm)\ \left[c^{jk}(\bm,\mu,\eps)\right]^2\ \right)\ \times\  
  \mathcal{E}_1(\mu,\eps)  .
   \label{Fdef}
   \end{equation}
   $F^{jk}(\mu,\eps)$ is an analytic function on $U$.  We define
   \begin{equation}
   F(\mu,\eps)\ \equiv\ \left(\  F^{jk}(\mu,\eps)\ \right)_{j,k=1,\dots,N}
   \label{F-def}
   \end{equation}
   Thus, $F:U\to\mathbb{C}^{N^2}$ is an analytic map.  
   
 Now for each $\nu$, \eqref{prprty2} applies to $(\mu_\nu,\eps_\nu)$ , since $\mu_\nu$ is a simple eigenvalue. Thus, for some $jk$, the function $\psi^{jk}(\mu_\nu,\eps_\nu)$ is a non-zero null-vector
  of $H^{(\eps_\nu)}-\mu_\nu I$, {\it i.e.} an eigenfunction of $H^{(\eps_\nu)}$. Since by hypothesis  $\psi_\nu$  is an eigenfunction of $H^{(\eps_\nu)}$ satisfying \eqref{lambda-eps-ne0} with eigenvalue $\mu_\nu$ and since $\mu_\nu$ is a simple eigenvalue of $H^{(\eps_\nu)}$, the corresponding eigenfunction $\Psi_\nu$ satisfies:
  \begin{equation*}
  \Psi_\nu\ =\ \gamma_\nu\ \psi^{jk}(\mu_\nu,\eps_\nu)
  \ \ \textrm{ for a  complex constant}\ \gamma_\nu\ne0 \ .
  \end{equation*}
  
  Therefore, 
  \begin{align*}
  0\ne\sum_{\bm\in\mathcal{S}}w(\bm) \left[c_\nu(\bm)\right]^2 &=
  \gamma_\nu^2\sum_{\bm\in\mathcal{S}}w(\bm)\left[c^{jk}(\bm;\mu_\nu,\eps_\nu)\right]^2\nn\\
  &= \gamma_\nu^2\ \frac{F^{jk}(\mu_\nu,\eps_\nu)}{\mathcal{E}_1(\mu_\nu,\eps_\nu)};\ \ \textrm{see \eqref{Fdef} and \eqref{lambda-eps-ne0}}.
  \end{align*}
  The second equality holds since $\mathcal{E}_1(\mu_\nu,\eps_\nu)\ne0$; see hypothesis ($\mathcal{A}3$).
   It follows that $F^{jk}(\mu_\nu,\eps_\nu)\ne0$ for some $jk$, {\it i.e.}
  \begin{equation}
  F(\mu_\nu,\eps_\nu)\ne0,\ \ \ \textrm{for each}\ \ \nu\ ,
  \label{Fne0}
  \end{equation}
  \bigskip
  where $\{\eps_\nu\}$ is a sequence tending to $\eps_c$ from below.
  
 We complete the proof of Lemma \ref{lemma3}  by application of Lemmata \ref{lemma1} and \ref{lemma2} for appropriate choices 
 of  $P(\mu,\eps)$ and $F(\mu,\eps)$. 
  Let $\mathcal{E}_\tau(\mu,\eps)$, denote the renormalized determinant \eqref{E-sigma-def}.
Let  $P(\mu,\eps)=\mathcal{E}_\tau(\mu,\eps)$ and $F(\mu,\eps)$ be given by \eqref{Fdef}, \eqref{F-def}.  We now check the hypotheses of Lemma \ref{lemma1}.  First note
  \begin{align}
 & P(\mu,\eps)=0\ \ \textrm{ if and only if}\ \mu\ \textrm{is an $L^2_{\bK,\tau}$ eigenvalue of}\ H^{(\eps)},\ \textrm{and the multiplicity of}\nn\\
 & \textrm{$\mu$ as a zero of $P(\mu,\eps)$ is equal to its multiplicity as an eigenvalue of $H^{(\eps)}$.}\label{zerocount}
  \end{align}

Because $H^{(\eps)}$ is self-adjoint for real $\eps$, we see from \eqref{zerocount} that 

\begin{align}
\textrm{if}\ (\mu,\eps)\in U,\ \eps\ \ \textrm{is real},\ \textrm{and} \ P(\mu,\eps)=0,\  
\textrm{then}\  \mu\in\mathbb{R}.
\label{lamreal}\end{align}

Moreover, from \eqref{lambda-eps-ne0}, \eqref{Fne0} and \eqref{zerocount} we see that
\begin{align}
&(\mu_\nu,\eps_\nu)\in U\ \textrm{for each}\ \nu\ge1,\ \ (\mu_\nu,\eps_\nu)\to(\mu_c,\eps_c),\ {\rm as}\
 \nu\to\infty;\\
&\qquad\qquad\qquad \textrm{and for each}\ \nu,\ \textrm{we have}\nn\\
&P(\mu_\nu,\eps_\nu)=0,\ \partial_\mu P(\mu_\nu,\eps_\nu)\ne0,\ F(\mu_\nu,\eps_\nu)\ne0,\ \ \eps_\nu\in\mathbb{R},\ \ 0<\eps_0\le\eps_\nu\ . \label{allisnowset}
\end{align}

Recall that $F:U\to\mathbb{C}^{N^2}$ is an analytic mapping and  $P(\mu,\eps):U\to\mathbb{C}$ is analytic. Results \eqref{lamreal}-\eqref{allisnowset} tell us that conditions (A1) and (A2) of 
section \ref{sec:pickbranch} hold for our present choice of $F$ and $P$. Therefore, Lemma \ref{lemma1} applies; see also the remark immediately after its proof. Thus, we obtain a positive number $\delta$ and a real-analytic function $\beta(\eps)$ such that the following  holds:
\begin{align}
&\textrm{For each}\ \eps\in (\eps_c,\eps_c+\delta), \textrm{we have}\ P(\beta(\eps),\eps)=0\ \ \textrm{and}\ \ \partial_\lambda P(\beta(\eps),\eps)\ne0. \label{simplezerobeyond} \\
&\textrm{Moreover, for all but countably many}\ \eps\in(\eps_c,\eps_c+\delta),\nn\\ 
&\textrm{with their only possible accumulation point at $\eps_c$, we have}\ F(\beta(\eps),\eps)\ne0.\label{Fne0-beyond-0}
\end{align}

By \eqref{simplezerobeyond} and \eqref{zerocount}, we have 
\begin{equation}
\textrm{For each $\eps$ in $(\eps_c,\eps_c+\delta)$, the number $\beta(\eps)$ is a simple $L^2_{\bK,\tau}$ eigenvalue of}\ H^{(\eps)}\ .
\label{beta-simpleeig}
\end{equation}

Therefore, from \eqref{prprty2} we have that
\begin{equation}
\textrm{For each $\eps$ in $(\eps_c,\eps_c+\delta)$, all the $\psi^{jk}(\mu,\eps)$ are in the nullspace of 
 $H^{(\eps)}-\beta(\eps) I$.}\label{simple-eig}
 \end{equation} 
 
 Recalling \eqref{FSpsijk} and \eqref{Fdef},  \eqref{F-def}, \eqref{Fne0-beyond-0}, we now see that
 \begin{align}
& \textrm{For each $\eps$ in $(\eps_c,\eps_c+\delta)$ outside a countable set}\\
&\textrm{ with its only possible accumulation point at $\eps_c$, $\mathcal{E}_1(\beta(\eps),\eps)\ne0$ } \nn\\
 &\textrm{and there exists $jk$ such that}\qquad  \sum_{\bm\in\mathcal{S}} w(\bm)\ \left[c^{jk}(\bm,\beta(\eps),\eps)\right]^2\ \ne\ 0\ .
 \label{lambdane0-jk}
 \end{align}
 
 Now unfortunately the pair $(j,k)$ in \eqref{lambdane0-jk} may depend on $\eps$. However, 
 \eqref{lambdane0-jk} implies that for some fixed $(j,k)=(j_0,k_0)$,  the function
 \begin{equation}
 \eps\mapsto \sum_{\bm\in\mathcal{S}} w(\bm)\ \left[c^{j_0k_0}(\bm,\beta(\eps),\eps)\right]^2
 \label{lambdaeps}
 \end{equation}
 defined for $\eps\in(\eps_c,\eps_c+\delta)$ is not identically zero. Since this function is analytic in $\eps$, it is equal to zero at most at countably many $\eps$. Moreover, the zeros of the function \eqref{lambdaeps} in $(\eps_c,\eps_c+\delta)$ can accumulate only at $\eps_c$ and at $\eps_c+\delta$. By taking $\delta$ smaller, we may assume that the zeros of the function \eqref{lambdaeps} can only accumulate at $\eps_c$. 
 
 We now set, for $ \eps\in(\eps_c,\eps_c+\delta)$:
 \begin{align}
 \Psi^{(\eps)}\ &\equiv\ \Psi^{j_0,k_0}(\beta(\eps),\eps)\in H^2_{\bK,\tau}\ \ {\rm and}\nn\\
 \Psi^{(\eps)}(\bx)\ &=\ 
 \sum_{\bm\in\mathcal{S}} c^{j_0,k_0}(\bm;\beta(\eps),\eps)\ \left[e^{i\bK^\bm\cdot\bx} + \bar\tau e^{iR\bK^m\cdot\bx}+\tau e^{iR^2\bK^\bm\cdot\bx}\right].
 \nn\end{align}
 
We now have that the function $\mu=\beta(\eps)$ satisfies Properties I.-IV.  
 for all $\eps\in(\eps_c,\eps_c+\delta)$ except possibly along a sequence of ``bad'' 
 $\eps$'s which tends to $\eps_c$.  Properties I.-III.,  that $\beta(\eps)$ is a eigenvalue in each of the subspaces $L^2_{\bK,\tau}$ and $L^2_{\bK,\tau}$, and not an  $L^2_{\bK,1}-$ eigenvalue, hold for all $\eps\in(\eps_c,\eps_c+\delta)$, except possibly along the above sequence of bad $\eps$'s. This completes the proof of Lemma \ref{lemma3}. 
\medskip

To complete the proof of Theorem \ref{main-thm} for $\eps$ of arbitrary size we require\bigskip

\begin{lemma}\label{lemma4}
For all  $\eps\in(0,\infty)$ outside a countable closed set there exists a Floquet-Bloch eigenpair $\mu\in\mathbb{R},\ \varphi\in L^2_{\bK,\tau}$ for $-\Delta+\eps V_h$, with the following properties:
\begin{itemize}
\item[a.]\ $|\mu| \le \overline{C}_0\eps+\overline{C}_1$, where $\overline{C}_0$ and $\overline{C}_1$ depend only on $V_h$.
\item[b.]\ $\mu$ is a multiplicity one eigenvalue of $-\Delta+\eps V_h$ on $L^2_{\bK,\tau}$.
\item[c.]\ $\mu$ is not an eigenvalue of $-\Delta+\eps V_h$ on $L^2_{\bK,1}$.
\item[d.]\ The quantity $\lambda_\sharp^\eps$, arising from $\varphi$ by formula \eqref{lambda-sharp1} is non-zero.
\end{itemize}
\end{lemma}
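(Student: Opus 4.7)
The plan is to implement the continuation strategy outlined in Section \ref{sec:strategy}, iterating Lemma \ref{lemma3} and using bound (a) as the a priori control that prevents the eigenvalue branch from escaping to infinity. Let $\mathcal{A} \subset (0,\infty)$ denote the set of $\eps$ for which \emph{no} Floquet--Bloch pair $(\mu, \varphi)$ satisfying (a)--(d) exists; the goal is to show $\mathcal{A}$ is contained in a countable closed subset of $\R$. Initialization: for $\eps \in (0,\eps^0)$ with $\eps^0$ from Proposition \ref{perturbed-espace}, the eigenpair $(\mu^\eps(\bK), \Phi_1^\eps)$ from \eqref{mu-eps-small}--\eqref{eig-Phi1-Fourier}, coupled with the non-degeneracy \eqref{lambda-sharp-eps-small2}, verifies (b)--(d); bound (a) holds automatically since $\mu^\eps(\bK) = |\bK|^2 + O(\eps)$. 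Accordingly, fix constants $\overline{C}_0 \ge \|V_h\|_\infty$ and $\overline{C}_1 \ge |\bK|^2 + 1$, which then depend only on $V_h$.

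Suppose for contradiction that $\mathcal{A}$ is not contained in a countable closed set. By Lemma \ref{either-or}, there exists $\eps_c \in (0,\infty)$ with $\mathcal{A} \cap [0,\eps_c)$ contained in a countable closed set but $\mathcal{A} \cap [0,\eps')$ failing to be so for every $\eps' > \eps_c$. Select a sequence $\eps_\nu \nearrow \eps_c$ with $\eps_\nu \notin \mathcal{A}$, and for each $\nu$ let $(\mu_\nu, \varphi_\nu)$ be an eigenpair of $H^{(\eps_\nu)}$ verifying (a)--(d). From (a) we get $|\mu_\nu| \le \overline{C}_0 \eps_c + \overline{C}_1$, so after passing to a subsequence $\mu_\nu \to \mu_c \in \R$. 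Setting $\Psi_\nu := \varphi_\nu \in L^2_{\bK,\tau}$ and using Remark \ref{Vreal-even} to identify $\overline{\varphi_\nu(-\bx)}$ as the companion simple $L^2_{\bK,\bar\tau}$ eigenvector at the same eigenvalue, the hypotheses $(\mathcal{A}1)$--$(\mathcal{A}4)$ of Lemma \ref{lemma3} are met with weights $w(\bm) = (1,i)^t \cdot \bK^\bm$, which satisfy \eqref{wtbnd}.

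Applying Lemma \ref{lemma3} produces $\delta > 0$, a real-analytic $\beta(\eps)$ on $\mathcal{I} = (\eps_c, \eps_c + \delta)$, and eigenfunctions $\varphi^{(\eps)}$ such that (b)--(d) hold off a countable set $\mathcal{C} \subset \mathcal{I}$ whose only possible accumulation point is $\eps_c$. To propagate bound (a) across $\eps_c$: on the dense open set of $\mathcal{I}$ where $\beta(\eps)$ is a simple eigenvalue, the Feynman--Hellmann identity gives $|d\beta/d\eps| = |\langle \varphi^{(\eps)}, V_h \varphi^{(\eps)} \rangle| / \|\varphi^{(\eps)}\|_2^2 \le \|V_h\|_\infty \le \overline{C}_0$, and by analyticity of $\beta$ the same bound holds throughout $\mathcal{I}$. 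Integrating from $\eps_c^+$ and using $\lim_{\eps \to \eps_c^+} \beta(\eps) = \mu_c$ together with $|\mu_c| \le \overline{C}_0 \eps_c + \overline{C}_1$ yields $|\beta(\eps)| \le \overline{C}_0 \eps + \overline{C}_1$ on $\mathcal{I}$. Hence $\mathcal{A} \cap [0, \eps_c + \delta) \subset (\mathcal{A} \cap [0, \eps_c)) \cup \mathcal{C} \cup \{\eps_c\}$, a countable closed set, contradicting the defining property of $\eps_c$.

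The main obstacle is verifying that bound (a) propagates uniformly past every critical value with the \emph{same} constants $\overline{C}_0, \overline{C}_1$ fixed once and for all at initialization --- the Feynman--Hellmann estimate $|d\beta/d\eps| \le \|V_h\|_\infty$ combined with the continuity $\beta(\eps) \to \mu_c$ at $\eps_c^+$ furnished by Lemma \ref{lemma3} do exactly this job. Aside from this a priori estimate, the argument is a straightforward orchestration of the machinery of Sections \ref{sec:pfeps-small} and \ref{sec:continuation}.
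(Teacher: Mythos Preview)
Your proof is correct and follows essentially the same route as the paper's: both argue by contradiction via Lemma~\ref{either-or}, pass to a convergent subsequence $\mu_\nu\to\mu_c$ using the a~priori bound (a), invoke Lemma~\ref{lemma3} to continue the branch past $\eps_c$, and then verify that (a) propagates along $\beta(\eps)$ with the same constants, yielding the contradiction. The only notable variation is in how you bound $|\beta'(\eps)|$: you invoke Feynman--Hellmann directly on the simple-eigenvalue set $\mathcal{I}\setminus\mathcal{C}$, while the paper instead uses the min-max Lipschitz estimate $|\lambda_k(\eps)-\lambda_k(\eps')|\le\overline{C}_0|\eps-\eps'|$ on the ordered eigenvalues and locally identifies $\beta$ with a fixed $\lambda_{\bar k}$. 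Both yield $|\beta'|\le\|V_h\|_\infty$ on a dense subset and hence everywhere by real-analyticity; the two arguments are interchangeable.
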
\medskip

\noindent Theorem \ref{main-thm} is an immediate consequence of Lemma \ref{lemma4}, which precludes option (2) of Lemma \ref{either-or}, and Proposition 
\ref{prop:2impliescone}.\medskip

\noindent{\it Proof of Lemma \ref{lemma4}:}\ \ Set $\overline{C}_0=\max\ |V_h|$. By our the analysis of section \ref{sec:pfeps-small}, there exists $\eps^0>0$, a sufficiently large constant $\overline{C}_1$,  such that for $\eps\in(0,\eps^0)$ there exist  $\mu,\ \varphi$ satisfying  (a.)-(d.) . \medskip

Now suppose that Lemma \ref{lemma4} fails. Then, by Lemma \ref{either-or}
 there exists $\eps_c\in(0,\infty)$ such that for all  $\eps\in(0,\eps_c)$ outside a  countable closed set, 
 there exist $\mu, \varphi$ satisfying (a.)-(d.) but 
\begin{align}
&\textrm{for \underline{all} $\eps_c^1>\eps_c$, assertions (a.-d.) fail on a subset of $(0,\eps_c^1)$}
\nn\\
& \textrm{that is not contained in any countable closed set. }
\label{defining-prop}
\end{align}
We will deduce a contradiction, from which we conclude Lemma \ref{lemma4}.\medskip

By assumption, we can find a sequence $\eps_1<\eps_2<\dots<\eps_\nu<\dots$ converging to $\eps_c$, such that each $\eps_\nu$ gives rise to a Floquet-Bloch eigenpair $\mu_\nu\in\mathbb{R},\ \varphi_\nu\in L^2_{\bK,\tau}$ satisfying properties (a.-d.)\ . Thanks to (a), we may pass to a subsequence, and assume that $\mu_\nu\to\mu_c$ as $\nu\to\infty$,
 for some real number $\mu_c$, with 
 \begin{equation}
 \left| \mu_c \right|\ \le\ \overline{C}_0\ \eps_c+\overline{C}_1
 \label{lip-bound-c}
 \end{equation}
 
 Since the Floquet-Bloch pairs $\mu_\nu, \varphi_\nu$ satisfy (a.-d.), and since $\eps_\nu\uparrow\eps_c$ and $\mu\to\mu_c$ as $\nu\to\infty$, Lemma  \ref{lemma3} applies. Thus we obtain a non-empty open interval $\mathcal{I}=(\eps_c,\eps_c+\delta)$, a real-valued real-analytic function $\beta(\eps)$ defined on $\mathcal{I}$, a function $\varphi^{(\eps)}\in L^2_{\bK,\tau}$ parametrized by $\eps\in\mathcal{I}$, and a countable closed subset $\mathcal{C}\subset\mathcal{I}$ satisfying properties (i)-(iv) of Lemma \ref{lemma3}. We will prove that 
 \begin{equation}
 |\beta(\eps)|\ \le\ \overline{C}_0\ \eps\ +\ \overline{C}_1,\ \ \textrm{for all}\ \ \eps\in\mathcal{I}=(\eps_c,\eps_c+\delta).
 \label{lip-bound-eps}
 \end{equation}
 Once \eqref{lip-bound-eps} is established, we conclude that we can satisfy all assertions (a.-d.) of Lemma \ref{lemma4} for all $\eps\in\mathcal{I}\setminus\mathcal{C}$, by taking $\mu=\beta(\eps)$ and $\varphi=\varphi^{(\eps)}$.
 However this contradicts  property \eqref{defining-prop} of $\eps_c$. Thus, it suffices to prove the bound \eqref{lip-bound-eps}.
 \medskip
 
 To establish \eqref{lip-bound-eps}, we fix $\bar\eps\in\mathcal{I}\setminus\mathcal{C}=(\eps_c,\eps_c+\delta)\setminus\mathcal{C}$. For any $\eps>0$, let $\lambda_1(\eps)\le\lambda_2(\eps)\le\dots$ denote the eigenvalues (multiplicity counted) of $-\Delta+\eps V_h$ on $L^2_{\bK,\tau}$. Then, since $\beta(\bar\eps)$ is a simple eigenvalue 
  ( assertion (iv) of Lemma \ref{lemma3} ), there exists $\bar k$ such that 
  \[ \beta(\bar\eps)\ =\ \lambda_{\bar k}(\bar\eps) < \lambda_{\bar k+1}(\bar\eps),\ \ 
   {\rm and}\ \ \lambda_{\bar k-1}(\bar\eps) < \lambda_{\bar k}(\bar\eps)\ {\rm unless}\ \bar k=1 .\]
   Fix $\bar\eta>0$ such that 
   \[ \lambda_{\bar k}(\bar\eps) < \lambda_{\bar k+1}(\bar\eps) - \bar\eta,\ \ 
   {\rm and}\ \ \lambda_{\bar k-1}(\bar\eps)  < \lambda_{\bar k}(\bar\eps) - \bar\eta\ {\rm unless}\ \bar k=1 .\]\medskip
   
   From the min-max characterization of eigenvalues, we have the Lipschitz bound
   \begin{equation}
   \left| \lambda_k(\eps) - \lambda_k(\eps') \right|\ \le\ |\eps-\eps'|\cdot \max\left| V_h \right|\ =\ \overline{C}_0\cdot |\eps - \eps'|,
   \label{lipbound}
   \end{equation}
   for any $\eps,\eps'>0$ and any $k\ge1$. Hence, as $\eps$ varies in a small neighborhood of $\bar\eps$, we have 
\begin{equation}
   \left| \lambda_{\bar k}(\eps) - \lambda_{\bar k}(\bar\eps) \right|\ \le\ 
   \overline{C}_0\cdot |\eps - \bar\eps|,
   \label{lipbound-epsbar}
   \end{equation}
   and also
    \begin{equation}
    \lambda_{\bar k+1}(\eps)  >  \lambda_{\bar k}(\bar\eps) + \frac{1}{2}\bar\eta,\ \ 
   {\rm and}\ \ \lambda_{\bar k-1}(\eps)  < \lambda_{\bar k}(\bar\eps) - \frac{1}{2}\bar\eta\ {\rm unless}\ \bar k=1 .
   \label{epsbareta}
   \end{equation}
   We have taken $\bar\eps\in\mathcal{I}\setminus\mathcal{C}$. As $\eps$ varies in a small neighborhood of $\bar\eps$, we have $\eps\in\mathcal{I}\setminus{C}$, thanks to property (iii) of Lemma \ref{lemma3}. Therefore, $\beta(\eps)$ is an eigenvalue of $-\Delta+\eps V_h$ on $L^2_{\bK,\tau}$, {\it i.e.} $\beta(\eps)=\lambda_k(\eps)$ for \underline{some} $k$. We now show that this value must be $\bar k$. 
   
   Since $\beta(\eps)$ is a real-analytic function of $\eps$, and since $\beta(\bar\eps)=\lambda_{\bar k}(\bar\eps)$, we have
   \begin{equation}
   \lambda_{\bar k}(\bar\eps)-\frac{1}{2}\bar\eta\ <\ \beta(\eps)\ <\ \lambda_{\bar k}(\bar\eps)+
   \frac{1}{2}\bar\eta
\label{lam-bar-upperlower}
  \end{equation}
for all $\eps$ close enough to $\bar\eps$.  From \eqref{epsbareta} and \eqref{lam-bar-upperlower}, we have
\footnote{ For $\bar k=1$, we have $\beta(\eps) < \lambda_{2}(\eps)$.}
\begin{align}
\lambda_{\bar k-1}(\eps) < \beta(\eps) < \lambda_{\bar k+1}(\eps)
\nn\end{align}
and therefore, $\beta(\eps)=\lambda_{\bar k}(\eps)$ for all $\eps$ close enough to $\bar\eps$. Estimate \eqref{lipbound-epsbar} now shows that the real-analytic function $\beta(\eps)$ satisfies
\begin{equation}
\left|\ \frac{d\beta(\eps)}{d\eps}\ \right|\ \le\ \overline{C}_0,\ \ \textrm{for}\ \eps=\bar\eps.
\nn\end{equation}
Since $\bar\eps$ was taken to be an arbitrary point of $\mathcal{I}\setminus{C}$, and since 
$\mathcal{I}\setminus{C}$ is dense in $\mathcal{I}$ by (iii) of Lemma \ref{lemma3}, we have
\begin{equation}
\left|\ \frac{d\beta(\eps)}{d\eps}\ \right|\ \le\ \overline{C}_0,\ \ \textrm{for all}\ \eps\in\mathcal{I}.
\label{lip-I}\end{equation}
Recall that $\mathcal{I}=(\eps_c,\eps_c+\delta)$. Our desired estimate \eqref{lip-bound-eps} now follows at once from \eqref{lip-bound-c}, \eqref{lip-I} and (ii) of Lemma \ref{lemma3}. The proof of Lemma \ref{lemma4} and therefore of Theorem \ref{main-thm} is now complete.
\bigskip

 \section{Persistence of conical (Dirac) points under perturbation}\label{sec:deformed}
 { }\medskip

In the previous sections we established the existence of conical singularities, Dirac points, in the dispersion surface 
for honeycomb lattice potentials. These Dirac points are at the vertices of the Brillouin zone, $\brill_h$.
In this section we explore the structural stability question of whether such Dirac points persist under small, even and $\Lambda_h$- periodic 
perturbations of a base honeycomb lattice potential.
We prove the following \medskip
 \begin{theorem}\label{deformed}
 Let $V(\bx)$ denote a honeycomb lattice potential in the sense of Definition \ref{honeyV}. Let $W(\bx)$
 denote a real-valued, smooth, even and $\Lambda_h$- periodic function, which does not necessarily have honeycomb structure symmetry, {\it i.e.} $W(\bx)$ is not necessarily $\mathcal{R}$- invariant. 
 Consider the operator
\begin{equation}
H(\eta)=-\Delta+V(\bx)+\eta W(\bx)\ ,
\label{Heta}
\end{equation}
where $\eta$ is a real parameter.
Let $\bk=\bK_\star$ be a vertex of $\brill_h$. Assume that for $\eta=0$, the operator $H(0)$ has an  $L^2_{\bK_\star}$- eigenvalue, $\mu(\bK_\star)$, of multiplicity two, with corresponding orthonormal basis $\{\Phi_1,\Phi_2\}$ with $\Phi_1\in L^2_{\bK_\star,\tau}$ and $\Phi_2(\bx)=\overline{\Phi_1(-\bx)}$. Assume  $\lambda_\sharp$,  given in \eqref{lambda-sharp1}, is non-zero.  Then, the following hold:\medskip

 \begin{enumerate}
 \item There exist a positive number $\eta_1$ and a smooth mappings 
 \begin{align}
 &\eta\mapsto \mu^{(\eta)}=\mu(\bK_\star)+\mathcal{O}(\eta)\in\mathbb{R}\ \textrm{and}\ 
 \eta\mapsto\bK^{(\eta)}=\bK_\star+\mathcal{O}(\eta)\in\brill_h,\label{muKeta}\\
& \eta\mapsto \phi_j^{(\eta)}(\bx;\bK^{(\eta)})\ =\ \phi_j(\bx)+\mathcal{O}(\eta)\in L^2(\mathbb{R}/\Lambda_h)\label{phijeta}
 \end{align}
  defined for $|\eta|<\eta_1$, such that   $H(\eta)$ has an $L^2_{\bK^{(\eta)}}$ 
 eigenvalue, $\mu^{(\eta)}$, of geometric multiplicity two, with corresponding eigenspace spanned by
\[ \left\{\Phi_1^{(\eta)}, \Phi_2^{(\eta)}\right\}\ =\ \left\{e^{i\bK^{(\eta)}\cdot\bx}\phi_1^{(\eta)}(\bx;\bK^{(\eta)}) , e^{i\bK^{(\eta)}\cdot\bx}\phi_2^{(\eta)}(\bx;\bK^{(\eta)} )\right\}.\]
\item  The operator $H(\eta)$ has conical-type dispersion surfaces
 in a neighborhood of points $\bK_\star^{(\eta)}=\bK_\star+\mathcal{O}(\eta)$ with associated band dispersion functions,  $\mu^{(\eta)}_\pm(\bk)$, defined for $\bk$ near $\bK^{(\eta)}_\star$:
{\small{
 \begin{align}
 \mu^{(\eta)}_+(\bk) - \mu(\bK^{(\eta)}_\star) &= \eta{\bf b}^{(\eta)}\cdot(\bk-\bK_\star^{(\eta)})+\left(\mathcal{Q}^{(\eta)}(\bk-\bK_\star^{(\eta)}) \right)^{1\over2}\ \left(\ 1+E^{(\eta)}_+(\bk-\bK^{(\eta)}_\star)\ \right)\label{localcone+eta}\\
 \mu^{(\eta)}_-(\bk) - \mu(\bK^{(\eta)}_\star)\ &=\ \eta{\bf b}^{(\eta)}\cdot(\bk-\bK_\star^{(\eta)})-\left(\mathcal{Q}^{(\eta)}(\bk-\bK_\star^{(\eta)}) \right)^{1\over2}\ \left(\ 1+E^{(\eta)}_-(\bk-\bK^{(\eta)}_\star)\ \right)\ ,
\label{localcone-eta}  \end{align}
}}
where
\begin{itemize}
\item ${\bf b}^{(\eta)}\in\mathbb{R}^2$ depends smoothly on $\eta$.
\item $\mathcal{Q}^{(\eta)}(\cdot)$ is a quadratic form in $\bkappa=(\kappa_1,\kappa_2)\in\mathbb{R}^2$, depending smoothly on $\eta$ and such that 
\begin{equation}
\left( |\lambda_\sharp|^2-C|\eta| \right)(\kappa_1^2+\kappa_2^2)\
 \le\ \mathcal{Q}(\bkappa;\eta)\ \le\ \left( |\lambda_\sharp|^2+C|\eta| \right) (\kappa_1^2+\kappa_2^2)
 \label{clf11a}
 \end{equation}
 for $|\eta|\le\eta_1$ and $\bkappa=(\kappa_1,\kappa_2)\in \mathbb{R}^2$, 
  with $\eta_1$ small, and 
\item $\left| E^{(\eta)}_+(\kappa) \right|,\ \left| E^{(\eta)}_-(\kappa) \right|\ \le\ C\ \left|\bkappa\right|$ for $|\eta|\le\eta_1$ and $|\kappa|\le\tilde{\kappa}$, where $\tilde{\kappa}$ (small) and $C<\infty$ are constants.
\end{itemize}
 \end{enumerate}
 \end{theorem}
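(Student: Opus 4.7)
The plan is to repeat the Lyapunov-Schmidt reduction from the proof of Theorem~\ref{prop:2impliescone} with an additional perturbation parameter $\eta$. Writing $\Phi(\bx;\bk) = e^{i\bk\cdot\bx}\psi(\bx;\bk)$ with $\psi$ periodic, $\bk = \bK_\star + \bkappa$, and $\mu = \mu(\bK_\star) + \mu^{(1)}$, I would decompose $\psi = \alpha\phi_1 + \beta\phi_2 + \psi^{(1)}$, where $\phi_j = e^{-i\bK_\star\cdot\bx}\Phi_j$ span the two-dimensional kernel of $H_V(\bK_\star) - \mu(\bK_\star)$ on $L^2_{\rm per,\Lambda_h}$ and $\psi^{(1)}$ lies in the orthogonal complement. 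The periodic eigenvalue problem becomes $(H_V(\bK_\star) - \mu(\bK_\star))\psi + (\eta W + 2i\bkappa\cdot(\nabla+i\bK_\star) - |\bkappa|^2 - \mu^{(1)})\psi = 0$. Projecting onto $Q_\perp$ and inverting the resolvent (bounded for $(\bkappa,\eta,\mu^{(1)})$ small, exactly as in section \ref{sec:doubleimpliescone}) solves $\psi^{(1)}$ uniquely as a smooth function of the parameters, linear in $(\alpha,\beta)$. Substituting back into $Q_\parallel$ yields a $2\times 2$ bifurcation system $\mathcal{M}(\mu^{(1)},\bkappa,\eta)(\alpha,\beta)^t = 0$, where $\mathcal{M}$ is jointly smooth and $\mathcal{M}(\mu^{(1)},\bkappa,0)$ is precisely the matrix analyzed in Theorem~\ref{prop:2impliescone}.

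Next I would constrain the structure of $\mathcal{M}$ via the $\mathcal{PT}$-type antilinear involution $\mathcal{C}\Phi(\bx) = \overline{\Phi(-\bx)}$. Because $V + \eta W$ is real and even, $\mathcal{C}$ commutes with $H(\eta)$ on $L^2_\bk$ for every real $\bk$ (cf.\ Remark~\ref{Vreal-even}); moreover $\mathcal{C}\phi_1 = \phi_2$ and $\mathcal{C}\phi_2 = \phi_1$, so $\mathcal{C}$ preserves the two-dimensional subspace and acts on coefficients as $(\alpha,\beta)\mapsto(\overline{\beta},\overline{\alpha})$. Since the Lyapunov--Schmidt solution map is unique, it intertwines with $\mathcal{C}$, which forces $\mathcal{M}$ to satisfy $\sigma_1 \overline{\mathcal{M}} \sigma_1 = \mathcal{M}$. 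Combined with the Hermiticity inherited from self-adjointness of $H(\eta)$ for real $\eta$, this constrains
\begin{equation*}
\mathcal{M}(\mu^{(1)},\bkappa,\eta) = \begin{pmatrix} \mu^{(1)} - a(\bkappa,\eta) & -b(\bkappa,\eta) \\ -\overline{b(\bkappa,\eta)} & \mu^{(1)} - a(\bkappa,\eta) \end{pmatrix},
\end{equation*}
with $a$ real-valued and $b$ complex-valued, both smooth, and $a(\bkappa,0)\equiv 0$, $b(\bkappa,0) = \overline{\lambda_\sharp}(\kappa_1+i\kappa_2)$ by the leading-order computation in Proposition~\ref{prop:matrix-elements}.

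The persistent Dirac point is then located by applying the implicit function theorem to the $\mathbb{R}$-analytic equation $b(\bkappa,\eta) = 0$. At $(\bkappa,\eta) = (0,0)$, the real Jacobian of $\bkappa\mapsto b(\bkappa,0)$ is $|\lambda_\sharp|^2 \ne 0$ by hypothesis, so there is a smooth $\bkappa^*(\eta)$ with $\bkappa^*(0)=0$ and $b(\bkappa^*(\eta),\eta)=0$. Set $\bK_\star^{(\eta)} = \bK_\star + \bkappa^*(\eta)$ and $\mu^{(\eta)}(\bK_\star^{(\eta)}) = \mu(\bK_\star) + a(\bkappa^*(\eta),\eta)$; then $\mathcal{M}$ vanishes identically at these values, yielding geometric multiplicity two. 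The two eigenfunctions $\Phi_j^{(\eta)}$ are obtained from $(\alpha,\beta) = (1,0), (0,1)$ through the Lyapunov--Schmidt solution map, and they satisfy $\Phi_j^{(\eta)} = \Phi_j + \mathcal{O}(\eta)$. For the conical expansion near $\bK_\star^{(\eta)}$, write $\tilde{\bkappa} = \bkappa - \bkappa^*(\eta)$. The determinantal condition $(\mu^{(1)} - a)^2 = |b|^2$ gives $\mu^{(1)} - a(\bkappa^*(\eta),\eta) = [a(\bkappa,\eta) - a(\bkappa^*(\eta),\eta)] \pm |b(\bkappa,\eta)|$. The bracket Taylor-expands as $\eta\,{\bf b}^{(\eta)}\cdot\tilde{\bkappa} + \mathcal{O}(|\tilde{\bkappa}|^2)$ with ${\bf b}^{(\eta)}$ smooth (since $a(\cdot,0)\equiv 0$ implies $\nabla_\bkappa a = \mathcal{O}(\eta)$), while $|b(\bkappa,\eta)|^2 = \mathcal{Q}^{(\eta)}(\tilde{\bkappa})(1 + \mathcal{O}(|\tilde{\bkappa}|))$ for a positive-definite quadratic form $\mathcal{Q}^{(\eta)}$ that reduces to $|\lambda_\sharp|^2(\kappa_1^2+\kappa_2^2)$ at $\eta = 0$ and thus obeys \eqref{clf11a} for small $|\eta|$. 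Absorbing the quadratic residues into Lipschitz remainders $E^{(\eta)}_\pm$ yields \eqref{localcone+eta}--\eqref{localcone-eta}.

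The main obstacle—and the conceptual heart of the theorem—is the preservation of the constraint $K_{11} = K_{22}$ in the reduced matrix. This is precisely the vanishing of the ``mass term'' whose presence would open a gap at the would-be Dirac point. It relies crucially on the hypothesis that $W$ is even: otherwise $\mathcal{PT}$-symmetry breaks, the reduced matrix acquires a generic diagonal difference of size $\eta(W_{11}-W_{22})$, and the conical intersection is destroyed, producing smooth gapped dispersion surfaces (cf.\ Remark~\ref{instability}). Verifying the $\mathcal{PT}$-constraint rigorously requires checking that the Lyapunov-Schmidt solution $\psi^{(1)}$ intertwines with $\mathcal{C}$, which follows from the uniqueness of the Lyapunov-Schmidt construction and the commutation of $\mathcal{C}$ with each of $H_V(\bK_\star)$, $W$, $2i\bkappa\cdot(\nabla+i\bK_\star)$, and $Q_\perp$; none of these steps is difficult individually, but the fact that they assemble into the specific matrix form above is what makes the cone rigid.
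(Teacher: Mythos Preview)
Your proof is correct and follows essentially the same Lyapunov--Schmidt/implicit-function-theorem scheme as the paper. The paper organizes the computation slightly differently: it first carries out a reduction in the variables $(\mu^{(1)},\bK^{1},\eta)$ alone (writing $\bK^{(\eta)}=\bK_\star+\eta\bK^{1}$), solves the system $\mu^{(1)}=A$, $B=0$ to locate the double eigenvalue (Proposition~\ref{deformed-Dirac}), and then runs a \emph{second} reduction centered at $\bK^{(\eta)}$ to extract the cone. You instead perform a single reduction jointly in $(\bkappa,\eta)$ and read both conclusions off one matrix $\mathcal{M}(\mu^{(1)},\bkappa,\eta)$; this is a cleaner packaging of the same argument.

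The substantive difference is how you obtain the equal-diagonal constraint. The paper verifies by direct computation (Claim~1) that $\langle\Phi_1,W\Phi_1\rangle=\langle\Phi_2,W\Phi_2\rangle$ and that the higher-order corrections satisfy $a_{11}=a_{22}$. Your $\mathcal{PT}$-symmetry argument via the antilinear involution $\mathcal{C}\Phi(\bx)=\overline{\Phi(-\bx)}$ yields $\sigma_1\overline{\mathcal{M}}\sigma_1=\mathcal{M}$ in one stroke, which is more conceptual and delivers the constraint at \emph{all} orders in $(\bkappa,\eta)$ without term-by-term verification. That is a genuine improvement in presentation.

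One imprecision: you assert $a(\bkappa,0)\equiv 0$, but in fact $a(\bkappa,0)=|\bkappa|^2+\mathcal{O}(|\bkappa|^3)$; the $-\bkappa\cdot\bkappa$ term and the $\langle\Phi_1,2i\bkappa\cdot\nabla C^{(1)}\rangle$ correction in the $\eta=0$ diagonal of Theorem~\ref{prop:2impliescone} do not vanish. What is true, and sufficient for your argument, is $\nabla_{\bkappa}a(0,0)=0$, so that $\nabla_{\bkappa}a(\bkappa^*(\eta),\eta)=\mathcal{O}(|\bkappa^*(\eta)|)+\mathcal{O}(\eta)=\mathcal{O}(\eta)$, and the linear tilt $\eta\,{\bf b}^{(\eta)}\cdot\tilde\bkappa$ follows as claimed.
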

 \medskip
 
\noindent Remark \ref{instability} below shows that Dirac points are unstable to typical perturbations,
$W\in C^\infty(\mathbb{R}^2/\Lambda_h)$, which are not even.\medskip

 \begin{remark}
 For $\eta=0$, Theorem \ref{deformed} reduces to 
  Theorem \ref{main-thm}, which covers the case of the undeformed honeycomb lattice potential.
In particular, if the perturbation $W$ is itself a honeycomb lattice potential,  then  
$\bK_\star^{(\eta)}\equiv \bK_\star$.
 \end{remark}

 \bigskip
 
 \noindent We now prove Theorem \ref{deformed}. As earlier, without loss of generality, we assume $\bK_\star=\bK$.   The family of Floquet-Bloch eigenvalue problems, parametrized by $\bk\in\brill_h$, is given by:
\begin{align}
H(\bk;\eta)\ \phi^{(\eta)}(\bx;\bk)\ =\ \mu^{(\eta)}(\bk)\ \phi^{(\eta)}(\bx;\bk),\ \textrm{where}\nn\\
H(\bk;\eta)\ =\ -\left(\nabla+i\bk\right)^2\ +\ V(\bx)\ +\ \eta\ W(\bx)\ . \label{Heta-evp}
\end{align}
By hypothesis, we have:
\begin{align}
&\textrm{For $\eta=0$,  $H(\bK;0)=-\left(\nabla+i\bK\right)^2+V(\bx)$ has a degenerate eigenvalue $\mu(\bK)$,}\nn\\ 
&\textrm{ of multiplicity two with  $2$-dimensional $L^2_{\bK}$- eigenspace: 
  ${\rm span}\{\phi_1(\bx) , \phi_2(\bx)\}$.}
  \end{align}
  
  Introduce the projection operators
   \begin{equation}
   Q_\parallel f\ =\ \sum_{j=1}^2\langle \phi_j,f\rangle\ \phi_j(\bx)
   \ \textrm{and}\ Q_\perp\ =\ I-Q_\parallel\ .
   \label{projections}
   \end{equation}
  
  We seek solutions of the Floquet-Bloch eigenvalue problem in the form:
  \begin{align}
  & \phi^{(\eta)}(\bx;\bK^{\eta})\ =\ \sum_{j=1}^2\alpha_j\phi_j(\bx)\ +\ \eta\ \phi^{(1,\eta)}(\bx),\ \ Q_\parallel  \phi^{(1,\eta)}=0\ ,\label{peta-expand}\\
 & \bK^{(\eta)}\ =\ \bK\ +\ \eta\bK^{1,\eta}\ , \label{K-expand}\\
 & \mu^{(\eta)}\ =\ \mu^{(0)}\ +\ \eta \mu^{(1,\eta)}\ , \label{mueta-expand}\\
&  {\rm dim}\ \textrm {null space}\left(\ H(\eta) - \mu^{(\eta)} I\ \right)\ = 2\ .\label{double-eig}
  \end{align}
  
  Substituting these expansions into \eqref{Heta-evp} yields:
  \begin{align}
&  \left(\ H(\bK)-\mu^{(0)} I \ \right)\ \phi^{(1,\eta)}\ +\ 
  \eta\ \left(\ -2i\bK^{1,\eta}\cdot\nabla_\bK+W-\mu^{(1,\eta)}-\eta |\bK^{1,\eta}|^2 \right) \phi^{(1,\eta)}\
  \nn\\
  &\ \ \ \ \ =\ -\left(\ -2i\bK^{1,\eta}\cdot\nabla_\bK+W-\mu^{(1,\eta)} - \eta |\bK^{1,\eta}|^2 \right)
   \left(\ \sum_{j=1}^2\alpha_j\phi_j\ \right)\ .\label{p1eta-eqn}
   \end{align}

We take the inner product of \eqref{p1eta-eqn} with $\phi_1$ and $\phi_2$.  This yields the system:

{\small{   
   \begin{align}
&  \left[-\left(\mu^{(1,\eta)}+\eta|\bK^{1,\eta}|^2\right) +\langle \phi_1,W \phi_1\rangle\right]\alpha_1
 \ +\ \left[\ \overline{ \lambda_\sharp } \left(K_1^{1,\eta}+i K_2^{1,\eta}\right)
 +\langle \phi_1,W \phi_2 \rangle \right] \alpha_2
  \nn\\
 &+\ \ \ \ \ \ \ \ \ \ \eta\ \left\langle \phi_1, \left(-2i\bK^{(1,\eta)}\cdot\nabla_\bK+W\right) Q_\perp \phi^{(1,\eta)} \right\rangle\ =\ 0\ ,\label{ortho-p1}\\
 &\nn\\
 &\overline{\left[\ \overline{ \lambda_\sharp } \left( K_1^{1,\eta}+i K_2^{1,\eta}\right)
 +\langle \phi_1,W \phi_2\rangle \right]}\alpha_1 + \left[-\left(\mu^{(1,\eta)}+\eta|\bK^{1,\eta}|^2\right) +\langle \phi_2,W\phi_2\rangle\right]\alpha_2\nn\\
  &+\ \ \ \ \ \ \ \ \ \ \eta\ \left\langle \phi_2, \left(-2i\bK^{(1,\eta)}\cdot\nabla_\bK+W\right) Q_\perp \phi^{(1,\eta)} \right\rangle\ =\ 0\ .\label{ortho-p2} 
   \end{align}
   }}
   In obtaining the system \eqref{ortho-p1}-\eqref{ortho-p2} we have used
   \begin{enumerate}
   \item 
   \begin{equation}
   \nabla_\bK\ \equiv\ \left(\nabla_\bx+i\bK\right)\ =\ e^{-i\bK\cdot\bx}\nabla_\bx e^{i\bK\cdot\bx},\label{nablaK}\\
   \end{equation} 
 \item $W$ is real-valued and thus $-2i\bK^{1,\eta}\cdot\nabla+W$ is self-adjoint,\smallskip
 
%
\item   
{\small{
$-2i\left\langle \phi_1,\ \bK^{1,\eta}\cdot\nabla_\bK\ \phi_2\right\rangle\ =\ 
-2i\left\langle \Phi_1 ,\ \bK^{1,\eta}\cdot\nabla_\bx\ \Phi_2\right\rangle\ =\
 \overline{\lambda_\sharp}\left(\bK_1^{(\eta,1)}+i\bK_1^{(\eta,1)}\right),$ 
 }}  where $\Phi_j=e^{i\bK\cdot\bx}\phi_j$, by  Proposition \ref{prop:matrix-elements},\smallskip

  \item $Q_\parallel \phi^{(1,\eta)}=0$ and $Q_\perp \phi^{(1,\eta)}=\phi^{(1,\eta)}$\ .
 \end{enumerate}
  \bigskip
  
\noindent    An equation 
 for $\phi^{(1,\eta)}$ is derived by applying $Q_\perp$ to \eqref{p1eta-eqn} and  using
 \[Q_\perp \phi_j=0,\ j=1,2\ \textrm{and}\ Q_\perp \phi^{(1,\eta)} = \phi^{(1,\eta)}.\]
   We obtain
  {\small{
   \begin{align}
&  \left(\ H(\bK)-\mu^{(0)} I \ \right)\ \phi^{(1,\eta)}\ +\ 
  \eta\ Q_\perp \left(\ -2i\bK^{1,\eta}\cdot\nabla_\bK+W-\mu^{(1,\eta)}-\eta |\bK^{1,\eta}|^2 \right) Q_\perp\ \phi^{(1,\eta)}\
  \nn\\
  &\ \ \ \ \ =\ -Q_\perp\ \left(\ -2i\bK^{1,\eta}\cdot\nabla_\bK+W \right)\ 
   \left(\ \sum_{j=1}^2\alpha_j\phi_j\ \right) \ .\label{Qp1eta-eqn}
   \end{align}
   }}
   Introduce the projections
   \begin{align}
  & \tilde{Q}_\parallel F = \sum_{j=1}^2 \langle\Phi_j,F\rangle\ \Phi_j,\ \ \tilde{Q}_\perp = I - \tilde{Q}_\parallel
  \label{tprojections}\\
 & \textrm{and note that these projections satisfy the commutation relation}\nn\\
 & e^{i\bK\cdot\bx}\ Q\ =\ \tilde{Q}\ e^{i\bK\cdot\bx}\ .
  \label{tQ-commute}
  \end{align}
   Using \eqref{nablaK} we rewrite \eqref{Qp1eta-eqn} as an equation for
   \begin{equation}
   \Phi^{(1,\eta)}\ =\ e^{i\bK\cdot\bx} \phi^{(1,\eta)}\ :
   \label{P1def}
   \end{equation}
   \begin{align}
   \mathcal{L}\left(\mu,\bK^{1,\eta},\eta\right) \Phi^{(1,\eta)}\ =\ 
   -\sum_{j=1}^2\  \alpha_j\ \tilde{Q}_\perp\ \left(\ -2i\bK^{1,\eta}\cdot\nabla_\bx+W \right)
  \Phi_j,\ {\rm where}\label{LPeqn}\\
   \mathcal{L}\left(\mu,\bK^{1,\eta},\eta\right)\ \equiv\ -\Delta+V-\mu^{(0)}+\eta\ \tilde{Q}_\perp\ \left(-2i\bK^{1,\eta}\cdot\nabla_\bx + \eta^2|\bK^{1,\eta}|^2\right)\ \tilde{Q}_\perp
\label{Ldef}   \end{align}
  Note that $ \mathcal{L}$ is self-adjoint on $\tilde{Q}_\perp L^2_\bK$. For $\eta$ sufficiently small, we have that $\mathcal{L}:\tilde{Q}_\perp H^2_\bK\to \tilde{Q}_\perp L^2_\bK$ is invertible. Solving \eqref{LPeqn} yields
  \begin{equation}
  \Phi^{(1,\eta)}\ =\ -\sum_{j=1}^2\  \alpha_j\ \mathcal{L}\left(\mu,\bK^{1,\eta},\eta\right)^{-1}
   \tilde{Q}_\perp\ \left(\ -2i\bK^{1,\eta}\cdot\nabla_\bx+W \right)\  \Phi_j\ .\label{P1eta-solve}
  \end{equation}
  Using \eqref{P1def} to express $\phi^{(1,\eta)}$ in terms of $\Phi^{(1,\eta)}$ and substituting  
  \eqref{P1eta-solve} into \eqref{ortho-p1}-\eqref{ortho-p2}, we obtain the following linear homogeneous system for $\alpha_j,\ j=1,2$:
  \begin{equation}
  \mathcal{M}\left(\mu^{(1,\eta)},\bK^{1,\eta},\eta\right) \left(\begin{array}{c} \alpha_1 \\ \alpha_2\end{array}\right)\ =\ \left(\begin{array}{c} 0 \\ 0\end{array}\right)\  ,
  \end{equation}
  where
  {\footnotesize{
 \begin{align} &\mathcal{M}\left(\mu^{(1,\eta)},\bK^{1,\eta},\eta\right)\ \equiv\nn\\
  &\ \left(\begin{array}{cc}
 -\left(\mu^{(1,\eta)}+\eta|\bK^{1,\eta}|^2\right) +\langle\Phi_1,W \Phi_1\rangle + \eta\ a_{11} 
 & \quad \overline{ \lambda_\sharp } \left( K_1^{1,\eta}+i K_2^{1,\eta}\right)
 +\langle \Phi_1,W \Phi_2 \rangle + \eta\ b\\
 {\quad } & {\quad }\\
 \overline{ \overline{ \lambda_\sharp } \left( K_1^{1,\eta}+i K_2^{1,\eta}\right)
 +\langle \Phi_1,W \Phi_2 \rangle + \eta\ b} & \quad -\left(\mu^{(1,\eta)}+\eta|\bK^{1,\eta}|^2\right) +\langle \Phi_1,W \Phi_1\rangle + \eta\ a_{22}
  \end{array} \right).\nn\\
  &\ \label{Mdef1}
 \end{align}
 }}
$a_{11}, a_{22}$ and $b$ are  functions of $\mu^{(1,\eta)}, \bK^{1,\eta}$ and $\eta$ and are given by the expressions:
 \begin{align}
 a_{ll}&=\ \left\langle \tilde{Q}_\perp\left(\ -2i\bK^{1,\eta}\cdot\nabla_\bx+W\ \right)\Phi_l\ ,\ \mathcal{L}^{-1}\ 
 \tilde{Q}_\perp \left(\ -2i\bK^{1,\eta}\cdot\nabla_\bx+W\ \right)\Phi_l\right\rangle\ ,\ l=1,2
  \label{adef}\\
 b&=\ \left\langle \tilde{Q}_\perp\left(\ -2i\bK^{1,\eta}\cdot\nabla_\bx+W\ \right)\Phi_1\ ,\ \mathcal{L}^{-1}\ 
 \tilde{Q}_\perp \left(\ -2i\bK^{1,\eta}\cdot\nabla_\bx+W\ \right)\Phi_2\right\rangle
 \label{bdef}
 \end{align}
 where $\mathcal{L}= \mathcal{L}\left(\mu,\bK^{1},\eta\right) $ is defined in \eqref{Ldef}. Note that $a_{11}$ and $a_{22}$ are real. The matrix $\mathcal{M}\left(\mu^{(1)},\bK^{1},\eta\right)$ has the structure:
 
 \begin{align}
&\mathcal{M}\left(\mu^{(1)},\bK^{1},\eta\right)\ =\nn\\
&  \left(\begin{array}{cc} 
 -\mu^{(1)}+A_{11}\left(\mu^{(1)},\bK^{1},\eta\right) & B\left(\mu^{(1)},\bK^{1},\eta\right)\\ 
 {\quad} & {\quad}\\
 \overline{B\left(\mu^{(1)},\bK^{1},\eta\right)} & -\mu^{(1)}+A_{22}\left(\mu^{(1)},\bK^{1},\eta\right)
\end{array} \right)\label{Mstructure}
 \end{align}
 where $A_{ll}$ and $B$ are smooth functions of $\left(\mu^{(1)},\bK^{1,\eta},\eta\right)$
, which can be read off from  \eqref{Mdef1} and \eqref{adef}-\eqref{bdef}:

\begin{align}
A_{ll}\ &=\ \langle\Phi_l,W\Phi_l\rangle-\eta|\bK^{1}|^2+\eta\ a_{ll},\ \ l=1,2\label{Adef}\\
B\ &=\ \overline{\lambda_\sharp}\ (K_1^{1}+i K_2^{1})\ +\ 
\langle\Phi_1,W\Phi_2\rangle + \eta\ b\ .\label{Bdef}
\end{align}

A consequence of the above discussion is
\begin{proposition}\label{mu1eta-peig}
\begin{enumerate}
\item The pair $(\mu^{(\eta)},\phi^{(\eta)})$ is an $L^2_{\bK^{(\eta)}}$- eigenpair of 
$H(\eta)=-\Delta+V+\eta W$, where $\eta$ is real and $\bK^{(\eta)}\in\mathbb{R}^2$, $\mu^{(\eta)}$, $\phi^{(\eta)}\in L^2_{\bK^{(\eta)}}$ are defined in
 \eqref{peta-expand}-\eqref{mueta-expand}, if and only if
 \begin{align}
 &\det\mathcal{M}\left(\mu^{(1)},\bK^{1},\eta\right)=0\ .
 \label{detmu1}\end{align}
 \item By self-adjointness, for $\eta\in\mathbb{R}$ and $\bK^{1}\in\mathbb{R}^2$, if $\mu^{(1)}$ is a solution of \eqref{detmu1} then $\mu^{(1)}$ is real.
 \item $\mu^{(\eta)}$ is a geometric multiplicity two $L^2_{\bK^{(\eta)}}$- eigenvalue of $H(\eta)$ if and only if the triple
 $\left(\mu^{(1)}, \bK^{1}, \eta\right)$ is such that the $2\times2$ Hermitian matrix, $\mathcal{M}\left(\mu^{(1)},\bK^{1},\eta\right)$ has zero as a double eigenvalue, {\it i.e.} $\mathcal{M}\left(\mu^{(1)},\bK^{1},\eta\right)$ is the zero matrix.
 \end{enumerate}
\end{proposition}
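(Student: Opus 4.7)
The three assertions are essentially a bookkeeping consolidation of the Lyapunov--Schmidt reduction already carried out in \eqref{peta-expand}--\eqref{Mstructure}. My plan is to verify that the reduction is genuinely an equivalence (not merely one implication), to use self-adjointness for the reality statement, and to translate the geometric multiplicity condition into a linear-algebra condition on the $2\times 2$ matrix $\mathcal{M}$.

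For part (1), the plan is to argue both implications. Assume first that $(\mu^{(\eta)},\phi^{(\eta)})$ is an $L^2_{\bK^{(\eta)}}$-eigenpair. Writing $\phi^{(\eta)}$ as in \eqref{peta-expand} (which is just the orthogonal decomposition $\phi^{(\eta)} = Q_\parallel \phi^{(\eta)} + Q_\perp \phi^{(\eta)}$, with $Q_\parallel \phi^{(\eta)} = \alpha_1\phi_1+\alpha_2\phi_2$ and $\eta \phi^{(1,\eta)}= Q_\perp\phi^{(\eta)}$), the eigenvalue equation yields \eqref{p1eta-eqn}. Projecting via $\tilde Q_\perp$ after the gauge change $\Phi^{(1,\eta)} = e^{i\bK\cdot\bx}\phi^{(1,\eta)}$ produces \eqref{LPeqn}; since $H(\bK)-\mu^{(0)}I$ has trivial kernel on $\tilde Q_\perp L^2_{\bK}$ and is self-adjoint there, it has a bounded inverse $\tilde Q_\perp H^2_\bK \to \tilde Q_\perp L^2_\bK$. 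A standard Neumann-series argument shows that $\mathcal{L}(\mu,\bK^{1,\eta},\eta)$ remains invertible for $|\eta|(1+|\bK^{1,\eta}|^2)$ small, so that $\Phi^{(1,\eta)}$ is uniquely determined by \eqref{P1eta-solve}. Substituting into the two scalar conditions \eqref{ortho-p1}--\eqref{ortho-p2} reproduces $\mathcal{M}(\mu^{(1)},\bK^1,\eta)(\alpha_1,\alpha_2)^t=0$, so a nontrivial $\phi^{(\eta)}$ forces $(\alpha_1,\alpha_2)\neq 0$ (otherwise $\phi^{(1,\eta)}=0$ by \eqref{P1eta-solve} and hence $\phi^{(\eta)}\equiv 0$), yielding $\det\mathcal{M}=0$. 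Conversely, given $\det\mathcal{M}=0$, choose any nonzero kernel vector $(\alpha_1,\alpha_2)$, define $\Phi^{(1,\eta)}$ by \eqref{P1eta-solve}, and reverse the computation to verify $H(\eta)\phi^{(\eta)}=\mu^{(\eta)}\phi^{(\eta)}$ on $L^2_{\bK^{(\eta)}}$.

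For part (2), I would invoke self-adjointness: for $\eta\in\mathbb{R}$ and $\bK^{(\eta)}\in\mathbb{R}^2$, the operator $H(\eta)$ acts self-adjointly on $L^2_{\bK^{(\eta)}}$, so every eigenvalue $\mu^{(\eta)}$ is real. Since $\mu^{(0)}=\mu(\bK)$ is real, the expansion $\mu^{(\eta)} = \mu^{(0)}+\eta\mu^{(1,\eta)}$ forces $\mu^{(1,\eta)}\in\mathbb{R}$. As a cross-check, one may observe directly from \eqref{Mstructure} together with \eqref{Adef}--\eqref{Bdef} that for real $\eta$ and real $\bK^{1}$ the matrix $\mathcal{M}+\mu^{(1)}I$ is Hermitian, so $\det\mathcal{M}=0$ is a real polynomial equation in $\mu^{(1)}$ with only real roots.

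For part (3), the plan is to convert the geometric multiplicity into the dimension of $\ker\mathcal{M}$. By the equivalence established in (1), every eigenfunction of $H(\eta)$ with eigenvalue $\mu^{(\eta)}$ is of the form $\phi^{(\eta)}=\alpha_1\phi_1+\alpha_2\phi_2+\eta\phi^{(1,\eta)}(\alpha_1,\alpha_2)$, where $\phi^{(1,\eta)}(\alpha_1,\alpha_2)$ depends \emph{linearly} on $(\alpha_1,\alpha_2)$ via \eqref{P1eta-solve} and is uniquely determined by $(\alpha_1,\alpha_2)$ in the kernel of $\mathcal{M}$. The map $(\alpha_1,\alpha_2)\mapsto \phi^{(\eta)}$ is linear and injective (since $Q_\parallel \phi^{(\eta)}=\alpha_1\phi_1+\alpha_2\phi_2$ recovers $(\alpha_1,\alpha_2)$ from $\phi^{(\eta)}$), so the geometric multiplicity equals $\dim\ker\mathcal{M}$. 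Multiplicity two is therefore equivalent to $\ker\mathcal{M}=\mathbb{C}^2$, i.e.\ $\mathcal{M}\equiv 0$. I expect the main obstacle to be a clean justification of the invertibility of $\mathcal{L}$ uniformly for $(\mu^{(1,\eta)},\bK^{1,\eta},\eta)$ in the relevant small region; everything else is linear algebra once the Lyapunov--Schmidt reduction is set up.
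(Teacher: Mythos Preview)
Your proposal is correct and follows essentially the same approach as the paper: the proposition is stated there as an immediate consequence of the Lyapunov--Schmidt reduction \eqref{peta-expand}--\eqref{Mstructure}, and you have simply spelled out the two directions of the equivalence, the self-adjointness argument for reality, and the linear-isomorphism observation underlying the multiplicity statement that are implicit in that derivation.
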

\medskip

Now, up to this point we have not used the hypothesis that $W(\bx)$ is an even function (inversion symmetry). We now impose this condition on $W$. For the case where $W$ is not even, see Remark \ref{instability} at this end of this section. \medskip

\noindent{\bf Claim 1:}\ 
$ W(\bx)=W(-\bx)\ \implies\ \langle \Phi_1,W\Phi_{1}\rangle\ =\ \langle \Phi_2,W\Phi_{2}\rangle\ \ {\rm and}\ \ a_{11}\ =\ a_{22}$. By \eqref{Adef}, it follows that $A_{11} = A_{22}$.\\
\smallskip

\noindent{\it Proof of Claim 1:}  Recall, as in Theorem \ref{main-thm}, that $\Phi_2(\bx;\bK)=\overline{\Phi_1(-\bx;\bK)}$ and thus
\[\langle \Phi_2,W\Phi_{2}\rangle=\int \Phi_1(-\bx) W(\bx) \overline{\Phi_1(-\bx)} d\bx =\int |\Phi_1(\bx)|^2 W(-\bx) d\bx  =
\langle \Phi_1,W\Phi_{1}\rangle.\]
Furthermore, one checks easily that $a_{11}=a_{22}$.
 
\medskip

\noindent In this case, we set 
\begin{equation}
a_{11}=a_{22}\equiv a= a\left(\mu^{(1)},\bK^1,\eta\right)\ \ {\rm and}\ \ 
A_{11}=A_{22}\equiv A= A\left(\mu^{(1)},\bK^1,\eta\right). \label{aA}
\end{equation}
Here $a=a_{11}=a_{22}$ and $b$ are functions of $\mu^{(1)}, \bK^1, \eta$, displayed in \eqref{adef} and \eqref{bdef}.\medskip
\bigskip

 By Proposition \ref{mu1eta-peig} and the above Claim 1, if 
 $W(\bx)=W(-\bx)$,  then we obtain a double eigenvalue if and only if
  \begin{equation}
\mu^{(1)}-A\left(\mu^{(1)},\bK^1,\eta\right)=0\ \ {\rm and}\  \ B\left(\mu^{(1)},\bK^1,\eta\right)=0.
\label{decoupledAB-a}
\end{equation}

 By analyzing the solution set of \eqref{decoupledAB-a} for small $\eta$, we shall prove the following:\medskip
 
 \begin{proposition}\label{deformed-Dirac}
  For each real $\eta$ in some small neighborhood of zero, there exists  a unique
  $\bK^1=\bK^{1,\eta}=\left(K^{1,\eta}_1,K^{1,\eta}_2\right)$ and $\mu^{(1)}=\mu^{(1,\eta)}$ such that
  $\mu^{(\eta)}=\mu^{(0)}+\eta\mu^{(1,\eta)}$ (see \eqref{mueta-expand}) is a geometric multiplicity two $L^2(\mathbb{R}^2/\Lambda_h)$- eigenvalue of $H(\eta;\bK^{1,\eta})$.
  \end{proposition}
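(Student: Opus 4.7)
\noindent\textbf{Proof plan for Proposition \ref{deformed-Dirac}.}

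The plan is to recognize system \eqref{decoupledAB-a} as three real equations in the three real unknowns $(\mu^{(1)}, K_1^1, K_2^1)$ (one real equation $\mu^{(1)} - A = 0$ together with the real and imaginary parts of $B = 0$), parametrized by $\eta$, and to solve it via the implicit function theorem around the explicit $\eta = 0$ solution. Everything in sight is smooth: $A$ and $B$ are built from matrix elements of $W$ and $-2i\bK^1 \cdot \nabla$ on $\Phi_1, \Phi_2$ plus the correction terms $\eta a, \eta b$ from \eqref{adef}--\eqref{bdef}, which themselves are smooth functions of $(\mu^{(1)}, \bK^1, \eta)$ via the bounded inverse $\mathcal{L}(\mu,\bK^1,\eta)^{-1}$ established in the paragraph following \eqref{Ldef}.

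First I would solve the $\eta = 0$ problem by inspection. At $\eta = 0$, \eqref{Adef}--\eqref{Bdef} give $A(\mu^{(1)},\bK^1,0) = \langle\Phi_1, W\Phi_1\rangle$ (independent of $\mu^{(1)}$ and $\bK^1$) and $B(\mu^{(1)},\bK^1,0) = \overline{\lambda_\sharp}(K_1^1 + i K_2^1) + \langle\Phi_1, W\Phi_2\rangle$. Hence \eqref{decoupledAB-a} has the unique solution
\begin{equation*}
\mu^{(1,0)} \;=\; \langle \Phi_1, W\Phi_1\rangle, \qquad K_1^{1,0} + i K_2^{1,0} \;=\; -\,\frac{\langle \Phi_1, W\Phi_2\rangle}{\overline{\lambda_\sharp}},
\end{equation*}
where we used the standing hypothesis $\lambda_\sharp \neq 0$. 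Note $\mu^{(1,0)} \in \mathbb{R}$ since $W$ is real-valued, and $\bK^{1,0} \in \mathbb{R}^2$ by construction.

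Next I would verify the invertibility of the Jacobian at $(\mu^{(1,0)}, \bK^{1,0}, 0)$. Writing $G_1 = \mu^{(1)} - A$, $G_2 = \Re B$, $G_3 = \Im B$, the partial derivatives at $\eta = 0$ are $\partial_{\mu^{(1)}} G_1 = 1$, $\partial_{K_j^1} G_1 = 0$ for $j = 1,2$, and
\begin{equation*}
\begin{pmatrix} \partial_{K_1^1} G_2 & \partial_{K_2^1} G_2 \\ \partial_{K_1^1} G_3 & \partial_{K_2^1} G_3 \end{pmatrix}
\;=\;
\begin{pmatrix} \Re \overline{\lambda_\sharp} & -\Im \overline{\lambda_\sharp} \\ \Im \overline{\lambda_\sharp} & \Re \overline{\lambda_\sharp} \end{pmatrix},
\end{equation*}
whose determinant is $|\lambda_\sharp|^2 \neq 0$. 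The full $3 \times 3$ Jacobian is therefore block triangular with determinant $|\lambda_\sharp|^2 \neq 0$.

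By the implicit function theorem there is a unique smooth real curve $\eta \mapsto (\mu^{(1,\eta)}, \bK^{1,\eta})$ defined for $|\eta|$ small, solving \eqref{decoupledAB-a}. By Proposition \ref{mu1eta-peig}(3), this is precisely the condition that $\mu^{(\eta)} = \mu^{(0)} + \eta\mu^{(1,\eta)}$ be a geometric multiplicity-two $L^2_{\bK^{(\eta)}}$-eigenvalue of $H(\eta)$, with $\bK^{(\eta)} = \bK + \eta\bK^{1,\eta}$. There is essentially no obstacle here beyond bookkeeping; the one substantive input is that the Jacobian block for $B = 0$ is the real matrix representation of multiplication by $\overline{\lambda_\sharp}$ on $\mathbb{C}$, which is invertible precisely by the non-degeneracy hypothesis $\lambda_\sharp \neq 0$ inherited from Theorem \ref{main-thm}.
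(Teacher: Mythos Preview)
Your proposal is correct and follows essentially the same approach as the paper: solve \eqref{decoupledAB-a} at $\eta=0$ explicitly, then apply the implicit function theorem using $\lambda_\sharp\neq0$. The only cosmetic difference is that the paper first divides the equation $B=0$ through by $\overline{\lambda_\sharp}$ so that the Jacobian at $\eta=0$ becomes the identity, whereas you compute the Jacobian directly and obtain the block-triangular matrix with determinant $|\lambda_\sharp|^2$; these are equivalent.
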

\medskip

\noindent{\it Proof of Proposition \ref{deformed-Dirac}:}\  Consider  \eqref{decoupledAB-a}
 for $\mu^{(1,\eta)}$ and $\bK^{1,\eta}$ for $\eta=0$. We have
\begin{align}\mu^{(1,0)}-A\left(\mu^{(1,0)},\bK^1,0\right)=0\ &\leftrightarrow\ \mu^{(1,0)}\ =\ \langle\Phi_1,W \Phi_1\rangle\ \ \ \ \ \ {\rm and}
 \label{mu1-1st}\\
 B\left(\mu^{(1,0)},\bK^1,0\right) = 0\ &\leftrightarrow\ 
 \overline{ \lambda_\sharp } \left(K_1^{1,0}+i K_2^{1,0}\right)
 \ =\ -\langle \Phi_1,W \Phi_2 \rangle\ .
 \label{K1K2eta0}
 \end{align}
Equation \eqref{K1K2eta0} is equivalent to the two equations:
  \begin{align}
 K_1^{1,0} &=\ -\Re\left(\ (\ \overline{ \lambda_\sharp }\ )^{-1}\ \langle \Phi_1,W \Phi_2 \rangle\ \right),\ \  
 K_2^{1,0} =\ -\Im\left(\ (\ \overline{ \lambda_\sharp }\ )^{-1}\ \langle \Phi_1,W \Phi_2 \rangle\ \right)
 \label{K12-1st}\end{align}

We next consider the case $\eta\ne0$, real and sufficiently small.\medskip

\noindent{\bf Claim 2:} $A$ and $B$, defined via \eqref{Mdef1}-\eqref{Mstructure},  are smooth functions of  $(\mu^{(1)},\eta,  \bK_1^{1,0})$. Moreover, there exist constants $c_1>0$, $d_1>0$ and $\eta_0>0$ such that for all $|\bK^1-\bK^{1,0}|<c_1$,
 $|\mu^{(1)}-\mu^{(1,0)}|<d_1$ and $|\eta|<\eta_0$, we have
 \begin{enumerate}
 \item $A = \langle\Phi_1,W\Phi_1\rangle\ +\ \eta\ f_A\left(\mu^{(1)},\bK^1,\eta\right)$\smallskip
 
 \item $B = \overline{\lambda_\sharp}\ (K_1^1 + iK_2^1 )\ +\ \langle\Phi_1,W\Phi_2\rangle\ +\ 
 \eta\ f_B\left(\mu^{(1)},\bK^1,\eta\right)$\smallskip
 
 \item $f_A,\ f_B=\mathcal{O}(1)$\ .\smallskip
 
  \item $\partial_\mu B=\mathcal{O}(\eta),\ \ |\partial_\mu A|\le1/2$.\
 \end{enumerate}
We leave the verification to the reader.
\medskip

%
%
%
 
 An immediate consequence of Claims 1 and 2 is:\medskip
 
 \noindent{\bf Claim 3:} Assume $\lambda_\sharp\ne0$. Then, for $|\eta|<\eta_0$, $|\bK^1-\bK^{1,0}|<c_1$ and  $|\mu^{(1)}-\mu^{(1,0)}|<d_1$ equations \eqref{decoupledAB-a}  are equivalent to the system:
   \begin{align}
 K_1^{(1)}\ +\ i K_2^{(1)}\ &=\ \left(\overline{\lambda_\sharp}\right)^{-1}\langle\Phi_1,W\Phi_1\rangle
   + \eta F(\mu^{(1)},\eta,  K_1^{(1)}, K_2^{(1)}),\label{K1K2}\\
   \mu^{(1)}\ &=\ \langle\Phi_1,W\Phi_1\rangle\ +\ \eta f_A\left(\mu^{(1)},\bK^1,\eta\right)\ 
   .\label{mu1a}
   \end{align}
         \medskip
     
By the above, we have\medskip

\noindent{\bf Claim 3:}\      $\mu^{(\eta)}=\mu^{(0)}+\eta \mu^{(1,\eta)}$ is an $L^2(\mathbb{R}^2/\Lambda_h)$- eigenvalue of $H(\bK^{(\eta)},\eta)$ of geometric multiplicity two (see \eqref{peta-expand}-\eqref{mueta-expand}  ) if and only if 
     $\mu^{(1,\eta)}$,  $K_1^{1,\eta}$ and  $K_2^{1,\eta}$ satisfy:
\begin{align}
&K_1^{1,\eta}\ -\ \Re\left(\ \left(\overline{\lambda_\sharp}\right)^{-1}\langle\Phi_1,W\Phi_1\rangle
   + \eta F(\mu^{(1,\eta)},\eta,  K_1^{1,\eta}, K_2^{1,\eta})\ \right)\ =\ 0\nn\\
&K_2^{1,\eta}\ -\ \Im\left(\ \left(\overline{\lambda_\sharp}\right)^{-1}\langle\Phi_1,W\Phi_1\rangle
   + \eta F(\mu^{(1,\eta)},\eta,  K_1^{1,\eta}, K_2^{1,\eta})\ \right)\ =\ 0\nn\\
&\mu^{(1,\eta)}\ -\ \left(\ \langle\Phi_1,W\Phi_1\rangle\ +\ \eta\ f_A(\mu^{(1,\eta)},\eta,  K_1^{1,\eta}, K_2^{1,\eta})\  \right)\ =\ 0
\label{3system}\end{align}

 So in order to prove Theorem \ref{deformed} we seek a solution of \eqref{3system} in a neighborhood of  its solution for $\eta=0$, given by \eqref{mu1-1st} and \eqref{K12-1st}: 
 \begin{equation}
 \mu^{(1)}=\mu^{(1,0)},\ K^1=K^{1,0},\ K^2=K^{2,0}.\label{solution-1st}
 \end{equation}
Note that the right hand side of \eqref{3system} defines a smooth map
 from a neighborhood of $(\mu^{(1,0)}, K_1^{1,0} , K_2^{1,0})$ to $\mathbb{R}^3$ with Jacobian at \eqref{solution-1st}, for $\eta=0$, equal to the identity. Hence, by the implicit function theorem, there exist a positive number, $\eta_1$, and smooth functions:
\begin{align}
&\eta\mapsto\mu^{(1,\eta)},\ \ \ \ \eta\mapsto \bK^{1,\eta}=\left(K_1^{1,\eta},K_2^{1,\eta}\right),
\end{align}
defined for $|\eta|<\eta_1$, such that $\mu^{(1,\eta)},\bK^{1,\eta}$ is the unique solution of \eqref{3system} for all $|\eta|<\eta_1$ in an open set about the point \eqref{solution-1st}.  This completes the proof of part (1) of Theorem \ref{deformed}.
\bigskip

To prove part (2) of Theorem  \ref{deformed}, we need to display a conical singularity in the dispersion surface about the point $(\bK^{(\eta)},\mu^{(\eta)})$, \eqref{muKeta}. For this we make strong use of the 
calculations in the proof of Theorem \ref{prop:2impliescone}. In particular,   $-\Delta+V$, $\mu^{(0)}$, $\bK$ and $\phi_j,\ j=1,2$ of the proof of Theorem \ref{prop:2impliescone} are replaced by $H(\eta), \mu^{(\eta)}, \bK^{(\eta)}$ from the proof of part (1) and $\{\phi^{(\eta)}_1,\phi^{(\eta)}_2\}$, now denotes an orthonormal spanning set for the $L^2(\mathbb{R}^2/\Lambda_h)$ nullspace of $H(\eta;\bK^{(\eta)})-\mu^{(\eta)}I$. Then,  $\{\Phi^{(\eta)}_1,\Phi^{(\eta)}_2\}=\{e^{i\bK\cdot\bx}\phi^{(\eta)}_1,e^{i\bK\cdot\bx}\phi^{(\eta)}_2\}$  is an orthonormal spanning set for the 
$L^2_{\bK^{(\eta)}}$ nullspace of $H(\eta)-\mu^{(\eta)}I$. Note also that $\Phi^{(\eta=0)}_j=\Phi_j$,
 the Floquet-Bloch states associated with the unperturbed honeycomb lattice potential, $V$.\medskip

We must study the Floquet-Bloch eigenvalue problem (compare with \eqref{Hk-evp}-\eqref{psi-per}) 
\begin{align}
&\left(\ -\left(\nabla_\bx + i\left(\bK^{(\eta)}+\bkappa\right)\right)^2\ +\ V(\bx)\ +\ \eta W(\bx)\right) 
\psi^{(\eta)}(\bx;\bK^{(\eta)}+\bkappa)\nn\\
&\ \ \ \ \ \ \ \ \ \ \ \ \ \ \ \ \ \ \ \ \ \ \ \ =\ \mu(\bK^{(\eta)}+\bkappa)\ \psi^{(\eta)}(\bx;\bK^{(\eta)}+\bkappa)\ ,\label{Hk-eta-evp}\\
&\psi^{(\eta)}(\bx+\bv;\bK^{(\eta)}+\bkappa)=\psi^{(\eta)}(\bx;\bK^{(\eta)}+\bkappa),\ \ \textrm{for all}\ \bv\in \Lambda\ .
\label{psi-eta-per}\end{align}

We express  $\mu=\mu^{(\eta)}(\bK^{(\eta)}+\bkappa)$ and $\psi^{(\eta)}(\bx;\bK^{(\eta)}+\bkappa)$ as:
\begin{align}
\mu &= \mu^{(\eta)} +\ \mu^{(1,\eta)}(\bkappa),\nn\\
  \psi^{(\eta)}(\bx;\bK^{(\eta)}+\bkappa) &= \sum_{j=1}^2\alpha_j\phi_j^{(\eta)}(\bx) +\ \psi^{(1,\eta)}(\bx), \label{mu-psi-eta-exp}
\end{align}
where $\mu^{(\eta)}=\mu^{(\eta)}(\bK^{(\eta)})$ denotes the perturbed  double-eigenvalue constructed above with corresponding orthonormal eigenfunctions $\phi_j^{(\eta)}(\bx),\ j=1,2$.

Precisely along the lines of the derivation of \eqref{det0} in the proof of Theorem \ref{prop:2impliescone}, we now find that for $|\bkappa|$ small
$\mu^{(\eta)}(\bK^{(\eta)}+\bkappa)$ is an eigenvalue of the spectral problem \eqref{Hk-eta-evp}-\eqref{psi-eta-per} if $\mu^{(1,\eta)}=\mu^{(1,\eta)}(\bkappa)$ solves
\begin{equation}
\det\mathcal{M}(\mu^{(1,\eta)},\bkappa;\eta)=0,
\label{det0-eta}\end{equation}
where 
\begin{equation}
\mathcal{M}(\mu^{(1,\eta)},\bkappa;\eta)=\mathcal{M}_0(\mu^{(1,\eta)},\bkappa;\eta)\ +\ \mathcal{M}_1( \mu^{(1,\eta)},\kappa;\eta),
 \label{M1eta-expand}
 \end{equation}
 where 
 \begin{equation}
\left|\mathcal{M}_{1,ij}( \mu^{(1,\eta)},\kappa;\eta)\right|\ =\
 C\left(\ |\kappa|\ |\mu^{(1,\eta)}| + |\kappa|^2\ \right).
  \nn  \end{equation}
  and
  \begin{align}
\mathcal{M}_0(\mu^{(1,\eta)},\bkappa;\eta)&= 
\left(\begin{array}{cc}
 \mu^{(1,\eta)} + 2i\langle\Phi_1^{(\eta)},\kappa\cdot\nabla\Phi_1^{(\eta)}\rangle & 
 2i\langle\Phi_1^{(\eta)},\kappa\cdot\nabla\Phi_2^{(\eta)}\rangle \\
 2i\langle\Phi_2^{(\eta)},\kappa\cdot\nabla\Phi_1^{(\eta)}\rangle &
  \mu^{(1,\eta)} + 2i\langle\Phi_2^{(\eta)},\kappa\cdot\nabla\Phi_2^{(\eta)}\rangle
  \end{array}\right)\nn\\
 &=  \left(\begin{array}{cc}
\mu^{(1,\eta)} + M_{11}^0(\eta)\cdot\bkappa & M_{12}^0(\eta)\cdot\bkappa\\
 M_{21}^0(\eta) \cdot\bkappa & \mu^{(1,\eta)} + M_{22}^0(\eta)\cdot\bkappa,
 \end{array}\right)
  \label{clf4}\end{align}
where $M^0_{jk}(\eta)$ are smooth complex-valued functions of $\eta$.  Note that
\begin{equation}
M_{11}^0(\eta),\ M_{22}^0(\eta)\ \textrm{are real and}\ M_{21}^0(\eta) =  \overline{M_{12}^0(\eta)}
\nn\end{equation}
and furthermore
\begin{align}
\left. \mathcal{M}_0(\mu^{(1,\eta)},\bkappa;\eta)\right|_{\eta=0}\ =\ 
 \left(\begin{array}{cc}
 \mu^{(1,\eta)}  & -\overline{\lambda_\sharp}\ (\kappa_1+i\kappa_2) \\
 -\lambda_\sharp\ (\kappa_1-i\kappa_2) &
 \mu^{(1,\eta)}
  \end{array}\right)\ .
  \label{clf6}\end{align}

Thanks to \eqref{clf4}, the equation $\det\mathcal{M}_0(\nu,\bkappa;\eta)=0$ is equivalent to
\begin{equation}
\nu^2\ +\ \left\{\ \left[M_{11}^0(\eta) + M_{22}^0(\eta)\right]\cdot\bkappa\ \right\} \nu\ +\ 
 \det\left[ \left( M_{jl}^0(\eta)\cdot\bkappa \right)_{j,l=1,2} \right]\ =\ 0
 \label{clf7}
 \end{equation}
 
 The solutions have the form
 \begin{equation}
 \nu = -\frac{\left[M_{11}^0(\eta) + M_{22}^0(\eta)\right]}{2}\cdot\bkappa\ \pm\ \sqrt{\mathcal{Q}^{(\eta)}(\bkappa)},
\label{clf8} \end{equation}
where $\mathcal{Q}^{(\eta)}(\bkappa)$ is a quadratic form in $\bkappa$ with coefficients depending smoothly on $\eta$. For $\eta=0$, \eqref{clf6} shows that the quadratic equation \eqref{clf7} takes the form:
\[\nu^2 - |\lambda_\sharp|^2\left(\kappa_1^2+\kappa_2^2\right) = 0;\]
hence in \eqref{clf8} we have
\begin{equation}
\left.\left[\ M_{11}^0(\eta) + M_{11}^0(\eta)\  \right]\right|_{\eta=0}=0\ \textrm{and}\ 
\left.\mathcal{Q}^{(\eta)}(\bkappa)\right|_{\eta=0}\ =\ |\lambda_\sharp|^2\left(\kappa_1^2+\kappa_2^2\right)
\label{clf9}
\end{equation}
Therefore, for $|\eta|<\eta_1$ (small) \eqref{clf8} takes the form
\begin{equation}
\label{clf10}
\nu = \eta{\bf b}^{(\eta)}\cdot\bkappa \pm \sqrt{\mathcal{Q}^{(\eta)}(\bkappa)},
\end{equation}
where ${\bf b}^{(\eta)}\in\mathbb{R}^2$ depends smoothly on $\eta$, $\mathcal{Q}^{(\eta)}(\bkappa)$ is a quadratic form in $\bkappa$ (depending smoothly on $\eta$) and 
\begin{equation}\label{clf11}
\left( |\lambda_\sharp|^2-C|\eta| \right)(\kappa_1^2+\kappa_2^2)\
 \le\ \mathcal{Q}^{(\eta)}(\bkappa)\ \le\ \left( |\lambda_\sharp|^2+C|\eta| \right) (\kappa_1^2+\kappa_2^2),
 \end{equation}
 for $|\eta|\le\eta_1$ and $\bkappa=(\kappa_1,\kappa_2)\in\mathbb{R}^2$. Thus, the solutions of 
 $\det\mathcal{M}_0(\nu,\bkappa;\eta)=0$ are given by \eqref{clf10} and \eqref{clf11}.
 
 We may now pass from solutions of $\det\mathcal{M}_0(\nu,\bkappa;\eta)=0$ to solutions of\\ 
  $\det\mathcal{M}(\nu,\bkappa;\eta)=0$ as in the proof of Proposition \ref{M0pM1} . Thus the 
  $L^2_{\bK^{(\eta)}+\bkappa}$- eigenvalues of $H^{(\eta)}$ are given by 
  \begin{equation}
  \mu_\pm(\bkappa) = \eta {\bf b}^{(\eta)}\cdot\bkappa\ \pm\ 
  \sqrt{\mathcal{Q}^{(\eta)}(\bkappa)}\cdot\left(\ 1+E_\pm(\bkappa;\eta)\ \right),
  \nn\end{equation}
  where $E_\pm(\bkappa;\eta)\le C|\bkappa|$ for $|\eta|\le\eta_1,\ |\bkappa|\le\tilde{\kappa}$. The proof of Theorem \ref{deformed} is complete.

   \bigskip

\begin{remark}[Instability of the Dirac Point and smooth dispersion surfaces]\label{instability}
We here note a class of perturbing potentials, $W$, such that although $-\Delta+V$ has Dirac (conical) points,
the operator $-\Delta+V+\eta W$ has a locally smooth dispersion surface near the vertices of $\brill_h$. Assume that $V$ is a honeycomb lattice potential, which is inversion-symmetric with respect to $\bx=0$, {\it i.e.} $\bx_0=0$ in Definition \ref{honeyV}, {\it i.e.} $V(-\bx)=V(\bx)$.  Let $W\in C^\infty(\mathbb{R})$, $\Lambda_h$- periodic, but  without the requirement that $W(\bx)= W(-\bx)$ for all $\bx$. Then,  typically 
$\langle \Phi_1,W\Phi_1\rangle\ne\langle \Phi_2,W\Phi_2\rangle$.  In this case, 
\[ A_{11}(\mu^{(1)},\bK^1,\eta)\ne A_{22}(\mu^{(1)},\bK^1,\eta);\]
 see \eqref{Adef} . For $\mu^{(\eta)}=\mu^{(0)}+\eta \mu^{1,\eta}$ to be an $L^2_{\bK^{(\eta)}}$ eigenvalue, we found that it is necessary and sufficient that:
 \[ \det\mathcal{M}\left(\mu^{(1,\eta)},\bK^1,\eta\right) = 0.\]
or equivalently
  \begin{equation}
\ \left(\ \mu^{(1,\eta)} - \frac{A_{11} + A_{22}}{2} \right)^2\ -\ \left(\frac{A_{11} - A_{22}}{2} \right)^2\ -\ |B|^2\ =\ 0\ .
\label{eigWnot}
\end{equation}

Thus, our eigenvalue equation becomes:
\begin{equation}
\mu^{(1,\eta)} = \frac{A_{11} + A_{22}}{2}\ \pm\ 
\sqrt{ \left(\frac{A_{11} - A_{22}}{2} \right)^2 + |B|^2 }
\label{mu1pm}
\end{equation}
When, $A_{11}-A_{22}\ne0$, each sign in \eqref{mu1pm} gives rise to an equation to which we may apply the implicit function theorem to obtain a smooth function $(\bK^1,\eta)\mapsto\mu^{(1,\eta)}(\bK^1,\eta)$. In particular, at $\eta=0$, equation \eqref{mu1pm} gives
{\small{
\begin{align}
&\ \left(\ \mu^{(1,0)} -  \int |\Phi_1(\bx)|^2\ \frac{W(\bx)+W(-\bx)}{2}\ d\bx \right)^2\nn\\ 
& =\ \left( \int |\Phi_1(\bx)|^2\ \frac{W(\bx)-W(-\bx)}{2}\ d\bx \right)^2\ +\  \left|\ \overline{ \lambda_\sharp } \left(K_1^{1,0}+i K_2^{1,0}\right)
 +\langle \Phi_1,W \Phi_2 \rangle\ \right|^2\ .
\label{eigWnot1}
\end{align} 
}}
Therefore, for small $\eta$, the two signs in \eqref{mu1pm} give rise to two distinct solutions $\mu^{(1,\eta)}_\pm$. Thus, for small non-zero $\eta$,  the double eigenvalue disappears and the dispersion
surface is smooth.

\end{remark}


\bigskip\bigskip\bigskip

\appendix

\section{Topological obstruction}\label{sec:topology}

In section \ref{sec:continuation} there arises the situation of an $N\times N$ complex matrix $A$ varying within the space of rank $N-1$ matrices. It was of interest to know whether one can construct 
 a \underline{non-zero} nullvector which is an analytic function of the entries of $A$.
In this section we provide a $2\times2$ matrix counterexample
that exhibits a topological obstruction.\medskip

Let $\mathbb{M}\subset {\rm Mat}(2)$ denote the space of $2\times2$ complex matrices of rank $1$. We prove the following
\begin{proposition}
There is no continuous map $\phi:\mathbb{M}\to\mathbb{C}^2\setminus\{0\}$ such that $\phi(A)\in {\rm Nullspace}$(A) for each $A\in\mathbb{M}$.
\end{proposition}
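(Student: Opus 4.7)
The plan is to exhibit an explicit continuous map $\Psi$ from the 3-sphere $S^3 \subset \mathbb{C}^2$ into $\mathbb{M}$ that realizes the Hopf fibration via the nullspace assignment, and then show that composing with a hypothetical continuous $\phi$ would force a continuous nonvanishing complex scaling factor to change phase by $-2\pi$ around each fiber, contradicting simple-connectedness of $S^3$.

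Concretely, for $v \in S^3$ (i.e., $v \in \mathbb{C}^2$ with $|v|=1$), set
\begin{equation*}
\Psi(v) \;=\; I - vv^* \;\in\; \mathbb{M},
\end{equation*}
the orthogonal projection onto $v^{\perp}$. Then $\Psi$ is continuous, $\Psi(v)$ has rank $1$ with $\operatorname{Nullspace}(\Psi(v)) = \mathbb{C}\,v$, and the invariance $\Psi(e^{i\theta}v) = \Psi(v)$ holds (the $e^{i\theta}$ cancels the $e^{-i\theta}$ coming from $v^*$). Assume now that a continuous $\phi:\mathbb{M}\to\mathbb{C}^2\setminus\{0\}$ with $\phi(A)\in\operatorname{Nullspace}(A)$ exists. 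Since $\phi(\Psi(v)) \in \mathbb{C}\,v$ and is nonzero, there is a unique $\lambda(v)\in\mathbb{C}^*$ with $\phi(\Psi(v)) = \lambda(v)\,v$. The formula $\lambda(v) = v^*\phi(\Psi(v))$ shows $\lambda:S^3\to\mathbb{C}^*$ is continuous. Applying the identity $\Psi(e^{i\theta}v)=\Psi(v)$ inside $\phi$ gives
\begin{equation*}
\lambda(v)\,v \;=\; \phi(\Psi(v)) \;=\; \phi(\Psi(e^{i\theta}v)) \;=\; \lambda(e^{i\theta}v)\,e^{i\theta}v,
\end{equation*}
so that $\lambda(e^{i\theta}v) = e^{-i\theta}\lambda(v)$ for all $\theta\in\mathbb{R}$ and all $v\in S^3$.

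Since $S^3$ is simply connected and $\exp:\mathbb{C}\to\mathbb{C}^*$ is a covering map, the continuous map $\lambda$ admits a continuous lift $\tilde\lambda:S^3\to\mathbb{C}$ with $\lambda = \exp\tilde\lambda$. The functional equation for $\lambda$ translates to
\begin{equation*}
\tilde\lambda(e^{i\theta}v) - \tilde\lambda(v) + i\theta \;\in\; 2\pi i\,\mathbb{Z}.
\end{equation*}
The left side is continuous in $(\theta,v)$ and vanishes at $\theta=0$, so by connectedness it vanishes identically; hence $\tilde\lambda(e^{i\theta}v) = \tilde\lambda(v) - i\theta$ for all $\theta$ and $v$. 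Setting $\theta=2\pi$ gives $\tilde\lambda(v) = \tilde\lambda(v) - 2\pi i$, which is absurd. This contradiction proves the proposition.

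The core topological content — namely, the nonexistence of a section of the Hopf bundle $S^3\to\mathbb{CP}^1$ — is packaged here in the form of a monodromy/lifting obstruction, which is the cleanest way to present it without invoking Chern classes. I expect no essential obstacle: the only place where care is needed is verifying that the lifted function $\tilde\lambda$ actually satisfies the twisted equivariance cleanly, which follows from the standard fact that a continuous $\mathbb{Z}$-valued function on a connected space is constant.
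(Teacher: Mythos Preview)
Your proof is correct and follows essentially the same strategy as the paper: parametrize rank-one matrices by their null direction $v$, observe the $U(1)$-invariance of the parametrization, and deduce the twisted equivariance $\lambda(e^{i\theta}v)=e^{-i\theta}\lambda(v)$, which yields a topological contradiction. The only differences are cosmetic: the paper uses the rank-one matrix $A(v)=\bar v\,(Jv)^{T}$ on $\mathbb{C}^2\setminus\{0\}$ rather than your projection $I-vv^*$ on $S^3$, and it reaches the contradiction via an explicit homotopy $v(\theta;t)=t e^{i\theta}\hat e_1+(1-t)\hat e_2$ (contracting a winding-number $-1$ loop to a point in $\mathbb{C}^*$) rather than via a covering-space lift. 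Your lift argument is perhaps cleaner conceptually, while the paper's homotopy is more hands-on; both encode the same nontriviality of the Hopf fibration.
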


\begin{proof} Let $\phi$ denote such a map. We proceed to derive a contradiction. For vectors $v=\left(\begin{array}{c} v_1\\ v_2\end{array}\right)\in\mathbb{C}^2\setminus\{0\}$, define the  $2\times2$ complex rank $1$ matrix:
\begin{equation}
A(v)\  =\ \bar{v}\otimes Jv\  =\ \bar{v}\left(Jv\right)^T\ ,
\label{Av-def}
\end{equation}
where $J$ is skew symmetric and non-singular.
Note:
$v\mapsto A(v)$ is a continuous map from $\mathbb{C}^2\setminus\{0\}$ to $\mathbb{M}$.  By skew-symmetry of $J$,  $A(v)v=0$ and therefore
\[ {\rm Nullspace}(A(v))\ =\ \mathbb{C}\times v\ \textrm{for each}\ v\in \mathbb{C}^2\setminus\{0\}\ .\]
Hence, for each $v\in \mathbb{C}^2\setminus\{0\}$ there is one and only one \underline{non-zero} complex number $\lambda(v)$ such that 
\begin{equation}
\phi(A(v))=\lambda(v) v\ .\label{formofA}
\end{equation}
Since $\phi$ is assumed continuous, the map $v\mapsto\lambda(v)$ is continuous from $\mathbb{C}^2\setminus\{0\}$ to $\mathbb{C}\setminus\{0\}$. Moreover, for all $v\in\mathbb{C}^2\setminus\{0\}$ and $\theta\in\mathbb{R}$: $ A(e^{i\theta}v)=A(v)$. Hence, by \eqref{formofA}
\begin{align}
&\lambda\left(e^{i\theta}v \right) \cdot e^{i\theta}v \ =\ \phi\left( A(e^{i\theta}v) \right)\ =\ \phi\left( A(v) \right)\ 
= \lambda(v) v \nn\\ &\textrm{and therefore}\ \lambda \left( e^{i\theta}v \right) =\ e^{-i\theta}\ \lambda(v)\ .
\label{lambdatheta}
\end{align}
\medskip

Now for every $\theta\in S^1=\mathbb{R}/2\pi\mathbb{Z}$ and $t\in[0,1]$, let 
\begin{align}
v(\theta;t)\ &\equiv\ t\ e^{i\theta}\ \hat{e}_1\ + (1-t)\ \hat{e}_2, {\rm where}\ 
\hat{e}_1= \left(\begin{array}{c} 1 \\ 0\end{array}\right),\ \ \hat{e}_2= \left(\begin{array}{c} 0 \\ 1\end{array}\right) \ .\label{v-theta-t}
\end{align}
Note $v(\theta;t)\ \in\ \mathbb{C}^2\setminus\{0\}$ and introduce, for $(\theta,t)\in S^1\times[0,1]$
\begin{align}
\zeta(\theta;t)\ &\equiv\ \lambda(v(\theta;t)) . \label{zeta-def}
\end{align}

\noindent We think of $\theta\mapsto\zeta(\theta;t)$ as a 1-parameter ($t\in[0,1]$) family of closed curves in $\mathbb{C}^1\setminus\{0\}$. 
\begin{align*}
&\textrm{Taking}\ t=0, \textrm{we have}\  \zeta(\theta;0) = \lambda\left(\hat{e}_2\right),\ \textrm{for all}
\ \theta\in S^1;\ {\rm and}\\
&\textrm{taking}\  t=1,\ \textrm{we have}\ \zeta(\theta;1) = \lambda\left(e^{i\theta}\hat{e}_1\right)\ =\ e^{-i\theta}\lambda\left(\hat{e}_1\right),\ \textrm{for all}\ \theta\in S^1\ ,
\end{align*}
by \eqref{lambdatheta}.
Thus by varying $t$ between $0$ and $1$ we obtain a continuous deformation of the unit circle to a point, remaining in $\mathbb{C}\setminus\{0\}$. This is impossible. 
 \end{proof}

\bibliographystyle{siam}
\bibliography{honey}

\begin{thebibliography}{10}

\bibitem{Ablowitz-Zhu:11}
{\sc M.J. Ablowitz and Y.~Zhu}, {\em Nonlinear waves in shallow honeycomb
  lattices}, SIAM J. Appl. Math., 72 (2012).

\bibitem{avron-simon:78}
{\sc J.E. Avron and B.~Simon}, {\em Analytic properties of band functions},
  Annals of Physics, 110 (1978), pp.~85--101.

\bibitem{Segev-etal:08}
{\sc O.~Bahat-Treidel, O.~Peleg, and M.~Segev}, {\em Symmetry breaking in
  honeycomb photonic lattices}, Optics Letters, 33 (2008).

\bibitem{Berry-Jeffrey:07}
{\sc M.V. Berry and M.R. Jeffrey}, {\em Conical diffraction: Hamilton's
  diabolical point at the heart of crystal optics}, in Progress in Optics,
  E.~Wolf, ed., vol.~50, Elsevier B.V., 2007.

\bibitem{Eastham:73}
{\sc M.S. Eastham}, {\em The {S}pectral {T}heory of {P}eriodic {D}ifferential
  {E}quations}, Scottish Academic Press, Edinburgh, 1973.

\bibitem{Grushin:09}
{\sc V.~V. Grushin}, {\em Multiparameter perturbation theory of {F}redholm
  operators applied to {B}loch functions}, Mathematical Notes, 86 (2009),
  pp.~767--774.

\bibitem{Haldane:08}
{\sc F.D.M. Haldane and S.~Raghu}, {\em Possible realization of directional
  optical waveguides in photonic crystals with broken time-reversal symmetry},
  Phys. Rev. Lett., 100 (2008), p.~013904.

\bibitem{Herstein:64}
{\sc I.~N. Herstein}, {\em Topics in Algebra}, Blaisdell, 1964.

\bibitem{Jost-Pais:51}
{\sc R.~Jost and A.~Pais}, {\em On the scattering of a particle by a static
  potential}, Phys. Rev., 82 (1951), pp.~840--851.

\bibitem{Kittel}
{\sc C.~Kittel}, {\em Introduction to {S}olid {S}tate {P}hysics, 7th Edition},
  Wiley, 1995.

\bibitem{Krantz:92}
{\sc S.~G. Krantz}, {\em Function Theory of Several Complex Variables}, AMS
  Chelsea Publishing, Providence, Rhode Island, 1992.

\bibitem{Kuchment-01}
{\sc P.~Kuchment}, {\em The {M}athematics of {P}hotonic {C}rystals, in
  "{M}athematical {M}odeling in {O}ptical {S}cience"}, Frontiers in Applied
  Mathematics, 22 (2001).

\bibitem{Kuchment-Post:07}
{\sc P.~Kuchment and O.~Post}, {\em On the spectra of carbon nano-structures},
  Comm. Math. Phys., 275 (2007), pp.~805--826.

\bibitem{RMP-Graphene:09}
{\sc A.H.~Castro Neto, F.~Guinea, N.M.R. Peres, K.S. Novoselov, and A.K. Geim},
  {\em The electronic properties of graphene}, Reviews of Modern Physics, 81
  (2009), pp.~109--162.

\bibitem{Newton:72}
{\sc R.~G. Newton}, {\em Relation between the three-dimensional {F}redholm
  determinant and the {J}ost functions}, J. Math. Phys., 13 (1972),
  pp.~880--883.

\bibitem{Novoselov:11}
{\sc K.~S. Novoselov}, {\em Nobel lecture: Graphene: Materials in the
  flatland}, Reviews of Modern Physics, 837--849 (2011).

\bibitem{Segev-etal:07}
{\sc O.~Peleg, G.~Bartal, B.~Freedman, O.~Manela, M.~Segev, and D.N.
  Christodoulides}, {\em Conical diffraction and gap solitons in honeycomb
  photonic lattices}, Phys. Rev. Lett., 98 (2007), p.~103901.

\bibitem{RS4}
{\sc M.~Reed and B.~Simon}, {\em Modern {M}ethods of {M}athematical {P}hysics,
  {I}{V}}, Academic Press, 1978.

\bibitem{Simon}
{\sc B.~Simon}, {\em Trace Ideals and Their Applications}, vol.~120 of
  Mathematical Surveys and Monographs, AMS, second edition~ed., 2005.

\bibitem{Wallace:47}
{\sc P.R. Wallace}, {\em The band theory of graphite}, Phys. Rev., 71 (1947),
  p.~622.

\bibitem{Soljacic:08}
{\sc Z.~Wang, Y.D. Chong, J.D. Joannopoulos, and M.~Soljacic}, {\em
  Reflection-free one-way edge modes in a gyromagnetic photonic crystal}, Phys.
  Rev. Lett., 100 (2008), p.~013905.

\bibitem{Wong-Akinwanke:10}
{\sc H.-S.~Philip Wong and D.~Akinwande}, {\em Carbon {N}anotube and {G}raphene
  {D}evice {P}hysics}, Cambridge University Press, 2010.

\end{thebibliography}
\end{document}